\newif\iflong\longtrue
\newif\ifdraft\draftfalse
\newif\ifwithappendix\withappendixfalse
\definecolor{DarkGreen}{RGB}{0,100,0}
\definecolor{DarkBlue}{RGB}{0,0,200}
\newcommand{\tkchanged}[1]{{\textcolor{red}{#1}}}
\newcommand{\lochanged}[1]{{\textcolor{Cyan}{#1}}}
\newcommand{\tk}[1]{{\textcolor{red}{[{#1}---Tsukada]}}}
\newcommand{\lo}[1]{{\textcolor{Cyan}{[{#1}---Ong]}}}
\newcommand{\tkchanged}[1]{{#1}}
\newcommand{\lochanged}[1]{{#1}}
\newcommand{\tk}[1]{}
\newcommand{\lo}[1]{}
\renewcommand\phi{\varphi}
\newcommand\calP{{\cal P}}
\newcommand\makeset[1]{\{ #1 \}}
\newcommand{\red}{\longrightarrow}
\newcommand{\lam}{\lambda}
\newcommand{\codom}{\mathord{\mathrm{img}}}
\newcommand{\sem}[1]{[\![{#1}]\!]}
\newcommand{\Nat}{\mathbbm{N}}
\newcommand{\PosReal}{\mathbf{R}^+}
\newcommand{\ident}{\mathrm{id}}
\newcommand{\T}{\mathtt{o}}
\newcommand{\op}{\mathit{op}}
\newcommand{\PP}{\mathrm{P}}
\newcommand{\OO}{\mathrm{O}}
\newcommand{\Moves}[1][{}]{\mathcal{M}_{#1}}
\newcommand{\PMoves}[1][{}]{\mathcal{M}_{#1}^\PP}
\newcommand{\OMoves}[1][{}]{\mathcal{M}_{#1}^\OO}
\newcommand{\View}[1]{\lceil{#1}\rceil}
\newcommand{\PView}[1]{\lceil{#1}\rceil}
\newcommand{\OView}[1]{\lfloor{#1}\rfloor}
\newcommand{\CPlay}[1]{\mathbb{P}_{#1}}
\newcommand{\Intr}[1]{\mathit{Int}({#1})}
\newcommand{\CIntr}[1]{\mathbb{I}_{#1}}
\newcommand{\CView}[1]{\mathbb{V}_{#1}}
\newcommand{\ViewEmbed}{\iota}
\newcommand{\CVal}{\mathscr{V}}
\newcommand{\ImmExt}[1]{\mathbf{ie}({#1})}
\newcommand\proj[2]{{{#1} {\restriction_{#2}}}}
\newcommand{\CGame}{\mathbb{G}}
\newcommand{\CDetGame}{\CGame_{\mathit{det}}}
\newcommand{\CWGame}{\CGame_{\mathit{w}}}
\newcommand\cat[1]{\mathbbm{#1}}
\newcommand\opp[1]{{#1}^{\mathit{op}}}
\newcommand\Set{\cat{S}\cat{e}\cat{t}}
\newcommand\cod{{\rm cod}}
\newcommand\Sheaf{\mathbf{Sh}}
\newcommand{\Subst}[3]{{#1} [{#2}] = {#3}}
\newcommand\anglebra[1]{\langle #1 \rangle}
\newcommand{\bag}[1]{\{\!|{#1}|\!\}}
\newcommand{\subty}{\sqsubseteq}
\newcommand{\osubty}{\stackrel{\OO}{\subty}}
\newcommand{\psubty}{\stackrel{\PP}{\subty}}
\newcommand{\eqty}{\approx}
\newcommand{\peqty}{\stackrel{\PP}{\eqty}}
\newcommand{\btt}{{\mathtt{t}\!\!\mathtt{t}}}
\newcommand{\bff}{{\mathtt{f}\!\!\mathtt{f}}}
\newcommand{\bbtt}{{\mathtt{t}\!\mathtt{t}}}
\newcommand{\bbff}{{\mathtt{f}\!\mathtt{f}}}
\theoremstyle{theorem}
\newtheorem{theorem}{Theorem}
\newtheorem{lemma}[theorem]{Lemma}
\newtheorem{corollary}[theorem]{Corollary}
\newtheorem{proposition}[theorem]{Proposition}
\newtheorem*{proposition*}{Proposition}
\theoremstyle{definition}
\newtheorem{definition}[theorem]{Definition}
\newtheorem{example}[theorem]{Example}
\theoremstyle{remark}
\newtheorem{remark}[theorem]{Remark}
\begin{document}
\conferenceinfo{Some Conference 2014}{Jan 1--3, 2014, Oxford, UK}
\copyrightyear{2014}
\copyrightdata{999-9-9999-9999-9}

%
%
\title{Innocent Strategies are Sheaves over Plays}
\subtitle{Deterministic, Non-deterministic and Probabilistic Innocence}

\authorinfo{Takeshi Tsukada}
           {University of Oxford\\ JSPS Postdoctoral Fellow for Research Abroad}
           {tsukada@cs.ox.ac.uk}
\authorinfo{C.-H. Luke Ong}
           {University of Oxford}
           {Luke.Ong@cs.ox.ac.uk}

\maketitle

\begin{abstract}
Although the HO/N games are fully abstract for PCF,
the traditional notion of \emph{innocence} (which underpins these games) is not satisfactory for such language features as non-determinism and probabilistic branching, in that there are stateless terms that are not innocent.  Based on a category of P-visible plays with a notion of embedding as morphisms, we propose a natural generalisation by viewing \emph{innocent strategies as sheaves over (a site of) plays}, echoing a slogan of Hirschowitz and Pous.
Our approach gives rise to fully complete game models in each of the three cases of deterministic, nondeterministic and probabilistic branching. To our knowledge, in the second and third cases, ours are the first such factorisation-free constructions.


\lo{Possible ideas / points to mention in the abstract: The intuition. View functionality.}

\end{abstract}



\section{Introduction}


Game semantics is a powerful paradigm for giving semantics to a variety of programming languages and logical systems. Both HO/N games \cite{HylandO00,Nickau94} (based on arenas and innocent strategies) and AJM games \cite{AbramskyJM00} (based on games equipped with a certain equivalence relation on plays, and history-free strategies) gave rise to the first syntax-independent description of the fully abstract model for the functional programming language PCF. The HO/N-style games, based on arenas and history-sensitive strategies, have been extended to give a fully abstract model for Idealised Algol (PCF extended with locally-scoped references) \cite{AbramskyM97}. Definability, a crucial step of the completeness argument, was established by showing that every compact history-sensitive strategy factorises through an innocent strategy. Using the same factorisation technique, fully abstract HO/N-style game models have been constructed for a spectrum of  Algol-like languages, including Idealised Algol augmented with language features such as 
non-determinism \cite{HarmerM99} and probabilistic branching \cite{DanosH02}. 


Perhaps surprisingly, it is problematic to extend innocent strategies to model PCF extended with non-determinism \cite{Harmer99}. A famous game model by Harmer \cite{Harmer99} is based on factorisation, decomposing a given non-deterministic 
{strategy} into a non-deterministic oracle and a deterministic innocent strategy. To our knowledge, the problem of a factorisation-free fully complete game model for the simply-typed non-deterministic lambda calculus is open; the same problem is also open for lambda calculus augmented with probabilistic branching. This paper presents a new approach to innocent strategies, based on \emph{sheaves over a site of plays}, that yields fully complete game models for lambda calculi extended with these branching constructs.

We are interested in the simply-typed lambda calculi because they have good algorithmic properties, notably, the decidability of \emph{compositional} higher-order model checking \cite{Ong06,TsukadaO14}, which is proved using HO/N-style effect arenas and innocent strategies. Our study of the game semantics of non-deterministic lambda calculus was motivated, in particular, by a desire to introduce abstraction refinement to higher-order model checking based on the \emph{non-deterministic} $\lambda{\bf Y}$-calculus.


Let us begin with a quick overview of the HO/N-style games. Types are interpreted as arenas, and programs of a given type are interpreted as P-strategies for playing in the arena that denotes the type. Recall that an \emph{arena} $A$ is a set of moves $\Moves[A]$ equipped with an \emph{enabling relation}, $({\vdash}_A) \subseteq {(\Moves[A] \cup \makeset{\star}) \times \Moves[A]}$, that gives $A$ the structure of a forest (whereby a move $m$ is a root, called \emph{initial}, just if $\star \vdash_A m$); furthermore, moves on levels $0, 2, 4, \ldots$ of the forest are O-moves, and those that are on levels $1, 3, 5, \ldots$ are P-moves. A \emph{justified sequence} of $A$ is a finite sequence of O/P-alternating moves, $m_1 \, m_2 \, m_3 \ldots m_n$, such that each non-initial move $m_j$ has a pointer to an earlier move $m_i$ (called the \emph{justifier} of $m_i$) such that $m_i \vdash_A m_j$. A key notion of HO/N games is the \emph{view} of a justified sequence: the \emph{P-view} of a justified sequence $s$ is a certain justified subsequence, written $\PView{s}$, consisting of move-occurrences which P considers relevant for determining his next move (similarly for the \emph{O-view} $\OView{s}$ of $s$). A \emph{play} then is a justified sequence, $m_1 \, m_2 \, m_3 \ldots$, that satisfies \emph{Visibility}: for every $i$, if $m_i$ is non-initial then its justifier appears in $\PView{m_1 \, m_2 \ldots m_i}$ (respectively $\OView{m_1 \, m_2 \ldots m_i}$) if $m_i$ is a P-move (respectively O-move). A \emph{strategy} $\sigma$ over an arena $A$ is just a prefix-closed set of even-length plays $s$; $\sigma$ is said to be \emph{deterministic} if whenever $s \, m_1^P, s \, m_2^P \in \sigma$, then $m_1^P = m_2^P$. (We use superscript $P$ to indicate a P-move; similarly for O-move.) Recall that a strategy $\sigma$ is said to be \emph{innocent} if it is \emph{view dependent} i.e.~for all $s \in \sigma$
\begin{equation}
{(s \in \sigma \; \wedge \; \PView{s \, m_1^O \, m_2^P} \in \sigma) \iff s \, m_1^O \, m_2^P \in \sigma}
\label{eq:innocence}
\end{equation}
It is an important property of innocence that---in the sets-of-plays presentation of strategies---every deterministic innocent strategy can be generated by the set of P-views contained in it. The category of arenas and innocent strategies gives rise to a fully complete model of the simply-typed lambda calculus \cite{HylandO00}.

However, as Harmer observed in his thesis \cite{Harmer99}, the notion of innocence breaks down when one tries to use it to model (stateless) non-deterministic functional computation. 

\begin{example}\label{eg:harmer}
Take simply-typed $\lambda$-terms \( \btt := \lam xy. x \) and \( \bff := \lam xy.y \) of type \( \mathbf{B} = \T \to \T \to \T \), and \( M_1 := \lam f. f\, (\btt + \bff) \) and \( M_2 := (\lam f. f\,\btt) + (\lam f. f\,\bff) \) of type \( (\mathbf{B} \to \T) \to \T \), where \( + \) is the construct for non-deterministic branching. Assuming the call-by-name evaluation strategy, these terms can be separated by the term \( N := \lam g. g\,(g\,\bot\,z)\,\bot \), where \( \bot \) is the divergence term, i.e.~\lochanged{\( M_1 \, N \)} may converge but \lochanged{\( M_2 \, N \)} always diverges. In the HO/N game model (see, for example, \cite{Harmer99}), $\sigma_i := \sem{M_i}$ are strategies over the arena \( ((\{ d \} \to \{ d' \} \to \{ c \}) \to \{ b \}) \to \{ a \} \),
for $i = 1, 2$. Note that $\sigma_1$ and $\sigma_2$ are distinct as strategies: for example (we omit pointers from the plays as they can be uniquely reconstructible)
\(
a \, b \, c \, d \, c \, d' \, \in \, (\sigma_1 \setminus \sigma_2).
\)
However $\sigma_1$ and $\sigma_2$ contain the same set of non-empty, even-length P-views, namely, 
\(
\makeset{a \, b, \; a \, b \, c\, d, \; a \, b \, c\, d'}.
\) 
\end{example}

The preceding example shows the sets-of-plays approach works well for expressing, and even composing, non-deterministic 
strategies for stateless programs; the only problem is that, in general, 
the set of P-views cannot be a good generator for these strategies.

The problematic term is \( M_2 \).  It applies the argument \( f \) to \( \btt \) or \( \bff \), non-deterministically, but the branch has already been chosen when \( M_2 \) responds to the initial move.  So \( a\,b\,c\,d\,c\,d' \) is not playable by \( M_2 \), although innocence requires it to.

Our approach is to admit that \( M_2 \) has two possible responses to the initial move: they give the same play \( a\,b \) but have different internal states. Thus a strategy is formally a mapping from plays to sets that represent the internal states.  For example, \( \sem{M_2}(a\,b) = \{ \btt, \bff \} \), where \( \btt \) means the left branch and \( \bff \) the right branch.  Now the P-views for \( \sem{M_2} \) are, say, \( \{ a\,b^{\bbtt},\; a\,b^{\bbtt} c\,d,\; a\,b^{\bbff},\; a\,b^{\bbff} c\,d' \} \). Notice that \( a\,b^{\bbtt} c\,d\,c\,d' \) and \( a\,b^{\bbff} c\,d\,c\,d' \) are no longer forced by innocence to be admissible plays.  From this viewpoint, a deterministic strategy is a mapping from plays to empty or singleton sets. 


In what follows, we discuss how to formalise this idea.

\paragraph{\tkchanged{Ideal-based innocence}}
Before we explain the main ideas behind our sheaf-theoretic approach to innocence, it is helpful to consider a category of plays $\CPlay{A}$, and an alternative view of \emph{deterministic} innocent strategies as \emph{ideals of a preorder presentation} \cite{JungMV08}. The objects of the category $\CPlay{A}$ are (even-length) justified sequences of the arena $A$ satisfying O/P-alternation and P-visibility (but not necessarily O-visibility), which we shall henceforth call \emph{plays} (by abuse of language). The morphisms $f : s \to s'$ are injective maps that preserve moves, justification pointers, and pairs of consecutive O-P moves. A morphism can permute such pairs, provided the pointers are respected. For example, for each play $s$, there are morphisms $\PView{s} \to s$ and $s \to s \, m_1^O \, m_2^P$. 

A \emph{preorder presentation} is a triple $(P, \leq, \triangleright)$ where $(P, \leq)$ is a preorder and ${\triangleright} \subseteq {\calP(P) \times P}$ is called a \emph{covering relation} (we read $U \triangleright s$ as ``$U$ covers $s$''). A subset $I \subseteq P$ is called an \emph{ideal} if 
\begin{inparaenum}
\item[(I1)] $I$ is lower-closed i.e.~if $t \in I$ and $s \leq t$ then $s \in I$, and
\item[(I2)] for every covering $U \triangleright s$, if $U \subseteq I$ then $s \in I$. 
\end{inparaenum}
A preorder presentation can be extracted from the category $\CPlay{A}$, namely, $(\mathit{Obj}(\CPlay{A}), \leq, \triangleright)$ whereby $s \leq s'$ just if there is a morphism $f : s \to s'$, and $U \triangleright s$ just if $U = \makeset{s_\xi}_{\xi \in \Xi}$ for some family of morphisms, $\makeset{f_\xi : s_\xi \to s}_{\xi \in \Xi}$, which is \emph{jointly surjective}, meaning that the union of the set of move-occurrences that appear in the image of $f_\xi$, as $\xi$ ranges over $\Xi$, is the set of all move-ocurrences in the play $s$.

Then ideals of the preorder presentation $(\mathit{Obj}(\CPlay{A}), \leq, \triangleright)$ are innocent strategies. Notice that, because $s \leq s \, m_1^O \, m_2^P$ and $\PView{s \, m_1^O \, m_2^P} \leq s \, m_1^O \, m_2^P$, condition (I1) of ideal gives the $\Leftarrow$-direction of (\ref{eq:innocence}). Further since the set $\makeset{s, \PView{s \, m_1^O \, m_2^P}}$ covers $s \, m_1^O \, m_2^P$, condition (I2) gives the other direction of (\ref{eq:innocence}).



\paragraph{\tkchanged{From ideals to sheaves}}
A presheaf, $F : \cat{C}^\op \to \Set$, is a contravariant functor, assigning data (\lochanged{a set of ``internal states''}) to each object $s$ of $\cat{C}$. The definition of sheaf of a site is technical, and a version is presented in the preliminaries subsection. Here we can think of a sheaf over a site as an extension of the notion of an ideal of a preorder presentation. A \emph{site} is a pair $(\cat{C}, J)$ where $\cat{C}$ is a category, and $J$, called a \emph{coverage}, assigns to each object $s$ of $\cat{C}$ a collection of \emph{covering families}, each of the form $\makeset{f_\xi : s_\xi \to s}_{\xi \in \Xi}$. Intuitively a presheaf, $F : \cat{C}^\op \to \Set$, is a sheaf over the site $(\cat{C}, J)$ just if the data assigned to a given object $s$ (meaning the elements of $F(s)$) can be systematically tracked by the data \emph{locally defined} over the family $\makeset{f_\xi : s_\xi \to s}_{\xi \in \Xi}$ (meaning the elements of $F(s_\xi)$, as $\xi$ ranges over $\Xi$), for all covering families of $s$; further, every \emph{matching family} of such locally assigned data uniquely determines a datum assigned to $s$ (an element of $F(s)$). Thus, take the site $(\CPlay{A}, J)$ where $J(s)$ consists of the jointly surjective families of morphisms with codomain $s$, then $(\mathit{Obj}(\CPlay{A}), \leq, \triangleright)$ is a preorder presentation, as discussed in the preceding. In our sheaf-theoretic approach, an innocent strategy of arena $A$, whether deterministic or not, \emph{is} a sheaf $\sigma$ over the site $(\CPlay{A}, J)$. The intuition is that a sheaf $\sigma : \CPlay{A}^\op \to \Set$ that maps every $s$ to either a singleton set or the emptyset (which is so if the strategy $\sigma$ is deterministic) corresponds to an ideal $I_\sigma$ of the associated preorder presentation whereby $s \in I_\sigma$ if and only if $\sigma(s) \neq \emptyset$.

\paragraph{Our contributions}

Our thesis is that sheaves $\CPlay{A}^{\mathit{op}} \to \Set$ generalise innocent strategies of the arena $A$.  (Indeed the sheaves approach seems more general than innocence, since it appears capable of capturing the computation of single-threaded (history-sensitive) strategies as well.)

Given arenas $A, B$ and $C$, we define a category $\CIntr{A, B, C}$ whose objects are \emph{interaction sequences} of the triple $(A, B, C)$ in the usual sense, and whose morphisms $f : u \to u'$ are injective maps that preserve moves, justification pointers, and \emph{basic blocks} 
(which are sequences of moves that begin with an O-move of $A \Rightarrow C$, and end with a P-move of $A \Rightarrow C$, with all intermediate moves from $B$). Let $u \in \CIntr{A, B, C}$, we write $\proj{u}{A, B}, \proj{u}{B, C}$ and $\proj{u}{A, C}$ for the standard projections of $u$ to the component arenas. Given sheaves $\sigma_1 : \CPlay{A, B}^\op \to \Set$ and $\sigma_2 : \CPlay{B, C}^\op \to \Set$, there is a natural way to compose them. (We write $\CPlay{A, B}$ to mean $\CPlay{A \Rightarrow B}$.) Define a presheaf $\sigma_1;\sigma_2 : \CPlay{A, C}^\op \to \Set$, which acts on objects as follows:
\[
(\sigma_1 ; \sigma_2)(s) := \coprod_{u \in \CIntr{A, B, C} : \proj{u}{A, C} = s} \sigma_1(\proj{u}{A, B}) \times \sigma_1(\proj{u}{A, B})
\] We show that the composite $\sigma_1 ; \sigma_2$ is well-defined:
\begin{enumerate}[(i)] 
\item $\sigma_1 ; \sigma_2 : \CPlay{A, C}^{\op} \to \Set$ is a sheaf
\item $\sigma_1 ; \sigma_2$ is the left Kan extension of the 
\tkchanged{functor} $F : \CIntr{A, B, C}^\op \to \Set$, whose action on objects is $u \mapsto \sigma_1(\proj{u}{A, B}) \times \sigma_2(\proj{u}{B, C})$, along the projection functor $\CIntr{A, B, C}^{\op} \to \CPlay{A, C}^{\op} $. 
\item composition is associative up to natural isomorphism: 
\[
(\sigma_1 ; \sigma_2) ; \sigma_3 \cong \sigma_1 ; (\sigma_2 ; \sigma_3)
\]
\end{enumerate}
Furthermore, the category whose objects are arenas and whose morphisms $\sigma : A \to B$ are (equivalence classes of isomorphic) sheaves $\sigma : \CPlay{A, B}^\op \to \Set$ is cartesian closed. 

Just as innocent strategies are view dependent, so there is a compelling sense in which sheaves on plays, $\CPlay{A, B}^\op \to \Set$, depend on (indeed, are determined by) sheaves on \emph{views},  $\CView{A, B}^\op \to \Set$, where $\CView{A, B}$ is a full subcategory of $\CPlay{A, B}$. The subcategory $\CView{A, B}$, whose objects are \emph{nonempty P-views}, is a preorder, and the induced topology is trivial (every object has a unique covering sieve which is maximal). Since every object in $\CPlay{A, B}$ has a covering sieve by objects of the subcategory $\CView{A, B}$, thanks to the Comparison Lemma \cite{Beilinson12,Verdier72}, $\iota^\ast : \Sheaf(\CPlay{A, B}) \to \Sheaf(\CView{A, B})$ gives an equivalence of the respective categories (of sheaves), where $\iota : \CView{A, B} \hookrightarrow \CPlay{A, B}$ is the embedding.

Sheaves on views are important because they are easier to understand and calculate with than sheaves on plays; conversely, composition of the latter is easier to describe than that of the former. Let $\tau_M : \CView{A}^\op \to \Set$ be the denotation of a non-deterministic $\lambda$-term $M$. Then given $p \in \CView{A}$, $\tau_M(p)$ corresponds to the set of all possible runs (\emph{qua} plays) of $M$ whose P-view is $p$. Returning to Example~\ref{eg:harmer}:

\begin{example}\label{eg:harmer2}
Using the notation in Example~\ref{eg:harmer}, let \( p_0 = a\,b \), \( p_{1} = a\,b\,c\,d \) and \( p_{2} = a\,b\,c\,d' \). For $i = 1, 2$ define \( \tau_i \in \Sheaf(\CView{(\{ d \} \to \{ d' \} \to \{ c \}) \to \{ b \}, \{ a \})}) \) to be the sheaf-over-views denotation of $M_i$. Then
\[
\begin{array}{lcl}
  \tau_1(p_0) = \{ x_1 \} 
& \quad &
    \tau_2(p_0) = \{ x_{21}, x_{22} \}
  \\
  \tau_1(p_1) = \{ y_1 \} 
  & \quad &
    \tau_2(p_1) = \{ y_{21} \hspace{18pt} \} 
  \\
  \tau_1(p_2) = \{ z_1 \} 
& \quad &
    \tau_2(p_2) = \{ \hspace{18pt} z_{22} \}

\end{array}
\]
Notice that in the set of plays $\sem{M_2}$, there are two independent plays (which have the P-view) $p_0$.
\end{example}

Our approach gives rise to fully complete game models in each of the three cases of deterministic, nondeterministic and probabilistic branching. To our knowledge, in the second and third cases, ours are the first such factorisation-free constructions.


%
%
%

\paragraph{Related work}

The standard notion of innocence does not work well for certain language features, such as non-determinism. To address the deficiency, Levy \cite{Levy13} proposed a category of P-visible plays and \emph{viewing morphisms}. This is essentially our category $\CPlay{A}$ of plays. However in \emph{op.~cit.~}an innocent strategy $\sigma$ is still defined to be a certain \emph{set of plays}, namely, a lower-closed set of objects of the category: if $t \in \sigma$ and $s \to t$ is a morphism, then $s \in \sigma$. Because this definition captures only one of the two requirements of innocence (i.e.~$\Leftarrow$ of (\ref{eq:innocence})), Levy's construction will likely not yield accurate (fully complete) models of the non-deterministic $\lambda$-calculus.

A related approach by Hirschowitz et al.~\cite{HirschowitzP12,EberhartHS13} does view strategies as presheaves (and sheaves) on a category of plays. However, in contrast to our focus on higher-type computation, they are concerned with CCS-style concurrent computation which they model as multi-player games. Strategies are presheaves on a category of plays ${\mathbb E}_X$ over a \emph{position} $X$, and a strategy is deemed innocent if it is determined by its restriction to a subcategory of views ${\mathbb V}_X \hookrightarrow {\mathbb E}_X$. A \emph{position} is an undirected graph describing the channel-based communication topology connecting the players, and plays are certain ``glueings'' of moves over a position, with moves built-up using CCS constructs. Thus the connexions with our work seem superficial. 

Winskel et al.~\cite{StatonW10,RideauW11} have worked extensively on causal games as models of true concurrency, from the viewpoint of strategies as event structures with symmetries. Recently Clairambault et al.~\cite{CastellanCW14} built a conservative extension of HO/N games in a truly concurrent framework. An extensional quotient of their model yields a fully abstract model of PCF with parallel or.

Perhaps surprisingly, the question of what is the proper notion of \emph{innocence in the presence of non-determinism} is still open. Harmer and McCusker \cite{HarmerM99} seem only concerned with \emph{stateful} non-deterministic programs, namely non-deterministic Idealised Algol.

\paragraph{Technical preliminaries}


In the following we review the basic definitions of coverage, Grothendieck topology and sheaves, and refer the reader to the book \cite{MacLaneM92} for an exposition.

A \emph{coverage} on a category $\mathbb{C}$ is a map $J$ assigning to each object $s$ of $\mathbb{C}$ a collection $J(s)$ of families $\makeset{f_\xi : s_\xi \to s}_{\xi \in \Xi}$ of maps with codomain $s$, called \emph{covering families}, such that the system of families is ``stable under pullback'', meaning:
if $\makeset{f_\xi : s_\xi \to s}_{\xi \in \Xi}$ is a covering family and $g : t \to s$ is a map, then there is a covering family, $\makeset{h_\nu : t_\nu \to t}_{\nu \in N}$, such that each $g \circ h_\nu$ factors through some $f_\xi$.
A number of \emph{saturation conditions} are often imposed on a coverage for convenience. A \emph{site} is a category $\cat C$ equipped with a coverage $J$, written $(\cat{C}, J)$.

Given a family $S = \makeset{f_\xi : s_\xi \to s}_{\xi \in \Xi}$ of maps with codomain $s$, and a presheaf $F : \opp{\cat{C}} \to \Set$, a family of elements $\makeset{x_\xi \in F(s_\xi)}_{\xi \in \Xi}$ is said to be \emph{matching for $S$} if for all maps $g : t \to s_\xi$ and $h : t \to s_{\xi'}$, if $f_\xi \circ g = f_{\xi'} \circ h$ then $F(g)(x_\xi) = F(h)(x_{\xi'})$. An \emph{amalgamation} for the family $\makeset{x_\xi \in F(s_\xi)}_{\xi \in \Xi}$ is an $x \in F(s)$ such that $F(f_\xi)(x) = x_\xi$ for every $\xi \in \Xi$. A presheaf $F : \opp{\cat{C}} \to \Set$ is a \emph{sheaf} for a family $S = \makeset{f_\xi : s_\xi \to s}_{\xi \in \Xi}$ of maps just if every matching family for $S$ has a unique amalgamation. A presheaf is a \emph{sheaf} for a site if it is a sheaf for every covering family of the site. 

A \emph{sieve} on an object $s$ in a category $\cat{C}$ is a family of maps with codomain $s$ that are closed under precomposition with maps in $\cat{C}$. Given a family $\makeset{f_\xi : s_\xi \to s}_{\xi \in \Xi}$, the {sieve} it generates is the family of all maps $g : t \to s$ with codomain $s$ that factor through some $f_\xi$. A presheaf is a sheaf for a family $\makeset{f_\xi : s_\xi \to s}_{\xi \in \Xi}$ if, and only if, it is a sheaf for the sieve it generates. If $S$ is a sieve on $s$ and $g : t \to s$ is a map, we define $g^\ast(S)$ to be the sieve on $t$ consisting of all maps $h$ with codomain $t$ such that $g \circ h$ factors through some map in $S$.

A \emph{Grothendieck topology} is a map $J$ that assigns to each object $s$ of $\cat{C}$ a collection $J(s)$ of sieves on $s$, called \emph{covering sieves}, that satisfies the following:
\begin{enumerate}[(i)]
\item The maximal sieve, $\makeset{f \mid \cod(f) = s}$, is in $J(s)$.

\item (\emph{Stability}) If $S \in J(s)$ then $h^\ast(S) \in J(t)$ for every map $h : t \to s$.

\item (\emph{Transitivity}) If $S \in J(s)$ and $R$ is a sieve on $s$ such that $h^\ast(R) \in J(t)$ for every $h : t \to s$ in $S$, then $R \in J(s)$.
\end{enumerate}

\begin{lemma}
For every coverage, there is a unique Grothendieck topology that has the same sheaves.
\end{lemma}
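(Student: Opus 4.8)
The plan is to establish two things: that there \emph{is} a Grothendieck topology $\bar J$ with the same sheaves as the coverage $J$, and that it is \emph{unique}. For existence, I would first pass from covering families to sieves: by the remark recorded above, a presheaf is a sheaf for a family $\makeset{f_\xi : s_\xi \to s}_\xi$ iff it is a sheaf for the sieve that family generates, so being a sheaf for the coverage $J$ is the same as being a sheaf for every sieve generated by a $J$-covering family. Now take $\bar J$ to be the intersection of all Grothendieck topologies $J'$ on $\mathbb{C}$ for which every sieve generated by a $J$-covering family is $J'$-covering. This is well-defined because Grothendieck topologies on $\mathbb{C}$ are closed under arbitrary intersection (conditions (i)--(iii) are visibly preserved by intersecting the collections $J'(s)$), and there is at least one topology with the stated property, namely the one in which every sieve covers. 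One inclusion is then immediate: a $\bar J$-sheaf is in particular a sheaf for every sieve generated by a $J$-covering family, hence a $J$-sheaf.

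The substance is the converse, that every $J$-sheaf $F$ is a $\bar J$-sheaf. The key auxiliary claim is: \emph{if $F$ is a $J$-sheaf and a sieve $R$ on an object $d$ contains some $J$-covering family $\makeset{h_\nu : d_\nu \to d}_\nu$, then $F$ is a sheaf for $R$.} Given a matching family for $R$ in $F$, one restricts it to $\makeset{h_\nu}_\nu$, amalgamates by the $J$-sheaf property to obtain a candidate $x \in F(d)$, and then for an arbitrary $h : e \to d$ in $R$ uses pullback-stability of the coverage to pull $\makeset{h_\nu}_\nu$ back to a $J$-covering family $\makeset{k_\mu : e_\mu \to e}_\mu$ of $e$ with each $h\,k_\mu$ factoring through some $h_\nu$; a short computation shows $F(k_\mu)(F(h)(x))$ and $F(k_\mu)(x_h)$ agree for all $\mu$, so $F(h)(x) = x_h$ by separatedness of $F$, and uniqueness of the amalgamation is similar. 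Granting the claim, let $M(s)$ be the set of sieves $S$ on $s$ such that $F$ is a sheaf for $g^\ast(S)$ for \emph{every} $g : t \to s$ (building stability into the definition). Then $M$ is a Grothendieck topology: (i) is trivial; (ii) is immediate from $h^\ast(g^\ast(S)) = (g h)^\ast(S)$; and (iii) follows from the presheaf-level transitivity chase for $F$, applied once at $s$ and once at each $t$. Moreover $M$ contains every sieve generated by a $J$-covering family, since the pullback of such a sieve along any $g$ again contains a $J$-covering family, so the auxiliary claim applies. Hence $\bar J \subseteq M$, so $F$ is a sheaf for every $\bar J$-covering sieve, and together with the first inclusion this shows $J$ and $\bar J$ have exactly the same sheaves. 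I expect the auxiliary claim to be the main obstacle --- in particular the step verifying that $x$ restricts correctly along \emph{all} of $R$, not just along the chosen $J$-cover, which is precisely where pullback-stability of the coverage is used; the rest of the existence argument is bookkeeping.

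For uniqueness it suffices to show that \emph{any} Grothendieck topology $J'$ is recoverable from its class of sheaves, via the criterion: a sieve $S$ on $s$ is $J'$-covering iff every $J'$-sheaf is a sheaf for $S$. The forward direction is the definition of sheaf for a site. For the converse, regard $S$ as a subobject of the representable presheaf $\mathbb{C}(-,s)$; then ``every $J'$-sheaf is a sheaf for $S$'' says exactly that $\mathrm{Hom}(-,F)$ inverts the mono $S \hookrightarrow \mathbb{C}(-,s)$ for all $J'$-sheaves $F$, equivalently that this mono is $J'$-dense, equivalently --- unwinding the definition of dense mono --- that $b^\ast(S) \in J'(t)$ for every $b : t \to s$; taking $b = \ident_s$ gives $S \in J'(s)$, and conversely $S \in J'(s)$ forces $b^\ast(S) \in J'(t)$ by the stability axiom. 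This is the classical characterisation of dense monomorphisms \cite{MacLaneM92}. Since the criterion refers only to the class of $J'$-sheaves, two Grothendieck topologies with the same sheaves assign the same covering sieves to every object; hence the $\bar J$ constructed above is the unique Grothendieck topology with the same sheaves as $J$.
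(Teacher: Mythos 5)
The paper does not prove this lemma at all: it is stated in the technical preliminaries as a standard fact, with the reader referred to \cite{MacLaneM92}, so there is no in-paper argument to compare yours against. Your proposal is a correct rendition of the standard proof. The existence half is sound: the auxiliary claim (a $J$-sheaf is a sheaf for any sieve containing a $J$-covering family) is exactly where pullback-stability of the coverage enters, and your computation $F(k_\mu)(F(h)(x)) = F(g_\mu)(x_{h_\nu}) = F(k_\mu)(x_h)$ followed by separatedness is the right one; defining $M(s)$ with stability built in and checking it is a Grothendieck topology is the standard device for showing the saturation $\bar J$ does not create new sheaf conditions. Two steps are compressed and worth flagging, though both are genuinely classical: the transitivity axiom for $M$ is not a formality --- it is the one axiom requiring a real matching-family chase (this is the ``largest topology for which $F$ is a sheaf'' lemma) --- and, in the uniqueness half, the implication from ``every $J'$-sheaf inverts $S \hookrightarrow \mathbb{C}(-,s)$'' to ``$S$ is $J'$-dense'' is the nontrivial direction of the dense-mono characterisation and needs the associated-sheaf functor (or an equivalent construction); the converse direction is the easy one. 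Since you cite \cite{MacLaneM92} for precisely these points, the argument is complete at the level of detail the paper itself operates at.
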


\paragraph{Notation}
We write \( \Nat \) for the set of all positive integers.  For an integer \( n \), we define \( [n] := \{ k \mid 1 \le k \le n \} \) and \( [n]_0 := \{ k \mid 0 \le k \le n \} \). For a category \( \mathbb{C} \), we write \( x \in \mathbb{C} \) to mean that \( x \) is an object of \( \mathbb{C} \).


\section{Sites of Plays}
This section defines sites of plays over an arena. The innocent strategies are just sheaves over those sites.  The category of plays has a subcategory of views.  We prove that the sheaves over plays is equivalent to sheaves over views: this generalises view dependency to non-deterministic computation.

\subsection{Plays}

The definition of arenas is standard (as in \cite{HylandO00}) except that all moves are questions.
\begin{definition}[Arena]
An \emph{arena} is a tuple \( A = (\Moves[A], \lambda_A, {\vdash_A}) \), where \( \Moves[A] \) is a finite set of \emph{moves}, \( \lambda_A : \Moves[A] \to \{ \PP, \OO \} \) is an ownership function and \( (\vdash_A) \subseteq (\{ \star \} + \Moves[A]) \times \Moves[A] \) is an \emph{enabling relation} that satisfies the following conditions:
\begin{inparaenum}[$(1)$]
\item for every \( m \in \Moves[A] \), there is a unique \( x \in \{ \star \} + \Moves[A] \) such that \( x \vdash_A m \), and
\item if \( \star \vdash_A m \), then \( \lambda_A(m) = \OO \).  If \( m \vdash_A m' \), then \( \lambda_A(m) \neq \lambda_A(m') \).
\end{inparaenum}
\end{definition}

For an arena \( A \), the set \( \OMoves[A] \) of \emph{O-moves} is defined as \( \{ m \in \Moves[A] \mid \lambda_A(m) = \OO \} \).  The set of \emph{P-moves} is defined by \( \PMoves[A] := \{ m \in \Moves[A] \mid \lambda(m) = \PP \} \).  A move \( m \) is \emph{initial} if \( \star \vdash_A m \).  {An arena is \emph{prime} if it has exactly one initial move.}

We write \( \{ m \} \) for the arena that has one O-move \( m \) and no P-moves.  For a prime arena \( A \) and an arena \( B \), \( B \to A \) is the arena whose moves are \( \Moves[A] + \Moves[B] \) where the initial \( B \)-move is enabled by the unique initial \( A \)-move.  For example, \( \{ m_1 \} \to \{ m_2 \} \to \{ m_3 \} \) consists of an O-move \( m_3 \) and P-moves \( m_1 \) and \( m_2 \) with \( \star \vdash m_3 \), \( m_3 \vdash m_1 \) and \( m_3 \vdash m_2 \).

Unlike the standard formalisation, in which notions such as justified sequences and plays are parametrised by arenas, we parametrise them by a pair of arenas \( (A, B) \), corresponding to the exponential arena \( A \Rightarrow B \) in the standard formalisation.  This change simplifies some definitions.
\begin{definition}[Arena pair]
Let \( A = (\Moves[A], \lambda_A, {\vdash_A}) \) and \( B = (\Moves[B], \lambda_B, {\vdash_B}) \) be arenas.  The \emph{moves of \( (A, B) \)} is the disjoint union of moves, say \( \Moves[A, B] := \Moves[A] + \Moves[B] \).  We define \emph{P-moves} by \( \PMoves[A, B] := \OMoves[A] + \PMoves[B] \) and \emph{O-moves} by \( \OMoves[A, B] := \PMoves[A] + \OMoves[B] \).  For \( m, m' \in \Moves[A, B] \), we write \( m \vdash_{A, B} m' \) just if either
\begin{inparaenum}[(1)]
\item \( m, m' \in \Moves[A] \) and \( m \vdash_{A} m' \), or
\item \( m, m' \in \Moves[B] \) and \( m \vdash_{B} m' \), or
\item \( \star \vdash_{B} m \in \Moves[B] \) and \( \star \vdash_{A} m' \in \Moves[A] \).
\end{inparaenum}
We write \( \star \vdash_{A, B} m \) just if \( \star \vdash_{B} m \in \Moves[B] \).
\end{definition}

For a pair \( (A, B) \), an \emph{initial \( A \)-move} is a move \( m \in \Moves[A] \subseteq \Moves[A, B] \) such that \( \star \vdash_A m \): do not confuse it with \( \star \vdash_{A, B} m \), which is impossible.  An \emph{initial \( B \)-move} is defined similarly.

\begin{definition}[Justified sequence]
Let \( (A, B) \) be a pair of arenas.  A \emph{justified sequence of \( (A, B) \)} is a finite sequence of moves equipped with justification pointers.  Formally it is a pair of functions \( s : [n] \to \Moves[A, B] \) and \( \varphi : [n] \to [n]_0 \) (for some \( n \)) such that
\begin{itemize}
\item \( \varphi(k) < k \) for every \( k \in [n] \), and
\item \( \varphi \) respects the enabling relation: \( \varphi(k) \ne 0 \) implies \( s(\varphi(k)) \vdash_{A, B} s(k) \), and \( \varphi(k) = 0 \) implies \( \star \vdash_{A, B} s(k) \).
\end{itemize}
As usual, by abuse of notation, we often write \( m_1 \, ßm_2 \dots m_n \) for a justified sequence such that \( s(i) = m_i \) for every \( i \), leaving the justification pointers implicit.  Further we use \( m \) and \( m_i \) as metavariables of \emph{occurrences} of moves in justified sequences.  We write \( m_i \curvearrowleft m_j \) if \( \varphi(j) = i > 0 \) and \( \star \curvearrowleft m_j \) if \( \varphi(j) = 0 \).  We call \( m_i \) the \emph{justifier of \( m_j \)} when \( m_i \curvearrowleft m_j \).  We write \( \curvearrowleft^+ \) for the transitive closure of \( \curvearrowleft \).  We write \( |s| \) for the length of \( s \).
\end{definition}

It is convenient to relax the domain \( [n] \) of justified sequences to arbitrary linearly-ordered finite sets such as a subset of \( [n] \).  For example, given a justified sequence \( (s: [n] \to \Moves[A,B], \varphi : [n] \to [n]_0) \), consider a subset \( I \subseteq [n] \) that respects the justification pointers, i.e.~\( k \in I \) implies \( \varphi(k) \in I \cup \{ 0 \} \).  Then the restriction \( (s{\upharpoonright_I} : I \to \Moves[A,B], \varphi{\upharpoonright_I} : I \to \{ 0 \} \cup I) \) is a justified sequence in the relaxed sense.  Through the unique monotone bijection \( \alpha : I \to [n'] \), we identify the restriction with the justified sequence in the narrow sense. 

A justified sequence is \emph{alternating} if \( s(k) \in \OMoves[A,B] \) iff \( k \) is odd (so \( s(k) \in \PMoves[A, B] \) iff \( k \) is even).

\begin{definition}[P-View/P-visibility]
Let \( m_1 \dots m_n \) be an alternating justified sequence over \( (A, B) \).  Its \emph{P-view} \( \View{m_1 \dots m_n} \) (or simply \emph{view}) is a subsequence defined inductively by:
\begin{align*}
  \View{m_1 \dots m_n} &:= \View{m_1 \dots m_{n-1}}\,m_n &\textrm{(if \( m_n \in \PMoves[A, B] \))} \\
  \View{m_1 \dots m_n} &:= m_n &\textrm{(if \( \star \curvearrowleft m_{n} \in \OMoves[A, B] \))} \\
  \View{m_1 \dots m_n} &:= \View{m_1 \dots m_k} \, m_n &\textrm{(if \( m_k \curvearrowleft m_n \in \OMoves[A, B] \)).}
\end{align*}
More formally, given an alternating justified sequence \( s \) of length \( n \), its view is a subset \( I \subseteq [n] \).  The above equation gives the restriction of \( s \) to \( I \).  A view is, in general, not a justified sequence since the justifier of a move may have been removed.

Let \( m_k \) be a P-move in the sequence.  Its justifier is said to be \emph{P-visible} if it is in \( \View{m_1 \dots m_k} \).  An alternating justified sequence is \emph{P-visible} 
if the justifier of each P-move occurrence in \( s \) is P-visible.
\end{definition}

\begin{definition}[Play]
An alternating justified sequence over a pair \( (A, B) \) of arenas is a \emph{play} just if it is P-visible and its last move is a P-move \( m \in \PMoves[A, B] \).
\end{definition}

\begin{remark}
In contrast to the standard definition of play in innocent game semantics (as in \cite{HylandO00}), we do not require \emph{O-visibility}.  This is technically convenient because O-visibility is not preserved by commutations (see Definition~\ref{def:play-commute}).  Note also that a play may have several initial moves, i.e.~we do not assume \emph{well-openness}.
\end{remark}

\subsection{Morphisms between plays that respects P-views}
In the traditional HO/N game models, the set of plays are considered as a poset ordered by the prefix ordering.  In this subsection, we introduce a richer structure to plays, 
\lochanged{organising them into a category. This is essentially the category introduced by Levy~\cite{Levy13}.}

It is useful to view an even-length alternating justified sequence is a sequence of \emph{pairs} of O- and P-moves, which we shall call a \emph{block} (or an \emph{O-P block}).

\begin{definition}[Morphism between plays]\label{def:play-morphism}
Let \( m_1 \dots m_n \) and \( m'_1 \dots m'_{n'} \) be plays of length \( n \) and \( n' \), respectively.  A \emph{morphism between plays} is an injection \( f : [n] \to [n'] \) s.t.~for every \( k \in [n] \)
\begin{enumerate}[(i)]
\item \( m_k = m'_{f(k)} \) (as moves),
\item \( m_i \curvearrowleft m_k \) implies \( m'_{f(i)} \curvearrowleft m'_{f(k)} \) (and similarly for \( \star \curvearrowleft m_k \)), and
\item if an O-move \( m_k \) is followed by a P-move \( m_{k+1} \), then \( m'_{f(k)} \) is followed by \( m'_{f(k+1)} \) (i.e.~\( f(2l-1) + 1 = f(2l) \) for all \( l \)). 
\end{enumerate}
I.e.~a morphism between plays is an injective map between {O-P blocks} 
that preserves moves and justification pointers. 
We define \( \codom(f) := \{ f(k) \mid k \in [n] \} \subseteq [n'] \).
\end{definition}

\begin{example}\label{eg:morphisms}
\begin{inparaenum}[$(i)$]
\item Let \( s = m_1 \dots m_n \) be a play and \( s' = m_1 \dots m_l \) be its (even-length) prefix.  Then \( f(i) = i \) (for \( i \le l \)) is a morphism \( f : s' \to s \).  In other words, each prefix \( s' \le s \) induces a morphism.  (But this may not be the unique morphism of \( s' \to s \).)
\item Let \( s = m_1 \dots m_{n-1} m_n \) and assume that \( m_k \curvearrowleft m_{n-1}^{\OO} \).  Then we have \( f : (m_1 \dots m_k m_{n-1} m_n) \to s \), where \( f(i) = i \) (if \( i \le k \)), \( f(k+1) = n-1 \) and \( f(k+2) = n \).
\item For every play \( s \), we have a unique morphism \( f : \PView{s} \to s \) that maps the last move of \( \PView{s} \) to the last move of \( s \) (though there may exist another morphism that does not satisfy this condition).  In this sense, the morphisms of the category is an generalisation of the notion of P-views.
\item Let \( s = s_0 \, m_{n-3} m_{n-2} \, m_{n-1} m_n \) be a play and assume that the justifier of \( m_{n-1} \) is not \( m_{n-2} \).  Let \( s' \) be the play \( s_0 \, m_{n-1} m_{n} \, m_{n-3} m_{n-2} \) obtained from \( s \) by commuting O-P blocks \( m_{n-3} m_{n-2} \) and \( m_{n-1} m_n \).  There is an isomorphism \( f : s \to s' \), given by \( f(i) = i \) (if \( i < n-3 \)), \( f(n-3) = n-1 \), \( f(n-2) = n \), \( f(n-1) = n-3 \) and \( f(n) = n-2 \).
\end{inparaenum}
\end{example}

\tkchanged{
\begin{example}
Let \( (A, B) = ((\{ d \} \to \{ c \}) \to \{ b \},\; \{ a \}) \) be a pair of arenas and the play \( s = a\,b\,c\,d\,c'\,d' \) (where \( c = c' \) and \( d = d' \) as moves) over \( (A, B) \) in which \( c' \) points to \( b \) and all other moves are justified by their preceding move.  Let \( s' = a\,b\,c\,d \) be another play of \( (A, B) \).  Then \( s' \) can be regarded as a prefix of \( s \) and as the P-view \( \PView{s} \) of \( s \).  The first perspective induces the morphism \( f : s' \to s \), where \( f(i) = i \) (for \( i \in [4] \)), and the second perspective does \( g : s' \to s \), where \( g(1) = 1 \), \( g(2) = 2 \), \( g(3) = 5 \) and \( g(4) = 6 \).
\end{example}
}

\begin{definition}[Category of plays]
Let \( A \) and \( B \) be arenas.  The \emph{category \( \CPlay{A, B} \) of plays} has plays of \( (A, B) \) as objects and as morphisms those defined above.
\end{definition}

\tkchanged{
\begin{lemma}
\( \CPlay{A, B} \) has pullbacks.
\end{lemma}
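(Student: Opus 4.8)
The plan is to build the pullback of an arbitrary cospan $f : s_1 \to s$, $g : s_2 \to s$ in $\CPlay{A,B}$ as the ``intersection of the two images, carrying the structure inherited from $s$''. Concretely, set $W := \codom(f) \cap \codom(g) \subseteq [|s|]$, let $\alpha : W \to [|W|]$ be the unique monotone bijection, let $P$ be the sequence obtained by restricting $s$ to $W$ and relabelling occurrences along $\alpha$, and let $\iota : P \to s$ be the inclusion (that is, $\iota = \alpha^{-1}$ viewed as a map into $[|s|]$). Since $W$ is contained in both $\codom(f)$ and $\codom(g)$, the partial inverses of $f$ and $g$ restrict to $W$, and composing them with $\alpha^{-1}$ yields candidate projections $p_1 : P \to s_1$ and $p_2 : P \to s_2$ satisfying $f \circ p_1 = \iota = g \circ p_2$. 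One useful observation for later is that every morphism of $\CPlay{A,B}$ is an injective function, hence a monomorphism.

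The first task is to check that $P$ is a play and that $\iota, p_1, p_2$ are morphisms. By condition~(iii) of Definition~\ref{def:play-morphism}, a morphism carries each O-P block of its domain \emph{onto} an O-P block of its codomain, so $\codom(f)$ is a union of O-P blocks of $s$; and since $f$ preserves justification pointers (including $\star$-pointers) and $s_1$ is itself a justified sequence, $\codom(f)$ is closed under justifiers. The same holds for $\codom(g)$, so $W$ is a pointer-closed union of O-P blocks of $s$. It follows at once that $P$ is an alternating justified sequence whose last move is a P-move (with $P$ the empty play exactly when $W = \emptyset$, which does occur since plays need not be well-opened), and that $\iota, p_1, p_2$ satisfy the three defining conditions of a morphism. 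The one genuinely delicate point is \emph{P-visibility} of $P$, which I would derive from the following restriction lemma: \emph{if $t$ is a P-visible alternating justified sequence and $V$ is a pointer-closed union of O-P blocks of $t$, then $\proj{t}{V}$ is P-visible}. The natural argument is a strong induction on the length of a prefix, proving the sharper statement that for each $k \in V$ the P-view of the prefix of $\proj{t}{V}$ ending at (the relabelling of) $m_k$ contains $\PView{m_1 \cdots m_k} \cap V$; the inductive step only needs to unfold the recursive definition of P-view, using that when $k$ is even the block $(k-1,k)$ lies in $V$, and that when $k$ is odd the justifier of the O-move $m_k$ is a P-move, hence sits at an even position and, by pointer-closure, in $V$. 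Taking $t := s$ and $V := W$ shows $P$ is P-visible.

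The universal property is then a routine diagram chase using that all morphisms are monic. Given another cone $q_1 : Q \to s_1$, $q_2 : Q \to s_2$ with $q := f \circ q_1 = g \circ q_2$, we have $\codom(q) \subseteq \codom(f) \cap \codom(g) = W$, so $h := \alpha \circ q$ is a well-defined function $Q \to P$; it preserves moves, pointers and O-P blocks because $q$ does and the structure on $W$ is precisely the restriction of that on $s$, and $\iota \circ h = q$ by construction. Then $f \circ (p_1 \circ h) = \iota \circ h = q = f \circ q_1$, whence $p_1 \circ h = q_1$ by injectivity of $f$, and symmetrically $p_2 \circ h = q_2$; uniqueness of the mediating map follows from injectivity of $\iota$. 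The crux of the whole proof is the P-visibility (restriction) lemma of the second paragraph; everything else is bookkeeping about injections that respect the combinatorial data of plays.
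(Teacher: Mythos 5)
Your construction is exactly the paper's: the pullback is the restriction of $t$ to $\codom(f)\cap\codom(g)$, which the paper states in one line without verification. Your proposal is correct and simply supplies the details the paper omits (that the intersection is a pointer-closed union of O-P blocks, the P-visibility of the restriction, and the universal property via injectivity of morphisms).
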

\begin{proof}
Let \( f : s_1 \to t \) and \( g : s_2 \to t \).  They are injective maps \( f : [|s_1|] \to [|t|] \) and \( g : [|s_2|] \to [|t|] \).  Let \( I = \codom(f) \cap \codom(g) \).  The restriction of \( t \) to \( I \) is the pullback \( s_1 \times_t s_2 \).
\end{proof}
}

\tkchanged{We give another definition of morphisms via \emph{commutation}.}
\begin{definition}[Commutation of non-interfering blocks]\label{def:play-commute}
Let \( s \) be an even-length alternating justified sequence over \( (A, B) \).  Let \( m_1 m_1'\, m_2 m_2' \) be an adjacent pair of O-P \tkchanged{blocks} in \( s \), i.e.~\( s = t \, m_1 m_1'\, m_2 m_2' \, t' \), where \( m_1 \) and \( m_2 \) are O-moves.  We say that the pairs are \emph{non-interfering} if the justifier of \( m_2 \) is \emph{not} \( m_1' \).  The \emph{commuted sequence} \( s' \) is defined by \( s' := t\, m_2 m_2' \, m_1 m_1'\, t' \) (in which the justification pointers are modified accordingly).
\end{definition}

A commuted sequence is not always a justified sequence: if \( m_2' \) is justified by \( m_1 \), then \( m_2' \) in the commuted sequence is not well-justified.  \tkchanged{If the justified sequence is P-visible, the commuted sequence is a justified sequence.}
Furthermore the converse also holds.
\begin{lemma}\label{lem:exchange-imply-visibility}
Let \( P \) be a set of even-length alternating justified sequences over \( (A, B) \).  Suppose that \( P \) is closed under commutations, i.e.~for every sequence \( s \in P \) and every non-interfering adjacent pairs of blocks in \( s \), the commuted sequence is also in \( P \).  Then all justified sequences in \( P \) are plays.
\end{lemma}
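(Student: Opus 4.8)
Since every element of $P$ is an even-length alternating justified sequence, its last move occupies an even position and is therefore a P-move; so a play and an element of $P$ differ only in that a play must be P-visible. Hence the task is: show every $s \in P$ is P-visible. I argue by contradiction. The engine of the proof is the observation that, by hypothesis, a non-interfering commutation applied to an element of $P$ yields again an element of $P$, hence in particular an (even-length, alternating) \emph{justified} sequence. So if, starting from some $s \in P$, a finite chain of non-interfering commutations produces a sequence in which some justification pointer points \emph{forwards}, we have a contradiction --- and I will manufacture exactly such a chain from any P-visibility violation.

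So suppose $s = m_1 \cdots m_n \in P$ is not P-visible and fix a P-move occurrence $m_b$ whose justifier $m_a$ lies outside $\View{m_1 \cdots m_b}$. Since the P-view of a sequence ending with a P-move always terminates with the last O--P block, $m_a \neq m_{b-1}, m_b$, and as $m_a$ is an O-move, $a \le b-3$; moreover $m_a\,m_{a+1}$ is a full block of $s$. Let $m_{b-1}$ be the O-move immediately before $m_b$, and split into three cases according to its justifier: \emph{(a)} $m_{b-1}$ is initial; \emph{(b)} $m_{b-1} \curvearrowleft m_c$ with $c < a$; \emph{(c)} $m_{b-1}\curvearrowleft m_c$ with $a < c < b-1$ (these exhaust the possibilities, as $c$ is even and $a$ is odd). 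In cases (a) and (b) the plan is to slide the block $m_{b-1}\,m_b$ leftwards, one adjacent non-interfering commutation at a time. Each step past an intervening block $m_d\,m_{d+1}$ is non-interfering, because the justifier of $m_{b-1}$ (namely $m_c$, or none) is distinct from the P-move $m_{d+1}$; and each step keeps the sequence a justified sequence, because $m_a$ --- the only move that $m_b$ or $m_{b-1}$ points ``outwards'' to --- lies strictly to the left of every block passed, so no pointer is reversed. Slide until $m_{b-1}\,m_b$ sits immediately after $m_a\,m_{a+1}$, then perform one more commutation swapping $m_a\,m_{a+1}$ with $m_{b-1}\,m_b$: this pair is still non-interfering (the justifier of $m_{b-1}$ is not the P-move $m_{a+1}$), but in the result $m_b$ precedes its justifier $m_a$, so it is not a justified sequence --- contradiction.

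In case (c), the leftward slide of $m_{b-1}\,m_b$ stalls at the block $m_{c-1}\,m_c$ (the commutation past it is interfering). Slide $m_{b-1}\,m_b$ down to sit immediately after $m_{c-1}\,m_c$, obtaining $s' \in P$ (or $s' = s$ if it is already there, which forces $c = b-2$). The violation persists in $s'$: the prefix $m_1\cdots m_{c-1}$ is untouched, and in $s$ we already have $m_a \notin \View{m_1\cdots m_{c-1}}$, since $\View{m_1\cdots m_b} = \View{m_1\cdots m_{c-1}}\,m_c\,m_{b-1}\,m_b$. One then recurses with respect to a well-founded measure --- essentially the pair $\bigl(|s|,\ (\text{position of }m_b) - (\text{position of }m_a)\bigr)$ ordered lexicographically --- which strictly decreases after the slide except in the degenerate subcase $c = b-2$; in that subcase $m_{b-1}\,m_b$ is already adjacent to $m_{b-3}\,m_{b-2}$ and $m_a \notin \View{m_1\cdots m_{b-3}}$, and one descends one more level, e.g.\ by sliding the block $m_{b-3}\,m_{b-2}$ instead (its O-move $m_{b-3}$ has justifier distinct from the blocks it passes, and $m_{b-1}$, which still points to $m_{b-2}$, remains to its right), or by choosing the initial violation at the outset to minimise the measure over the (finite) commutation-orbit of $s$ inside $P$.

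The main obstacle is precisely the bookkeeping of case (c): pinning down a measure that provably decreases in every subcase, and checking at each step that the commutation invoked is genuinely between non-interfering adjacent blocks and either preserves justified-sequence-hood or exhibits the desired forward-pointing justifier. Note that one cannot shortcut this via the isomorphisms of $\CPlay{A,B}$ induced by commutations, because that category is \emph{defined} to have only plays as objects, which is what we are trying to establish. Finally, the converse direction --- P-visible implies the commuted sequence is a justified sequence --- was recorded just before the statement; together the two say exactly that closure under commutations is equivalent to all sequences in $P$ being plays.
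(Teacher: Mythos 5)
Your core mechanism is sound, and it is in fact the same engine the paper's proof runs on: since $P$ contains only justified sequences and is closed under non-interfering commutation, every such commutation of an element of $P$ must again be a justified sequence, so manufacturing a forward-pointing justifier yields a contradiction. Cases (a) and (b) are correct and complete. The genuine gap is case (c), and it is more than bookkeeping: your measure $\mathrm{pos}(m_b)-\mathrm{pos}(m_a)$ decreases exactly once. After the first slide, the block $m_{b-1}m_b$ sits immediately after the block $m_{c-1}m_c$ of its O-move's justifier, so the recursive call lands in the degenerate subcase $c=b-2$ and remains there permanently; neither fallback is shown to restore progress (sliding $m_{b-3}m_{b-2}$ leftward triggers the very same three-way case split one level down, on the justifier of $m_{b-3}$, and ``minimising over the commutation orbit'' comes with no argument that a minimal violation must fall into case (a) or (b)). What is really needed is an induction along the whole view-chain of $m_b$, and the leftward-sliding set-up makes that awkward because the block you want to move can be blocked by interference at every level.

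The paper runs the manoeuvre in the opposite direction, which removes the case analysis entirely. Write $\View{m_1\cdots m_b}$ as a chain of O-P blocks $B_1<\cdots<B_k$ with $B_k=m_{b-1}m_b$, where the O-move of $B_1$ is initial and the O-move of each $B_{i+1}$ is justified by the P-move of $B_i$. Repeatedly take the rightmost non-view block $C$ occurring before $m_b$ and commute it one block to the right: the block immediately to its right is always some $B_j$, whose O-move is either initial or justified by the P-move of $B_{j-1}\neq C$, so the pair is automatically non-interfering. Pushing the same $C$ repeatedly drives it past $B_k$, and after finitely many rounds the prefix up to the occurrence corresponding to $m_b$ consists exactly of the view-chain, i.e.\ is its own P-view; since the result is still a justified sequence, the justifier of that occurrence lies in this prefix, hence in the P-view. (In your language: $m_am_{a+1}$ is a non-view block, so the process pushes it past $m_b$, reversing the pointer of $m_b$ --- your contradiction.) Both arguments also need the observation that the P-view of the prefix ending at a fixed P-move occurrence is invariant, as a set of occurrences, under these commutations; this holds because commutations preserve blocks and pointers. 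Replacing your case (c) by this rightward bubble sort closes the gap, and cases (a) and (b) then become redundant special instances.
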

\begin{proof}
Let \( s = m_1 \dots m_n \in P \) and \( m_k \) be a P-move occurrence in \( s \).  We prove that the justifier of \( m_k \) is in the P-view \( \PView{m_1 \dots m_k} \).  By commuting pairs as much as required, we can reach a sequence, say \( s' = m'_1 \dots m'_n \), such that \( m'_l \) is the move corresponding to \( m_k \) in \( s \) and \( m'_{2i} \curvearrowleft m'_{2i+1} \) for every \( 2i < l \).  This means that \( \PView{m'_1 \dots m'_l} = m'_1 \dots m'_l \) and hence the justifier of \( m'_l \) is in the view.  Since P-visibility is preserved and reflected by the commutation of non-interfering blocks, 
\( s \) is P-visible.
\end{proof}

\tkchanged{Every morphism can be expressed as the prefix embedding followed by commutations.  This is insightful and technically useful.}



\begin{lemma}\label{lem:map-play-char}
Every \( f : s \to t \) in \( \CPlay{A, B} \) can be decomposed as
\[
  s \xrightarrow{f} t \;=\; s \xrightarrow{\le} t_0 \xrightarrow{g_1} t_1 \xrightarrow{g_2} \dots \xrightarrow{g_n} t_n,
\]
where \( n \ge 0 \), \( t_n = t \) and \( g_i \) is a commutation of adjacent O-P blocks in \( t_{i-1} \) for every \( i \in [n] \).  (This decomposition is not unique.)
\end{lemma}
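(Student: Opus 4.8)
The plan is to isolate the class of morphisms admitting a decomposition of the stated shape --- call them \emph{good} --- and to prove: (a) identities, prefix embeddings and single commutations are good; (b) good morphisms are closed under pre- and post-composition with prefix embeddings and with commutations; (c) every morphism of \( \CPlay{A, B} \) is a finite composite of prefix embeddings and commutations. Granting these, (c) expresses an arbitrary \( f \) as such a composite and (a)--(b) normalise it to the required shape. In (b) the only non-routine point is that a commutation followed by a prefix embedding \( r \xrightarrow{g} r' \xrightarrow{\le} r'' \) (with \( r'' = r' \cdot X \)) equals a prefix embedding followed by a commutation \( r \xrightarrow{\le} (r \cdot X) \xrightarrow{g'} r'' \), where \( g' \) commutes the same adjacent pair of blocks --- still adjacent and non-interfering once \( X \) has been appended, since interference of a pair depends only on the P-move of its first block and the O-move of its second. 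Iterating this identity pulls every prefix embedding in a composite to the front and fuses them into one, so the content is in (c).

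For (c), given \( f : s \to t \), I would first reduce to the case where \( f \) is a bijection on moves. Call a block of \( t \) \emph{extra} if it is not in \( \codom(f) \). The O-move of an image block of \( t \) is justified either by \( \star \) or by a move in \( \codom(f) \) (the \( f \)-image of its justifier in \( s \)), hence never by the P-move of an extra block; so one may commute each extra block rightwards past the image blocks following it. Doing this one extra block at a time, starting from the right-most, moves all extra blocks to the tail of \( t \) without ever interchanging two extra blocks, and gives a composite of commutations \( c : t \xrightarrow{\sim} \tilde t \) in which the image blocks of \( f \) form the length-\( |s| \) prefix \( \tilde t_1 \) of \( \tilde t \). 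Then \( c \circ f \) corestricts to a bijection \( \phi : s \to \tilde t_1 \) and \( f = c^{-1} \circ \iota \circ \phi \) with \( \iota : \tilde t_1 \xrightarrow{\le} \tilde t \) the prefix embedding; by (a)--(b) it is enough to write \( \phi \) as a composite of commutations.

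This I would do by induction on \( |s| \), the small cases being immediate. The last block of \( \tilde t_1 \) is the \( \phi \)-image of some block \( \beta \) of \( s \). Commute \( \beta \) to the end of \( s \): each swap is non-interfering because \( \beta \) maps onto the last block of \( \tilde t_1 \), whose P-move is the last move of \( \tilde t_1 \) and so justifies no move, so the O-move of the block following \( \beta \) cannot be justified by the P-move of \( \beta \). This produces \( s \xrightarrow{\sim} \hat s \) and a bijection \( \hat\phi : \hat s \to \tilde t_1 \) carrying last block to last block, so that \( \phi \) is \( \hat\phi \) precomposed with commutations. Deleting the last block from each side yields a bijection \( \hat\phi^- : \hat s^- \to \tilde t_1^- \), which by the induction hypothesis is a composite of commutations \( \hat s^- = e_0 \to e_1 \to \dots \to e_m = \tilde t_1^- \); re-appending the common final block to each \( e_i \), with its pointers aimed at the occurrences carried along the chain, converts this into a chain \( \hat s = e_0^+ \to \dots \to e_m^+ = \tilde t_1 \) whose composite is \( \hat\phi \).

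The main obstacle is this last re-appending step. That the lifted map \( e_{i-1}^+ \to e_i^+ \) is again a commutation (of the same pair, untouched by the appended block) and that the endpoints of the lifted chain are \( \hat s \) and \( \tilde t_1 \) are routine index computations. The substantive point is that each \( e_i^+ \) is again a \emph{play}, i.e.\ P-visible: writing the appended block as an O-move \( n \) followed by a P-move \( n' \), with \( n \) pointing to an occurrence \( z \) and \( n' \) to an occurrence \( w \), P-visibility of \( \tilde t_1 \) says \( w \) lies in the P-view of the prefix of \( \tilde t_1^- \) up to \( z \), so what is needed is that a commutation of non-interfering blocks preserves the P-view of the prefix ending at any prescribed occurrence. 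This follows from the analysis behind Lemma~\ref{lem:exchange-imply-visibility}: when the occurrence lies before, within, or immediately after the commuted pair, non-interference together with the fact that in an arena pair no move enables a move of the same polarity forces the swapped block to be skipped by the view, so the views are equal; and when the occurrence lies strictly beyond the commuted pair one proceeds by a sub-induction. Everything else reduces to bookkeeping with injective maps on \( [n] \).
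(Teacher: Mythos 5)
Your proposal is correct, but it is organised quite differently from the paper's proof, which runs a single bubble-sort recursion on the codomain: it locates one ``defect'' of \( f \) (an image block whose immediate predecessor in \( t \) is either outside \( \codom(f) \) or the image of a later block of \( s \)), uses the pointer-preservation clause of Definition~\ref{def:play-morphism} to show the corresponding adjacent pair is non-interfering, commutes it via \( h : t \to t' \), recurses on \( h \circ f \), and prepends \( h^{-1} \); since \( h^{-1} \) is itself a commutation, the chain comes out already in the required shape, at the price of a termination argument that is only gestured at (``as for bubble sort''). You instead factor \( f \) as (commutations sweeping non-image blocks of \( t \) to the tail) \( \circ \) (prefix embedding) \( \circ \) (a block-bijection, handled by induction on \( |s| \)). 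This buys a genuinely structural termination argument, but you pay twice: once for the interchange law needed to push the prefix embedding to the front of the composite, and once for the re-appending step in the induction --- neither of which the paper needs. Your non-interference checks (an image block's O-move is justified inside \( \codom(f) \); the block of \( s \) sent to the last block of \( \tilde t_1 \) cannot interfere with its successor because the last move of \( \tilde t_1 \) justifies nothing) are exactly right. One simplification: the ``main obstacle'' you flag --- P-visibility of the lifted plays \( e_i^+ \) --- does not require the claim that commutation preserves the P-view of every prefix. Each lifted map \( e_{i-1}^+ \to e_i^+ \) is itself a commutation of a non-interfering adjacent pair of the play \( e_{i-1}^+ \) (adjacency and the interference condition are unaffected by appending a block at the end), so play-hood propagates forward along the chain from \( e_0^+ = \hat s \) by the fact, recorded just before Lemma~\ref{lem:exchange-imply-visibility}, that commuting non-interfering blocks of a P-visible sequence yields a P-visible sequence; all that then remains is the pointer bookkeeping showing \( e_m^+ = \tilde t_1 \), which follows from \( \hat\phi \) preserving pointers. (Your P-view-preservation claim is nonetheless true, for the reason you give.)
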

\begin{proof}
Let \( f : s \to t \) and \( t = m_1 \dots m_n \).  If \( f \) is induced by the prefix, then we complete the proof.  Otherwise, there is an odd number \( k \le |s| \) such that either \( f(k) - 2 \notin \codom(f) \) or \( f(l) = f(k) - 2 \) for some \( l > k \).  Then we claim that \( m_{f(k)-2} m_{f(k)-1} \) and \( m_{f(k)} m_{f(k)+1} \) in \( t \) is a non-interfering pair.  Suppose otherwise, i.e.~the justifier of \( m_{f(k)} \) is \( m_{f(k)-1} \).  Then \( f(k) - 1 \in \codom(f) \) since \( f \) preserves the justification pointer.  Let \( l' \le |s| \) be the index such that \( f(l') = f(k) - 1 \).  Since \( s(l') \curvearrowleft s(k) \), we have \( l' < k \).  Because \( l' \) is even, we have \( f(l'-1) = f(l')-1 = f(k)-2 \).  In summary, we have \( l < k \) such that \( f(l) = f(k) - 2 \), that contradict the assumption.  So the adjacent O-P blocks \( m_{f(k)-2} m_{f(k)-1} \) and \( m_{f(k)} m_{f(k)+1} \) in \( t \) is non-interfering.

Consider the commutation \( h : t \to t' \) and the inverse \( h^{-1} : t' \to t \), which is also a commutation.  By applying the same argument to \( h \circ f : s \to t' \), \( h \circ f \) can be decomposed as \( g_n \circ \dots \circ g_1 \circ g_0 \), where \( g_0 \) is induced by the prefix and \( g_i \) (\( i > 1 \)) is a commutation.  This inductive argument is justified by the same way as the termination of the bubble sort.  Then \( f = h^{-1} \circ g_n \circ \dots \circ g_1 \circ g_0 \).
\end{proof}

\begin{remark}
\tkchanged{Let \( \sigma \) be an innocent strategy in the standard sense, i.e.~an even-prefix closed subset of plays with a certain condition.  Then \( s \in \sigma \) and \( f : s' \to s \) in \( \CPlay{A, B} \) implies \( s' \in \sigma \).  To see this, observe that a commutation of \( s \in \sigma \) is in \( \sigma \) and use Lemma~\ref{lem:map-play-char}.}
\end{remark}

\subsection{Topology of \( \CPlay{A, B} \)}
As for the innocent strategies \( \sigma \) for deterministic calculi, which is a set of plays, a play \( s = m_1 \dots m_k \) is in the strategy \( \sigma \) iff P-views for (even-)prefixes are in \( \sigma \), i.e.~\( \{ \PView{m_1 \dots m_k} \mid k = 2, 4, \dots n \} \subseteq \sigma \).  We use the Grothendieck topology to capture this condition.

\begin{definition}[Covering family / sieve]
A family of morphisms \( \{ f_{\xi} : s_{\xi} \to s \}_{\xi \in \Xi} \) is said to \emph{cover} \( s \) when they are jointly surjective, i.e.\ \( \bigcup_{\xi \in \Xi} \codom(f_{\xi}) = [n] \), where \( n \) is the length of \( s \).  A \emph{covering sieve} is a sieve that is a covering family.  
\tkchanged{By abuse of notation, we write \( \CPlay{A, B} \) for the site associated with this topology.}
\end{definition}

\tkchanged{
\begin{example}\label{eg:covering}
\begin{inparaenum}[$(i)$]
\item For a play \( s = m_1 \dots m_{n} \), the family \( \{ f: (m_1 \dots m_{n-2}) \to s,\; g: \PView{s} \to s \} \) is a covering family.  Here \( f \) is induced by the prefix and \( g \) by the P-view (see Example~\ref{eg:morphisms}).
\item For a play \( s = m_1 \dots m_n \), the family \( \{ f_k : \PView{m_1 \dots m_k} \to s \}_{k \in \{ 2, 4, \dots, n \}} \) is a covering family.  Here \( f_k \) is the composite of the P-view embedding and the prefix embedding, i.e.,
\[
  \PView{m_1 \dots m_k} \stackrel{f_k}{\longrightarrow} s
  \;=\;
  \PView{m_1 \dots m_k} \longrightarrow (m_1 \dots m_k) \longrightarrow s.
\]
The covering family generalises \tk{resembles? extends?} the set of P-views of the prefixes.
\item The covering family is finer than the set of P-views.  Let \( s = m_1 m_2 m_1' m_2' \) (the repetition of \( m_1 m_2 \) twice).  Then \( \{ f : m_1 m_2 \to s \} \), where \( f(1) = 1 \) and \( f(2) = 2 \), is \emph{not} a covering family.  However \( \{ f : m_1 m_2 \to s,\ g : m_1 m_2 \to s \} \), where \( g(1) = 3 \) and \( g(2) = 4 \), is a covering family.  Notice that those two families have the same set of the domain, say \( \{ m_1 m_2 \} \), which is the set of P-views of \( s \).  
\end{inparaenum}
\end{example}
}

\begin{definition}
An \emph{innocent strategy} is a sheaf over \( \CPlay{A, B} \).
\end{definition}

\begin{remark}
Let \( \sigma \) be a functor \( \CPlay{A, B}^{\op} \to \Set \).  It is \emph{pre-deterministic} if \( \sigma(s) \) is empty or singleton for every \( s \).  A pre-deterministic functor can be determined by the set \( P_{\sigma} = \{ s \in \CPlay{A, B} \mid \sigma(s) \neq \emptyset \} \).  Since \( \sigma \) is a functor, the set \( P_{\sigma} \) is lower closed, i.e.~\( s \in P_{\sigma} \) and \( f : s' \to s \) in \( \CPlay{A, B} \) implies \( s' \in \sigma \).  A pre-deterministic functor \( \sigma \) is a sheaf just if \( s = m_1 \dots m_n \in P_{\sigma} \) iff \( \{ \PView{m_1 \dots m_k} \mid k = 2, 4, \dots, n \} \subseteq P_{\sigma} \).  To see this, observe that \( \{ \PView{m_1 \dots m_k} \to s \mid k = 2, 4, \dots, n \} \) is a covering family and the family of unique elements \( \{ x_k \in \sigma(\PView{m_1 \dots m_k}) \}_{k} \) is a matching family and thus there is an amalgamation \( x \in \sigma(s) \).  In this sense, for pre-deterministic strategies, the innocence is equivalent to the sheaf condition.  However, if \( \sigma(s) \) may have more than one element, innocence based on the set of views differs from the sheaf condition.
\end{remark}

\subsection{\tkchanged{Sheaves over \( \CPlay{A, B} \) and its restriction to P-views}}
In innocent game models for deterministic calculi (such as \cite{HylandO00}), one often considers the restriction of strategies to P-views.  A remarkable property is that an innocent strategy (\emph{qua} set of plays) is completely determined by the subset of P-views it contains.  After all, innocence means \emph{view dependence}. 

In this subsection, we shall see that a similar property holds for sheaves over plays \( \CPlay{A, B} \).  This property comes from the topological structure of plays: every play is covered by P-views \tkchanged{(see Example~\ref{eg:covering}$(ii)$).}
This observation gives a justification of defining innocent strategies as sheaves.

\begin{definition}[Subcategory of P-views]
A play \( s \in \CPlay{A, B} \) is a \emph{P-view} if \( \PView{s} = s \) and \tkchanged{\emph{\( s \) is not empty}}.  We use \( p \) as a metavariable ranging over P-views.  The \emph{category of P-views} \( \CView{A, B} \) is the full subcategory of \( \CPlay{A, B} \) consisting of P-views.  We write \( \ViewEmbed : \CView{A, B} \hookrightarrow \CPlay{A, B} \) for the embedding.  Henceforth we fix the topology for \( \CView{A, B} \) to be that induced%
\footnote{
\tk{Possible footnote}  Given a site \( \mathbb{C} \) and a full subcategory \( \mathbb{D} \hookrightarrow \mathbb{C} \), the induced topology on \( \mathbb{D} \) is defined by: a sieve \( S \) on \( \mathbb{D} \) is covering iff the sieve \( (S) := \{ f \circ h \mid f \in S,\; \mathrm{dom}(f) = \mathrm{codom}(h) \} \) on \( \mathbb{C} \) generated from \( S \) is covering.
}
from \( \CPlay{A, B} \): \tkchanged{it is the trivial topology, i.e.~every P-view has only one covering sieve, namely, the maximal sieve.}
\end{definition}

\tkchanged{The category of P-views is a poset.  We write \( (p' \le p) \) and \( (p \ge p') \) for the unique morphism \( f : p' \to p \) (if it exists).}

\tkchanged{Because the topology is trivial, a sheaf over \( \CView{A, B} \) is just a functor \( \CView{A, B}^{\op} \to \Set \).}
A sheaf \( \sigma \in \Sheaf(\CPlay{A, B}) \) 
induces a sheaf \( \sigma \circ \ViewEmbed \) over \( \CView{A, B} \).  The strategy \( \sigma \) can be reconstructed from the restriction to P-views \( \sigma \circ \ViewEmbed \) (up to natural isomorphism).

\begin{lemma}[Comparison]\label{lem:comparison}
The functor \( \iota^* : \Sheaf(\CPlay{A, B}) \ni \sigma \mapsto \sigma \circ \iota \in \Sheaf(\CView{A, B}) \) induces an equivalence of categories.
\end{lemma}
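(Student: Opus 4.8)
The plan is to invoke the Comparison Lemma (Grothendieck's ``comparison theorem'' for sites, as in \cite{Beilinson12,Verdier72} or SGA4), which says the following: if $\mathbb{D} \hookrightarrow \mathbb{C}$ is a full subcategory of a site $\mathbb{C}$, equipped with the induced topology, and if every object of $\mathbb{C}$ has a covering sieve generated by morphisms whose domains lie in $\mathbb{D}$, then the restriction functor $\iota^\ast : \Sheaf(\mathbb{C}) \to \Sheaf(\mathbb{D})$ is an equivalence of categories. So the real content is to verify the hypotheses in our concrete situation, where $\mathbb{C} = \CPlay{A, B}$ and $\mathbb{D} = \CView{A, B}$. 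There are essentially two things to check: (1) $\CView{A, B}$ is a full subcategory of $\CPlay{A, B}$ — this is true by definition; and (2) every play $s$ admits a covering sieve whose generating morphisms have P-views as domains.

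The key step is (2), and it is already supplied by Example~\ref{eg:covering}(ii): for a play $s = m_1 \dots m_n$, the family $\{ f_k : \PView{m_1 \dots m_k} \to s \}_{k \in \{2, 4, \dots, n\}}$, where each $f_k$ is the composite of the P-view embedding $\PView{m_1 \dots m_k} \to (m_1 \dots m_k)$ with the prefix embedding $(m_1 \dots m_k) \to s$, is jointly surjective, hence a covering family, hence generates a covering sieve on $s$. Moreover each domain $\PView{m_1 \dots m_k}$ is a P-view (possibly empty — but one should note that for $k \ge 2$ the P-view of a nonempty even-length alternating P-visible sequence is itself nonempty, so it genuinely lies in $\CView{A, B}$; if some $\PView{m_1\dots m_k}$ could be empty one restricts to those $k$ for which it is nonempty, which still leaves a jointly surjective family since the first block $m_1 m_2$ always has nonempty P-view). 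This establishes that $\CView{A, B}$ is a \emph{dense} (or ``covering'') subcategory in the sense required by the Comparison Lemma.

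Finally, one should check that the topology we have fixed on $\CView{A, B}$ — stated in the Definition of the subcategory of P-views to be trivial — really is the topology \emph{induced} from $\CPlay{A, B}$ (i.e.\ a sieve $S$ on a P-view $p$ is covering iff the sieve it generates in $\CPlay{A, B}$ is jointly surjective). Since $\CView{A, B}$ is a poset with a terminal-free order and the only jointly surjective sieve on $p$ inside $\CPlay{A, B}$ generated from sieves of $\CView{A, B}$ is the maximal one (any morphism into $p$ in $\CPlay{A,B}$ already has image a P-view-sized prefix, and covering $p$ requires the identity), the induced topology is indeed trivial, matching our fixed choice. With hypotheses (1), (2) and the topology match in hand, the Comparison Lemma applies verbatim and yields that $\iota^\ast$ is an equivalence of categories.

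The main obstacle is not conceptual but bookkeeping: one must make sure the notion of ``induced topology'' on $\CView{A, B}$ used in the statement of the Comparison Lemma coincides with the trivial topology we declared, and that the edge case of empty P-views does not spoil joint surjectivity of the covering family in Example~\ref{eg:covering}(ii). Both are routine once the definitions are unwound, so the proof is essentially a citation of the Comparison Lemma together with the dense-subcategory witness already exhibited in Example~\ref{eg:covering}(ii).
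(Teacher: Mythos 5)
Your proposal is correct and follows essentially the same route as the paper: the paper likewise derives the lemma from the standard Comparison Lemma of \cite{Verdier72,Beilinson12}, using exactly the covering family of Example~\ref{eg:covering}$(ii)$ to witness that every play is covered by (nonempty) P-views, with the induced topology on \( \CView{A, B} \) being trivial. The only material the paper adds beyond this citation is an explicit description of the quasi-inverse \( \iota_* \) via \( \tau \)-annotations, which is supplementary to, not a replacement for, the argument you give.
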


Since every play has a covering by P-views, Lemma~\ref{lem:comparison} follows from a standard result, known as the Comparison Lemma \cite{Verdier72} (see, for example, \cite[Prop.~p.~721]{Beilinson12} which generalises the classical result in SGA4).
However an explicit description of the adjoint \( \iota_* : \Sheaf(\CView{A, B}) \to \Sheaf(\CPlay{A, B}) \) is insightful and worth clarifying.

Let \( \tau \in \Sheaf(\CView{A, B}) \) be a sheaf over P-views.  \tkchanged{Let \( s = m_1 \dots m_n \) be a non-empty play and \( p_k := \PView{m_1 \dots m_k} \) for every even number \( k \).}  We define a set of \tkchanged{\emph{\( \tau \)-annotations}} for \( s \): a \tkchanged{\( \tau \)-annotation} is a sequence \( e_2 e_4 \dots e_n \), where \( e_k \in \tau(p_k) \) for every even number \( k \), subject to the following condition: for every even number \( k \le n \), if \( m_l^{\PP} \curvearrowleft m_{k-1}^{\OO} \), then \( e_l = \tau(p_l \le p_k)(e_k) \).  For a non-empty play \( s \in \CPlay{A, B} \), we write \( (\iota_* \tau)(s) \) for the set of all \tkchanged{\( \tau \)-annotations}.

Given \( f : s \to s' \), which is an injective map \tkchanged{\( f : [|s|] \to [|s'|] \),} 
the morphism \( (\iota_* \tau)(f) : (\iota_* \tau)(s') \to (\iota_* \tau)(s) \) is defined by:
\[
  (\iota_* \tau)(f) : e_2 e_4 \dots e_{|s'|} \mapsto e_{f(2)} e_{f(4)} \dots e_{f(|s|)}.
\]
\tkchanged{We define \( (\iota_* \tau)(\varepsilon) := \{ \ast \} \) for the empty sequence.  Then \( \iota_* \tau : \CPlay{A, B}^{\op} \to \Set \) is a functor.}

\begin{example}\label{eg:view-dependency}
Consider an arena pair \( ((\{ d \} \to \{ d' \} \to \{ c \}) \to \{ b \},\; \{ a \}) \) and let \( p_0 = a\,b \), \( p_{1} = a\,b\,c\,d \) and \( p_{2} = a\,b\,c\,d' \) (in which every move is justified by its predecessor).  Define \( \tau_1, \tau_2 \in \Sheaf(\CView{(\{ d \} \to \{ d' \} \to \{ c \}) \to \{ b \},\; \{ a \}}) \) as follows:
\[
\begin{array}{llcll}
  \tau_1(p_0) &= \{ x_1 \} & &
  \tau_2(p_0) &= \{ x_{21}, x_{22} \}
  \\
  \tau_1(p_1) &= \{ y_1 \} & &
  \tau_2(p_1) &= \{ y_{21} \hspace{18pt} \} 
  \\
  \tau_1(p_2) &= \{ z_1 \} & &
  \tau_2(p_2) &= \{ \hspace{18pt} z_{22} \}
  \\[3pt]
  \tau_1(f)(y_1) &= x_1 & &
  \tau_2(f)(y_{21}) &= x_{21}
  \\
  \tau_1(g)(z_1) &= x_1 & &
  \tau_2(g)(z_{22}) &= x_{22},
\end{array}
\]
where \( f : (a\,b) \to (a\,b\,c\,d) \) and \( g : (a\,b) \to (a\,b\,c\,d') \).  Then
\[
  (\iota_* \tau_1)(a\,b\,c\,d\,c\,d') = \{ x_1 y_1 z_1 \}
\qquad
  (\iota_* \tau_2)(a\,b\,c\,d\,c\,d') = \{ \; \}.
\]
We write \( P_{\sigma} := \{ s \in \CPlay{A, B} \mid \sigma(s) \neq \emptyset \} \) and \( V_{\sigma} := \{ p \in \CView{A, B} \mid \sigma(p) \neq \emptyset \} \).  Then \( V_{\sigma_1} = V_{\sigma_2} \) but \( P_{\sigma_1} \neq P_{\sigma_2} \).  The set-of-views approach fails to distinguish \( \sigma_1 \) from \( \sigma_2 \).
\end{example}

\begin{proposition}
\( \iota_* \tau \in \Sheaf(\CPlay{A, B}) \) for every \( \tau \in \Sheaf(\CView{A, B}) \).
\end{proposition}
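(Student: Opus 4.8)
The plan is to verify the sheaf condition directly. Since a presheaf is a sheaf for a family iff it is a sheaf for the sieve it generates, it suffices to fix an arbitrary jointly surjective (hence covering) family \( S = \{ f_\xi : s_\xi \to s \}_{\xi \in \Xi} \) in \( \CPlay{A, B} \) together with a matching family \( \{ x_\xi \in (\iota_* \tau)(s_\xi) \}_{\xi \in \Xi} \), and to produce a unique amalgamation \( x \in (\iota_* \tau)(s) \). The one structural input I will use is that every morphism \( f : s' \to s \) of \( \CPlay{A, B} \) preserves parity (O-moves sit at odd positions) and, crucially, preserves P-views of prefixes verbatim: \( \View{s'|_{[k]}} = \View{s|_{[f(k)]}} \) as objects of \( \CView{A, B} \) for every \( k \), with the restriction of \( f \) to these P-views being the identity after re-indexing. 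This is exactly what makes the defining formula for \( (\iota_*\tau)(f) \) type-check; it is a straightforward consequence of Lemma~\ref{lem:map-play-char}, since prefix embeddings plainly have this property and a single commutation of non-interfering O-P blocks does not change the P-view of any prefix — in a P-visible sequence neither of the two commuted blocks points into the other. Write \( n := |s| \), which is even since \( s \) is a play; \( p^s_k := \View{s|_{[k]}} \); similarly \( p^{s_\xi}_k \); and \( x_\xi = e^\xi_2 e^\xi_4 \cdots e^\xi_{|s_\xi|} \).

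First I would build the candidate amalgamation. For each even \( i \in [n] \), joint surjectivity gives a \( \xi \) with \( i \in \codom(f_\xi) \), and by parity preservation \( i = f_\xi(k) \) for a unique even \( k \); set \( e_i := e^\xi_k \) and \( x := e_2 e_4 \cdots e_n \). To check this is well defined, suppose also \( i = f_{\xi'}(k') \); I would apply the matching condition with test object \( t := p^s_i \) and the two P-view embeddings \( g : p^s_i = \View{s_\xi|_{[k]}} \hookrightarrow s_\xi \) and \( h : p^s_i = \View{s_{\xi'}|_{[k']}} \hookrightarrow s_{\xi'} \). By the structural fact both \( f_\xi \circ g \) and \( f_{\xi'} \circ h \) are the canonical embedding \( p^s_i \hookrightarrow s \), so matching gives \( (\iota_*\tau)(g)(x_\xi) = (\iota_*\tau)(h)(x_{\xi'}) \) in \( (\iota_*\tau)(p^s_i) \); reading off the component at the last position of \( p^s_i \) (which \( g \) carries to position \( k \) of \( s_\xi \) and \( h \) to position \( k' \) of \( s_{\xi'} \)) yields \( e^\xi_k = e^{\xi'}_{k'} \). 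The empty sequence, if it is admitted as an object, is trivial, since its only cover forces \( (\iota_*\tau)(\varepsilon) \) to be a singleton, which by definition it is.

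Then I would check that \( x \) is a genuine \( \tau \)-annotation, which is the crux. Fix an even \( k \le n \) with \( m^{\PP}_l \curvearrowleft m^{\OO}_{k-1} \) in \( s \) (there is nothing to check if \( m_{k-1} \) is \( \star \)-justified). Choose \( \xi \) with \( k \in \codom(f_\xi) \), say \( k = f_\xi(\hat k) \). Because \( f_\xi \) preserves O-P blocks, \( k-1 = f_\xi(\hat k - 1) \); because \( f_\xi \) preserves justification pointers and \( \star \)-justification, the move at position \( \hat k - 1 \) of \( s_\xi \) is not \( \star \)-justified (else \( m_{k-1} \) would be too), so it has a justifier, at some position \( \hat l \) with \( f_\xi(\hat l) = l \) (by preservation of pointers and uniqueness of justifiers). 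Hence \( l, k \in \codom(f_\xi) \), so \( e_l = e^\xi_{\hat l} \) and \( e_k = e^\xi_{\hat k} \); and the coherence condition for the \( \tau \)-annotation \( x_\xi \) of \( s_\xi \) gives \( e^\xi_{\hat l} = \tau(p^{s_\xi}_{\hat l} \le p^{s_\xi}_{\hat k})(e^\xi_{\hat k}) \), which the structural fact (together with \( \CView{A, B} \) being a poset) transports to \( e_l = \tau(p^s_l \le p^s_k)(e_k) \), as required. Finally, \( x \) amalgamates the family — \( (\iota_*\tau)(f_\xi)(x) = e_{f_\xi(2)} \cdots e_{f_\xi(|s_\xi|)} = e^\xi_2 \cdots e^\xi_{|s_\xi|} = x_\xi \) by construction — and it is unique, since any amalgamation \( x' = e'_2 \cdots e'_n \) must satisfy \( e'_{f_\xi(2j)} = e^\xi_{2j} \) for all \( \xi, j \), and joint surjectivity exhausts the even indices of \( s \). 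The main obstacle is precisely this last coherence verification: it is where joint surjectivity of the cover is genuinely used — to pull the justification structure of \( s \) back into one of the \( s_\xi \) — and it rests on the mildly delicate fact that the morphisms of \( \CPlay{A, B} \) preserve P-views of prefixes on the nose.
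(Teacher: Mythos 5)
Your proof is correct and follows essentially the same route as the paper's: construct the amalgamation pointwise using joint surjectivity of the cover, with well-definedness coming from the matching condition, and uniqueness being immediate. You additionally spell out the two steps the paper leaves implicit --- that morphisms of \( \CPlay{A,B} \) preserve P-views of prefixes (via Lemma~\ref{lem:map-play-char}) and that the assembled sequence genuinely satisfies the \( \tau \)-annotation coherence condition --- which is welcome detail but not a different argument.
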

\iflong 
\begin{proof}
Let \( S = \{ f_\xi : s_{\xi} \to s \}_{\xi \in \Xi} \) be a covering sieve and \( \{ x_\xi \in (\iota_* \tau)(s_{\xi}) \}_{\xi \in \Xi} \) be a matching family.  Each \( x_\xi \) is a \(\tau\)-annotation \( e_{\xi, 2} e_{\xi, 4} \dots e_{\xi, |s_{\xi}|} \).  It suffices to give an annotation \( e_2 e_4 \dots e_n \) for \( s \) (here \( n = |s| \)).  Let \( k \le n \) be an even number.  Since \( S \) is a covering sieve, it must be jointly surjective, i.e.~\( k \in \codom(f_{\xi}) \) for some \( \xi \).  When \( f_{\xi}(l_k) = k \), we define \( e_k = e_{\xi, l_k} \).  This does not depend on the choice of \( \xi \) since \( x_{\xi} \) is a matching family.  The resulting sequence \( e_2 \dots e_n \) satisfies the required conditions.  The uniqueness is trivial.
\end{proof}

\begin{proposition}
\( \iota_* \) and \( \iota^* \) form an adjoint equivalence. 
\end{proposition}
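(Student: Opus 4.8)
The plan is to show $\iota^* \dashv \iota_*$ and that both the unit and the counit of this adjunction are natural isomorphisms; this makes $(\iota^*,\iota_*)$ an adjoint equivalence. First I would check that $\iota_*$ is genuinely a functor $\Sheaf(\CView{A,B}) \to \Sheaf(\CPlay{A,B})$. Its values are sheaves by the preceding Proposition, and a natural transformation $\alpha : \tau \Rightarrow \tau'$ should act on a $\tau$-annotation $e_2 e_4 \dots e_n$ of a play $s$ by $e_k \mapsto \alpha_{p_k}(e_k)$ (writing $p_k = \PView{m_1\dots m_k}$): naturality of $\alpha$ is exactly what guarantees that the result again satisfies the annotation condition, and preservation of identities and composites is immediate.

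Next, the counit $\epsilon_\tau : \iota^*\iota_*\tau \to \tau$. The key combinatorial observation is that every even-length prefix of a P-view is again a P-view (in a P-view every O-move other than the first is justified by its immediate predecessor, and this property is inherited by prefixes). Hence for a P-view $p = m_1 \dots m_n$ the object $p_k$ is just the length-$k$ prefix of $p$, and $(p_k \le p_{k'})$ for $k \le k'$ is the inclusion. The annotation condition then forces $e_k = \tau(p_k \le p)(e_n)$ for every even $k$, so $(e_k)_k \mapsto e_n$ is a bijection $(\iota_*\tau)(p) \to \tau(p)$ with inverse $x \mapsto (\tau(p_k \le p)(x))_k$ (functoriality of $\tau$ recovers the condition). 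For naturality one checks that every morphism $p' \to p$ in $\CView{A,B}$ identifies $p'$ with an even-length prefix of $p$ (the same P-visibility analysis, which also reproves that $\CView{A,B}$ is a poset); the naturality square then commutes on the nose. This yields the natural isomorphism $\epsilon : \iota^*\iota_* \cong \ident$.

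Then, the unit $\eta_\sigma : \sigma \to \iota_*\iota^*\sigma$, defined on a play $s$ by $x \mapsto (\sigma(g_k)(x))_k$, where $\{g_k : p_k \to s\}_k$ is the covering family of Example~\ref{eg:covering}$(ii)$. One shows that $(\sigma(g_k)(x))_k$ satisfies the annotation condition using the factorisation $g_l = g_k \circ (p_l \le p_k)$, valid whenever $m_l \curvearrowleft m_{k-1}$ (there $p_k = p_l\, m_{k-1} m_k$). Conversely, any $(\iota^*\sigma)$-annotation of $s$ is a matching family for $\{g_k\}_k$ — using that morphisms into a P-view are initial-segment inclusions to reduce the general matching condition to the justifier relations captured by the annotation condition — so the sheaf property of $\sigma$ supplies a unique amalgamation, i.e.\ a unique $x$ with $\eta_\sigma(s)(x) = (e_k)_k$; thus $\eta_\sigma$ is a natural isomorphism. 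The triangle identities then reduce to the facts that $g_k$ sends the last move of $p_k$ to move $k$ of $s$ (Example~\ref{eg:morphisms}$(iii)$) and that, for $p$ a P-view, $g_k$ is the inclusion of the $k$-prefix. (Alternatively, since Lemma~\ref{lem:comparison} already gives that $\iota^*$ is an equivalence, the counit isomorphism $\iota^*\iota_* \cong \ident$ alone forces $\iota_*$ to be naturally isomorphic to any quasi-inverse of $\iota^*$, and transporting the adjoint-equivalence structure along that isomorphism finishes the proof, at the cost of making the unit less explicit.)

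The main obstacle is the reconciliation needed for the unit: the $\tau$-annotation condition only constrains $e_l$ against $e_k$ when $m_l$ justifies $m_{k-1}$, whereas the matching condition for the covering family $\{g_k : p_k \to s\}_k$ quantifies over all spans. Closing this gap is where the real work lies — it requires the compatibility lemma $g_k \circ (p_l \le p_k) = g_l$ for P-view embeddings, together with the structural fact that every morphism between P-views in $\CView{A,B}$ is an initial-segment inclusion, so that the spans relevant to the matching condition are generated by justifier relations already recorded in the annotation.
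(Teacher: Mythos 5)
Your proof is correct and follows essentially the same route as the paper's: the bijection on P-views comes from an annotation being determined by its last component via the unique morphisms $f_k$, and the bijection on general plays comes from applying the sheaf condition of $\sigma$ to the covering family $\{\PView{m_1\dots m_k} \to s\}_k$, after which the triangle identities are routine. You are in fact more careful than the paper at the one delicate point---that the annotation condition, which only records justifier relations, really does constitute a matching family for that covering family---which the paper asserts without argument; your observation that morphisms between P-views are initial-segment inclusions is exactly what closes that gap.
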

\begin{proof}
Let \( \tau \in \Sheaf(\CView{A, B}) \).  For a P-view \( p = m_1 \dots m_n \), an annotation \( a_2 a_4 \dots a_n \in (\iota_* \tau)(p) \) is uniquely determined by \( a_n \), since \( a_k = \tau(f_k)(a_n) \) for the unique \( f_k : (m_1 \dots m_k) \to (m_1 \dots m_n) \).  This gives a bijection \( \psi_p \) for each \( p \) from \( \tau(p) \) to \( (\iota_* \tau)(p) \), and to \( (\iota^* \iota_* \tau)(p) \) through \( (\iota^* \iota_* \tau)(p) = (\iota_* \tau)(p) \).

For the other direction, let \( \sigma \in \Sheaf(\CPlay{A, B}) \).  Let \( s = m_1 \dots m_n \) be a play.  Then \( x \in (\iota_* \iota^* \tau)(s) \) is a sequence \( e_2 e_4 \dots e_n \) such that, for every even number \( k \le n \), \( e_k \in \sigma(\PView{m_1 \dots m_k}) \) and \( e_l = \sigma(f_k)(e_k) \) if \( m_l \curvearrowleft m_{k-1} \).  This means that \( \{ a_k \}_{k \in \{ 2, 4, \dots, n \}} \) is a matching family of \( \{ \PView{m_1 \dots m_k} \to s \}_{k \in \{ 2, 4, \dots, n \}} \).  Since \( \sigma \) is a sheaf, there exists a bijection \( \phi_s \) from \( (\iota_* \iota^* \tau)(s) \) to \( \tau(s) \).

It is easy to see that \( (\iota_*, \iota^*, \psi, \phi) \) is an adjanction.
\end{proof}
\fi 


\section{Interaction and composition}
This section introduces the notion of \emph{interaction sequences} and defines the composition \( (\sigma_1; \sigma_2) \in \Sheaf(\CPlay{A, C}) \) of sheaves \( \sigma_1 \in \Sheaf(\CPlay{A, B}) \) and \( \sigma_2 \in \Sheaf(\CPlay{B, C}) \), generalising the composition of deterministic innocent strategies as in \cite{HylandO00}.  The composition is associative up to isomorphism, and the arenas and sheaves form a CCC (where isomorphic sheaves are identified).

\subsection{Interaction sequences}
\begin{definition}[Justified sequence]
Let \( (A, B, C) \) be a triple of arenas.  The enabling relation \( {\vdash_{A,B,C}} \) for the triple is defined by:
\begin{itemize}
\item For \( X \in \{ A, B, C \} \), if \( m \vdash_X m' \), then \( m \vdash_{A,B,C} m' \).
\item If \( \star \vdash_C m \in \Moves[C] \), then \( \star \vdash_{A, B, C} m \).
\item If \( \star \vdash_C m \in \Moves[C] \) and \( \star \vdash_B m' \in \Moves[B] \), then \( m \vdash_{A,B,C} m' \).
\item If \( \star \vdash_B m \in \Moves[B] \) and \( \star \vdash_A m' \in \Moves[A] \), then \( m \vdash_{A,B,C} m' \).
\end{itemize}
A \emph{justified sequence} of the triple is a sequence over \( \Moves[A] + \Moves[B] + \Moves[C] \) equipped with justification pointers that respect the enabling relation \( {\vdash_{A,B,C}} \).
\end{definition}


A justified sequence \( s \) of a triple \( (A, B, C) \) induces justified sequences of \( (A, B) \), \( (B, C) \) and \( (A, C) \), basically by the restriction of moves.  The \emph{projection to the component \( (B, C) \)}, written \( \proj{s}{B, C} \), is just the restriction.  The \emph{projection to the component \( (A, B) \)}, written \( \proj{s}{A, B} \), is the restriction to moves in \( \Moves[A, B] \) in which \( \star \curvearrowleft m \) for an initial \( B \)-move \( m \) (whereas \( m' \curvearrowleft m \) in \( s \) for an initial \( C \)-move \( m' \)).  The \emph{projection to the component \( (A, C) \)}, written \( \proj{s}{A, C} \), is the restriction to moves in \( \Moves[A, C] \) in which an initial \( A \)-move \( m \) is justified by the move \( m' \) such that \( m' \curvearrowleft m'' \curvearrowleft m \) (so \( m'' \) is an initial \( B \)-move and \( m' \) an initial \( C \)-move).

\begin{definition}[Interaction sequence]
Let \( (A, B, C) \) be a triple of arenas.  A justified sequence \( s \) over \( (A, B, C) \) is an \emph{interaction sequence} if
\begin{itemize}
\item The last move is in \( \PMoves[A, C] = \OMoves[A] + \PMoves[C] \), and
\item \( \proj{s}{A, B} \) and \( \proj{s}{B, C} \) are plays of \( (A, B) \) and \( (B, C) \), respectively.
\end{itemize}
\end{definition}

\paragraph{Switching condition and basic blocks}
Before defining the morphisms between interaction sequences, we introduces a useful tool to analyse the interaction sequences.

\begin{definition}[Switching condition]
Let \( (A, B, C) \) be a triple of arenas.  A sequence over \( \Moves[A] + \Moves[B] + \Moves[C] \) is said to satisfy the \emph{switching condition} if it is accepted by the following automaton with the initial state \( OOO \) of which all states are accepting.
\[\xymatrix@R-0.1cm@C.8cm{
& \mathit{OOO} \ar@/^1pc/[dr]^{\PMoves[A]} \ar@/^/[dl]_{\OMoves[C]} & \\
\mathit{OPP} \ar@/^1pc/[ur]^{\PMoves[C]} \ar@<1pt>[rr]^{\OMoves[B]} & &
\mathit{POP} \ar@<2pt>[ll]^{\PMoves[B]} \ar@/^/[ul]_{\OMoves[A]}
}
\]
\tkchanged{A state express the owners of the next moves for components \( (A, B) \), \( (B, C) \) and \( (A, C) \) in this order.}
\tk{NOTE: The elements of a triple like \( OOO \) mean the expected moves for components \( (A, B) \), \( (B, C) \) and \( (A, C) \) in this order.  For example, since a move \( m \in \PMoves[A] \) is an O-move both in \( (A, B) \) and in \( (A, C) \), this move can appear when the current state is \( O?O \) (but \( OPO \) never happens), moving to the state \( P?P \).}
\tk{What would be the best reference for this notion?  I found the above diagram in ``Notes on game semantics'' by Curien.} \lo{This mechanism is known to game semanticists. It is probably best to cite Curien -- I don't know of another reference.}
\end{definition}
The switching condition generalises the O/P-alternation of justified sequences for a pair \( (A, B) \).

\begin{lemma}\label{lem:interaction-satisfy-switching}
Interaction sequences satisfy the switching condition.
\end{lemma}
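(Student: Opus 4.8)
The plan is to prove the statement by induction on the prefixes of an interaction sequence $s$, keeping track of the state of the automaton. The guiding observation is that, after reading a prefix $t \le s$, the automaton is in the state $(o_1(t), o_2(t), o_3(t))$, where $o_1(t)$, $o_2(t)$, $o_3(t)$ is $\OO$ if $\proj{t}{A, B}$, $\proj{t}{B, C}$, $\proj{t}{A, C}$ (respectively) has even length and is $\PP$ otherwise; that is, each coordinate records whose turn it is in the corresponding two-component view. So it suffices to show that this owner triple lies in $\{OOO, OPP, POP\}$ after every prefix and that appending one more move always corresponds to a legal transition of the automaton.

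First I would record how a move sits in the three views: an $A$-move occurs in $(A, B)$ and $(A, C)$ with opposite O/P-rôles there and does not occur in $(B, C)$ (e.g.\ $\PMoves[A] \subseteq \OMoves[A, B] \cap \OMoves[A, C]$); a $B$-move occurs in $(A, B)$ and $(B, C)$; a $C$-move occurs in $(B, C)$ and $(A, C)$. Hence appending a move flips exactly the two coordinates of the owner triple corresponding to the views it touches, and a direct check matches this against the six edges of the automaton: for instance, a move in $\PMoves[A]$ is an O-move of $(A, B)$ and of $(A, C)$ and is absent from $(B, C)$, so from the owner triple it can be appended only when the first and third coordinates are $\OO$, i.e.\ only from $OOO$, and it produces $POP$ --- exactly the edge labelled $\PMoves[A]$ in the diagram. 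This part is routine bookkeeping, and it reduces the lemma to two things: (i) the owner triple of every prefix of $s$ lies in $\{OOO, OPP, POP\}$, and (ii) the O/P-rôle of each appended move $m$ in each view it belongs to agrees with the corresponding coordinate of the owner triple of the preceding prefix.

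For the views $(A, B)$ and $(B, C)$, requirement (ii) is just the alternation of $\proj{s}{A, B}$ and $\proj{s}{B, C}$, which holds because these projections are plays by the definition of interaction sequence. The real content of the lemma thus concerns the $(A, C)$-view, and I would fold into the same induction the claim that $\proj{s}{A, C}$ is alternating. Granting that, requirement (i) follows: $|\proj{t}{A, B}| + |\proj{t}{B, C}| + |\proj{t}{A, C}|$ equals twice the number of moves of $t$, hence is even, so the owner triple has an even number of $\PP$-coordinates; this excludes triples with one or three $\PP$'s, and alternation of all three projections then excludes the remaining bad triple $PPO$ (inspect the last move of $t$: as an $A$-, $B$- or $C$-move it would have to be simultaneously an O-move of one two-component view and a P-move of another, which is impossible). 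The reason $\proj{s}{A, C}$ is alternating is the nesting built into $\vdash_{A, B, C}$: an initial $A$-move is justified by an initial $B$-move, which is justified by an initial $C$-move, while non-initial $A$- and $C$-moves are justified inside their own components, so the $B$-moves interposed between two consecutive $(A, C)$-moves occur in a way that is synchronised, via the alternation of $\proj{s}{A, B}$ and $\proj{s}{B, C}$, with the alternation of $\proj{s}{A, C}$.

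I expect this last paragraph to be the hard part; everything else --- the owner-triple/automaton correspondence, the base case $t = \varepsilon$ landing in $OOO$, and the observation that no separate ``last move in $\PMoves[A, C]$'' hypothesis is needed since all states of the automaton are accepting --- is mechanical. The cleanest way to handle the $(A, C)$-view seems to be to strengthen the induction hypothesis with the invariants that $\proj{t}{B, C}$ has even length whenever the last move of $t$ is an $A$-move and $\proj{t}{A, B}$ has even length whenever it is a $C$-move, and then to push these through by a case analysis on the kind of the newly appended move $m$, using which of the three components contains the justifier of $m$. Checking that these invariants survive each case --- and, in particular, that the enabling relation really does forbid the parity-violating interleavings --- is where the substantive argument lies.
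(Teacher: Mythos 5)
Your reduction is sound as far as it goes: identifying the automaton state after a prefix $t$ with the parity triple of $\proj{t}{A,B}$, $\proj{t}{B,C}$, $\proj{t}{A,C}$, checking that each edge of the automaton is exactly the legal way a move of a given component and polarity can extend that triple, and noting that for the first two coordinates this is just the alternation guaranteed by the definition of interaction sequence. (The paper's own proof is a one-liner going the other way: it observes that each of the three states is determined by its first two coordinates and concludes that alternation of the $(A,B)$- and $(B,C)$-projections already suffices.) But you then defer the one claim that carries all the content --- that the owner triple never leaves $\{OOO, OPP, POP\}$, equivalently that $\proj{s}{A,C}$ alternates --- to a sketched induction whose inductive step you explicitly decline to perform (``is where the substantive argument lies''). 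Everything you actually carry out is the bookkeeping surrounding that claim, so as it stands this is an outline, not a proof.

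Moreover, the deferred step is not merely laborious: it does not follow from the stated hypotheses, so the induction you sketch would not close. The definition of interaction sequence constrains only $\proj{s}{A,B}$ and $\proj{s}{B,C}$ (as P-visible, not O-visible, not well-opened plays), and this does not prevent the first two coordinates from reaching $PP$. Take $A = \{a_1\} \to \{a_0\}$, $B = \{b_0\}$, $C = \{c_0\}$ and $u = c_0\, b_0\, a_0\, a_1\, a_0\, c_0'\, a_1\, a_0\, b_0'\, a_0$, where $c_0'$ is a second initial $C$-move, each $a_1$ points to the preceding $a_0$, each $a_0$ points to $b_0$ except the last, which points to $b_0'$, and $b_0'$ points to $c_0'$. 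Both $\proj{u}{A,B}$ and $\proj{u}{B,C}$ are alternating, P-visible, even-length plays, so $u$ meets the paper's definition of interaction sequence; yet after $c_0'$ the owner triple is $OPP$ and the next move lies in $\PMoves[A]$, for which the automaton has no transition --- precisely because $\proj{u}{A,C}$ fails to alternate at that point. So the invariant you need is not a consequence of the definitions as given; establishing it requires an extra hypothesis (well-openness, O-visibility, or alternation of $\proj{s}{A,C}$ built into the definition of interaction sequence), and any completion of your argument has to confront this rather than treat it as a case analysis to be pushed through.
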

\iflong 
\begin{proof}
Observe that each state of the automaton is determined by the first two component.  Thus the O-P alternation for \( (A, B) \) and \( (B, C) \) components suffice for the switching condition.
\end{proof}
\fi 

Recall that basic constituents of plays are pairs of consecutive O-P move occurrences, called O-P blocks.
\tkchanged{
Thanks to the switching condition (Lemma~\ref{lem:interaction-satisfy-switching}), we know that interaction sequences consist of what we shall call \emph{basic blocks}: a basic block is a sequence of consecutive move occurrences in the interaction sequence, starting from a move in \( \PMoves[A, C] \) and ending with a move in \( \OMoves[A, C] \), possibly having moves in \( \Moves[B] \) as intermediate moves.
}
\tk{A similar notion: see Hyland and Ong (2000) Proposition 5.4.} 

\paragraph{The category of interaction sequences}
Given a triple \( (A, B, C) \), a \emph{generalised P-move} is a move in \( \OMoves[A] + \Moves[B] + \PMoves[C] \).  This can be written as \( \PMoves[A, C] + \Moves[B] \) and as \( \PMoves[A, B] + \PMoves[B, C] \).  An \emph{generalised O-move} is a move in \( \PMoves[A] + \Moves[B] + \OMoves[C] \).

\begin{definition}\label{def:interact-seq-morphism}
Let \( (A, B, C) \) be a triple of arenas and \( s, s' \) be interaction sequences over \( (A, B, C) \).  Suppose that \( s = m_1 \dots m_n \) and \( s' = m'_1 \dots m'_{n'} \).  A \emph{morphism between \( s \) and \( s' \)} is an injective map \( f : [n] \to [n'] \) which satisfies:
\begin{itemize}
\item \( m_k = m'_{f(k)} \) (as moves),
\item \( m_i \curvearrowleft m_k \) implies \( m'_{f(i)} \curvearrowleft m'_{f(k)} \) (and similarly for \( \star \curvearrowleft m_k \)), and
\item if a generalised O-move \( m_k \) is followed by \( m_{k+1} \), \( m'_{f(k)} \) is followed by \( m'_{f(k+1)} \) \tkchanged{(i.e.~\( f(k+1) = f(k)+1 \))}.
\end{itemize}
In other words, a morphism between interaction sequences is an injective map between the respective occurrence-sets that preserve moves, justification pointers and basic blocks.
\end{definition}

\begin{definition}
Given arenas \( A \), \( B \) and \( C \), the \emph{category of interaction sequences}, written as \( \CIntr{A, B, C} \), has interaction sequences 
as objects and morphisms defined above.
\end{definition}

\begin{remark}
One can introduce the topology to \( \CIntr{A, B, C} \) as follows, though we shall not use them:  A family of morphisms \( \{ f_\xi : s_\xi \to s \}_{\xi \in \Xi} \) in \( \CIntr{A, B, C} \) is said to \emph{cover \( s \)} if they are jointly surjective, i.e.~\( \bigcup_{\xi \in \Xi} \codom(f_{\xi}) = [n] \), where \( n \) is the length of \( s \).
\end{remark}

\paragraph{Projection to \( (A, C) \) component}
The projections of an interaction sequence onto \( (A, B) \) and \( (B, C) \) components are plays by definition.  We show that the projection onto \( (A, C) \) component is also a play.

\begin{definition}[Commuting an adjacent pair of non-interfering blocks]
Let \( u \) be an interaction sequence of \( (A, B, C) \).  Let \[ m_1 v_1 m_1' \, m_2 v_2 m_2' \] be an adjacent pair of basic blocks in \( u \), where \( m_1 \) and \( m_2 \) are moves in \( \OMoves[A, C] \), \( m_1' \) and \( m_2' \) are moves in \( \PMoves[A, C] \), and \( v_1 \) and \( v_2 \) are sequences of moves in \( \Moves[B] \); i.e.~\( u = u_0 \, m_1 v_1 m_1' \, m_2 v_2 m_2'\, u_1 \).  We say that the pair of basic blocks are \emph{non-interfering} if the justifier of \( m_2 \) is \emph{not} \( m_1' \).  The \emph{commuted sequence} \( u' \) is defined by \( u' := u_0 \,m_2 v_2 m_2' \, m_1 v_1 m_1' \, u_1 \) (in which the justification pointers are modified accordingly).
\end{definition}

\begin{lemma}\label{lem:commute-int}
Let \( u \) be an interaction sequence of \( (A, B, C) \) and let \( v \) be obtained from \( u \) by commuting an adjacent pair of non-interfering blocks. Then \( v \) is an interaction sequence.
\end{lemma}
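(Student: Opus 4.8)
The plan is to verify the three defining conditions of an interaction sequence for \( v \): (a) that \( v \) is a well-formed justified sequence; (b) that the last move of \( v \) lies in \( \PMoves[A, C] \); and (c) that \( \proj{v}{A, B} \) and \( \proj{v}{B, C} \) are plays. Write \( u = u_0\, m_1 v_1 m_1'\, m_2 v_2 m_2'\, u_1 \) and \( v = u_0\, m_2 v_2 m_2'\, m_1 v_1 m_1'\, u_1 \). Condition (b) is immediate: if \( u_1 \neq \varepsilon \) the last move is unchanged, and otherwise it is \( m_1' \in \PMoves[A, C] \).

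The heart of the argument is (a), which reduces to showing that \emph{no move of the second basic block \( m_2 v_2 m_2' \) is justified by a move of the first basic block \( m_1 v_1 m_1' \)}; then all justification pointers of \( v \) still point backwards, and since moves and their justifiers are unchanged, the enabling relation \( \vdash_{A,B,C} \) is still respected. First I would treat the opening move \( m_2 \in \OMoves[A, C] \): inspecting \( \vdash_{A,B,C} \) shows that a move of \( \OMoves[A, C] = \PMoves[A] + \OMoves[C] \) can be enabled neither by another move of \( \OMoves[A, C] \), nor by a move of \( \Moves[B] \) (the only rule producing a \( \Moves[B] \)-justified target yields a target in \( \OMoves[A] \subseteq \PMoves[A, C] \)); and the sole move of \( \PMoves[A, C] \) inside the first block is its closing move \( m_1' \), which is excluded by the non-interfering hypothesis. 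Hence the justifier of \( m_2 \) lies in \( u_0 \), strictly before the first block. For the remaining moves of the second block — the \( \Moves[B] \)-moves of \( v_2 \) and the closing move \( m_2' \) — I would argue by (simultaneous) induction along the block, using P-visibility of the plays \( \proj{u}{A, B} \) and \( \proj{u}{B, C} \): each such move \( m \) is a P-move of \( (A, B) \) or of \( (B, C) \), hence its justifier lies in its P-view in the corresponding projection, and by induction that P-view consists of \( m \), earlier moves of the second block, and moves of \( u_0 \) — its ``tail'' being inherited from the P-view at the justifier of \( m_2 \), which sits in \( u_0 \). In particular the justifier of \( m \) is never in the first block.

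For (c), I would use the switching condition (Lemma~\ref{lem:interaction-satisfy-switching}): every basic block runs from state \( OOO \) back to state \( OOO \), and in state \( OOO \) each of the projections \( (A, B) \), \( (B, C) \) expects an O-move, so each basic block projects, in each of these two components, to an even-length segment of consecutive moves, i.e.\ to a (possibly empty) concatenation of O-P blocks. Thus \( \proj{v}{A, B} \) is obtained from the play \( \proj{u}{A, B} \) by swapping two \emph{adjacent} such groups of O-P blocks, and by step (a) there is no justification pointer from the group coming from the second block into the group coming from the first block; consequently the swap decomposes into finitely many commutations of non-interfering adjacent O-P blocks (Definition~\ref{def:play-commute}). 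Each such commutation preserves alternation and the property of ending with a P-move, preserves P-visibility, and sends a justified sequence to a justified sequence (the remark after Definition~\ref{def:play-commute} together with the argument in the proof of Lemma~\ref{lem:exchange-imply-visibility}); hence \( \proj{v}{A, B} \) is a play, and symmetrically \( \proj{v}{B, C} \). When one of the two groups is empty the projection is literally unchanged, and there is nothing to prove.

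I expect the main obstacle to be the inductive step in (a): one must organise the induction along the second block carefully, alternating between the \( (A, B) \)- and \( (B, C) \)-projections (a \( \Moves[B] \)-move that is an O-move in one projection is a P-move in the other, and only the P-side is constrained since O-visibility is not assumed), and keep precise track of which moves a P-view in a projection may contain — the crucial point being that the tail of every such P-view is rooted at the justifier of \( m_2 \) and therefore lies entirely in \( u_0 \), before the first block.
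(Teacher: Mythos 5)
Your proposal is correct and follows essentially the same route as the paper's proof: the core claim that no move of the second basic block is justified by a move of the first, established for the opening move via the enabling relation together with non-interference, and for the remaining (generalised P-)moves by induction along the block using P-visibility of the $(A,B)$- and $(B,C)$-projections and tracing the P-view back into $u_0$; then the projections of $v$ are obtained from those of $u$ by commuting non-interfering O-P blocks, which preserves play-hood. Your base case is in fact argued a little more carefully than the paper's (which has the $\OMoves[A,C]/\PMoves[A,C]$ polarities inadvertently swapped), but the substance is identical.
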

\iflong 
\begin{proof}
Let \( u = s'\ t_1 \ t_2 \ s'' \) and \( v = s' \ t_2 \ t_1 \ s'' \),  where \( t_1 \) and \( t_2 \) are non-interfering basic blocks, i.e.~the justifier of the first move in \( t_2 \) is not the last move in \( t_1 \).  Let \( t_2 = m_1 \dots m_k \).  We prove tho following claim:
\begin{quote}
  Let \( m_i \) be a move in \( t_2 \).  Then the justifier of \( m_i \) is not in \( t_1 \).
\end{quote}
We prove this by induction on \( i \).

We prove the base case \( i = 1 \).  Since \( m_1 \in \PMoves[A, C] \), by the definition of the basic block, its justifier is in \( \OMoves[A, C] \).  Because \( t_1 \) is a basic block, the unique move in \( \OMoves[A, C] \) is the last move.  By the assumption the justifier of \( m_1 \) differs from the last move of \( t_1 \), as desired.

We prove the induction step.  Let \( m_i \) be a move in \( t_2 \) (\( i > 1 \)).  Then \( m_i \) is either in \( \PMoves[B, C] \) or in \( \PMoves[A, B] \).  Suppose that \( m_i \in \PMoves[B, C] \).  Since \( u \) is an interaction sequence, \( \proj{u}{B, C} \) is a play.  In particular the justifier of \( m_i \) is in \( \PView{\proj{(s'\,t_1\,m_1\,\dots\,m_i)}{B, C}} \).  Let \( n_1 \dots n_l \) be the P-view.  We show that no move in this sequence is in \( t_1 \).  First \( n_l = m_i \) and its immediate predecessor \( n_{l-1} \) are in \( t_2 \).  The preceding move \( n_{l-2} \) is pointed by \( n_{l-1} \), so by the induction hypothesis, \( n_{l_2} \) is not in \( t_1 \).  If \( n_{l_2} \) is in \( s' \), then all preceding moves are in \( s' \).  If \( n_{l-2} \) is in \( t_2 \), by iterating the same argument, we conclude that \( n_1 \dots n_l \) does not contain moves in \( t_1 \).  Since \( \proj{u}{B, C} \) is a play, its justifier is in its P-view.  Hence not a move in \( t_1 \).

We prove that \( \proj{v}{B, C} \) is a play, using the above claim.  Notice that \( \proj{v}{B, C} \) is obtained by commuting adjacent O-P blocks in \( \proj{u}{B, C} \) as much as required.  The above claim implies that every O-P block in \( \proj{t_1}{B, C} \) does not interfere to any O-P block in \( \proj{t_2}{B, C} \).  Since commutation of non-interfering O-P blocks preserves P-visibility, \( \proj{v}{B, C} \) is a play.  Similarly \( \proj{v}{A, B} \) is a play.
\end{proof}
\fi 

\begin{lemma}
For every interaction sequence \( u \) of \( (A, B, C) \), the projection \( \proj{u}{A, C} \) is a play.
\end{lemma}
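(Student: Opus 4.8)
The plan is to reduce P-visibility of $\proj{u}{A, C}$ to the closure-under-commutation criterion of Lemma~\ref{lem:exchange-imply-visibility}, transporting commutations between the interaction level and the projected level via Lemma~\ref{lem:commute-int}.

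First I would record that $\proj{u}{A, C}$ is an even-length alternating justified sequence over $(A, C)$ whose last move is a P-move. That it is a justified sequence is built into the definition of the projection (which only re-points initial $A$-moves). Alternation follows from Lemma~\ref{lem:interaction-satisfy-switching}: in the switching automaton the third coordinate of the state records the owner of the next $(A, C)$-move; reading a $B$-move leaves this coordinate unchanged, whereas reading an $(A, C)$-move flips it, so successive $(A, C)$-moves alternate in ownership, beginning with an O-move since the first move of $u$ is an initial $C$-move and hence in $\OMoves[A, C]$. Evenness and ``last move is a P-move'' then follow because the last move of $u$ lies in $\PMoves[A, C]$. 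The same analysis shows that $u$ is tiled, with no leftover moves, by basic blocks, and that each basic block projects onto $(A, C)$ to a single nonempty O-P block; consequently the O-P blocks of $\proj{u}{A, C}$ correspond, order-preservingly and bijectively, to the basic blocks of $u$.

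Next I would set up the dictionary between commutations. Under the correspondence just described, two consecutive O-P blocks of $\proj{u}{A, C}$ arise from two consecutive---hence adjacent---basic blocks of $u$. The first move of a basic block is never an initial $A$-move, so the projection preserves its justification pointer verbatim; thus the two O-P blocks are non-interfering precisely when the two basic blocks are. Commuting those basic blocks yields, by Lemma~\ref{lem:commute-int}, a new interaction sequence $u'$, and projecting to $(A, C)$ commutes with the commutation operation---the intermediate $B$-moves merely ride along with their block---so $\proj{u'}{A, C}$ is exactly the sequence obtained from $\proj{u}{A, C}$ by commuting those two O-P blocks. Hence the set
\[
  P := \{\, \proj{u'}{A, C} \mid u' \text{ reachable from } u \text{ by commutations of non-interfering basic blocks} \,\}
\]
consists of even-length alternating justified sequences over $(A, C)$ and is closed under commutation of non-interfering O-P blocks.

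Finally, Lemma~\ref{lem:exchange-imply-visibility} applies to $P$ and yields that every member of $P$---in particular $\proj{u}{A, C}$ itself---is a play. The main obstacle is the bookkeeping in the dictionary step: one must check that consecutive O-P blocks of the projection really do come from consecutive basic blocks of $u$ (using that basic blocks tile $u$ and each projects to a nonempty block), that the re-pointing of initial $A$-moves does not disturb the non-interference test, and that commutation and projection genuinely commute. None of this is deep, but it is where the combinatorial care is needed.
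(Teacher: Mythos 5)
Your proposal is correct and follows essentially the same route as the paper: close $u$ under commutations of non-interfering basic blocks (using Lemma~\ref{lem:commute-int} to stay within interaction sequences), project to $(A,C)$, observe that the projected set is closed under commutations of non-interfering O-P blocks, and invoke Lemma~\ref{lem:exchange-imply-visibility}. The extra bookkeeping you supply---the block-by-block correspondence, the fact that the projection preserves the justifier of the first move of each basic block so the two non-interference tests agree, and the compatibility of projection with commutation---is exactly what the paper's terser proof leaves implicit.
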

\begin{proof}
Let \( u \) be an interaction sequence of \( (A, B, C) \).  We define the set \( P \) of interaction sequences as the least set that satisfies
\begin{inparaenum}[$(1)$]
\item \( u \in P \), and
\item if \( v \in P \) and \( v' \) is obtained from \( v \) by commuting a non-interfering basic blocks, then \( v' \in P \).
\end{inparaenum}
In \((2)\), \( v' \) is an interaction sequence by Lemma~\ref{lem:commute-int}.
Consider \( \proj{P}{A, C} := \{ \proj{v}{A, C} \mid v \in P \} \).  This is a set of alternating justified sequences of \( (A, C) \) that is closed under the commutations.  By Lemma~\ref{lem:exchange-imply-visibility}, each element in \( \proj{P}{A, C} \) is a play.  So \( \proj{u}{A, C} \) is a play.
\end{proof}



\paragraph{Projections as functors}
Given an interaction sequence \( u \in \CIntr{A, B, C} \), the projections \( \proj{u}{A, B} \), \( \proj{u}{B, C} \) and \( \proj{u}{A, C} \) are plays of \( (A, B) \), \( (B, C) \) and \( (A, C) \), respectively.  Those projections are naturally extended to functors: given interaction sequences \( u, v \in \CIntr{A, B, C} \) and a morphism \( f : u \to v \), the restriction \( \proj{f}{A, B} \) of \( f \) is a morphism \( \proj{f}{A, B} : \proj{u}{A, B} \to \proj{u}{A, B} \).

\begin{lemma}
The projection \( \proj{}{A, B} : \CIntr{A, B, C} \to \CPlay{A, B} \), \( \proj{}{B, C} : \CIntr{A, B, C} \to \CPlay{B, C} \) and \( \proj{}{A, C} : \CIntr{A, B, C} \to \CPlay{A, C} \) are functors.
\end{lemma}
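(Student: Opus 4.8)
The plan is to give each projection its action on morphisms by \emph{restriction}, and then to verify the clauses of Definition~\ref{def:play-morphism} together with functoriality. Fix a component $(X, Y) \in \{ (A, B), (B, C), (A, C) \}$ and a morphism $f : u \to v$ of $\CIntr{A, B, C}$, regarded as an injection $f : [|u|] \to [|v|]$. Let $I_u \subseteq [|u|]$ (resp.\ $I_v \subseteq [|v|]$) be the set of positions carrying a move of $\Moves[X, Y]$; since $f$ preserves moves we have $f(I_u) \subseteq I_v$, so $f$ restricts to an injection $I_u \to I_v$, and composing with the monotone re-indexing bijections $I_u \cong [|\proj{u}{X, Y}|]$ and $I_v \cong [|\proj{v}{X, Y}|]$ produces a candidate map $\proj{f}{X, Y}$ between the two plays. (That the objects $\proj{u}{X, Y}$ are already plays is established: by definition of interaction sequence for $(A, B)$ and $(B, C)$, and by the preceding lemma for $(A, C)$.)

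The substantive work is to show that $\proj{f}{X, Y}$ is a morphism of $\CPlay{X, Y}$, i.e.\ satisfies clauses (i)--(iii) of Definition~\ref{def:play-morphism}. Clause (i), preservation of moves, is immediate. For clause (ii), preservation of justification pointers, I would note that passing from an interaction sequence to a component only \emph{re-routes} the pointers of certain initial moves --- an initial $B$-move becomes $\star$-justified in $\proj{\cdot}{A, B}$, and an initial $A$-move is re-pointed past the initial $B$-move onto the underlying initial $C$-move in $\proj{\cdot}{A, C}$, while $\proj{\cdot}{B, C}$ re-routes nothing --- and each such re-routing is determined solely by the type of the move and matches the enabling relation $\vdash_{X, Y}$ of the component arena pair; since $f$ preserves both the original pointers and the types of moves, it preserves the re-routed ones. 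Clause (iii) is the crux: $\proj{f}{X, Y}$ must carry every O--P block of $\proj{u}{X, Y}$ to a pair of consecutive positions of $\proj{v}{X, Y}$. Here I would appeal to the switching condition (Lemma~\ref{lem:interaction-satisfy-switching}). A short case analysis on the switching automaton shows: whenever a move $m_k \in \OMoves[X, Y]$ occurs in $u$, the stretch $m_k m_{k+1} \dots m_{k'}$ ending at the \emph{next} $\Moves[X, Y]$-move $m_{k'}$ consists of $m_k$, then finitely many (possibly zero) moves lying outside $\Moves[X, Y]$, then a move $m_{k'} \in \PMoves[X, Y]$; and all of $m_k, \dots, m_{k'-1}$ lie in $\PMoves[A] + \Moves[B] + \OMoves[C]$, i.e.\ are generalised O-moves. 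Clause (iii) of Definition~\ref{def:interact-seq-morphism}, applied $k' - k$ times along this stretch, then forces $f(k') = f(k) + (k' - k)$; hence in $v$ the positions strictly between $f(k)$ and $f(k')$ again carry only moves outside $\Moves[X, Y]$, so $\proj{f}{X, Y}$ sends the O--P block $(m_k, m_{k'})$ of $\proj{u}{X, Y}$ to consecutive moves of $\proj{v}{X, Y}$, as required.

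Functoriality is then routine: the restriction of an identity injection is an identity, and restriction commutes with composition (since $g : v \to w$ carries $I_v$ into $I_w$ compatibly with the re-indexing bijections), so $\proj{\ident_u}{X, Y} = \ident_{\proj{u}{X, Y}}$ and $\proj{(g \circ f)}{X, Y} = \proj{g}{X, Y} \circ \proj{f}{X, Y}$. I expect the block-preservation step, clause (iii), to be the only real obstacle, as it rests on reading off from the switching automaton which $(X, Y)$-moves are generalised O-moves and what can intervene between consecutive $(X, Y)$-moves. For $(A, B)$ and $(B, C)$ this is painless --- nothing intervenes, so each O--P block of the projection already sits on two consecutive positions of $u$ --- whereas for $(A, C)$ there genuinely are intervening $B$-moves, and it is precisely the fact that those are generalised O-moves, hence transported rigidly by $f$, that makes the argument close.
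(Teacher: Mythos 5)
Your proposal is correct and follows the same route as the paper: the paper likewise defines \( \proj{f}{X, Y} \) as the restriction of the injection \( f \) to the index set of \( (X,Y) \)-moves and simply asserts that functoriality is easy to check. You have supplied the verification the paper omits --- in particular the switching-condition analysis showing that the moves strictly between two consecutive \( (X,Y) \)-moves are generalised O-moves, so that clause (iii) of Definition~\ref{def:interact-seq-morphism} forces O--P blocks of the projection onto consecutive positions --- and that argument is sound.
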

\iflong 
\begin{proof}
Recall that \( \proj{u}{A, B} \) is the restriction of \( u \) to \( I_{A, B}^u := \{ i \in [|u|] \mid u(i) \in \Moves[A, B] \} \).  A morphism \( f : u \to u' \) in \( \CIntr{A, B, C} \), which is an injection \( f : [|u|] \to [|u'|] \) on sets, is mapped to \( f{\upharpoonright_{I_{A, B}^u}} : I_{A, B}^u \to I_{A, B}^{u'} \).  It is easy to see that this is functorial.
\end{proof}
\fi 



\begin{lemma}\label{lem:fibration}
Let \( f : s \to t \) in \( \CPlay{A, C} \) and \( v \in \CIntr{A, B, C} \) such that \( \proj{v}{A, C} = t \).  Then there exists unique \( \bar{f}_{v} : u \to v \) in \( \CIntr{A, B, C} \) such that \( \proj{\bar{f}_{v}}{A, C} = f \).
\end{lemma}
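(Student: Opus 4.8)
The plan is to recognise the statement as saying that the projection functor $\proj{}{A, C} : \CIntr{A, B, C} \to \CPlay{A, C}$ is a discrete fibration, and to prove it by reducing, via Lemma~\ref{lem:map-play-char}, to the two generating kinds of morphism in $\CPlay{A, C}$ — a prefix embedding and a commutation of adjacent O-P blocks — and lifting each. I would first record two structural facts. Since interaction sequences satisfy the switching condition (Lemma~\ref{lem:interaction-satisfy-switching}), each basic block of an interaction sequence $w$ begins and ends with moves in $\Moves[A, C]$ (its other moves lying in $\Moves[B]$), and the basic blocks of $w$ are in order-preserving bijection with the O-P blocks of $\proj{w}{A, C}$, the $j$-th block projecting onto the $j$-th. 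Secondly, a morphism $g : w \to w'$ of interaction sequences carries each basic block of $w$ onto an entire basic block of $w'$: its image is contiguous, of the same length, and contains a unique move in $\PMoves[A, C]$, namely its last, which pins it to a basic block of $w'$. Consequently $g$ is completely determined by $\proj{g}{A, C}$, so the lifting $\bar{f}_v$, when it exists, is unique (its domain $u$ being recovered as the restriction of $v$ to the selected basic blocks); and to construct it we may treat $f$ one generator at a time and paste, using functoriality of $\proj{}{A, C}$.

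\textbf{Prefix case.} Suppose $\proj{w}{A, C} = t'$ and $s$ is an even-length prefix of $t'$, say its first $q$ O-P blocks. Take $u$ to be the first $q$ basic blocks of $w$. This is a contiguous initial segment of $w$, hence a justified sequence; its last move lies in $\PMoves[A, C]$; and since it ends at a block boundary the switching automaton is in state $OOO$ there, so $\proj{u}{A, B}$ and $\proj{u}{B, C}$ have even length and, being even-length prefixes of the plays $\proj{w}{A, B}$ and $\proj{w}{B, C}$ (which start with O-moves), are themselves plays. Hence $u \in \CIntr{A, B, C}$, $\proj{u}{A, C} = s$, and the prefix inclusion $u \hookrightarrow w$ is a morphism over $s \hookrightarrow t'$. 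Routing through Lemma~\ref{lem:map-play-char} is what makes this clean: here we only ever select a \emph{contiguous} family of basic blocks, so no justification pointer can dangle, whereas a direct lift would require selecting an arbitrary subfamily and arguing it is pointer-closed.

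\textbf{Commutation case.} Suppose $\proj{w}{A, C} = t''$ and $g : t' \to t''$ commutes two adjacent non-interfering O-P blocks of $t'$. The two basic blocks of $w$ projecting onto them are adjacent in $w$, and are non-interfering: the only obstruction would be that the justifier of the leading move of the later block equals the last move of the earlier one, but the leading move of a basic block lies in $\Moves[A, C]$ and is not an initial $A$-move, so its justification pointer is faithfully reflected in $\proj{w}{A, C}$, where (since in $t'$ the earlier block precedes the later one) this pointer cannot point to the last move of the earlier block. By Lemma~\ref{lem:commute-int}, commuting the two basic blocks yields $w' \in \CIntr{A, B, C}$; commutation being involutive, $\proj{w'}{A, C} = t'$, and the commutation isomorphism $w' \to w$ is a morphism over $g$.

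Finally, given $f : s \to t$ and $v$ with $\proj{v}{A, C} = t$, write $f = g_n \circ \dots \circ g_1 \circ \iota$ with $\iota$ a prefix embedding and each $g_i$ a commutation (Lemma~\ref{lem:map-play-char}); apply the commutation case from the top to get $v = w_n, w_{n-1}, \dots, w_0$ with $\proj{w_i}{A, C} = t_i$ and morphisms $\bar{g}_i : w_{i-1} \to w_i$ over $g_i$, then the prefix case to get $u$ and $\iota' : u \hookrightarrow w_0$ over $\iota$; the composite $\bar{f}_v := \bar{g}_n \circ \dots \circ \bar{g}_1 \circ \iota' : u \to v$ satisfies $\proj{\bar{f}_v}{A, C} = f$ by functoriality, and its uniqueness was already noted. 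The steps needing care are the ``non-interference transfer'' in the commutation case and the state-$OOO$ bookkeeping for the projections in the prefix case; I expect the former — keeping track of exactly which basic block a pointer can or cannot reach across a commutation — to be the fussiest part, though it is routine given Lemmas~\ref{lem:interaction-satisfy-switching}, \ref{lem:commute-int} and \ref{lem:map-play-char}.
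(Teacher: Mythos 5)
Your proof is correct and follows essentially the same route as the paper's: uniqueness via the bijection between O-P blocks of $t$ and basic blocks of $v$, and existence by decomposing $f$ with Lemma~\ref{lem:map-play-char} into a prefix embedding (lifted trivially) followed by commutations (lifted via Lemma~\ref{lem:commute-int}). You merely spell out the non-interference transfer and the prefix bookkeeping that the paper leaves implicit.
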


\begin{proof}
\tkchanged{Observe that the O-P blocks in \( t \) bijectively correspond to the basic blocks in \( v \).  Since a morphism \( f : s \to t \) is an injective map between O-P blocks, the bijection between O-P blocks and basic blocks determines \( \bar{f}_{v} : u \to v \).  So \( \bar{f}_v \) is unique if it exists.  We prove the existence.  If \( f \) is a commutation, Lemma~\ref{lem:commute-int} suffices.  If \( f \) is an embedding induced by a prefix, existence of \( \bar{f}_v \) is trivial.  Lemma~\ref{lem:map-play-char} says that these cases are enough to prove the claim.}
\end{proof}

In other words, \( {\proj{}{A, C}} : \CIntr{A, B, C} \to \CPlay{A, C} \) is a fibration of which each fibre is a discrete category.  We write \( f^*(v) \) for the object \( u \) in the lemma \tkchanged{and \( \bar{f}_v \) for the morphism.}


\subsection{Composition}\label{sec:composition}
\tkchanged{
Let \( \sigma_1 \in \Sheaf(\CPlay{A, B}) \) and \( \sigma_2 \in \Sheaf(\CPlay{B, C}) \) be sheaves.  We define the composite \( (\sigma_1; \sigma_2) : \CPlay{A, C}^{\op} \to \Set \), which shall be proved to be a sheaf.  For a play \( s \in \CPlay{A, C} \), the set \( (\sigma_1; \sigma_2)(s) \) is defined by
\[
  (\sigma_1; \sigma_2)(s) := \coprod_{{u \in \CIntr{A, B, C}} :\  {\proj{u}{A, C} = s}} \sigma_1(\proj{u}{A, B}) \times \sigma_2(\proj{u}{B, C}).
\]
}
\tkchanged{So an element in \( (\sigma_1; \sigma_2)(s) \) is represented by a triple \( (u, e_1, e_2) \), where \( u \in \CIntr{A, B, C} \) such that \( \proj{u}{A, C} = s \), \( e_1 \in \sigma_1(\proj{u}{A, B}) \) and \( e_2 \in \sigma_2(\proj{u}{B, C}) \).  For a morphism \( f : s \to t \) in \( \CPlay{A, C} \), \( (\sigma_1; \sigma_2)(f) \) is a function given by
\[
  (u, e_1, e_2) \mapsto (f^*(u),\;\; \sigma_1(\proj{\bar{f}_u}{A, B})(e_1),\;\; \sigma_2(\proj{\bar{f}_u}{B, C})(e_2)).
\]
\lochanged{In the preceding, we use the common notation \( x \cdot f \) to mean \( F(f)(x) \) where \( F : \mathbb{C}^{\op} \to \Set \), \( f : s \to t \) is a morphism of \( \cat{C} \), and \( x \in F(t) \).}  By this notation, the second component can be written as \( e_1 \cdot (\proj{\bar{f}_u}{A, B}) \) and the third component as \( e_2 \cdot (\proj{\bar{f}_u}{B, C}) \).
}

\tkchanged{Categorically, the composite is the left Kan extension.}

\begin{lemma}
Assume \( \sigma_1 \in \Sheaf(\CPlay{A, B}) \) and \( \sigma_2 \in \Sheaf(\CPlay{B, C}) \).  Let \( F : \CIntr{A, B, C}^{\op} \to \Set \) be a functor defined by \( F(u) := \sigma_1(\proj{u}{A, B}) \times \sigma_2(\proj{u}{B, C}) \).  Then the composite \( (\sigma_1; \sigma_2) \) is the left Kan extension of \( F \) along the projection \( \pi : \CIntr{A, B, C}^{\op} \to \CPlay{A, C}^{\op} \).
\[\xymatrix@R-0.1cm@C.8cm{
   \CPlay{A, C}^{\op} \ar[rrrd]^{\sigma_1; \sigma_2} \\
\CIntr{A, B, C}^{\op} \ar[r]^-{} \ar[u]^-{\pi}
&
\CPlay{A, B}^{\op} \times \CPlay{B, C}^{\op}  \ar[r]^-{\sigma_1 \times \sigma_2}
&
\Set \times \Set \ar[r]
&
\Set
}
\]
\end{lemma}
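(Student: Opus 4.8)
The plan is to exploit that the projection \( \proj{}{A,C} : \CIntr{A,B,C} \to \CPlay{A,C} \) is a \emph{discrete fibration} — this is exactly what Lemma~\ref{lem:fibration} says, since it provides, for each \( f : s \to t \) and each \( v \) with \( \proj{v}{A,C} = t \), a \emph{unique} cartesian lift \( \bar{f}_v : f^*(v) \to v \) over \( f \). Equivalently, over \( \CPlay{A,C} \) the category \( \CIntr{A,B,C} \) is the category of elements of the presheaf \( s \mapsto \{\, u \mid \proj{u}{A,C} = s \,\} \), and for a discrete fibration the left Kan extension along it takes the shape of a coproduct indexed by the fibres — which is precisely the displayed formula for \( (\sigma_1;\sigma_2) \). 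So the work is to make this observation precise and to check that the functorial actions agree.

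Concretely, I would use the pointwise colimit formula (left Kan extensions into \( \Set \) exist and are pointwise): \( (\mathrm{Lan}_{\pi}F)(s) \) is the colimit, over the comma category \( (\pi\downarrow s) \), of \( F \) composed with the projection to \( \CIntr{A,B,C}^{\op} \), where \( (\pi\downarrow s) \) has as objects the pairs \( (u,g) \) with \( g : s \to \proj{u}{A,C} \) in \( \CPlay{A,C} \). The key step is that the \emph{discrete} full subcategory on the fibre \( \{\, u \mid \proj{u}{A,C} = s \,\} \), with \( u \) sent to \( (u,\ident_s) \), includes into \( (\pi\downarrow s) \) as a \emph{final} subcategory: given any \( (u,g) \), the unique lift \( \bar{g}_u : g^*(u) \to u \) furnished by Lemma~\ref{lem:fibration} is a morphism \( (u,g) \to (g^*(u),\ident_s) \), and the uniqueness clause makes it the only morphism from \( (u,g) \) into the fibre, so the relevant comma category is terminal (hence connected and non-empty). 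Therefore the colimit collapses to \( \coprod_{u : \proj{u}{A,C} = s} F(u) \), which is \( (\sigma_1;\sigma_2)(s) \) by definition; and by the explicit description of the cartesian lifts, the transition maps of this colimit along a morphism \( f : s \to t \) are the assignment \( u \mapsto f^*(u) \) together with the actions of \( \bar{f}_u \) on \( \sigma_1(\proj{u}{A,B}) \) and \( \sigma_2(\proj{u}{B,C}) \) — i.e.\ exactly the formula given for \( (\sigma_1;\sigma_2)(f) \).

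The step I expect to cost the most effort — bookkeeping rather than anything conceptual — is checking that this induced functorial action really matches the one in the definition of \( (\sigma_1;\sigma_2) \), keeping track of the variances of \( F \), \( \pi \), and the three projections. One clean way to organise it is to verify the universal property directly: take the unit \( \eta : F \Rightarrow (\sigma_1;\sigma_2)\circ\pi \) to be the coproduct injections \( \eta_u : (e_1,e_2) \mapsto (u,e_1,e_2) \), and for any presheaf \( H \) on \( \CPlay{A,C} \) and any \( \gamma : F \Rightarrow H\circ\pi \) define \( \bar{\gamma}_s(u,e_1,e_2) := \gamma_u(e_1,e_2) \in H(\proj{u}{A,C}) = H(s) \). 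Then \( \pi^*\bar{\gamma}\circ\eta = \gamma \) and the uniqueness of \( \bar{\gamma} \) are immediate because the summands jointly cover \( (\sigma_1;\sigma_2)(s) \); naturality of \( \eta \) and of \( \bar{\gamma} \) is where one genuinely consumes Lemma~\ref{lem:fibration} — its uniqueness part to identify \( \bar{f}_u \) with the given lift, together with the naturality of \( \gamma \) along \( \bar{f}_u \) and the identity \( \proj{\bar{f}_u}{A,C} = f \). (Lemma~\ref{lem:map-play-char} enters only indirectly, through the proof of Lemma~\ref{lem:fibration}.)
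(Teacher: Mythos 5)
Your proposal is correct, and the ``clean way to organise it'' you settle on at the end --- taking the unit to be the coproduct injections \( (e_1,e_2)\mapsto(u,e_1,e_2) \) and factoring any \( \gamma : F \Rightarrow H\circ\pi \) through \( \bar{\gamma}_s(u,e_1,e_2):=\gamma_u(e_1,e_2) \), with uniqueness coming from the universal property of the coproduct --- is exactly the paper's proof. The preliminary framing via the discrete fibration \( \proj{}{A, C} \) and the finality of the discrete fibre inside the comma category \( (\pi\downarrow s) \) is a sound, more conceptual packaging of the same computation, consistent with the paper's own remark that each fibre of \( \proj{}{A, C} \) is discrete.
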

\begin{proof}
\tkchanged{
The universal natural transformation \( \alpha : F \to (\sigma_1; \sigma_2) \circ \pi \) is given by
\[
  \alpha_{u} : F(u) \ni (e_1, e_2) \mapsto (u, e_1, e_2) \in (\sigma_1; \sigma_2)(\pi(u)).
\]
Assume a functor \( H : \CPlay{A, C}^{\op} \to \Set \) and a natural transformation \( \beta : F \to H \circ \pi \).  Thus for every \( u \in \CIntr{A, C, B} \), we have
\(
  \beta_{u} : F(u) \to H(\pi(u)).
\)
Now \( \gamma_s : (\sigma_1; \sigma_2)(s) \to H(s) \) is defined by
\[
  \gamma_s(u, e_1, e_2) := \beta_u(e_1, e_2)
\]
(recall that \( (\sigma_1; \sigma_2)(s) = \coprod_{u:\; \pi(u) = s} \sigma_1(\proj{u}{A, B}) \times \sigma_2(\proj{u}{B, C}) \)).
Then \( \gamma \) is natural and \( \gamma_{\pi(u)} \circ \alpha_{u} = \beta_u \) for all \( u \).  Uniqueness of \( \gamma \) comes from the universal property of coproducts.
}
\end{proof}

\begin{remark}
In the traditional set-theoretic HO/N game semantics, the composite of strategies \( P_{A, B} \) and \( P_{B, C} \) (i.e.~ even-prefix closed subsets of plays over \( (A, B) \) and over \( (B, C) \), respectively) is defined by \( (P_{A, B}; P_{B, C}) := \{ s \in \CPlay{A, C} \mid \exists u \in \CIntr{A, B, C}.\ \proj{u}{A, B} \in P_{A, B} \textrm{ and } \proj{u}{B, C} \in P_{B, C} \} \).  Our composition satisfies \( (P_{\sigma_1}); (P_{\sigma_2}) = P_{(\sigma_1; \sigma_2)} \), where \( P_{\sigma} = \{ s \mid \sigma(s) \neq \emptyset \} \).
\end{remark}

The composite of sheaves is again a sheaf.
\begin{theorem}
Let \( \sigma_1 \in \Sheaf(\CPlay{A, B}) \) and \( \sigma_2 \in \Sheaf(\CPlay{B, C}) \) be sheaves.  Then \( \sigma_1; \sigma_2 \) is a sheaf over \( \CPlay{A, C} \).
\end{theorem}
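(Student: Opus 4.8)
The plan is to verify the sheaf condition directly. Fix a play $s \in \CPlay{A,C}$ and a covering sieve $R$ on $s$ (so the morphisms in $R$ are jointly surjective onto the move-occurrences of $s$), take a matching family $\{x_r\}_{r\in R}$ with $x_r \in (\sigma_1;\sigma_2)(\dom(r))$, and produce a unique amalgamation $x \in (\sigma_1;\sigma_2)(s)$. By the coproduct shape of $(\sigma_1;\sigma_2)$, each $x_r$ is a triple $(u_r, e_{1,r}, e_{2,r})$ with $u_r \in \CIntr{A,B,C}$, $\proj{u_r}{A,C} = \dom(r)$, $e_{1,r}\in\sigma_1(\proj{u_r}{A,B})$ and $e_{2,r}\in\sigma_2(\proj{u_r}{B,C})$. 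Using that $\CPlay{A,C}$ has pullbacks together with Lemma~\ref{lem:fibration}, the matching condition for $\{x_r\}$ splits into two parts: (a) the interaction sequences are compatible, i.e.\ for $r,r'\in R$ with common pullback projections $g : t \to \dom(r)$, $g' : t \to \dom(r')$ one has $g^\ast(u_r) = {g'}^\ast(u_{r'})$; and (b) modulo this, the transported annotations agree in $\sigma_1$ and $\sigma_2$.

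The first task is to glue the $u_r$ into a single interaction sequence $u$ over $s$. Since $\proj{}{A,C}$ is a fibration with discrete fibres (Lemma~\ref{lem:fibration}), the basic blocks of any $u' \in \CIntr{A,B,C}$ with $\proj{u'}{A,C} = s'$ are in bijection with the O-P blocks of $s'$; so for each O-P block $\beta$ of $s$ — which, by joint surjectivity, lies in the image of some $r_\beta \in R$ — the sequence $u_{r_\beta}$ supplies a candidate middle part (a sequence of $B$-moves with pointers) for $\beta$, and part (a) applied along pullbacks makes this candidate independent of the choice of $r_\beta$. Assembling these middle parts in the order dictated by $s$ yields a candidate $u$; the content is to check that $u$ is a genuine interaction sequence, i.e.\ the pointers of its $B$-moves are well-founded and $\proj{u}{A,B}$, $\proj{u}{B,C}$ are P-visible plays. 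Here Lemma~\ref{lem:map-play-char} and Lemma~\ref{lem:commute-int} do the work: each $u_{r_\beta}$ maps to $u$ by a prefix embedding followed by commutations of non-interfering basic blocks, and (by the proof of Lemma~\ref{lem:commute-int}) non-interference of basic blocks in an interaction sequence forbids a $B$-move of a later block from pointing into an earlier one, so no pointer conflict arises. By construction $r^\ast(u) = u_r$ for every $r \in R$, and the lifts $\bar r : u_r \to u$ form a jointly surjective family in $\CIntr{A,B,C}$.

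It remains to glue the annotations. Projecting the jointly surjective family $\{\bar r : u_r \to u\}_{r\in R}$ along $\proj{}{A,B}$ gives a covering family $\{\proj{\bar r}{A,B} : \proj{u_r}{A,B} \to \proj{u}{A,B}\}_{r\in R}$ in $\CPlay{A,B}$, and part (b) of the matching condition says that $\{e_{1,r} \in \sigma_1(\proj{u_r}{A,B})\}_{r\in R}$ is a matching family for it; since $\sigma_1$ is a sheaf there is a unique amalgamation $e_1 \in \sigma_1(\proj{u}{A,B})$, and symmetrically a unique $e_2 \in \sigma_2(\proj{u}{B,C})$ using that $\sigma_2$ is a sheaf. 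Then $x := (u, e_1, e_2) \in (\sigma_1;\sigma_2)(s)$ restricts to $x_r$ along each $r \in R$, and uniqueness follows because $u$ is forced by part (a) and $e_1, e_2$ are forced by the sheaf conditions for $\sigma_1$ and $\sigma_2$.

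I expect the main obstacle to be the middle step: verifying that the glued family of basic blocks genuinely forms an interaction sequence over $s$ — in particular that P-visibility of the two projections is preserved by the gluing — and that the matching bookkeeping between the covering sieve of $s$ and the induced covering families of $\proj{u}{A,B}$ and $\proj{u}{B,C}$ goes through cleanly. Everything else is essentially formal, given the fibration lemma and the decomposition of morphisms of plays into prefix embeddings and commutations.
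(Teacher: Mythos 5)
Your proof follows essentially the same route as the paper's: reconstruct a single interaction sequence $u$ over $s$ block-by-block from the covering sieve, using the bijection between O-P blocks of $s$ and basic blocks of $u$ together with joint surjectivity and the matching condition for well-definedness, and then amalgamate the two annotation components by applying the sheaf conditions for $\sigma_1$ and $\sigma_2$ to the projected covering families. The only substantive difference is that you are more explicit than the paper about checking that the glued object is a genuine interaction sequence (via the decomposition of morphisms into prefix embeddings and commutations and the closure of interaction sequences under commutation), a verification the paper's proof leaves implicit.
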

\begin{proof}
Let \( s = m_1 \dots m_n \in \CPlay{A, C} \) be a play, \( \{ f : s_f \to s \}_{f \in S} \in J(s) \) be a covering sieve and \( \{ x_{f} \in (\sigma_1; \sigma_2)(s_{f}) \}_{f \in S} \) be a matching family.  By the definition of \( \sigma_1; \sigma_2 \), we have
\[
  x_{f} = (u_{f}, y_{f}, z_{f}) \in \coprod_{u} \sigma_1(\proj{u}{A, B}) \times \sigma_2(\proj{u}{B, C}).
\]

We claim that there exists \( u \) such that:
\begin{itemize}
\item \( \proj{u}{A, C} = s \), and
\item \( u_{f} = f^*(u) \) for every \( f \in S \).
\end{itemize}
If such \( u \) exists, there is a bijective correspondence between basic blocks of \( u \) and O-P blocks of \( s \).  This correspondence tells us the start and the last moves of each block.  So it suffices to fill the intermediate \( B \)-moves for each basic block.  Consider the \( k \)th basic block.  Since \( S \) is a covering sieve, we have a morphism \( f : s_f \to s \in S \) such that \( 2k \in \codom(f) \) (recall that \( k \)th O-P block is \( m_{2k-1} m_{2k} \)).  Let \( l \) be the index such that \( f(l) = 2k \).  Recall that \( x_f = (u_f, y_f, z_f) \) with \( \proj{u_f}{A, C} = s_f \).  Then the basic block of \( u_f \) corresponding to the O-P block \( m'_{l-1} m'_l \) in \( s_f = m'_1 \dots m'_{|s_f|} \) tells us the \( k \)th basic block of \( u \).  This is independent of the choice of \( f \) since \( \{ x_f \}_{f \in S} \) is a matching family.  Now by the construction, \( u_f = f^*(u) \).

Then we have a family \( T := \{ \bar{f}_u : f^*(u) \to u \}_{f \in S} \).  This family is jointly surjective, i.e.~\( \bigcup_{f \in S} \codom(\bar{f}_u) = [|u|] \), since \( S \) is jointly surjective on O-P blocks of \( s \), which bijectively correspond to basic blocks of \( u \).  Hence \( \proj{T}{A, B} := \{ \proj{\bar{f}_u}{A, B} \mid f \in S \} \) and \( \proj{T}{B, C} := \{ \proj{\bar{f}_u}{B, C} \mid f \in S \} \) are covering families and \( \{ y_f \}_{f \in S} \) and \( \{ z_f \}_{f \in S} \) are matching families of them.  Hence there exist amalgamations \( x \in \sigma_1(\proj{u}{A, B}) \) and \( y \in \sigma_2(\proj{u}{B, C}) \).  Then \( (u, x, y) \in (
\sigma_1; \sigma_2)(s) \) is the amalgamation.

The uniqueness of \( u \) follows from the construction and the amalgamations \( x \) and \( y \) are unique since \( \sigma_1 \) and \( \sigma_2 \) are sheaves.
\end{proof}

\subsection{Associativity}
The associativity of composition (up to natural isomorphism) is proved by studying ``generalised'' interaction sequences \( \CIntr{A, B, C, D} \) that have two internal components.  This is a standard technique.

\begin{definition}
Given a quadruple \( (A, B, C, D) \) of arenas, the enabling relation \( \vdash_{A, B, C, D} \) on \( \Moves[A, B, C, D] := \Moves[A] + \Moves[B] + \Moves[C] + \Moves[D] \) is defined by:
\begin{inparaenum}[$(1)$]
\item if \( m \vdash_X m' \) for some \( X \in \{ A, B, C, D \} \), then \( m \vdash_{A, B, C, D} m' \),
\item if \( \star \vdash_D m \), then \( \star \vdash_{A, B, C, D} m \),
\item if \( \star \vdash_D m \) and \( \star \vdash_C m' \), then \( m \vdash_{A, B, C, D} m' \),
\item if \( \star \vdash_C m \) and \( \star \vdash_B m' \), then \( m \vdash_{A, B, C, D} m' \), and
\item if \( \star \vdash_B m \) and \( \star \vdash_A m' \), then \( m \vdash_{A, B, C, D} m' \).
\end{inparaenum}
A \emph{justified sequence} over \( (A, B, C, D) \) is a sequence of \( \Moves[A, B, C, D] \) equipped with pointers that respect \( \vdash_{A, B, C, D} \).  Given a justified sequence \( w \) over \( (A, B, C, D) \), the projections \( \proj{w}{A, B, C} \) onto interaction sequences and \( \proj{w}{A, B} \) onto plays are defined in the obvious way.  A justified sequence over \( (A, B, C, D) \) is an \emph{interaction sequence} if \( \proj{w}{A, B} \), \( \proj{w}{B, C} \) and \( \proj{w}{C, D} \) are plays and its last move is in \( \PMoves[A, D] = \OMoves[A] + \PMoves[D] \).
\end{definition}

\iflong 
\begin{definition}[Switching condition]
Let \( (A, B, C, D) \) be a quadruple of arenas and \( s \) be a sequence over \( \Moves[A, B, C, D] \).  It satisfies the \emph{switching condition} if it is accepted by the following automaton from the initial state \( \mathit{OOO} \) (all states are accepting).
\[\xymatrix@R-0.1cm@C2.4cm{
\mathit{OOO} \ar@<2pt>[r]^{\PMoves[A]} \ar@<1pt>[d]^{\OMoves[D]}
&
\mathit{POO} \ar@<2pt>[d]^{\PMoves[B]} \ar@<1pt>[l]^{\OMoves[A]}
\\
\mathit{OOP} \ar@<2pt>[u]^{\PMoves[D]} \ar@<1pt>[r]^{\OMoves[C]} &
\mathit{OPO} \ar@<2pt>[l]^{\PMoves[C]} \ar@<1pt>[u]^{\OMoves[B]} &
}
\]
The three components of states correspond to \( (A, B) \), \( (B, C) \) and \( (C, D) \) in this order.
\end{definition}

\begin{lemma}
Every interaction sequence over \( (A, B, C, D) \) satisfies the switching condition.
\end{lemma}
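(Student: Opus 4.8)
The plan is to imitate the proof of the triple case (Lemma~\ref{lem:interaction-satisfy-switching}). Let \( w \) be an interaction sequence over \( (A, B, C, D) \). By definition \( \proj{w}{A, B} \), \( \proj{w}{B, C} \) and \( \proj{w}{C, D} \) are plays, hence alternating; as alternation is inherited by prefixes, for \emph{every} prefix \( w' \) of \( w \) each of \( \proj{w'}{A, B} \), \( \proj{w'}{B, C} \), \( \proj{w'}{C, D} \) is an alternating justified sequence of the relevant pair, so ``whose turn it is'' in each of the three component games is determined by the parity of the corresponding length. I would record this as a map \( \mathbf{st} \) sending each prefix \( w' \) to the triple in \( \{ \OO, \PP \}^{3} \) whose entries state, for \( (A,B) \), \( (B,C) \), \( (C,D) \) in that order, whether the next move of that pair must be an O- or a P-move. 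The goal is the stronger statement that \( \mathbf{st}(w') \) is always one of the four states \( \mathit{OOO} \), \( \mathit{POO} \), \( \mathit{OPO} \), \( \mathit{OOP} \) and that reading the next move of \( w \) takes \( \mathbf{st}(w') \) to \( \mathbf{st}(w'm) \) along an edge of the automaton labelled by the class of \( m \); since all states are accepting, this immediately gives the switching condition for \( w \).

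I would prove both halves together by induction on \( |w'| \). The base case is \( \mathbf{st}(\varepsilon) = \mathit{OOO} \). For the step with next move \( m \), split on which of \( \Moves[A] \), \( \Moves[B] \), \( \Moves[C] \), \( \Moves[D] \) contains \( m \). If \( m \in \PMoves[A] \) then \( m \) is an O-move of \( (A,B) \) and lies in none of the other tracked pairs, so alternation of \( \proj{w'm}{A, B} \) forces the \( (A,B) \)-entry of \( \mathbf{st}(w') \) to be \( \OO \); by the induction hypothesis \( \mathbf{st}(w') \) is then \( \mathit{OOO} \) and \( \mathbf{st}(w'm) = \mathit{POO} \), i.e.\ the edge \( \mathit{OOO} \xrightarrow{\PMoves[A]} \mathit{POO} \); the case \( m \in \OMoves[A] \) is the reverse edge. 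If \( m \in \Moves[B] \) then \( m \) occurs in exactly two of the three tracked pairs, \( (A,B) \) and \( (B,C) \), with opposite ownership roles there (\( \OMoves[B] \subseteq \OMoves[A,B] \cap \PMoves[B,C] \) and \( \PMoves[B] \subseteq \PMoves[A,B] \cap \OMoves[B,C] \)), so alternation of \( \proj{w'm}{A, B} \) and \( \proj{w'm}{B, C} \) pins down both of those entries of \( \mathbf{st}(w') \); with the induction hypothesis this leaves precisely the edges \( \mathit{POO} \leftrightarrow \mathit{OPO} \). The cases \( m \in \Moves[C] \) (via the pairs \( (B,C) \), \( (C,D) \) and the edges \( \mathit{OPO} \leftrightarrow \mathit{OOP} \)) and \( m \in \Moves[D] \) (via \( (C,D) \) and the edges \( \mathit{OOP} \leftrightarrow \mathit{OOO} \)) are symmetric. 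Thus \( w \), read from \( \mathit{OOO} \), traces a path in the automaton.

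The step I expect to be the real obstacle is ruling out, in that case analysis, ever landing in a state with two (or three) entries equal to \( \PP \): a bare parity count does \emph{not} forbid such configurations, so one must argue that the move producing them would make one of \( \proj{w'm}{A, B} \), \( \proj{w'm}{B, C} \), \( \proj{w'm}{C, D} \) fail to be a play --- and, for moves in \( \Moves[B] \) or \( \Moves[C] \), which are re-pointed to \( \star \) in two different projections, it is P-visibility of one of those projections rather than mere alternation that does the work. This is exactly the content of the square shape of the automaton, and is the one place where the hypothesis that the three projections are genuine \emph{plays} (not just justified sequences) is essential; it is the same subtlety that already lurks in the triple case. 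A tidier way to package this part, if desired, is to invoke Lemma~\ref{lem:interaction-satisfy-switching} for the two triple subsequences \( \proj{w}{A, B, C} \) and \( \proj{w}{B, C, D} \) (which are themselves interaction sequences), obtaining the switching conditions that fix the relative phases of \( (A,B) \) against \( (B,C) \) and of \( (B,C) \) against \( (C,D) \), and then checking that the conjunction of these two triple constraints is exactly the four-cycle automaton above.
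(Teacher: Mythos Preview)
Your main line—track the three component parities and induct on prefix length—is exactly the paper's one-sentence proof (``the automaton checks if each component is O-P alternating''), and you go further than the paper by correctly isolating the only nontrivial step: showing that the parity triple never lands in a two-$\PP$ configuration such as $\mathit{POP}$. You are right that bare alternation of $\proj{w}{A,B}$, $\proj{w}{B,C}$, $\proj{w}{C,D}$ does not forbid this.

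The difficulty is that neither of your proposed patches closes the gap, because under the paper's definitions the lemma is in fact false. Take $A=\{a\}$, $B=\{b\}$, $C=\{c\}$ (each a single initial O-move) and let $D$ have two initial O-moves $d_1,d_1'$ together with a P-move $d_2'$ enabled by $d_1'$. The justified sequence
\[
w \;=\; d_1\;c\;b\;d_1'\;a\;d_2'
\]
(with $\star\curvearrowleft d_1$, $d_1\curvearrowleft c$, $c\curvearrowleft b$, $\star\curvearrowleft d_1'$, $b\curvearrowleft a$, $d_1'\curvearrowleft d_2'$) has $\proj{w}{A,B}=b\,a$, $\proj{w}{B,C}=c\,b$ and $\proj{w}{C,D}=d_1\,c\,d_1'\,d_2'$, each a play, and its last move lies in $\PMoves[A,D]$; so $w$ is an interaction sequence over $(A,B,C,D)$. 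Yet the automaton run is $\mathit{OOO}\to\mathit{OOP}\to\mathit{OPO}\to\mathit{POO}$, and from $\mathit{POO}$ there is no edge labelled $\OMoves[D]$ for $d_1'$ (indeed $\proj{w}{A,D}=d_1\,d_1'\,a\,d_2'$ is not even alternating). P-visibility of the three projections already holds here, so it cannot be the missing ingredient; and the reduction to the triple case fails for the same reason—$\proj{w}{A,B,D}=d_1\,b\,d_1'\,a\,d_2'$ is an interaction sequence over $(A,B,D)$ that violates the triple switching condition, so Lemma~\ref{lem:interaction-satisfy-switching} has the identical defect. The underlying cause is that the paper's plays are not required to be well-opened: the second $\star$-justified move $d_1'$ is what derails the automaton, and in the standard well-opened setting such a move cannot occur and the usual argument goes through.
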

\begin{proof}
This is because the automaton checks if each component is O-P alternating.
\end{proof}
\fi 

A \emph{basic block} consists of the start move in \( \OMoves[A, D] = \PMoves[A] + \OMoves[D] \), the last move in \( \PMoves[A, D] \) and intermediate moves in \( \Moves[B] + \Moves[C] \).  An morphism \( f : w \to w' \) between interaction sequences over \( (A, B, C, D) \) is an injective map between move occurrences that preserve moves, the justification pointers and basic blocks.  We write \( \CIntr{A, B, C, D} \) for the category of generalised interaction sequences.

\begin{lemma}\quad
\begin{itemize}
\item Projections from \( \CIntr{A, B, C, D} \) (e.g.~\( \proj{}{A, B, C} \) and \( \proj{}{A, B} \)) are functors.
\item Composition of projections is a projection, e.g.
\[
  \CIntr{A, B, C, D} \stackrel{\proj{}{A, B, C}}{\longrightarrow} \CIntr{A, B, C} \stackrel{\proj{}{B, C}}{\longrightarrow} \CPlay{B, C}
  = \CIntr{A, B, C, D} \stackrel{\proj{}{B, C}}{\longrightarrow} \CPlay{B, C}.
\]
\item The projection \( \proj{}{A, D} : \CIntr{A, B, C, D} \to \CPlay{A, D} \) is a discrete fibration.
\end{itemize}
\end{lemma}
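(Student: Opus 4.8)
The plan is to reduce each of the three items to its counterpart for triples. The switching condition for \( (A, B, C, D) \) (stated above) guarantees that every interaction sequence over the quadruple decomposes into basic blocks, exactly as in the triple case, so the tools already available — functoriality of the projections for triples, the characterisation of morphisms of plays in Lemma~\ref{lem:map-play-char}, the commutation Lemma~\ref{lem:commute-int}, and the fibration Lemma~\ref{lem:fibration} — transfer with only bookkeeping changes, and the one genuinely new ingredient is a quadruple analogue of Lemma~\ref{lem:commute-int}.

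\emph{Functoriality of the projections, and composition of projections.} I would first treat the ``outermost'' projections \( \proj{}{A, B, C} \) and \( \proj{}{B, C, D} \). On objects, \( \proj{w}{A, B, C} \) is the restriction of \( w \) to its occurrences of moves in \( \Moves[A, B, C] \), the only surgery on pointers being that an initial \( C \)-move (justified in \( w \) by an initial \( D \)-move) is redirected to \( \star \); this is an interaction sequence over \( (A, B, C) \) because its component projections, \( \proj{w}{A, B} \) and \( \proj{w}{B, C} \), are among those of \( w \) and hence plays, and because the last non-\( D \)-move of \( w \) lies in \( \PMoves[A, C] \) by the switching condition. On morphisms, the restriction of an \( f : w \to w' \) to the \( \Moves[A, B, C] \)-occurrences is again injective and preserves moves and (surgered) pointers, and it preserves basic blocks because each basic block of \( \proj{w}{A, B, C} \) is a basic block of \( w \) with its \( D \)-moves deleted, which \( f \) respects; functoriality is then immediate. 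The ``middle'' projections \( \proj{}{A, C, D} \) and \( \proj{}{A, B, D} \), and the plays projections \( \proj{}{A, C}, \proj{}{B, D}, \proj{}{A, D} \), are handled the same way, except that to see the target is an object of the relevant category one also invokes the earlier fact that \( \proj{u}{A, C} \) is a play for every \( u \in \CIntr{A, B, C} \) (for \( \proj{}{A, D} \) one factors through \( \proj{}{A, C, D} \); this is also where one checks \( \proj{}{A, D} \) does land in \( \CPlay{A, D} \)). Finally, composition of two projections is literally equal to the evident single projection, since both are the restriction of \( w \) to the same occurrence-set and the pointer conventions compose consistently (redirecting an initial move to \( \star \) in two stages versus one gives the same result); this is a routine verification.

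\emph{Discrete fibration.} I would mimic the proof of Lemma~\ref{lem:fibration}. Fix \( f : s \to t \) in \( \CPlay{A, D} \) and \( w' \in \CIntr{A, B, C, D} \) with \( \proj{w'}{A, D} = t \). By the switching condition the basic blocks of \( w' \) are in bijection with the O-P blocks of \( t \); since a morphism \( f : s \to t \) is exactly an injection between the O-P blocks of \( s \) and those of \( t \), transporting it along this bijection picks out a sub-family of basic blocks of \( w' \) whose concatenation, with pointers inherited from \( w' \), is the sought interaction sequence \( w \), and the induced inclusion of occurrence-sets is the lift \( \bar f_{w'} : w \to w' \) with \( \proj{\bar f_{w'}}{A, D} = f \); uniqueness is forced by the block bijection, whence each fibre is discrete. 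For existence one decomposes \( f \) by Lemma~\ref{lem:map-play-char} as a prefix embedding followed by commutations of adjacent O-P blocks in \( \CPlay{A, D} \): the prefix case is trivial, and for a commutation one needs the quadruple analogue of Lemma~\ref{lem:commute-int} — commuting an adjacent pair of non-interfering basic blocks of an interaction sequence over \( (A, B, C, D) \) yields again an interaction sequence — proved exactly as Lemma~\ref{lem:commute-int}, by showing such a commutation preserves the P-visibility of each of \( \proj{}{A, B}, \proj{}{B, C}, \proj{}{C, D} \) (cf.\ Lemma~\ref{lem:exchange-imply-visibility}), the non-interference ruling out exactly the obstruction to a lifted commutation being well-justified.

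\emph{Main obstacle.} Almost everything above is verbatim transfer from the triple case; the only real work is the quadruple version of the commutation Lemma~\ref{lem:commute-int} (where one has to track that moving a block past a non-interfering block does not disturb the P-view in any of the three internal component plays), and the mild bookkeeping that the ``middle'' triple projections such as \( \proj{}{A, C, D} \) actually land in \( \CIntr{\cdot} \), which is where the earlier ``\( \proj{u}{A, C} \) is a play'' lemma is used. Once those are in place, the discrete-fibration argument is a direct adaptation of Lemma~\ref{lem:fibration}, and composition of discrete fibrations being a discrete fibration is standard.
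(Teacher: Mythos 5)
Your proposal is correct and takes essentially the same route as the paper, whose own proof is just the one-liner that the first two claims are easy and the third follows by the same technique as Lemma~\ref{lem:fibration}; your elaboration fills in exactly those details, correctly isolating the one genuinely new ingredient (the quadruple analogue of the commutation Lemma~\ref{lem:commute-int}) needed to run the block-bijection/prefix-plus-commutations argument for the discrete fibration claim.
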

\iflong 
\begin{proof}
The first two claims are easy to see.  The third claim can be proved by the same technique to Lemma~\ref{lem:fibration}
\end{proof}
\fi 

\begin{lemma}\label{lem:interaction-pullback}
Let \( u \in \CIntr{A, B, D} \) and \( v \in \CIntr{B, C, D} \).  If \( (\proj{u}{B, D}) = (\proj{v}{B, D}) \), there exists a unique \( w \in \CIntr{A, B, C, D} \) such that \( u = \proj{w}{A, B, D} \) and \( v = \proj{w}{B, C, D} \).  A similar statement holds for every \( u \in \CIntr{A, C, D} \) and \( v \in \CIntr{A, B, C} \).
\end{lemma}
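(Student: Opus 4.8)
The plan is to show that the underlying data of $w$ is forced, and then to construct its linear order by a ``zipping'' induction governed by the switching condition.

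\emph{Forced data.} Write $t := \proj{u}{B, D} = \proj{v}{B, D}$. Its O-P block structure puts the $B$- and $D$-move occurrences of $u$ in order-preserving bijection with those of $v$, so the move occurrences of $w$ must be the $A$-moves of $u$, the $C$-moves of $v$, and one shared copy of the $B$- and $D$-moves, with labels inherited. The justification pointers are determined too, though not simply by copying from $u$: an initial $B$-move of $w$ is enabled, by the last clause of $\vdash_{A, B, C, D}$, by an initial $C$-move, namely the one it points to in $v$; deleting that $C$-move in $\proj{w}{A, B, D}$ reroutes the pointer to the initial $D$-move it lies under, which is exactly the pointer of $u$ since $\proj{u}{B, D} = \proj{v}{B, D}$. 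The analogous checks for $A$-, $C$- and $D$-moves use only within-arena enabling or the boundary clauses already present in the triples, and are routine. Hence, once the order on $w$ is pinned down, $w$ is unique.

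\emph{Zipping.} I would build $w$ front to back by induction on $|u| + |v|$. The partially built prefix ends in some state of the switching automaton for $(A, B, C, D)$, and the combinatorial heart of the argument is that the states in which an $A$-move may be played, $\{\mathit{OOO}, \mathit{POO}\}$, are disjoint from those in which a $C$-move may be played, $\{\mathit{OOP}, \mathit{OPO}\}$. Thus at every state exactly one of the two sequences may supply a \emph{private} move (an $A$-move for $u$, a $C$-move for $v$) while the other is blocked, and the only other option at that state is a shared ($B$- or $D$-) move. Using that $u$ satisfies the switching condition for $(A, B, D)$ and $v$ that for $(B, C, D)$ (Lemma~\ref{lem:interaction-satisfy-switching}), one checks that the next move demanded by the non-blocked sequence is precisely the move the automaton permits, and that shared moves are demanded simultaneously and agree because $\proj{u}{B, D} = \proj{v}{B, D}$. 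This shows the shuffle exists and is uniquely determined, so uniqueness follows.

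\emph{$w$ is an interaction sequence.} Since a composite of projections is again a projection, $\proj{w}{A, B} = \proj{u}{A, B}$, $\proj{w}{C, D} = \proj{v}{C, D}$ and $\proj{w}{B, C} = \proj{v}{B, C}$, all plays because $u, v$ are interaction sequences, while $\proj{w}{A, B, D} = u$ and $\proj{w}{B, C, D} = v$ by construction. It remains to see that the last move of $w$ lies in $\PMoves[A, D]$: from state $\mathit{OOO}$ the automaton admits only an $\PMoves[A]$- or $\OMoves[D]$-move, so once both $u$ and $v$ are exhausted nothing more can be appended, and a short case analysis shows the last move of $w$ is the last move of $u$, which lies in $\PMoves[A, D]$ by hypothesis. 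The second statement, for $u \in \CIntr{A, C, D}$ and $v \in \CIntr{A, B, C}$ agreeing on $\proj{}{A, C}$, is proved identically with the two boundary components interchanged: the private moves are now the $D$-moves of $u$ and the $B$-moves of $v$, played in the disjoint state sets $\{\mathit{OOO}, \mathit{OOP}\}$ and $\{\mathit{POO}, \mathit{OPO}\}$. The main obstacle is precisely the consistency of the zipping --- that the demands of $u$ and of $v$ never conflict and jointly pin down a single total order --- which is exactly what the switching conditions on $u$ and $v$ and the disjointness of the private-move state sets deliver; the re-pointing of initial moves under projection is the other fiddly point but is otherwise mechanical.
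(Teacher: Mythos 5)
Your proposal is correct and follows essentially the same route as the paper: the paper also builds $w$ by a state-driven ``zipping'' of $u$ and $v$, tracking the switching automata for $(A,B,D)$, $(B,C,D)$ and $(A,B,C,D)$ simultaneously and observing at each state which sequence must supply the next (private or shared) move. Your explicit remarks on the disjointness of the $A$-move and $C$-move state sets, on uniqueness, and on the re-pointing of initial moves under projection are just spelled-out versions of what the paper's case analysis carries implicitly.
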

\iflong 
\begin{proof}
Let \( u = m_1 \dots m_M \in \CIntr{A, B, D} \) and \( v = n_1 \dots n_N \in \CIntr{B, C, D} \) and suppose that \( \pi^{B, D}(u) = \pi^{B, D}(v) \).  We construct \( w \in l_1 \dots l_L \in \CIntr{A, B, C, D} \).  By the switching condition, \( u \) and \( v \) must be accepted by the left and right automata, respectively,
\[
\begin{array}{ccc}
\xymatrix@R-0.1cm@C.8cm{
& *+[o][F]{q_1} \ar@/^1pc/[dr]^{\PMoves[A]} \ar@/^/[dl]_{\OMoves[D]} & \\
q_2 \ar@/^1pc/[ur]^{\PMoves[D]} \ar@<1pt>[rr]^{\OMoves[B]} & &
q_3 \ar@<2pt>[ll]^{\PMoves[B]} \ar@/^/[ul]_{\OMoves[A]}
}
&
&
\xymatrix@R-0.1cm@C.8cm{
& *+[o][F]{p_1} \ar@/^1pc/[dr]^{\PMoves[B]} \ar@/^/[dl]_{\OMoves[D]} & \\
p_2 \ar@/^1pc/[ur]^{\PMoves[D]} \ar@<1pt>[rr]^{\OMoves[C]} & &
p_3 \ar@<2pt>[ll]^{\PMoves[C]} \ar@/^/[ul]_{\OMoves[B]}
}
\end{array}
\]
and \( w \) must be accepted by the automaton
\[\xymatrix@R-0.1cm@C2.4cm{
*+[o][F]{r_1} \ar@<2pt>[r]^{\PMoves[A]} \ar@<1pt>[d]^{\OMoves[D]}
&
r_4 \ar@<2pt>[d]^{\PMoves[B]} \ar@<1pt>[l]^{\OMoves[A]}
\\
r_2 \ar@<2pt>[u]^{\PMoves[D]} \ar@<1pt>[r]^{\OMoves[C]} &
r_3 \ar@<2pt>[l]^{\PMoves[C]} \ar@<1pt>[u]^{\OMoves[B]} &
}
\]
We construct a sequence of moves \( w \) such that \( \proj{w}{A, B, D} = u \) and \( \proj{w}{B, C, D} = v \).  An \emph{intermediate state} is a tuple \( (i, j, k, p, q, r) \) such that \( i \le M \), \( j \le N \) such that \( \proj{m_i \dots m_M}{B, D} = \proj{n_j \dots n_N}{B, D} \), \( k \) is the current index of \( l \) and \( p \), \( q \) and \( r \) are states of the above automata from which \( m_i \dots m_M \), \( n_j \dots n_N \) and \( l_k \dots l_L \) are accepted, respectively.  \tk{TODO: Clarify what kind of induction is used.}
\begin{itemize}
\item \( (i, j, k, q_1, p_1, r_1) \):  Then \( m_i \in \OMoves[D] + \PMoves[A] \).  If \( m_i \in \OMoves[D] \), then let \( l_k = m_i = n_j \) and proceed to \( (i+1, j+1, k+1, q_2, p_2, r_2) \).  If \( m_i \in \PMoves[A] \), then let \( l_k = m_i \) and proceed to \( (i+1, j, k+1, q_3, p_1, r_4) \).
\item \( (i, j, l, q_2, p_2, r_2) \):  Then \( n_j \in \OMoves[C] + \PMoves[D] \).  If \( n_j \in \OMoves[C] \), then let \( l_k = n_j \) and proceed to \( (i, j+1, k+1, q_2, p_3, r_3) \).  If \( n_j \in \PMoves[D] \), then let \( l_k = n_j = m_i \) and proceed to \( (i+1, j+1, k+1, p_1, q_1, r_1) \).
\item \( (i, j, l, q_2, p_3, r_3) \):  Then \( n_j \in \OMoves[B] + \PMoves[C] \).  If \( n_j \in \OMoves[B] \), then let \( l_k = m_i = n_j \) and proceed to \( (i+1, j+1, k+1, q_3, p_1, r_4) \).  If \( n_j \in \PMoves[C] \), then let \( l_k = n_j \) and proceed to \( (i, j+1, k+1, q_2, p_2, r_2) \).
\item \( (i, j, l, q_3, p_1, r_4) \):  Then \( m_i \in \OMoves[A] + \PMoves[B] \).  If \( m_i \in \OMoves[A] \), then let \( l_k = m_i \) and proceed to \( (i+1, j, k+1, q_1, p_1, r_1) \).  If \( m_i \in \PMoves[B] \), then let \( l_k = m_i = n_j \) and proceed to \( (i+1, j+1, k+1, q_2, p_3, r_3) \).
\item Other cases are never reached.
\end{itemize}
The justification pointer for \( A \)-moves are determined by \( u \) and others by \( v \).
\end{proof}
\fi 

Given innocent strategies \( \sigma_1 \in \Sheaf(\CPlay{A, B}) \), \( \sigma_2 \in \Sheaf(\CPlay{B, C}) \) and \( \sigma_3 \in \Sheaf(\CPlay{C, D}) \), their \emph{simultaneous composition} \( F : \CPlay{A, D}^{\op} \to \Set \) is defined by: for objects, \( F(s) \) is
\[
  \coprod_{w \in \CIntr{A, B, C, D}:\,\proj{w}{A, D} = s} \sigma_1(\proj{w}{A, B}) \times \sigma_2(\proj{w}{B, C}) \times \sigma_3(\proj{w}{C, D})
\]
and, given \( f : s \to t \) in \( \CPlay{A, D} \), the function \( F(f) : F(t) \to F(s) \) maps \( (w, e_1, e_2, e_3) \in F(t) \) to
\[
  (f^*(w),\;\; e_1 \cdot (\proj{\bar{f}_w}{A, B}),\;\; e_2 \cdot (\proj{\bar{f}_w}{B, C}),\;\; e_3 \cdot (\proj{\bar{f}_w}{C, D})).
\]

\begin{lemma}
The simultaneous composition is naturally isomorphic to sequential compositions \( \sigma_1; (\sigma_2; \sigma_3) \) and \( (\sigma_1; \sigma_2); \sigma_3 \).
\end{lemma}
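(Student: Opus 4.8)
The plan is to show that each of the two sequential composites is naturally isomorphic to the \emph{simultaneous} composite $F$; since the two cases are mirror images of each other, I would carry out $(\sigma_1;\sigma_2);\sigma_3 \cong F$ in full and then obtain $\sigma_1;(\sigma_2;\sigma_3) \cong F$ by the symmetric argument (using the first, rather than the second, statement of Lemma~\ref{lem:interaction-pullback}, and the pullback $\CIntr{A, B, C, D} \cong \CIntr{A, B, D} \times_{\CPlay{B, D}} \CIntr{B, C, D}$). First I would unfold definitions: an element of $((\sigma_1;\sigma_2);\sigma_3)(s)$ is a tuple $(v, (u, e_1, e_2), e_3)$ with $v \in \CIntr{A, C, D}$ and $\proj{v}{A, D} = s$, with $u \in \CIntr{A, B, C}$ and $\proj{u}{A, C} = \proj{v}{A, C}$, and with $e_1 \in \sigma_1(\proj{u}{A, B})$, $e_2 \in \sigma_2(\proj{u}{B, C})$, $e_3 \in \sigma_3(\proj{v}{C, D})$. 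By the second statement of Lemma~\ref{lem:interaction-pullback} the pairs $(v, u)$ occurring here are in bijection with the objects $w \in \CIntr{A, B, C, D}$, and under this bijection $\proj{w}{A, B, C} = u$ and $\proj{w}{A, C, D} = v$, hence -- since a composite of projection functors is again a projection -- $\proj{w}{A, B} = \proj{u}{A, B}$, $\proj{w}{B, C} = \proj{u}{B, C}$, $\proj{w}{C, D} = \proj{v}{C, D}$ and $\proj{w}{A, D} = s$. Reading off the summands, this gives a bijection $((\sigma_1;\sigma_2);\sigma_3)(s) \cong F(s)$ sending $(v, (u, e_1, e_2), e_3) \mapsto (w, e_1, e_2, e_3)$.

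The substantial part is naturality, i.e.\ that this family of bijections commutes with the action on a morphism $f : s \to t$ of $\CPlay{A, D}$. Here I would exploit that all projection functors in play are discrete fibrations -- Lemma~\ref{lem:fibration}, its $\CIntr{A, B, C, D} \to \CPlay{A, D}$ analogue, and the fact that these fibrations compose. Because $\proj{}{A, D} : \CIntr{A, B, C, D} \to \CPlay{A, D}$ factors through $\proj{}{A, C, D} : \CIntr{A, B, C, D} \to \CIntr{A, C, D}$, uniqueness of cartesian lifts forces the lift $\bar{f}_w : f^*(w) \to w$ to project under $\proj{}{A, C, D}$ onto the lift $\bar{f}_v : f^*(v) \to v$ in $\CIntr{A, C, D}$, so $\proj{f^*(w)}{A, C, D} = f^*(v)$ and $\proj{\bar{f}_w}{C, D} = \proj{\bar{f}_v}{C, D}$. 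Setting $g := \proj{\bar{f}_v}{A, C} = \proj{\bar{f}_w}{A, C}$, the same uniqueness applied to $\proj{}{A, C} : \CIntr{A, B, C} \to \CPlay{A, C}$ forces $\proj{\bar{f}_w}{A, B, C}$ to be the lift $\bar{g}_u : g^*(u) \to u$, whence $\proj{f^*(w)}{A, B, C} = g^*(u)$, $\proj{\bar{f}_w}{A, B} = \proj{\bar{g}_u}{A, B}$ and $\proj{\bar{f}_w}{B, C} = \proj{\bar{g}_u}{B, C}$. Substituting these identities into the two definitions of the transition maps (recalling that $((\sigma_1;\sigma_2);\sigma_3)(f)$ applies $(\sigma_1;\sigma_2)(g)$ to the middle entry) one checks that $((\sigma_1;\sigma_2);\sigma_3)(f)$ and $F(f)$ agree under the object-level bijection, which is therefore a natural isomorphism $((\sigma_1;\sigma_2);\sigma_3) \cong F$.

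A more conceptual route to the same conclusion is available: all three composites are left Kan extensions along discrete fibrations -- the two sequential ones by two nested applications of the Kan-extension characterisation of composition proved above, and $F$ directly along $\proj{}{A, D} : \CIntr{A, B, C, D}^{\op} \to \CPlay{A, D}^{\op}$ -- so one may instead combine the pasting law for left Kan extensions with a Beck--Chevalley property for the two pullback squares of projection functors furnished by Lemma~\ref{lem:interaction-pullback}. Whichever presentation one adopts, I expect the real obstacle to be precisely the fibration bookkeeping of the middle paragraph: verifying that a \emph{single} cartesian lift in $\CIntr{A, B, C, D}$ restricts simultaneously to the cartesian lifts in each intermediate category of interaction sequences. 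Once uniqueness of cartesian lifts is invoked the identities are forced, but establishing them cleanly requires keeping careful track of which base morphism -- of $\CPlay{A, D}$, of $\CPlay{A, C}$, or of an intermediate $\CIntr{-}$ -- each lift lies over.
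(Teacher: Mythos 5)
Your proposal is correct and follows essentially the same route as the paper: the object-level bijection obtained from Lemma~\ref{lem:interaction-pullback} is exactly the paper's map $\psi_s$ (you work out the $(\sigma_1;\sigma_2);\sigma_3$ bracketing where the paper writes out $\sigma_1;(\sigma_2;\sigma_3)$, a mirror image), and your naturality argument via uniqueness of cartesian lifts along the discrete fibrations is precisely the content of the paper's Lemma~\ref{lem:quad-interaction-projection} and the subsequent component-by-component verification. The Kan-extension/Beck--Chevalley route you sketch at the end is not taken by the paper, but you correctly present it only as an optional alternative.
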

\begin{proof}
Given \( s \in \CPlay{A, D} \), consider a function \( \psi_s \) that maps an element \( (w, e_1, e_2, e_3) \) of
\[
  \coprod_{w:\,\proj{w}{A, D} = s} \sigma_1(\proj{w}{A, B}) \times \sigma_2(\proj{w}{B, C}) \times \sigma_3(\proj{w}{C, D})
\]
to \( ((\proj{w}{A, B, D}), e_1, ((\proj{w}{B, C, D}), e_2, e_3)) \) of
\[
  \coprod_{u:\,\proj{u}{A, D} = s} \sigma_1(\proj{u}{A, B}) \times \coprod_{u:\,\proj{v}{B, D} = \proj{u}{B, D}} \sigma_2(\proj{v}{B, C}) \times \sigma_3(\proj{v}{C, D}).
\]
This is a bijection thanks to Lemma~\ref{lem:interaction-pullback}.  It is easy to show the naturality of \( \psi \).
\iflong 

Let us write \( F \) for the simultaneous composition and \( G = (\sigma_1; (\sigma_2; \sigma_3)) \).  Assume \( f : s \to t \) in \( \CPlay{A, D} \).  Then \( F(f); \psi_t \) maps \( (w, e_1, e_2, e_3) \) to
\begin{align*}
  &
  (w, e_1, e_2, e_3)
  \\ \stackrel{F(f)}{\longmapsto}
  &
  (f^*(w),\;\; e_1 \cdot (\proj{\bar{f}_w}{A, B}),\;\; e_2 \cdot (\proj{\bar{f}_w}{B, C}),\;\; e_3 \cdot (\proj{\bar{f}_w}{C, D}))
  \\ \stackrel{\psi_t}{\longmapsto}
  &
  \begin{array}{l}
    (\proj{f^*(w)}{A, B, D},\;\; e_1 \cdot (\proj{\bar{f}_w}{A, B}), \\
    \qquad (\proj{f^*(w)}{B, C, D}),\;\;  e_2 \cdot (\proj{\bar{f}_w}{B, C}),\;\; e_3 \cdot (\proj{\bar{f}_w}{C, D}))
  \end{array}
\end{align*}
and \( \psi_s; G(f) \) maps \( (w, e_1, e_2, e_3) \) to
\begin{align*}
  &
  (w, e_1, e_2, e_3)
  \\ \stackrel{\psi_s}{\longmapsto}
  &
  ((\proj{w}{A, B, D}),\; e_1,\;((\proj{w}{B, C, D}),\; e_2,\; e_3))
  \\[5pt] \stackrel{G(f)}{\longmapsto}
  &
  \begin{array}{l}
    (f^*(\proj{w}{A, B, D}),\;\; e_1 \cdot (\proj{\bar{f}_{\proj{w}{A, B, D}}}{A, B}), \\
    \qquad
    ((\proj{\bar{f}_{\proj{w}{A, B, D}}}{B, D})^*(\proj{w}{B, C, D}), \\
    \qquad\;\;
    e_2 \cdot (\proj{\overline{(\proj{\bar{f}_{\proj{w}{A, B, D}}}{B, D})}_{(\proj{w}{B, C, D})}}{B, C}), \\
    \qquad\;\;
    e_3 \cdot (\proj{\overline{(\proj{\bar{f}_{\proj{w}{A, B ,D}}}{B, D})}_{(\proj{w}{B, C, D})}}{C, D}))).
  \end{array}
\end{align*}
By Lemma~\ref{lem:quad-interaction-projection}, we have \( \proj{f^*(w)}{A, B, D} = f^*(\proj{w}{A, B, D}) \), so the first components coincide.  As for the second components, again by Lemma~\ref{lem:quad-interaction-projection}, we have
\[
  \proj{(\bar{f}_{\proj{w}{A, B, D}})}{A, B} = \proj{\proj{(\bar{f}_w)}{A, B, D}}{A, B} = \proj{\bar{f}_w}{A, B}.
\]
For the third components, recall that
\[
  \proj{(\bar{f}_{\proj{w}{A, B, D}})}{B, D} = \proj{(\bar{f}_w)}{B, D}.
\]
and \( \proj{(\bar{f}_w)}{B, D} : (\proj{f^*(w)}{B, D}) \to (\proj{w}{B, D}) \).  Since
\[
  \proj{(\bar{f}_w)}{B, C, D} : (\proj{f^*(w)}{B, C, D}) \to (\proj{w}{B, C, D})
\]
is projected onto \( \proj{(\bar{f}_w)}{B, D} \), we have
\[
  (\proj{(\bar{f}_w)}{B, D})^*(\proj{w}{B, C, D}) = \proj{f^*(w)}{B, C, D}.
\]
For the fourth components, by using Lemma~\ref{lem:quad-interaction-projection}, we have
\begin{align*}
  \overline{(\proj{\bar{f}_{\proj{w}{A, B, D}}}{B, D})}_{\proj{w}{B, C, D}}
  &= \overline{(\proj{\bar{f}_{w}}{B, D})}_{\proj{w}{B, C, D}}
  \\
  &= \overline{(\proj{\proj{\bar{f}_{w}}{B, C, D}}{B, D})}_{\proj{w}{B, C, D}}
  \\
  &= \proj{(\bar{f}_{w})}{B, C, D}
\end{align*}
(in general, for \( h : u \to v \) in \( \CIntr{B, C, D} \), we have \( \overline{(\proj{h}{B, D})}_v = h \))
and
\[
  \proj{\proj{(\bar{f}_{w})}{B, C, D}}{B, C} = \proj{\bar{f}_w}{B, C}
\]
as desired.  The fifth component is the same.
\fi 
\end{proof}

\iflong 
\begin{lemma}\label{lem:quad-interaction-projection}
Let \( w \in \CIntr{A, B, C, D} \) and \( f : s \to (\proj{w}{A, D}) \) in \( \CPlay{A, D} \).  Then
\[
  \proj{f^*(w)}{A, B, D} = f^*(\proj{w}{A, B, D})
\]
and
\[
  \proj{\bar{f}_{w}}{A, B, D} = \bar{f}_{\proj{w}{A, B, D}}.
\]
\end{lemma}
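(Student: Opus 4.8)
The plan is to deduce everything formally from the discrete fibration structure, with no manipulation of move occurrences. The two inputs are: (a) both $\proj{}{A, D} : \CIntr{A, B, C, D} \to \CPlay{A, D}$ and $\proj{}{A, D} : \CIntr{A, B, D} \to \CPlay{A, D}$ are discrete fibrations (the former recorded in the preceding lemma, the latter being Lemma~\ref{lem:fibration} applied to the triple $(A, B, D)$); and (b) $\proj{}{A, B, D}$ is a functor with $\proj{}{A, D} \circ \proj{}{A, B, D} = \proj{}{A, D}$ (also recorded in the preceding lemma). The identities are then forced by the uniqueness clause in the definition of a discrete fibration.

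First I would fix notation: set $u := \proj{w}{A, B, D} \in \CIntr{A, B, D}$. Since projections compose, $\proj{u}{A, D} = \proj{w}{A, D}$, so $f : s \to \proj{u}{A, D}$ and hence $f^*(u)$ and $\bar{f}_u : f^*(u) \to u$ are well-defined in $\CIntr{A, B, D}$. The key step is to check that $\proj{\bar{f}_w}{A, B, D}$, regarded as a morphism of $\CIntr{A, B, D}$, satisfies the universal property characterising $\bar{f}_u$. Applying the functor $\proj{}{A, B, D}$ to $\bar{f}_w : f^*(w) \to w$ gives a morphism $\proj{\bar{f}_w}{A, B, D} : \proj{f^*(w)}{A, B, D} \to u$ of $\CIntr{A, B, D}$ (using $\proj{w}{A, B, D} = u$). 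Postcomposing with $\proj{}{A, D}$ and using $\proj{}{A, D} \circ \proj{}{A, B, D} = \proj{}{A, D}$ yields $\proj{\proj{\bar{f}_w}{A, B, D}}{A, D} = \proj{\bar{f}_w}{A, D} = f$; in particular its domain lies over $s$, since $\proj{\proj{f^*(w)}{A, B, D}}{A, D} = \proj{f^*(w)}{A, D} = \dom(f) = s$ (the middle equality holding because $\bar{f}_w$ lies over $f$).

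Now I would invoke the defining property of the discrete fibration $\proj{}{A, D} : \CIntr{A, B, D} \to \CPlay{A, D}$: there is exactly one morphism of $\CIntr{A, B, D}$ with codomain $u$ lying over $f$, and it is $\bar{f}_u : f^*(u) \to u$. By the previous paragraph $\proj{\bar{f}_w}{A, B, D}$ is such a morphism, hence $\proj{\bar{f}_w}{A, B, D} = \bar{f}_u$, which is the second claimed identity. Reading off the domains of these two equal morphisms gives $\proj{f^*(w)}{A, B, D} = f^*(u) = f^*(\proj{w}{A, B, D})$, which is the first.

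I do not anticipate a genuine obstacle. The only subtlety worth spelling out is that, in a discrete fibration, the unique lift of $f$ with a prescribed codomain also has a uniquely determined \emph{domain} (two morphisms over $f$ with the same codomain are literally equal, hence share a domain), so that the uniqueness of the cartesian lift yields the uniqueness of $f^*(-)$ on which the argument rests.
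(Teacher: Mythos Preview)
Your proof is correct and follows essentially the same approach as the paper: apply the functor \( \proj{}{A,B,D} \) to \( \bar{f}_w \), observe that the result lies over \( f \) via \( \proj{}{A,D} \circ \proj{}{A,B,D} = \proj{}{A,D} \), and invoke the uniqueness clause of the discrete fibration \( \proj{}{A,D} : \CIntr{A,B,D} \to \CPlay{A,D} \). The paper's proof is the same argument compressed into two lines; your version merely makes the discrete-fibration reasoning explicit.
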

\begin{proof}
By definition, \( \bar{f}_w : f^*(w) \to w \) in \( \CIntr{A, B, C, D} \).  Thus
\[
  \proj{\bar{f}_w}{A, B, D} : (\proj{f^*(w)}{A, B, D}) \to (\proj{w}{A, B, D}).
\]
Both claims follow from \( \proj{\proj{\bar{f}_w}{A, B, D}}{A, D} = \proj{\bar{f}_w}{A, D} = f \).
\end{proof}
\fi 

\begin{corollary}
Composition is associative up to isomorphism.
\end{corollary}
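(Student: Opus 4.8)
The plan is to route everything through the simultaneous composition $F : \CPlay{A, D}^{\op} \to \Set$, which serves as a common intermediary. The preceding lemma already supplies one half: a natural isomorphism $\psi : F \cong \sigma_1;(\sigma_2;\sigma_3)$. Running the \emph{same} argument but starting from the second assertion of Lemma~\ref{lem:interaction-pullback} --- the one reconstructing a $w \in \CIntr{A, B, C, D}$ from a $u \in \CIntr{A, C, D}$ and a $v \in \CIntr{A, B, C}$ that agree on their $(A, C)$-projections --- yields the other half: a natural isomorphism $\psi' : F \cong (\sigma_1;\sigma_2);\sigma_3$. Then $\psi' \circ \psi^{-1} : \sigma_1;(\sigma_2;\sigma_3) \cong (\sigma_1;\sigma_2);\sigma_3$ is the sought natural isomorphism. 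Since morphisms of the category under consideration are sheaves $\CPlay{A, D}^{\op} \to \Set$ identified up to isomorphism, and both sequential composites are genuine sheaves (by the theorem that the composite of two sheaves is a sheaf), this isomorphism of functors is exactly an equality of morphisms there, so associativity holds strictly in that category.

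Only the construction of $\psi'$ remains to be written out, and it mirrors that of $\psi$. On an object $s \in \CPlay{A, D}$ it sends a representative $(w, e_1, e_2, e_3)$ of an element of $F(s)$ to the tuple
\[
  ((\proj{w}{A, C, D}),\; ((\proj{w}{A, B, C}), e_1, e_2),\; e_3),
\]
which represents an element of $\coprod_{u:\,\proj{u}{A, D} = s} \bigl(\coprod_{v:\,\proj{v}{A, C} = \proj{u}{A, C}} \sigma_1(\proj{v}{A, B}) \times \sigma_2(\proj{v}{B, C})\bigr) \times \sigma_3(\proj{u}{C, D}) = ((\sigma_1;\sigma_2);\sigma_3)(s)$. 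Bijectivity of $\psi'_s$ for each $s$ is immediate from the cited half of Lemma~\ref{lem:interaction-pullback}, just as for $\psi_s$. Naturality is the mirror image of the displayed computation in the proof of the lemma: for $f : s \to t$ in $\CPlay{A, D}$ one unwinds both $((\sigma_1;\sigma_2);\sigma_3)(f) \circ \psi'_t$ and $\psi'_s \circ F(f)$ and matches the components using the evident analogues of Lemma~\ref{lem:quad-interaction-projection} (now for the projections $\proj{}{A, C, D}$ and $\proj{}{A, B, C}$) together with the uniqueness of cartesian lifts along the discrete fibration $\proj{}{A, D}$.

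There is no genuine obstacle: all the content sits in Lemma~\ref{lem:interaction-pullback} and the lemma identifying $F$ with $\sigma_1;(\sigma_2;\sigma_3)$. The one thing to get right is to use the \emph{same} pivot $F$ for both isomorphisms; attempting instead to compare $\sigma_1;(\sigma_2;\sigma_3)$ with $(\sigma_1;\sigma_2);\sigma_3$ directly would force simultaneous bookkeeping of two distinct doubly-nested coproducts, whereas through $F$ each side is a single instance of Lemma~\ref{lem:interaction-pullback}.
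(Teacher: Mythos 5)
Your proposal is correct and follows essentially the same route as the paper: the paper's preceding lemma establishes that the simultaneous composition is naturally isomorphic to both bracketings (writing out the $\sigma_1;(\sigma_2;\sigma_3)$ case via the $(A,B,D)/(B,C,D)$ half of Lemma~\ref{lem:interaction-pullback} and declaring the $(\sigma_1;\sigma_2);\sigma_3$ case symmetric via the $(A,C,D)/(A,B,C)$ half), and the corollary is exactly the composite of the two isomorphisms through that common pivot. Your explicit formula for $\psi'$ and the appeal to the mirror-image naturality computation are just the details the paper leaves implicit.
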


\subsection{CCC of arenas and strategies}

\begin{definition}
The \emph{category of arenas and strategies} \( \CGame \) has arenas as objects and a sheaf \( \sigma \in \Sheaf(\CPlay{A, B}) \) as a morphism from \( A \) to \( B \).
\tkchanged{We regard that isomorphic sheaves define the same morphism.}  The composition is defined in Section~\ref{sec:composition}.
\end{definition}

As usual, the identity morphisms are copycat strategies.
\begin{definition}
Let \( A \) be an arena.  Let us write a move in \( \Moves[A, A] = \Moves[A] + \Moves[A] \) as \( l(m) \) and \( r(m) \) for \( m \in \Moves[A] \), in order to distinguish the component.  The relation \( \sim \) is given by \( l(m) \sim r(m) \) and \( r(m) \sim l(m) \) (i.e.~\( \sim \) relates the same move in the different component).  A play \( s = m_1 m_2 \dots m_n \in \CPlay{A, A} \) is \emph{copycat} if, for every even number \( k \le n \),
\begin{inparaenum}[$(1)$]
\item \( m_{k-1} \sim m_{k} \),
\item \( \star \curvearrowleft m_{k-1} \) implies \( m_{k-1} \curvearrowleft m_k \), and
\item \( m_j \curvearrowleft m_{k-1} \) implies \( m_{j-1} \curvearrowleft m_k \).
\end{inparaenum}
The \emph{copycat strategy} \( \ident_A \in \Sheaf(\CPlay{A, A}) \) is defined by: \( \ident_A(s) = \{ \ast \} \) if \( s \) is copycat and \( \ident_A(s) = \emptyset \) otherwise.
\end{definition}

\begin{proposition}
\( (\ident_A; \sigma) \cong \sigma \cong (\sigma; \ident_B) \) for \( \sigma \in \Sheaf(\CPlay{A, B}) \).
\end{proposition}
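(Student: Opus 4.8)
The plan is to unfold the composite $(\ident_A;\sigma)$ through the category $\CIntr{A,A,B}$ of interaction sequences whose inner arena is $A$, and to show that constraining the inner component to be copycat collapses the defining coproduct of $(\ident_A;\sigma)(s)$ to a single summand canonically isomorphic to $\sigma(s)$. The right unit law $\sigma\cong(\sigma;\ident_B)$ is entirely symmetric, using $\CIntr{A,B,B}$ and copycat on the outer-right component, so I only describe the left unit law; in what follows I write ``inner'' and ``outer'' to tell apart the two copies of $A$ in the triple $(A,A,B)$.

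First I would isolate the combinatorial core: the classical neutrality-of-copycat (``zig-zag'') argument, adapted to the present setting. The claim is that for every play $s\in\CPlay{A,B}$ there is exactly one interaction sequence $\widehat{s}\in\CIntr{A,A,B}$ such that $\proj{\widehat{s}}{A,A}$ is a copycat play and the projection of $\widehat{s}$ to the outer-$A$ and $B$ components equals $s$; moreover the projection of $\widehat{s}$ to the inner-$A$ and $B$ components is carried onto $s$ by the canonical renaming isomorphism $\CPlay{A,B}\cong\CPlay{A,B}$. Existence is by threading $s$ through copycat basic-block by basic-block: each O-move of $s$ opens a basic block whose unique intermediate $A$-move is the copycat mirror of the move before it. The switching condition (Lemma~\ref{lem:interaction-satisfy-switching}) together with the copycat conditions $(1)$--$(3)$ forces this choice at every step, which also yields uniqueness and the identification of the inner and outer projections with $s$. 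One must check that the two projections of $\widehat{s}$ relevant to the coproduct are $P$-visible plays: the inner projection is $s$ up to renaming, hence a play, and the copycat projection inherits $P$-visibility from that of $s$ --- since $O$-visibility is not available here, I would argue this using the closure-under-commutation criterion (Lemma~\ref{lem:exchange-imply-visibility}), or by a direct induction on the $P$-view via Lemma~\ref{lem:map-play-char}. Dually, any $u\in\CIntr{A,A,B}$ with $\proj{u}{A,A}$ copycat is $\widehat{s}$ for $s:=\proj{u}{A,B}$ (outer); since the sheaf $\ident_A$ sends a play to $\{\ast\}$ exactly when it is copycat, it is lower-closed along morphisms, which makes this compatible with the discrete fibration $\proj{}{A,B}:\CIntr{A,A,B}\to\CPlay{A,B}$, so that $f^{*}(\widehat{t})=\widehat{s}$ for every $f:s\to t$ in $\CPlay{A,B}$.

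Granting the lemma, the natural isomorphism is almost immediate. By definition $(\ident_A;\sigma)(s)=\coprod_{u:\,\proj{u}{A,B}=s}\ident_A(\proj{u}{A,A})\times\sigma(\proj{u}{A,B})$, where the outer projection appears in the index and the last factor uses the inner-$A$ copy; the factor $\ident_A(\proj{u}{A,A})$ is empty unless $\proj{u}{A,A}$ is copycat, in which case $u=\widehat{s}$ and that factor is $\{\ast\}$, so the whole coproduct is the single set $\{\ast\}\times\sigma(\proj{\widehat{s}}{A,B})$, which the renaming isomorphism identifies with $\sigma(s)$; take $\theta_s$ to be this identification. Naturality of $\theta=\{\theta_s\}$ is then a routine unwinding: for $f:s\to t$, the map $(\ident_A;\sigma)(f)$ sends $(\widehat{t},\ast,z)$ to $(f^{*}(\widehat{t}),\,\ident_A(\ldots)(\ast),\,\sigma(\proj{\bar{f}_{\widehat{t}}}{A,B})(z))=(\widehat{s},\ast,\sigma(\proj{\bar{f}_{\widehat{t}}}{A,B})(z))$ by the lemma, and the morphism $\proj{\bar{f}_{\widehat{t}}}{A,B}$ on the inner-$A$ copy corresponds to $f$ itself under the renaming isomorphism; hence $\theta_s\circ(\ident_A;\sigma)(f)=\sigma(f)\circ\theta_t$, so $\theta$ is a natural isomorphism $(\ident_A;\sigma)\cong\sigma$. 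The symmetric construction over $\CIntr{A,B,B}$ gives $\sigma\cong(\sigma;\ident_B)$.

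I expect the one genuinely load-bearing step to be the combinatorial lemma, and specifically the verification that the interaction against copycat is forced and that the sequence $\widehat{s}$ and its projections remain $P$-visible plays in our framework, where $O$-visibility has been dropped; once that is in place, the remainder is bookkeeping around the discrete fibration $\proj{}{A,B}$ and the defining coproduct.
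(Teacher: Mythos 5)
Your proof is correct, and it is exactly the argument the paper intends: the proposition is stated without proof (the CCC structure is dismissed as ``an adaptation of the standard arguments for HO/N game models''), and your combinatorial lemma is precisely the classical copycat zig-zag neutrality argument, correctly lifted to the sheaf setting by observing that $\ident_A(\proj{u}{A,A})$ kills every summand of the coproduct except the unique threaded interaction $\widehat{s}$, with naturality handled via the discrete fibration $\proj{}{A,B}:\CIntr{A,A,B}\to\CPlay{A,B}$ of Lemma~\ref{lem:fibration}. You correctly identify the only points needing care in this framework, namely uniqueness of the threading and P-visibility of the projections in the absence of O-visibility.
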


\tkchanged{In the rest of this subsection, we show that \( \CGame \) is a CCC.  It is an adaptation of the standard arguments for HO/N game models.}

\paragraph{Products and terminal object}
Given arenas $A$ and $B$, the arena \( A \times B \) is defined by: \( \Moves[A \times B] := \Moves[A] + \Moves[B] \), \( \lambda_{A \times B} := [\lambda_A, \lambda_B] \) and \( (\vdash_{A \times B}) := (\vdash_A) \cup (\vdash_B) \).  We say a play \( s \in \CPlay{A \times B, A} \) is \emph{copycat} if \( s \) does not contain \( B \)-moves and it is copycat as a play of \( \CPlay{A, A} \).  The projection \( \pi_1 \in \Sheaf(\CPlay{A \times B, A}) \) is defined by: \( \pi_1(s) = \{ \ast \} \) if \( s \) is copycat and \( \pi_1(s) = \emptyset \) otherwise.  The projection \( \pi_2 \in \Sheaf(\CPlay{A \times B, B}) \) is defined similarly.

\tkchanged{For a play \( s \in \CPlay{A, B \times C} \), we write \( \proj{s}{A, B} \) for the restriction of \( s \) to \( \{ i \mid m_j \curvearrowleft^* m_i \textrm{ for some } m_j \in \Moves[B] \} \), where \( \curvearrowleft^* \) is the reflexive and transitive closure of \( \curvearrowleft \).  The restriction is a functor.}

The terminal object is the empty arena having no moves.



\paragraph{Exponentials}
Let \( A \) and \( B \) be arenas.  The \emph{exponential arena} \( A \Rightarrow B \) is defined by:
\begin{inparaenum}[$(1)$]
\item \( \Moves[A \Rightarrow B] := \{ m \in \Moves[B] \mid \star \vdash_B m \} \times \Moves[A] + \Moves[B] \),
\item \( \lambda_{A \Rightarrow B}(m) := \neg \lambda_{A}(m_A) \) (if \( m = (m_B, m_A) \)) and \( \lambda_{A \Rightarrow B}(m) := \lambda_{B}(m) \) (if \( m \in \Moves[B] \)), where \( \neg \OO = \PP \) and \( \neg \PP = \OO \).
\end{inparaenum}
The enabling relation is defined by
\begin{inparaenum}[$(a)$]
\item if \( \star \vdash_A m_A \), then \( m_B \vdash_{A \Rightarrow B} (m_B, m_A) \),
\item if \( m_A \vdash_A m_A' \), then \( (m_B, m_A) \vdash_{A \Rightarrow B} (m_B, m_A') \), and
\item if \( m_B \vdash_B m_B' \), then \( m_B \vdash_{A \Rightarrow B} m_B' \).
\end{inparaenum}

Given a play \( s \in \CPlay{A, B \Rightarrow C} \), let us write \( \theta(s) \) for the justified sequence in which \( (m_C, m_B) \in \Moves[B \Rightarrow C] \subseteq \Moves[A, B \Rightarrow C] \) is replaced with \( m_B \in \Moves[A \times B, C] \).  Then \( \theta(s) \) is a play over \( (A \times B, C) \).  Conversely, given a play \( s \in \CPlay{A \times B, C} \), let us write \( \theta^{-1}(s) \) for the justified sequence in which every B-move \( m_B \in \Moves[B] \subseteq \Moves[A \times B, C] \) is replaced with \( (m_C, m_B) \in \Moves[B \Rightarrow C] \subseteq \Moves[A, B \Rightarrow C] \) where \( m_C \) is the initial \( C \)-move s.t.~\( m_C \curvearrowleft^+ m_B \). 
Since \( \theta \) and \( \theta^{-1} \) do not change the order of move occurrences nor justification pointers, they are functors.  Furthermore \( \theta^{-1} \) is the inverse of \( \theta \).  So \( \CPlay{A, B \Rightarrow C} \) is isomorphic to \( \CPlay{A \times B, C} \).  Since \( \theta \) maps views to views, we have an isomorphism between \( \CView{A, B \Rightarrow C} \) and \( \CView{A \times B, C} \) as well.

The isomorphism \( \theta : \CPlay{A, B \Rightarrow C} \to \CPlay{A \times B, C} \) gives an isomorphism \( \Lambda : \Sheaf(\CPlay{A \times B, C}) \to \Sheaf(\CPlay{A, B \Rightarrow C}) : \sigma \mapsto \sigma \circ \theta \).  This is a natural bijection on hom-sets \( \Lambda: \CGame(A \times B, C) \cong \CGame(A, B \Rightarrow C) \).


In summary, we have the following result.
\begin{lemma}
\( \CGame \) is a cartesian closed category.
\end{lemma}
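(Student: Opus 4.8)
The plan is to assemble the structure already built up in the preceding subsections, so most of the work is citation rather than fresh argument. First, $\CGame$ is genuinely a category: composition of sheaves is associative up to natural isomorphism (the Corollary), and the copycat sheaves $\ident_A$ are two-sided units up to isomorphism (the Proposition $\ident_A;\sigma\cong\sigma\cong\sigma;\ident_B$); since a morphism of $\CGame$ is by definition an isomorphism class of sheaves, these laws hold strictly. For the terminal object I would take the empty arena $\tensorunit$: having no moves, $\tensorunit$ admits no final P-move, so $\CPlay{A,\tensorunit}$ carries a unique sheaf up to isomorphism (it has no play, or only the empty one, which the empty covering family forces to have a singleton value), whence $\CGame(A,\tensorunit)$ is a singleton.

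For binary products I would check that $A\times B$, with the copycat projections $\pi_1\in\Sheaf(\CPlay{A\times B,A})$ and $\pi_2\in\Sheaf(\CPlay{A\times B,B})$, is the categorical product. The quickest route goes through the Comparison Lemma (Lemma~\ref{lem:comparison}): a nonempty P-view of $C\Rightarrow(A\times B)$ is single-threaded, so its unique initial move is either an initial $A$-move or an initial $B$-move, and since every morphism of P-views sends initial move to initial move (preserving its arena), $\CView{C,A\times B}$ is the coproduct of categories $\CView{C,A}+\CView{C,B}$. Hence $\Sheaf(\CView{C,A\times B})\cong\Sheaf(\CView{C,A})\times\Sheaf(\CView{C,B})$, and transporting along the equivalence $\iota^{*}$ gives a bijection $\CGame(C,A\times B)\cong\CGame(C,A)\times\CGame(C,B)$. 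It then remains to verify that this bijection is natural in $C$ and is implemented by post-composition with $\pi_1$ and $\pi_2$; both reduce to the fact that, because composition is computed via interaction sequences, it acts thread by thread, and interaction against the copycat $\pi_i$ merely discards the threads belonging to the other factor --- exactly as in the deterministic model. Uniqueness of the pairing $\langle\sigma,\tau\rangle$ is then immediate, since the value of any sheaf on a play of $(C,A\times B)$ is already determined, up to isomorphism, by its two thread-projections.

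For exponentials the excerpt has already produced the isomorphism of sites $\theta:\CPlay{A,B\Rightarrow C}\to\CPlay{A\times B,C}$, and hence the bijection $\Lambda:\CGame(A\times B,C)\cong\CGame(A,B\Rightarrow C)$, $\sigma\mapsto\sigma\circ\theta$. What is left is to promote this to a natural isomorphism, i.e.\ to establish the adjunction $(-\times B)\dashv(B\Rightarrow-)$. Naturality of $\Lambda$ in $A$ --- compatibility with pre-composition by $f\times\ident_B$ --- I would obtain from the observations that $\theta$ is compatible with the product operation on arenas and with the interaction-sequence composition of sheaves. Naturality in $C$ is the substantive point: one takes $\mathrm{ev}_{B,C}:=\Lambda^{-1}(\ident_{B\Rightarrow C}):(B\Rightarrow C)\times B\to C$, defines the action of $B\Rightarrow-$ on a morphism $g:C\to C'$ to be $\Lambda(\mathrm{ev}_{B,C};g)$, and then verifies the two triangle identities, tracking copycat blocks through the $\Rightarrow$-encoding inside interaction sequences and invoking the identity and associativity laws already proved.

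I expect this last verification --- the triangle identities for evaluation and currying, equivalently the naturality of $\Lambda$ in its codomain --- to be the main obstacle. It is conceptually routine, being the same computation as in deterministic HO/N game semantics, and it transfers without surprises precisely because composition of sheaves was shown to be computed exactly as composition of sets of plays, with the internal-state data carried along passively; nonetheless it is the step demanding the most careful bookkeeping with interaction sequences, their projections, and the encoding $\theta$. Once it is in place, finite products together with exponentials give that $\CGame$ is cartesian closed.
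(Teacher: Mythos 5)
Your proposal is correct and follows essentially the same route as the paper, which itself only assembles the constructions (empty arena as terminal object, the product arena with copycat projections, and the exponential via the site isomorphism $\theta$ and the currying bijection $\Lambda$) and defers the verifications to the standard HO/N arguments. Your extra step of establishing the product universal property by decomposing $\CView{C, A\times B}$ as $\CView{C,A}+\CView{C,B}$ and transporting along the Comparison Lemma is a clean way to discharge a verification the paper leaves implicit, but it is a refinement of, not a departure from, the paper's approach.
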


\subsection{Key lemma for full completeness}\label{sec:key-lemma}
Basically the full completeness is achieved by establishing the correspondence between the paths of terms in normal form and P-views.  \tkchanged{This subsection describes the key lemma for full completeness, adapting the standard technique for HO/N game models.}

An arena \( A \) is prime if it has a unique initial move.  Then \( A = B \Rightarrow \{ m \} \) for some arena \( B \) and the initial \( A \)-move \( m \).

Let \( A = A_1 \times \dots \times A_n \) be an arena, where \( A_i \) is prime for each \( i \), and \( i \in [n] \).  Writing \( m_2 \) for the unique initial \( A_i \)-move, \( (m_1 m_2) \in \CView{A, \{ m_1 \}} \).  We define \( (m_1 m_2) / \CView{A, \{ m_1 \}} \) as the full subcategory consisting of P-views \( p > (m_1 m_2) \).  (Since \( \CView{A, \{ m_1 \}} \) is a poset, this coincides with the standard definition of the under category.)  Suppose \( A_i = B \Rightarrow \{ m_2 \} \).  There is an isomorphism
\[
  \chi_{(m_1m_2)} : (m_1 m_2) / \CView{A, \{m_1\}} \stackrel{\cong}{\to} \CView{A, B},
\]
given by \( m_1 m_2 m_3 \dots m_l \mapsto m_3 \dots m_l \).  Here we need to modify the justification pointer as follows:
\begin{itemize}
\item If \( m_2 \curvearrowleft m_k \) in LHS (then \( k = 3 \)), then \( \star \curvearrowleft m_k \) in RHS.
\item If \( m_1 \curvearrowleft m_k \) in LHS, then \( m_3 \curvearrowleft m_k \) in RHS.
\item If \( m_j \curvearrowleft m_k \) in LHS (\( j \neq 1, 2 \)), then \( m_j \curvearrowleft m_k \) in RHS.
\end{itemize}
This isomorphism is the key to prove full completeness.

Let \( \tau \in \Sheaf(\CView{A, B}) \).  Suppose that \( A = A_1 \times \dots A_n \), where \( A_i \) is prime for each \( i \).  Let \( i \in [n] \) and \( A_i = B \Rightarrow \{ m_2 \} \).  We define the operation \( (m_1m_2) \rhd \tau \) that ``inserts'' \( m_1 m_2 \) before the P-views in \( \tau \), defined by:
\[
\begin{array}{llr}
  ((m_1 m_2) \rhd \tau)(m_1 m_2) &:= \{ \ast \} \\
  ((m_1 m_2) \rhd \tau)(m_1 m_2 p) &:= \tau(p) \\ 
  ((m_1 m_2) \rhd \tau)(p) &:= \emptyset & \textrm{ (otherwise)}.
\end{array}
\]
To be precise, the second equation should be written as \( ((m_1m_2) \rhd \tau)(m_1m_2p) := \tau(\chi_{(m_1m_2)}(m_1m_2p)) \).
Then \( ((m_1 m_2) \rhd \tau) \in \Sheaf(\CView{A, \{ m_1 \}}) \).

\begin{lemma}\label{lem:key-lemma}
Let \( \tau \in \Sheaf(\CView{A, B}) \) and suppose that \( A = A_1 \times \dots \times A_n \), \( A_i \) is prime for all \( i \), \( k \in [n] \) and \( A_k = B \Rightarrow \{ m_2 \} \).  Then
\[
  \iota_*((m_1 m_2) \rhd \tau) \cong \langle \pi_i, \iota_*(\tau) \rangle; \mathbf{ev}
\]
where \( \pi_i \in \Sheaf(\CPlay{A, A_i}) \) is the projection of the product and \( \mathbf{ev} = \Lambda(\ident_{A_i}) \in \Sheaf(\CPlay{(B \Rightarrow \{ m_2 \}) \times B,\;\{ m_2 \}}) \) is the evaluation map.
\end{lemma}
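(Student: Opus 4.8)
The plan is to pass to P-views and compute both sides pointwise. By the adjoint equivalence $(\iota_*, \iota^*)$ established above (so $\iota^* \iota_* \cong \ident$ and $\iota_* \iota^* \cong \ident$), two sheaves over $\CPlay{A, \{m_1\}}$ are isomorphic iff their images under $\iota^*$ are isomorphic in $\Sheaf(\CView{A, \{m_1\}})$, and $\iota^*\bigl(\iota_*((m_1 m_2) \rhd \tau)\bigr) \cong (m_1 m_2) \rhd \tau$. So, writing $M := A_k \times B = (B \Rightarrow \{m_2\}) \times B$ for the middle arena and identifying the one-move arenas $\{m_1\}$ and $\{m_2\}$, it suffices to produce, for every P-view $p$ of $(A, \{m_1\})$, a bijection natural in $p$ between $\bigl((m_1 m_2) \rhd \tau\bigr)(p)$ and
\[
  \coprod_{\proj{u}{A, \{m_1\}} = p} \Bigl(\pi_k\bigl(\proj{(\proj{u}{A, M})}{A, A_k}\bigr) \times \iota_*\tau\bigl(\proj{(\proj{u}{A, M})}{A, B}\bigr)\Bigr) \times \mathbf{ev}\bigl(\proj{u}{M, \{m_1\}}\bigr),
\]
where $u$ ranges over the interaction sequences in $\CIntr{A, M, \{m_1\}}$ with the indicated projection.

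The content is a structural analysis of those $u$. Every play of $(A, \{m_1\})$ begins with the target root $m_1$, whose only response permitted by $\mathbf{ev}$ (the uncurrying of the copycat on $A_k$) is, inside $M$, the root of the factor $A_k = B \Rightarrow \{m_2\}$, whose only response permitted by $\pi_k$ (the copycat projection onto the $k$-th factor) is the root $m_2$ of $A_k \subseteq A$. Hence, if $p$ does not extend $m_1 m_2$ — i.e.\ $p$ queries some component $A_j$ with $j \neq k$ — then no $u$ with $\proj{u}{A, \{m_1\}} = p$ makes both the $\mathbf{ev}$- and the $\pi_k$-factors nonempty, so the displayed set is empty, matching $\bigl((m_1 m_2) \rhd \tau\bigr)(p) = \emptyset$. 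If $p$ does extend $m_1 m_2$, there is a unique $u$ with $\proj{u}{A, \{m_1\}} = p$ and both copycat factors nonempty — each then equal to $\{\ast\}$ — and it is determined by $p$: the copycat discipline of $\mathbf{ev}$ and $\pi_k$ together with the identification $\theta$ inside $M$ forces every move of $A$ below $m_2$ to be relayed through the $A_k$-factor of $M$ to its $B$-factor, and every move of $A$ justified in $p$ by the target root $m_1$ to be relayed, re-justified by the initial $B$-move. Unwinding this shows that $q := \proj{(\proj{u}{A, M})}{A, B}$ is exactly $\chi_{(m_1 m_2)}(p)$, pointers included (the $m_1 m_2$ block deleted, the child of $m_2$ re-pointed to $\star$, the children of $m_1$ re-pointed to $m_3$), and that $q$ is a P-view of $(A, B)$. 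Therefore, by $\iota^* \iota_* \tau \cong \tau$, we get $\iota_*\tau(q) \cong \tau(\chi_{(m_1 m_2)}(p)) = \bigl((m_1 m_2) \rhd \tau\bigr)(p)$, the displayed set being $\{\ast\} \times \iota_*\tau(q) \times \{\ast\} \cong \iota_*\tau(q)$; the degenerate case $p = m_1 m_2$, where $q = \varepsilon$ and $\iota_*\tau(\varepsilon) = \{\ast\}$, matches the clause $\bigl((m_1 m_2) \rhd \tau\bigr)(m_1 m_2) = \{\ast\}$.

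Naturality is then routine, since $\CView{A, \{m_1\}}$ is a poset: a morphism $p' \le p$ between P-views extending $m_1 m_2$ corresponds under $\chi_{(m_1 m_2)}$ to $\chi(p') \le \chi(p)$ in $\CView{A, B}$, the witnessing interaction sequences are related by the forced lift (Lemma~\ref{lem:fibration}), and both $(m_1 m_2) \rhd \tau$ and $\iota^*\bigl(\langle \pi_k, \iota_*\tau \rangle; \mathbf{ev}\bigr)$ act on $p' \le p$ by $\tau$ acting on $\chi(p') \le \chi(p)$; outside the cone over $m_1 m_2$ both sides are empty. I expect the main obstacle to be the middle paragraph — precisely, checking that the $(A, B)$-projection of the forced interaction sequence is literally $\chi_{(m_1 m_2)}(p)$, with the right justification pointers. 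This is the familiar ``the head variable is resolved by a projection followed by an evaluation'' bookkeeping of HO/N game semantics, but because our plays lack O-visibility and well-openness it is cleanest to isolate it as a self-contained sublemma on interaction sequences over $(A, (B \Rightarrow \{m_2\}) \times B, \{m_2\})$ whose $(A, \{m_1\})$-projection is a P-view, rather than invoke the classical decomposition verbatim.
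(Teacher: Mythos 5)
The paper states this lemma without proof (it is offered as an adaptation of the standard HO/N technique), so there is no official argument to compare against; your proposal is, as far as I can tell, correct and is exactly the intended route: reduce to P-views by the comparison lemma, observe that the copycat disciplines of \( \pi_k \) and \( \mathbf{ev} \) leave at most one witnessing interaction sequence \( u \) over \( (A, A_k\times B, \{m_1\}) \) per P-view \( p \), and identify the \( (A,B) \)-part of \( \proj{u}{A, A_k\times B} \) with \( \chi_{(m_1m_2)}(p) \).

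Two points in your middle paragraph genuinely need the sublemma you propose. First, uniqueness of \( u \) is not only about the intermediate \( M \)-moves: each initial \( A \)-move occurring in \( p \) must, in \( u \), be justified by some occurrence of an initial \( M \)-move, and you must check that the copycat and P-visibility constraints force this choice (the initial second-factor \( B \)-move heading the current thread) rather than leaving several interaction sequences with the same \( (A,\{m_1\}) \)-projection. Second, \( q := \proj{(\proj{u}{A, A_k\times B})}{A, B} \) is not literally \( \chi_{(m_1m_2)}(p) \): the moves of \( p \) lying below the occurrence of \( m_2 \) sit in the \( (A, A_k) \)-part of \( \proj{u}{A, A_k\times B} \), and what the \( (A,B) \)-part actually contains is their twice-relayed copies in the second \( B \)-factor of \( M \), together with the re-justified initial \( A \)-moves. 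The copycat discipline makes the two justified sequences agree under the evident renaming \( (m_2,b)\mapsto b \) (which is already implicit in the paper's definition of \( \chi_{(m_1m_2)} \)), but your sublemma should be stated at that level of precision. With those two points made explicit, the argument goes through.
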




\section{Sheaves model for deterministic \( \lambda_{\to} \)}
This section develops the sheaves model for simply-typed \( \lambda \)-calculus, the simplest functional programming language.

\subsection{The target language}
The standard simply-typed call-by-name \( \lambda \)-calculus extended to have divergence \( \bot \).  The syntax of terms is given by:
\[
  M ::= x \mid \lam x. M \mid M\,M \mid \bot.
\]
We consider simply-typed terms possibly having free variables.  Types are type environments are given by the grammar:
\[
  \kappa ::= \T \mid \kappa \to \kappa
\qquad
  \Gamma ::= \cdot \mid \Gamma, x : \kappa.
\]
The typing rules are standard, expect that \( \bot \) is considered as a constant of the ground type \( \T \).

We study the equational theory of terms, precisely \( \beta\eta \)-theory.  The relation \( = \) is the least equivalence relation that satisfies
\begin{align*}
  (\lam x. M)\,N &= M[N/x] \\
  \lam x. M\,x &= M & \textrm{ (\( x \) fresh)}
\end{align*}
and the congruence rules: if \( M = M' \), then \( M\,N = M'\,N \) and \( N\,M = N\,M' \).  The normal form is defined by:
\[
  Q ::= \lam x_1 \dots x_k. y\,Q_1\,\dots\,Q_n \mid \lam x_1 \dots x_k. \bot
\]
where \( y\,Q_1\,\dots\,Q_n \) is fully applied, i.e.~\( y\,Q_1\,\dots\,Q_n : \T \).  Every term has a unique normal form.

\subsection{Deterministic strategies}

\begin{definition}
An \emph{odd-length play} is an odd-length alternating P-visible justified sequence.  (It is not a play because a play is of even-length.)  For an odd-length play \( s \) over \( (A, B) \), the \emph{immediate extension} \( \ImmExt{s} \) is a set of plays \( \{ sm \mid sm \in \CPlay{A, C} \} \).
\end{definition}

An odd-length play \( s \) ends with an O-move and the immediate extension \( \ImmExt{s} \) is the set of all possible Proponent's responses.

\begin{definition}
An innocent strategy \( \sigma \in \Sheaf(\CPlay{A, B}) \) is \emph{deterministic} if, for every odd-length play \( s \), \( \coprod_{t \in \ImmExt{s}} \sigma(t) \) is empty or singleton.
It is \emph{finite} if \( \{ p \in \CView{A, B} \mid \sigma(p) \neq \emptyset \} \) is a finite set.
\end{definition}

\begin{remark}
If \( \sigma \) is deterministic, then \( \sigma(s) \) is empty or singleton for every \( s \in \CPlay{A, B} \).  So it is completely determined by a set \( \{ s \in \CPlay{A, B} \mid \sigma(s) \neq \emptyset \} \).  Through this translation, the sheaf-based definition of innocent strategies coincides with the standard one.
\end{remark}

\begin{definition}
A \emph{category of deterministic strategies} \( \CDetGame \) is a subcategory consisting of deterministic strategies.
\end{definition}

\( \CDetGame \) is well-defined since the identity \( \ident_A \) deterministic and the composition preserves determinacy.
\begin{lemma}
Composition preserves determinacy.
\end{lemma}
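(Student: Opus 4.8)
The plan is to reduce ``deterministic'' to the concrete condition it names and then run the classical ``deterministic interaction'' argument in the sheaf setting. By the Remark on pre-deterministic strategies it suffices to show: for every odd-length play $s$ over $(A,C)$, the set $\coprod_{t \in \ImmExt{s}} (\sigma_1;\sigma_2)(t)$ has at most one element. Unfolding the definition of composition, an element of this set is a tuple $(p,u,e_1,e_2)$ with $t = s\,p \in \CPlay{A,C}$, $u \in \CIntr{A,B,C}$ such that $\proj{u}{A,C} = t$, $e_1 \in \sigma_1(\proj{u}{A,B})$ and $e_2 \in \sigma_2(\proj{u}{B,C})$. Since $s$ ends in an O-move $a$ of $(A,C)$, any such $u$ splits as $u = u_0\,\beta$, where $\beta$ is its last basic block (it begins with $a$, has intermediate $B$-moves, and ends with $p \in \PMoves[A,C]$), $u_0 \in \CIntr{A,B,C}$, and $\proj{u_0}{A,C} = s_0$, the even-length prefix of $s$ (so $s = s_0\,a$).

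I would prove, by induction on $|s|$ and simultaneously over even and odd $s$, the two statements: (a) for even-length $s$, $(\sigma_1;\sigma_2)(s)$ is empty or a singleton; and (b) for odd-length $s$, $\coprod_{t \in \ImmExt{s}}(\sigma_1;\sigma_2)(t)$ is empty or a singleton. The base case $|s| = 0$ is immediate: by the switching condition the first move of a nonempty interaction sequence is an initial $C$-move, so the only $u$ with $\proj{u}{A,C} = \varepsilon$ is $\varepsilon$ itself, and $\sigma_1(\varepsilon)$, $\sigma_2(\varepsilon)$ are singletons (being amalgamations over the empty cover). Statement (a) at length $n$ follows from (b) at length $n-1$: writing $s = s_0\,p$ with $s_0$ odd, $(\sigma_1;\sigma_2)(s)$ is one summand of $\coprod_{t \in \ImmExt{s_0}}(\sigma_1;\sigma_2)(t)$, which is empty or a singleton by the inductive hypothesis. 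The real work is (b) at length $n$ from (a) at length $n-1$.

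For (b): fix odd $s = s_0\,a$ and suppose the set has an element $(p,u,e_1,e_2)$; I show it is the unique one. Split $u = u_0\,\beta$ as above. The basic-block-preserving inclusion $u_0 \hookrightarrow u$ induces, via the projection functors, morphisms $\proj{u_0}{A,B} \to \proj{u}{A,B}$ and $\proj{u_0}{B,C} \to \proj{u}{B,C}$, along which $e_1,e_2$ restrict to $\bar e_1 \in \sigma_1(\proj{u_0}{A,B})$, $\bar e_2 \in \sigma_2(\proj{u_0}{B,C})$; thus $(u_0,\bar e_1,\bar e_2) \in (\sigma_1;\sigma_2)(s_0)$. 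By the inductive hypothesis (a) at length $n-1$ this set is a singleton, so $u_0$, $\bar e_1$ and $\bar e_2$ are uniquely determined by $s$. It remains to pin down the block $\beta$ (its $B$-moves and its last move $p$) together with the full labels $e_1,e_2$; this is the deterministic ping-pong. The move $a$ lies in $\PMoves[A]$ or in $\OMoves[C]$, hence is an O-move for exactly one of the games $(A,B)$, $(B,C)$; the corresponding strategy, being deterministic, has on the relevant odd-length play (a prefix of $\proj{u}{A,B}$ or of $\proj{u}{B,C}$) either no response --- impossible here, since $e_1,e_2$ witness that those projections are fully extended --- or a unique response move together with a unique label. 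If the response is a $B$-move it is an O-move for the other game and we repeat; if it lies in $\PMoves[A,C] = \OMoves[A] + \PMoves[C]$ the block closes and $p$ is this move. Since $u$ is finite this process halts, and at each step the move and the accompanying label are forced (the label because the relevant $\coprod_{t'}\sigma_i(t')$ is a singleton, so any element of it --- in particular the appropriate restriction of $e_i$ --- equals that unique element). Hence $\beta$, $p$, $e_1$ and $e_2$ are uniquely determined, proving (b), and determinacy of $\sigma_1;\sigma_2$ follows.

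The main obstacle is the bookkeeping in the ping-pong step: checking that each prefix of $\proj{u}{A,B}$ / $\proj{u}{B,C}$ arising during the interaction is a genuine (odd- or even-length) P-visible play, that $u_0$ is itself an interaction sequence with $\proj{u_0}{A,C} = s_0$, and that the step-wise forced labels glue to exactly $e_1,e_2$ (which uses functoriality of $\sigma_1,\sigma_2$ and the fact that a play's prefixes sit inside it by morphisms). These points are routine given Lemma~\ref{lem:map-play-char}, Lemma~\ref{lem:commute-int} and the switching condition, but they are where care is needed; the mathematical content is exactly the classical argument that composition of deterministic innocent strategies is deterministic, transported to sheaves.
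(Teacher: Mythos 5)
Your argument is correct and is essentially the paper's proof spelled out in full: the paper simply cites the \emph{uncovering construction} of Hyland--Ong to get that at most one interaction sequence $u$ with $\proj{u}{A,C} \in \ImmExt{s}$ has both projections accepted, and your induction plus switching-condition ping-pong is precisely that construction, transported to sheaves with the labels $e_1,e_2$ pinned down by the empty-or-singleton consequence of determinacy.
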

\begin{proof}
Let \( \sigma_1 \in \Sheaf(\CPlay{A, B}) \) and \( \sigma_2 \in \Sheaf(\CPlay{B, C}) \) be deterministic strategies.  Then for every odd-length play \( s \) of \( (A, C) \), there exists at most one \( u \) such that \( \proj{u}{A, C} = s m \), \( \sigma_1(\proj{u}{A, B}) \neq \emptyset \) and \( \sigma_2(\proj{u}{B, C}) \neq \emptyset \) (see \emph{uncovering construction} in \cite{HylandO00}).  Thus \( \coprod_{u:\,\proj{u}{A, C} \in \ImmExt{s}} \sigma_1(\proj{u}{A, B}) \times \sigma_2(\proj{u}{B, C}) \) is empty or singleton.
\end{proof}

Since projections \( A \times B \to A \) and \( A \times B \to B \) are deterministic and the isomorphism \( \Sheaf(\CPlay{A \times B, C}) \cong \Sheaf(\CPlay{A, B \Rightarrow C}) \) preserves determinacy, \( \CDetGame \) is a CCC.

\subsection{Interpretation}
Simple types are interpreted as objects by
\[
  \sem{\T} := \{ m_{\T} \}
\qquad
  \sem{\kappa \to \kappa'} := \sem{\kappa} \Rightarrow \sem{\kappa'}
\]
as well as type environemnts
\[
  \sem{x_1 : \kappa_1, \dots, x_n : \kappa_n} := \sem{\kappa_1} \times \dots \times \sem{\kappa_n}.
\]
The interpretation of terms is fairly standard:
\begin{align*}
  \sem{x_1 : \kappa_1, \dots, x_n : \kappa_n \vdash x_i : \kappa_i} &:= \pi_i
  \\
  \sem{\Gamma \vdash \lam x. M : \kappa \to \kappa'} &:= \Lambda(\sem{\Gamma, x : \kappa \vdash M : \kappa'})
  \\
  \sem{\Gamma \vdash M\,N : \kappa} &:= \langle \sem{M}, \sem{N} \rangle; \mathbf{ev}
  \\
  \sem{\Gamma \vdash \bot : \T} &:= \iota_* \tau_{\emptyset},
\end{align*}
where \( \tau_{\emptyset} \in \Sheaf(\CView{\sem{\Gamma}, \sem{\T}}) \) is the constant functor mapping to \( \emptyset \).

\begin{theorem}[Soundness]
\( M = N \) iff \( \sem{M} \cong \sem{N} \).
\end{theorem}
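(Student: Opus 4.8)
The plan is to prove the two directions of the biconditional separately: \emph{soundness}, $M = N \Rightarrow \sem{M} \cong \sem{N}$, and \emph{completeness}, $\sem{M} \cong \sem{N} \Rightarrow M = N$. Soundness I would derive from the fact, established above, that $\CDetGame$ is a cartesian closed category and that $\sem{-}$ is defined purely from its CCC structure ($\times$, $\Lambda$, $\mathbf{ev}$, the projections $\pi_i$). It is then standard that such an interpretation validates the $\beta\eta$-theory: $\beta$ follows from the exponential computation law together with a substitution lemma $\sem{\Gamma \vdash M[N/x] : \kappa} \cong \langle \ident_{\sem{\Gamma}}, \sem{N} \rangle ; \sem{\Gamma, x \vdash M : \kappa}$, which I would prove by a routine induction on $M$ using functoriality of $\times$ and naturality of $\Lambda$; $\eta$ follows from the uniqueness clause of the universal property of $\Lambda$; and the congruence rules are just functoriality of composition, products and $\Lambda$. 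Hence $M = N$ implies $\sem{M} \cong \sem{N}$.

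For completeness I would first reduce to normal forms. Every term has a unique normal form $\nf(M)$ with $M = \nf(M)$, and $M = N$ holds iff $\nf(M)$ and $\nf(N)$ are equal up to renaming of bound variables; by soundness $\sem{\nf(M)} \cong \sem{M} \cong \sem{N} \cong \sem{\nf(N)}$, so it suffices to show that for normal forms $Q, Q'$ of a common type $\Gamma \vdash \kappa$, $\sem{Q} \cong \sem{Q'}$ implies $Q$ and $Q'$ are $\alpha$-equal. By the Comparison Lemma (Lemma~\ref{lem:comparison}) this is equivalent to the corresponding statement about the restrictions to views $\tau_Q := \iota^\ast \sem{Q} \in \Sheaf(\CView{\sem{\Gamma}, \sem{\kappa}})$, which is the form I would actually work with; and since $\Lambda$ is built from the view-preserving isomorphism $\theta$, I may assume $\kappa = \T$, so that $Q = \lam \vec{x}.\, y\, Q_1 \cdots Q_n$ or $Q = \lam \vec{x}.\bot$, $\tau_Q$ is a sheaf over $\CView{\sem{\Gamma'}, \{ m_{\T} \}}$ (writing $\Gamma'$ for $\Gamma$ together with the bound variables $\vec{x}$), and $\sem{\Gamma'} = A_1 \times \cdots \times A_m$ is a product of prime arenas.

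I would then induct on the size of $Q$. If $\tau_Q$ is the everywhere-empty functor then $Q = \lam \vec{x}.\bot$, which is the only normal form with this denotation, so $Q'$ is the same. Otherwise the head variable $y$ is the $j$-th variable of $\Gamma'$, where, writing $y : \kappa_1 \to \cdots \to \kappa_n \to \T$, we have (after the routine reassociation of exponentials) $A_j \cong (\sem{\kappa_1} \times \cdots \times \sem{\kappa_n}) \Rightarrow \{ m_2 \}$ with $m_2$ the initial $A_j$-move. Unfolding the interpretation of iterated application, using that $\CDetGame$ is a CCC to combine the nested $\mathbf{ev}$'s, and then applying Lemma~\ref{lem:key-lemma} together with $\iota_\ast \iota^\ast \cong \ident$, I obtain $\tau_Q \cong (m_1 m_2) \rhd \tau$ where $m_1 = m_{\T}$ and $\tau := \iota^\ast \langle \sem{Q_1}, \dots, \sem{Q_n} \rangle \in \Sheaf(\CView{\sem{\Gamma'}, \sem{\kappa_1} \times \cdots \times \sem{\kappa_n}})$. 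From $\tau_Q$ one reads off $m_2$ (hence $j$, hence $y$) as the unique second move occurring in a nonempty P-view, and recovers $\tau$ up to isomorphism by inverting $(m_1 m_2) \rhd (-)$ via $\chi_{(m_1 m_2)}$. Because the codomain is a product, a P-view over it lies in a single summand, so $\tau$ splits into $\tau_1, \dots, \tau_n$ with $\tau_i \cong \iota^\ast \sem{Q_i}$. Running the same analysis on $Q'$ and using $\tau_Q \cong \tau_{Q'}$, the head variables agree and $\tau_i \cong \tau'_i$, hence $\sem{Q_i} \cong \sem{Q'_i}$ by Lemma~\ref{lem:comparison} and $Q_i$ is $\alpha$-equal to $Q'_i$ by the induction hypothesis; since the number of leading $\lambda$'s is fixed by $\kappa$, this yields that $Q$ and $Q'$ are $\alpha$-equal, completing the induction and hence completeness.

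The step I expect to be the main obstacle is this read-back. Turning Lemma~\ref{lem:key-lemma} into a genuine characterisation of the denotations of head normal forms, decomposing sheaves over views with a product codomain into exactly the components matching the syntactic arguments $Q_i$, and keeping the currying/uncurrying isomorphisms $\Lambda$ and $\theta$ and the $\eta$-long form synchronised --- in particular verifying that a P-view over a product arena selects a single component, which is the point where the absence of well-openness must be handled with care --- is where the real work lies. The substitution lemma, the axiom checks for soundness, and uniqueness of normal forms I expect to be routine by comparison.
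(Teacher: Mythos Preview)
Your proposal is correct and matches the paper's approach: the paper proves this theorem as the deterministic special case of the non-deterministic soundness theorem (Theorem~\ref{thm:nondet-sound}), whose proof proceeds exactly as you describe---CCC structure for the equational direction, and for the converse an induction on normal forms using the $(m_1 m_2) \rhd \langle \dots \rangle$ decomposition (packaged there as Lemma~\ref{lem:nondet-interpret-head-variable}, itself a consequence of Lemma~\ref{lem:key-lemma}) to read off the head variable and recurse on the arguments. Your anticipated ``main obstacle'' is precisely what Lemma~\ref{lem:nondet-interpret-head-variable} together with the product decomposition of $\CView{A, B_1 \times \dots \times B_n}$ handles.
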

\begin{proof}
This is a special case of Theorem~\ref{thm:nondet-sound} below.
\end{proof}


\begin{theorem}[Full completeness]
Let \( \Gamma \) be a type environment, \( \kappa \) be a simple type and \( \sigma \in \Sheaf(\CPlay{\sem{\Gamma}, \sem{\kappa}}) \).  If \( \sigma \) is finite and deterministic, there exists a term \( \Gamma \vdash M : \kappa \) such that \( \sigma \cong \sem{M} \).
\end{theorem}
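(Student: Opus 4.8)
The plan is to adapt the standard Hyland--Ong definability argument to the sheaf setting, with the decomposition lemma (Lemma~\ref{lem:key-lemma}) as the engine.

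First I would reduce to the ground type. Write \(\kappa = \kappa_1 \to \cdots \to \kappa_m \to \T\) and \(\Gamma' := \Gamma, x_1{:}\kappa_1, \dots, x_m{:}\kappa_m\). The iterated currying isomorphism \(\Sheaf(\CPlay{A\times B,C}) \cong \Sheaf(\CPlay{A,B\Rightarrow C})\) turns \(\sigma\) into a sheaf \(\sigma' \in \Sheaf(\CPlay{\sem{\Gamma'},\sem\T})\); since this isomorphism acts by relabelling move occurrences (via \(\theta\)), \(\sigma'\) is again finite and deterministic and has the same number of non-empty P-views as \(\sigma\). If we produce \(\Gamma' \vdash M' : \T\) with \(\sem{M'} \cong \sigma'\), then \(M := \lam x_1 \dots x_m.\,M'\) satisfies \(\sem{M} = \Lambda^m \sem{M'} \cong \Lambda^m \sigma' = \sigma\). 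So it suffices to realise sheaves into \(\{m_\T\}\), which I would do by induction on \(N_\sigma := |\{p \in \CView{\sem{\Gamma'},\sem\T} \mid \sigma(p) \neq \emptyset\}|\), finite by hypothesis.

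For the inductive step, let \(m_1\) be the unique initial move \(m_\T\). Determinacy of \(\sigma\) makes \(\coprod_{t\in\ImmExt{m_1}}\sigma(t)\) empty or a singleton. If it is empty, then every non-empty P-view \(p\) begins with a pair \(m_1 m_2\) (its first move is the only \(\star\)-enabled move \(m_1\), and its second move, if any, is forced), so the prefix morphism \(m_1 m_2 \to p\) would force \(\sigma(m_1m_2)\neq\emptyset\), a contradiction; hence \(\sigma\) restricts to the constant-\(\emptyset\) functor on non-empty P-views, and by the equivalence of Lemma~\ref{lem:comparison} we get \(\sigma \cong \sem{\Gamma' \vdash \bot : \T}\), so \(M' := \bot\) works. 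Otherwise there is a unique P-move \(m_2\) with \(\sigma(m_1 m_2)\neq\emptyset\); by the enabling relation of arena pairs \(m_2\) is the initial move of the factor \(A_i\) of \(\sem{\Gamma'} = A_1\times\cdots\times A_n\) corresponding to a variable \(y\) of \(\Gamma'\), of type \(\kappa_y = \rho_1\to\cdots\to\rho_k\to\T\), so \(A_i = B \Rightarrow \{m_2\}\) with \(B \cong \sem{\rho_1}\times\cdots\times\sem{\rho_k}\). Taking \(\tau\in\Sheaf(\CView{\sem{\Gamma'},B})\) to be the residual of \(\sigma\) after \(m_1 m_2\), namely \(\tau(q) := \sigma(\chi_{(m_1m_2)}^{-1}(q))\), one checks directly (using that all non-empty P-views of \(\sigma\) extend \(m_1 m_2\)) that \(\sigma \cong \iota_*\bigl((m_1 m_2) \rhd \tau\bigr)\), whence by Lemma~\ref{lem:key-lemma} \(\sigma \cong \langle \pi_i, \iota_* \tau \rangle ; \mathbf{ev}\).

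It remains to realise \(\iota_*\tau\). Writing \(B_j := \sem{\rho_j}\), cartesian closedness of \(\CGame\) decomposes \(\iota_*\tau : \sem{\Gamma'} \to B_1 \times \cdots \times B_k\) as \(\langle \rho'_1, \dots, \rho'_k \rangle\) with \(\rho'_j := (\iota_*\tau);\pi_j\). Using the standard fact that each non-empty P-view over a product arena is localised to a single factor, the non-empty P-views of \(\tau\) are partitioned among the \(\rho'_j\), so \(\sum_j N_{\rho'_j} = N_\tau = N_\sigma - 1\); thus each \(N_{\rho'_j} < N_\sigma\), and each \(\rho'_j\) is finite and deterministic (composition preserves determinacy). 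By the induction hypothesis there are terms \(\Gamma' \vdash N_j : \rho_j\) with \(\sem{N_j} \cong \rho'_j\). Take \(M' := y\, N_1 \cdots N_k\); unfolding \(\sem{-}\) on applications and using the CCC coherence identity relating an iterated binary application to an uncurried one, \(\sem{M'} \cong \langle \pi_i, \langle \sem{N_1}, \dots, \sem{N_k}\rangle\rangle ; \mathbf{ev} = \langle \pi_i, \iota_*\tau\rangle ; \mathbf{ev} \cong \sigma\), closing the induction and the proof.

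The main obstacle I anticipate is the bookkeeping around the residual \(\tau\): establishing \(\sigma \cong \iota_*((m_1 m_2) \rhd \tau)\) so that Lemma~\ref{lem:key-lemma} applies verbatim; verifying that passing to \(\tau\) and then to the components \(\rho'_j\) strictly decreases the P-view count (which rests on product-localisation of P-views); and checking that finiteness and determinism survive currying, the Key Lemma, and composition with projections. The remaining identities (behaviour of \(\Lambda\), \(\mathbf{ev}\), and iterated application) are routine consequences of the CCC structure.
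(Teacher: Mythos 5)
Your proposal is correct and takes essentially the same approach as the paper: it is precisely the direct construction via Lemma~\ref{lem:key-lemma} that the paper's proof points to as an alternative to citing Prop.~7.4 of Hyland--Ong, and it specialises the paper's detailed definability argument for Theorem~\ref{thm:nondet-full-complete} (currying to ground type, isolating the unique initial response, forming the residual via \( \chi_{(m_1m_2)}^{-1} \), and recursing on the components) to the deterministic case.
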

\begin{proof}
The set \( \makeset{p \in \CView{\sem{\Gamma}, \sem{\kappa}} \mid \sigma(p) \neq \emptyset} \), which is finite and prefix-closed, gives a finite \emph{view function} in the sense of \cite{HylandO00}.  A term \( M \) that denotes \( \sigma \) can then be constructed by induction on the size of the view function, following the proof of Prop.~7.4 in \emph{op.~cit.}.  \tkchanged{One can directly construct a term \( M \) using Lemma~\ref{lem:key-lemma}.}
\end{proof}

\section{Sheaves model for nondeterministic \( \lambda_{\to} \)}
This section studies an extension of \( \lambda_\to \) having the non-deterministic branch and interprets the calculus using \( \CGame \).  We shall prove the soundness of interpretation and the full completeness.

\subsection{The target language}
Consider the simply-typed lambda calculus with \( \bot \) extended to have the non-deterministic branch: \( M_1 + M_2 \).  The additional axioms are:
\begin{align*}
  (M_1 + M_2)\,N &= (M_1\,N) + (M_2\,N) \\
  \lam x. (M_1 + M_2) &= (\lam x. M_1) + (\lam x. M_2) \\
  M + (\lambda x_1 \dots x_n. \bot) &= M
\end{align*}
and the associativity and commutativity of \( + \).  These equations are sound with respect to the observational equivalence in the call-by-name evaluation strategy, where the observable is may-convergence.  (They are not sound for must-convergence because of the right equation.)

We define \emph{normal forms} where \( n, k \geq 0 \):
\[
  R := Q_1 + \dots + Q_n
\qquad
  Q := \lambda x_1 \ldots x_n . y \, R_1 \ldots R_k,
\]
where \( y\,R_1 \dots R_k \) is fully applied.  Every term has a unique normal form (modulo the commutation of non-deterministic branches), or is equivalent to \( \lam x_1 \dots x_n. \bot \).  Note that \( M + M \neq M \) in general.

\subsection{Interpretation and soundness}
The term \( \Gamma \vdash M + N : \kappa \) is interpreted as the coproduct \( \sem{M} + \sem{N} \) in \( \Sheaf(\CPlay{\sem{\Gamma}, \sem{\kappa}}) \).  A simple way to describe the coproduct is to use sheaves over views: since the sheaves over views are just presheaves, the coproduct can be computed pointwise.  So, given \( \tau_1, \tau_2 \in \Sheaf(\CView{A, B}) \), we have \( (\tau_1 + \tau_2)(p) = \tau_1(p) + \tau_2(p) \).  For sheaves \( \sigma_1, \sigma_2 \in \Sheaf(\CPlay{A, B}) \) over plays, we define \( \sigma_1 + \sigma_2 := \iota_* ((\iota^* \sigma_1) + (\iota^* \sigma_2)) \) using the Comparison Lemma (Lemma~\ref{lem:comparison}).

\tkchanged{Coproducts on the function position commutes with application.}
\begin{lemma}\label{lem:nondet-func-coprod}
\( (\langle \sigma_0, \sigma_1 + \sigma_2 \rangle; \mathbf{ev}) \cong (\langle \sigma_0, \sigma_1 \rangle; \mathbf{ev}) + (\langle \sigma_0, \sigma_2 \rangle; \mathbf{ev}) \).
\end{lemma}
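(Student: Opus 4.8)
The plan is to reduce, via the Comparison Lemma (Lemma~\ref{lem:comparison}), to a statement about presheaves over views, where coproducts are computed pointwise and the composite is easy to unfold. Since \( \iota^\ast : \Sheaf(\CPlay{A,C}) \to \Sheaf(\CView{A,C}) \) is one half of an equivalence of categories, it preserves coproducts; and because the right-hand side is by definition \( \iota_\ast \) of a pointwise coproduct of restrictions, applying \( \iota^\ast \) to it simply returns that pointwise coproduct. Hence it suffices to produce, naturally in a P-view \( p \), a bijection
\[
  (\langle \sigma_0, \sigma_1 + \sigma_2 \rangle; \mathbf{ev})(p)
  \;\cong\;
  (\langle \sigma_0, \sigma_1 \rangle; \mathbf{ev})(p) \;+\; (\langle \sigma_0, \sigma_2 \rangle; \mathbf{ev})(p),
\]
and then invoke Lemma~\ref{lem:comparison} to promote it to an isomorphism of sheaves over plays.

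The first ingredient is an auxiliary fact: if \( q \) is a play over a pair of arenas with exactly one \( \star \)-justified move occurrence (call such a play \emph{single-threaded} — every non-empty P-view is of this kind), then \( (\tau_1 + \tau_2)(q) \cong \tau_1(q) + \tau_2(q) \), where \( \tau_1 + \tau_2 := \iota_\ast(\iota^\ast \tau_1 + \iota^\ast \tau_2) \). To see this, unfold the description of \( \iota_\ast \): an element of \( (\tau_1 + \tau_2)(q) \) is a sequence \( e_2 e_4 \dots e_{|q|} \) with \( e_k \in \tau_1(p_k) + \tau_2(p_k) \), subject to the coherence condition \( e_l = (\iota^\ast\tau_1 + \iota^\ast\tau_2)(p_l \le p_k)(e_k) \) whenever \( m_l^{\PP} \curvearrowleft m_{k-1}^{\OO} \). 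As \( q \) is single-threaded, its O-P blocks form a single tree rooted at the unique \( \star \)-justified block, and each coherence constraint is governed by a coproduct of functions, which preserves summands; propagating along the tree forces every \( e_k \) of a given annotation into the same summand. Thus \( (\tau_1 + \tau_2)(q) \) splits as the disjoint union of the annotations valued entirely in \( \tau_1 \) and those valued entirely in \( \tau_2 \), i.e. as \( \tau_1(q) + \tau_2(q) \); this bijection is plainly natural in \( q \).

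Next I would unfold the composite on a P-view \( p \) using the definition in Section~\ref{sec:composition}:
\[
  (\langle \sigma_0, \sigma_1 + \sigma_2 \rangle; \mathbf{ev})(p)
  =
  \coprod_{u :\, \proj{u}{A,C} = p}
  \langle \sigma_0, \sigma_1 + \sigma_2 \rangle(\proj{u}{A,M}) \times \mathbf{ev}(\proj{u}{M,C}),
\]
where \( M \) is the domain of \( \mathbf{ev} \), the product of the function- and argument-arenas. Only those \( u \) with \( \mathbf{ev}(\proj{u}{M,C}) \neq \emptyset \) contribute, and for these \( \proj{u}{M,C} \) is copycat; hence the occurrences in \( u \) of initial moves on the function component of \( M \) are in bijection with the \( \star \)-justified move occurrences of \( p \), of which there is exactly one (a P-view has a unique \( \star \)-justified move). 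Consequently, after splitting \( \proj{u}{A,M} \) into its two component plays, the one consumed by \( \sigma_1 + \sigma_2 \) — the function component — is single-threaded. Applying the auxiliary fact, then distributivity of products over coproducts in \( \Set \) and the commutation of a binary coproduct with \( \coprod_u \), the expression becomes
\[
  \Big( \coprod_{u :\, \proj{u}{A,C} = p} \langle \sigma_0, \sigma_1 \rangle(\proj{u}{A,M}) \times \mathbf{ev}(\proj{u}{M,C}) \Big)
  \;+\;
  \Big( \coprod_{u :\, \proj{u}{A,C} = p} \langle \sigma_0, \sigma_2 \rangle(\proj{u}{A,M}) \times \mathbf{ev}(\proj{u}{M,C}) \Big),
\]
i.e. \( (\langle \sigma_0, \sigma_1 \rangle; \mathbf{ev})(p) + (\langle \sigma_0, \sigma_2 \rangle; \mathbf{ev})(p) \). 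Naturality in \( p \) is checked by tracking the discrete-fibration reindexing \( u \mapsto f^\ast(u) \) (Lemma~\ref{lem:fibration}) through each bijection just used.

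The main obstacle I anticipate is the third step: making precise that the component of \( \proj{u}{A,M} \) fed to \( \sigma_1 + \sigma_2 \) is genuinely single-threaded. This is exactly where ``coproducts on the function position'' is essential — the argument side of \( \mathbf{ev} \) may be interrogated many times above a single top-level move, so the analogous statement for coproducts in the argument position is in fact \emph{false} (this is the phenomenon behind Example~\ref{eg:harmer}), and the proof must really use the copycat behaviour of \( \mathbf{ev} \) together with the uniqueness of the \( \star \)-justified move in a P-view. The auxiliary single-thread lemma and the reduction to views are routine given the machinery already in place.
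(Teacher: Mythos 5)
Your proposal is correct and takes essentially the same route as the paper's own proof: both reduce to P-views via the Comparison Lemma, isolate the key claim that \( (\sigma_1+\sigma_2)(q) \cong \sigma_1(q)+\sigma_2(q) \) for plays \( q \) with a unique \( \star \)-justified move (your ``single-threaded'' is the paper's ``well-opened''), derive this property of the function component from the copycat behaviour of \( \mathbf{ev} \) together with the uniqueness of the initial move of a P-view, and conclude by distributivity of products over coproducts. The only cosmetic difference is how the auxiliary fact is argued — you propagate the summand along the justification-pointer constraints of the \( \tau \)-annotations, while the paper factors \( f_2 : p_2 \to s \) through each \( f_k : p_k \to s \) — but these are the same observation.
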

\begin{proof}(Sketch)
By Lemma~\ref{lem:comparison}, it suffices to consider their restrictions on P-views.  Let \( \sigma_0 \in \CPlay{A, B} \) and \( \sigma_1, \sigma_2 \in \CPlay{A, B \Rightarrow C} \).  Let us write \( D := B \times (B \Rightarrow C) \) for simplicity.  Then the right-hand-side on P-view \( p \in \CView{A, C} \) is given by
\begin{align*}
  &
  ((\langle \sigma_0, \sigma_1 \rangle; \mathbf{ev}) + (\langle \sigma_0, \sigma_2 \rangle; \mathbf{ev}))(p) \\
  \cong&
  \coprod_{u: \proj{u}{A, C} = p} \sigma_0(\proj{u}{A, B}) \times \sigma_1(\proj{u}{A, B \Rightarrow C}) \times \mathbf{ev}({\proj{u}{D, C}}) \\
  &
  + \coprod_{u: \proj{u}{A, C} = p} \sigma_0(\proj{u}{A, B}) \times \sigma_2(\proj{u}{A, B \Rightarrow C}) \times \mathbf{ev}({\proj{u}{D, C}}) \\
  \cong & \hspace{-8pt}
  \coprod_{u: \proj{u}{A, C} = p} \hspace{-12pt} \sigma_0(\proj{u}{A, B}) \!\times\! (\sigma_1(\proj{u}{A, B \Rightarrow C}) \!+\! \sigma_2(\proj{u}{A, B \Rightarrow C})) \!\times\! \mathbf{ev}(\proj{u}{D, C}).
\end{align*}
A play is \emph{well-opened} if it has exactly one move pointing to \( \star \).  If \( \proj{u}{A, C} \) has a unique initial \( C \)-move and \( \mathbf{ev}(\proj{u}{B \times (B \Rightarrow C), C}) \neq \emptyset \), then \( \proj{u}{B \Rightarrow C} \) has a unique initial \( (B \Rightarrow C) \)-move and hence \( \proj{u}{A, B \Rightarrow C} \) is well-opened.  So we can assume without loss of generality that \( \proj{u}{A, B \Rightarrow C} \) ranges over well-opened plays.  We claim that for a well-opened play \( \proj{u}{A, B \Rightarrow C} \), we have a bijection on sets
\[
  \sigma_1(\proj{u}{A, B \Rightarrow C}) + \sigma_2(\proj{u}{A, B \Rightarrow C}) \cong (\sigma_1 + \sigma_2)(\proj{u}{A, B \Rightarrow C}).
\]
The required natural isomorphism is the consequence of the claim.  Assume \( \proj{u}{A, B \Rightarrow C} = s = m_1 \dots m_n \) and let \( p_k := \PView{m_1 \dots m_k} \) (for \( k \in \{ 2, 4, \dots, n \} \)) and \( \{ f_{k} : p_k \to s \}_{k \in \{ 2, 4, \dots, n \}} \) be a covering family.  Then \( (\iota_* ((\iota^* \sigma_1) + (\iota^* \sigma_2)))(s) \) is the set of sequences of the form \( e_2 \dots e_n \), where \( e_k \in \sigma_1(p_k) + \sigma_2(p_k) \).  Since \( s \) is well-opened, \( f_2 : p_2 \to s \) is factor through \( f_k : p_k \to s \) for every \( k \).  This means that \( e_k \)'s come from the same component as \( e_2 \).  So \( e_k \in \sigma_1(p_k) \) for all \( k \) or \( e_k \in \sigma_2(p_k) \) for all \( k \).  Hence \( (\iota_* ((\iota^* \sigma_1) + (\iota^* \sigma_2)))(\proj{u}{A, B \Rightarrow C}) \) has a bijection to \( (\iota_* \iota^* \sigma_1)(\proj{u}{A, B \Rightarrow C}) + (\iota_* \iota^* \sigma_2)(\proj{u}{A, B \Rightarrow C}) \) as desired.
\end{proof}

Let us write \( \sem{M}_V \) for its the restriction on views, i.e.~\( \iota^* \sem{M} \).  For a term in normal form, its view restriction can be computed by the induction on the structure.  By definition,
\[
  \sem{\Gamma \vdash \lambda x. Q : \kappa \to \kappa'}_V \cong \sem{M}_V \circ \theta,
\]
where \( \theta : \CView{\sem{\Gamma}, \sem{\kappa \to \kappa'}} \to \CView{\sem{\Gamma \times \kappa}, \sem{\kappa'}} \) is the isomorphism, and \( \sem{\Gamma \vdash Q_1 + \dots + Q_n : \kappa}_V \cong \sem{Q_1}_V + \dots + \sem{Q_n}_V \).  The next lemma gives the interpretation of head variable, which is a consequence of Lemma~\ref{lem:key-lemma}.

\begin{lemma}\label{lem:nondet-interpret-head-variable}
Assume a term \( \Gamma \vdash x_i\,R_1\,\dots\,R_n : \T \) where \( x_i : \kappa_i \in \Gamma \).  Let \( m_1 \) be the unique initial move of \( \sem{\T} \) and \( m_2 \) be the unique initial move of \( \sem{\kappa_i} \).  Then
\[
  \sem{\Gamma \vdash x_i\,R_1\,\dots\,R_n : \T}_V \cong (m_1 m_2) \rhd \langle \sem{R_1}_V, \dots, \sem{R_n}_V \rangle.
\]
\end{lemma}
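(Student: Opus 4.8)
The plan is to reduce the statement to the Key Lemma (Lemma~\ref{lem:key-lemma}) by \emph{currying}: rewrite the iterated head application $x_i\,R_1\,\cdots\,R_n$ as a single application of $x_i$ to the tuple $\langle R_1,\dots,R_n\rangle$ at the curried type, and then transport everything along the adjoint equivalence $\iota^*\dashv\iota_*$ between sheaves over plays and sheaves over views.

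Since $x_i\,R_1\,\cdots\,R_n$ is fully applied, write $\kappa_i=\kappa'_1\to\cdots\to\kappa'_n\to\T$ and set $B_i:=\sem{\kappa'_1}\times\cdots\times\sem{\kappa'_n}$, so that the currying isomorphisms $\theta,\Lambda$ of Section~2 identify $\sem{\kappa_i}$ with $B_i\Rightarrow\sem{\T}$, and $m_2$ (the initial $\sem{\kappa_i}$-move) with $m_1$ (the move of $\sem{\T}$), the two occurring in the summands $\sem{\kappa_i}\subseteq\sem{\Gamma}$, resp.\ $\sem{\T}$, of the pair $(\sem{\Gamma},\sem{\T})$. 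First I would unfold the interpretation clauses: $\sem{\Gamma\vdash x_i\,R_1\,\cdots\,R_n:\T}$ is the iterated application $\langle\langle\cdots\langle\pi_i,\sem{R_1}\rangle;\mathbf{ev},\dots\rangle;\mathbf{ev}$, using $\sem{x_i}=\pi_i$. By the standard cartesian-closed identities (iterated application equals a single application at the curried exponential, compatibly with $\theta,\Lambda$), this is naturally isomorphic to $\langle\pi_i,\ \langle\sem{R_1},\dots,\sem{R_n}\rangle\rangle;\mathbf{ev}$, where now $\mathbf{ev}=\Lambda(\ident_{\sem{\kappa_i}})\in\Sheaf(\CPlay{(B_i\Rightarrow\sem{\T})\times B_i,\ \sem{\T}})$ is the evaluation at $B_i$, exactly the map appearing in Lemma~\ref{lem:key-lemma}.

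Next I would replace the tuple of sheaves over plays by $\iota_*$ of the corresponding tuple over views. Using $\iota_*\iota^*\cong\ident$ and that restriction to views commutes with tupling into a product arena (this follows from the Comparison Lemma, since restriction transports the cartesian structure; concretely, $\CView{\sem{\Gamma},B_i}$ is the disjoint union of the $\CView{\sem{\Gamma},\sem{\kappa'_j}}$, so a tuple over views is componentwise), we get $\langle\sem{R_1},\dots,\sem{R_n}\rangle\cong\iota_*\langle\sem{R_1}_V,\dots,\sem{R_n}_V\rangle$. Writing $\tau:=\langle\sem{R_1}_V,\dots,\sem{R_n}_V\rangle\in\Sheaf(\CView{\sem{\Gamma},B_i})$ and applying Lemma~\ref{lem:key-lemma} with this $\tau$, the component $\sem{\kappa_i}=B_i\Rightarrow\sem{\T}$ of the product $\sem{\Gamma}$, and the moves $m_1,m_2$, yields $\langle\pi_i,\iota_*\tau\rangle;\mathbf{ev}\cong\iota_*\big((m_1m_2)\rhd\tau\big)$. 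Combining with the previous paragraph gives $\sem{\Gamma\vdash x_i\,R_1\,\cdots\,R_n:\T}\cong\iota_*\big((m_1m_2)\rhd\tau\big)$; applying $\iota^*$ and using $\iota^*\iota_*\cong\ident$ gives
\[
  \sem{\Gamma\vdash x_i\,R_1\,\cdots\,R_n:\T}_V\ \cong\ (m_1m_2)\rhd\langle\sem{R_1}_V,\dots,\sem{R_n}_V\rangle,
\]
which is the claim. (The degenerate case $n=0$, i.e.\ $\kappa_i=\T$, is the same argument with $B_i$ the terminal arena and $\tau$ the unique sheaf, reading off $\iota^*\pi_i\cong(m_1m_2)\rhd\langle\rangle$.)

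The main obstacle is not conceptual but the currying bookkeeping. One must (a) verify that iterated application really is the single curried application \emph{with respect to the specific exponential isomorphisms} $\theta,\Lambda$ of Section~2, and that the $\mathbf{ev}$ of the interpretation clause coincides, up to these isomorphisms, with $\mathbf{ev}=\Lambda(\ident_{\sem{\kappa_i}})$ of Lemma~\ref{lem:key-lemma}; (b) justify that restriction to views commutes with the product pairing $\langle-,\dots,-\rangle$, via the disjoint-union description of $\CView{\sem{\Gamma},B_i}$ together with the adjoint equivalence $\iota^*\iota_*\cong\iota_*\iota^*\cong\ident$ of Section~2.4; and (c) match the data $m_1$, $m_2$, $B=B_i$ demanded by the Key Lemma to the data in the present statement. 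None of these is deep, but getting the identifications exactly right is where the work lies.
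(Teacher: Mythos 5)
Your proposal is correct and follows exactly the route the paper intends: the paper offers no explicit proof beyond the remark that the lemma ``is a consequence of Lemma~\ref{lem:key-lemma}'', and your argument is precisely the missing bookkeeping — unfold the iterated application, curry it into a single application at $B_i=\sem{\kappa'_1}\times\cdots\times\sem{\kappa'_n}$, identify the pairing over plays with $\iota_*$ of the pairing over views, and invoke the Key Lemma before transporting back along $\iota^*\iota_*\cong\ident$. The three verification points you flag (compatibility with the specific $\theta,\Lambda$, the componentwise description of tupling on views via the disjoint-union decomposition of $\CView{\sem{\Gamma},B_i}$, and the matching of $m_1,m_2,B$) are indeed the only places where work is needed, and your treatment of each, including the $n=0$ case, is sound.
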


Let \( B \) be a prime arena and \( m_1 \) be the unique initial move.  A sheaf \( \sigma \in \Sheaf(\CPlay{A, B}) \) is \emph{deterministic on initial response} if \( \coprod_{s \in \ImmExt{m_1}} \sigma(s) \) is singleton.
\begin{lemma}\label{lem:nondet-head-initial-response}
\( \sem{Q} \) is deterministic on initial response.
\end{lemma}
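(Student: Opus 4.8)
The plan is to reduce the statement to P-views and then read it off from Lemma~\ref{lem:nondet-interpret-head-variable}. First I would record that $\sem{\kappa}$ is a prime arena for every simple type $\kappa$ (easy induction: $\sem{\T} = \{ m_{\T} \}$ is prime, and $D \Rightarrow \{ m \}$ is prime for any arena $D$), so that, writing $B := \sem{\kappa}$, the unique initial move $m_1$ of $B$, the odd-length play $m_1$, and hence $\ImmExt{m_1}$ are all well-defined. Next I would observe that every $s \in \ImmExt{m_1}$ is of the form $m_1\, m$ with $m$ a P-move justified by $m_1$, and that such a length-$2$ alternating sequence is its own P-view: $\star \curvearrowleft m_1$ gives $\PView{m_1} = m_1$, hence $\PView{m_1 m} = m_1 m$. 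So $m_1 m$ is an object of $\CView{\sem{\Gamma}, B}$ and, since $\iota$ is a full subcategory inclusion, $\sem{Q}(m_1 m) = (\iota^* \sem{Q})(m_1 m) = \sem{Q}_V(m_1 m)$. Thus $\coprod_{s \in \ImmExt{m_1}} \sem{Q}(s) = \coprod_{m_1 m \in \CPlay{\sem{\Gamma}, B}} \sem{Q}_V(m_1 m)$, and it remains to show that exactly one P-move $m$ makes $\sem{Q}_V(m_1 m)$ nonempty and that for that $m$ the set is a singleton.

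Then I would unfold $\sem{Q}_V$. Writing $Q = \lambda x_1 \dots x_n.\, y\, R_1 \dots R_k$, the body $y\, R_1 \dots R_k$ is fully applied in the extended environment $\Gamma' = \Gamma, x_1 : \kappa_1, \dots, x_n : \kappa_n$, and its head variable $y : \kappa_y$ lies in $\Gamma'$. Iterating the currying equation $\sem{\lambda x. N}_V \cong \sem{N}_V \circ \theta$ gives an isomorphism $\sem{Q}_V \cong \sem{y\, R_1 \dots R_k}_V \circ \Theta$, where $\Theta$ is a composite of the view-category isomorphisms $\theta$ and hence itself an isomorphism that preserves the length of a play and its first move. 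By Lemma~\ref{lem:nondet-interpret-head-variable}, $\sem{y\, R_1 \dots R_k}_V \cong (m_{\T}\, m_y) \rhd \langle \sem{R_1}_V, \dots, \sem{R_k}_V \rangle$, where $m_{\T}$ is the initial move of $\sem{\T}$ and $m_y$ that of $\sem{\kappa_y}$. By the definition of $\rhd$, on a length-$2$ view $p$ one has $((m_{\T} m_y) \rhd \tau)(p) = \{ \ast \}$ when $p = m_{\T} m_y$ and $\emptyset$ otherwise. Transporting this back along $\Theta^{-1}$ — which, since $\Theta$ fixes first moves and the first move on the $(\sem{\Gamma}, B)$ side must be $m_1$ ($B$ being prime), sends $m_{\T} m_y$ to some play $m_1 m^*$ — yields that $m^*$ is the unique P-move with $\sem{Q}_V(m_1 m^*) \neq \emptyset$ and that this value is a singleton. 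Hence $\coprod_{s \in \ImmExt{m_1}} \sem{Q}(s)$ is a singleton, i.e.\ $\sem{Q}$ is deterministic on initial response.

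I expect the main obstacle to be the bookkeeping around currying: verifying that the iterated isomorphism $\Theta$ really does preserve the length and the first move of plays, so that length-$2$ views downstairs correspond bijectively to length-$2$ views upstairs, and confirming that Lemma~\ref{lem:nondet-interpret-head-variable} applies verbatim to the fully-applied body $y\, R_1 \dots R_k$ in the extended environment $\Gamma'$. Once that is in place, the rest is a direct computation with the definition of $\rhd$.
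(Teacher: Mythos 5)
Your proof is correct and follows essentially the same route as the paper's: the paper argues by induction on $Q$, handling the head-variable case via Lemma~\ref{lem:nondet-interpret-head-variable} (and the definition of $\rhd$) and the abstraction case by observing that $\Lambda$ preserves determinism on initial response, which is exactly the content of your iterated currying isomorphism $\Theta$. Your version merely peels all the $\lambda$'s at once and spells out why the currying step is harmless (length and first-move preservation), which the paper leaves implicit.
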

\iflong 
\begin{proof}
By induction on the structure of \( Q \).  If \( Q = x_i\,R_1\,\dots\,R_n \), this follows from Lemma~\ref{lem:nondet-interpret-head-variable}.  If \( Q = \lam x. Q' \), then \( \sem{Q'} \) is deterministic on initial response and \( \Lambda : \Sheaf(\CPlay{A \times B, C}) \to \Sheaf(\CPlay{A, B \Rightarrow C}) \) preserves this property.  Hence \( \sem{Q} = \Lambda(\sem{Q'}) \) is deterministic on the initial response.
\end{proof}
\fi 

\begin{theorem}[Soundness]\label{thm:nondet-sound}
\( M = N \) iff \( \sem{M} \cong \sem{N} \).
\end{theorem}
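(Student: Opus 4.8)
The plan is to prove the two implications separately. For ``if $M = N$ then $\sem{M} \cong \sem{N}$'' we use compositionality of $\sem{-}$ and verify that each generating equation and congruence rule is validated. The $\beta\eta$-equations and the congruence rules hold because $\CGame$ is cartesian closed and $\sem{-}$ is the standard CCC interpretation of the simply-typed $\lambda$-calculus. Associativity and commutativity of $+$ are those of the categorical coproduct in $\Sheaf(\CPlay{\sem{\Gamma}, \sem{\kappa}})$. For $(M_1 + M_2)\,N = M_1\,N + M_2\,N$ we unfold $\sem{(M_1+M_2)\,N} = \langle \sem{M_1} + \sem{M_2},\ \sem{N}\rangle;\mathbf{ev}$ and apply Lemma~\ref{lem:nondet-func-coprod} directly, the coproduct sitting on the function position. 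For $\lam x.(M_1 + M_2) = \lam x.M_1 + \lam x.M_2$ we use that $\Lambda$ is (half of) an isomorphism of the relevant sheaf categories induced by $\theta$, and hence preserves coproducts. For $M + \lam x_1\dots x_n.\bot = M$ it suffices that $\sem{\lam x_1\dots x_n.\bot}$ is an initial object: $\sem{\bot} = \iota_*\tau_{\emptyset}$ assigns $\{\ast\}$ to the empty play and $\emptyset$ to every nonempty play, which is the initial sheaf (every sheaf takes value $\{\ast\}$ on the empty play, the only object covered by the empty family), $\Lambda$ preserves initial objects, and a coproduct with an initial object is the identity up to isomorphism.

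For the converse we reduce to normal forms. By the forward direction and the fact that every term is provably equal to a normal form, unique up to reordering of the summands of any $+$ (or equal to the zero $\lam x_1\dots x_n.\bot$), we may assume $M = Q_1 + \dots + Q_n$ and $N = Q'_1 + \dots + Q'_m$ are in normal form, and must show they agree up to reordering. By Lemma~\ref{lem:comparison} we work with the view-restrictions $\iota^*\sem{-}$, for which $\iota^*\sem{Q_1 + \dots + Q_n} \cong \iota^*\sem{Q_1} + \dots + \iota^*\sem{Q_n}$, and similarly for $N$. The key claim is that each $\iota^*\sem{Q}$ is \emph{connected} in the presheaf category $\Sheaf(\CView{A,B})$: it is non-initial because it is nonempty on the initial response, and if $\iota^*\sem{Q} \cong \sigma_1 + \sigma_2$ then, since $\sem{Q}$ is deterministic on the initial response (Lemma~\ref{lem:nondet-head-initial-response}), exactly one of $\sigma_1,\sigma_2$ is nonempty on the length-$2$ P-views; but a functor $\CView{A,B}^{\op}\to\Set$ empty on every length-$2$ P-view is empty on every nonempty P-view (each nonempty P-view receives the prefix morphism from its length-$2$ prefix, itself a P-view), hence is initial. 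Essential uniqueness of the decomposition of an object of a presheaf category into connected objects (a coproduct injection of a connected object into $\coprod_j B_j$ factors through a unique $B_j$) then turns $\iota^*\sem{M} \cong \iota^*\sem{N}$ into $n = m$ together with a bijection $\pi$ such that $\iota^*\sem{Q_i} \cong \iota^*\sem{Q'_{\pi(i)}}$ for all $i$, with multiplicities respected (this matters as $M + M \neq M$); if one side is the zero, both interpretations are initial and the zero is unique at each type, so that case is immediate.

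It then remains to show, by induction on normal forms, that $\iota^*\sem{Q} \cong \iota^*\sem{Q'}$ implies $Q = Q'$ up to reordering. Writing $Q = \lam\vec{x}.\,y\,R_1\dots R_k$ and $Q' = \lam\vec{x}.\,y'\,R'_1\dots R'_{k'}$ (the number of leading abstractions is fixed by the common type), the abstractions only transport along a single isomorphism $\theta$ of P-view categories, the same for both, via $\iota^*\sem{\lam x.P} \cong \iota^*\sem{P}\circ\theta$; so by Lemma~\ref{lem:nondet-interpret-head-variable} we reduce to $(m_1 m_2)\rhd\langle \iota^*\sem{R_1},\dots,\iota^*\sem{R_k}\rangle \cong (m'_1 m'_2)\rhd\langle \iota^*\sem{R'_1},\dots,\iota^*\sem{R'_{k'}}\rangle$. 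The support of $(m_1 m_2)\rhd(-)$ among length-$2$ P-views is exactly $\{m_1 m_2\}$, and an isomorphism preserves supports over the fixed poset of P-views, so $m_2 = m'_2$, which pins down the head variable and gives $y = y'$. Restricting to the P-views extending $m_1 m_2$ and composing with the product projections yields $\iota^*\sem{R_j} \cong \iota^*\sem{R'_j}$ for each $j$ (hence $k = k'$), and the induction hypothesis gives $R_j = R'_j$ up to reordering; thus $Q = Q'$ up to reordering, and therefore $M = N$.

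The main obstacle is the converse direction: establishing connectedness of each $\iota^*\sem{Q}$ — the point at which ``deterministic on initial response'' is essential — so that the top-level coproduct decomposition is forced to be the syntactic one, and then the read-back bookkeeping, i.e.\ the familiar game-semantic definability argument, now carrying internal states, which should remain tractable thanks to Lemma~\ref{lem:nondet-interpret-head-variable} and the explicit formula for $\iota_*$.
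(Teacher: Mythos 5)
Your proof is correct and follows essentially the same route as the paper's: the forward direction validates each equation using the CCC structure, the coproduct, Lemma~\ref{lem:nondet-func-coprod}, and the initiality of $\sem{\bot}$, while the converse proceeds by induction on normal forms, using determinism on initial response to match the top-level summands and Lemma~\ref{lem:nondet-interpret-head-variable} to recover the head variable and the arguments. The only difference is presentational: where the paper matches summands by counting elements of $\coprod_{s\in\ImmExt{m_1}}\sem{M}(s)$, you package the same fact as connectedness of each $\iota^*\sem{Q}$ plus uniqueness of coproduct decompositions into connected presheaves, which makes explicit a step the paper leaves implicit.
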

\begin{proof}
To prove the left-to-right direction, it suffices to show the all equations are valid.  The equation \( \sem{(M_1 + M_2)\,N} \cong \sem{(M_1\,N) + (M_2\,N)} \) follows from Lemma~\ref{lem:nondet-func-coprod}.  Because \( + \) is the coproduct, it is commutative and associative.  Because \( \iota^* \sem{\bot} \) is the constant functor to \( \emptyset \), we have \( \sigma + \sem{\bot} \cong \sigma \) for every \( \sigma \).  \tk{How about \( \beta \)?}

To prove the converse, assume that \( \sem{M} \cong \sem{N} \) for normal terms \( \Gamma \vdash M : \kappa \) and \( \Gamma \vdash N : \kappa \).  Let \( m_1 \) be the unique initial move of \( \sem{\kappa} \).  Then, since \( \sem{M} \cong \sem{N} \), we have a bijection between \( \coprod_{s \in \ImmExt{m_1}} \sem{M}(s) \) and \( \coprod_{s \in \ImmExt{m_1}} \sem{N}(s) \).  Let \( n \) be the number of elements of those sets.  Then \( M \equiv Q_1 + \dots + Q_n \) since \( \sem{Q_i} \) is deterministic on initial response for every \( i \in [n] \) by Lemma~\ref{lem:nondet-head-initial-response}.  Similarly \( N \equiv Q'_1 + \dots + Q'_n \).  Since \( \sem{M} \cong \sem{N} \), there is a bijection \( \phi : [n] \to [n] \) such that \( \sem{Q_i} \cong \sem{Q_{\phi(i)}} \).  By the induction hypothesis, \( Q_i = Q_{\phi(i)} \).  So \( M = N \).

Suppose that
\begin{align*}
  M &\equiv \lambda x_1 \dots x_k. y\,R_1\,\dots\,R_n
  \\
  N &\equiv \lambda x_1 \dots x_k. y'\,R'_1\,\dots\,R'_{n'}
\end{align*}
We can assume without loss of generality that \( k = 0 \).  Then by Lemma~\ref{lem:nondet-interpret-head-variable}, we have
\[
  \sem{M}_{V} \cong (m_1 m_2) \rhd \langle \sem{R_1}_V, \dots, \sem{R_n}_V \rangle
\]
and
\[
  \sem{N}_{V} \cong (m_1 m'_2) \rhd \langle \sem{R'_1}_V, \dots, \sem{R'_n}_V \rangle
\]
where \( m_2 \) is the initial move for \( y : \kappa' \in \Gamma \) and \( m_2' \) is the initial move of \( y' : \kappa'' \in \Gamma \).  Since \( \sem{M}_V \cong \sem{N}_V \), we have \( m_2 = m'_2 \), which implies \( y = y' \) and \( n = n' \).  Furthermore \( \sem{M}_V \cong \sem{N}_V \) implies \( \sem{R_i}_V \cong \sem{R'_i}_V \) for all \( i \in [n] \) and thus \( \sem{R_i} \cong \sem{R'_i} \).  By the induction hypothesis, \( R_i = R'_i \) and hence \( M = N \).
\end{proof}

\subsection{Full completeness}

A sheaf \( \sigma \in \Sheaf(\CPlay{A, B}) \) is \emph{finite} if \( \coprod_{p \in \CView{A, B}} \sigma(\iota(p)) \) is finite.

\begin{lemma}\label{lem:nondet-unique-initial-decompose}
Every finite sheaf \( \sigma \in \Sheaf(\CPlay{A, \{ m_1 \}}) \) can be decomposed as \( \sigma \cong \sigma_1 + \dots + \sigma_n \), where \( \sigma_i \) is deterministic on initial response for all \( i \).
\end{lemma}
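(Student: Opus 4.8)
The plan is to pass to sheaves over P-views via the Comparison Lemma (Lemma~\ref{lem:comparison}) and to exploit the very simple shape of \( \CView{A, \{ m_1 \}} \). The first step is structural. Since \( \{ m_1 \} \) is prime and its unique move \( m_1 \) is initial, every occurrence of \( m_1 \) in a justified sequence of \( (A, \{ m_1 \}) \) points to \( \star \); and in any P-view an O-move pointing to \( \star \) can only be the first move, since the clause of the P-view operation for such a move discards everything before it. Hence every non-empty P-view \( p \in \CView{A, \{ m_1 \}} \) begins with an O-P block \( m_1\, m_2 \), where \( m_2 \) is an initial \( A \)-move, \( m_1 \) occurs in \( p \) only at position \( 1 \), the length-\(2\) play \( m_1 m_2 \) is itself a P-view, and the prefix inclusion is a (necessarily unique) morphism \( m_1 m_2 \to p \). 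Moreover any morphism \( p \to p' \) of P-views fixes position \( 1 \), hence position \( 2 \), so it can exist only when \( p \) and \( p' \) share the same initial block; thus, as a category, \( \CView{A, \{ m_1 \}} \) is the disjoint union, over initial \( A \)-moves \( m_2 \), of the full subcategory \( \mathcal{C}_{m_2} \) of P-views beginning with \( m_1 m_2 \), and \( m_1 m_2 \) is the least element of \( \mathcal{C}_{m_2} \).

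Write \( \tau := \iota^* \sigma \). Using \( \iota^* \iota_* \cong \ident \), we have \( (\iota_* \tau)(p) \cong \tau(p) \) for every P-view \( p \); in particular, since \( \ImmExt{m_1} \) consists exactly of the minimal P-views \( m_1 m_2 \), the set \( E := \coprod_{s \in \ImmExt{m_1}} \sigma(s) = \coprod_{m_2} \tau(m_1 m_2) \) is finite, say with \( n \) elements, each of the form \( a \in \tau(m_1 m_2) \) for a uniquely determined \( m_2 \). (If \( n = 0 \) the statement holds with the empty coproduct.) For such a pair \( (m_2, a) \) define a subpresheaf \( \tau_{(m_2, a)} \hookrightarrow \tau \) in \( \Sheaf(\CView{A, \{ m_1 \}}) \) by \( \tau_{(m_2, a)}(p) := \{\, e \in \tau(p) \mid \tau(p \ge m_1 m_2)(e) = a \,\} \) for \( p \in \mathcal{C}_{m_2} \), and \( \tau_{(m_2, a)}(p) := \emptyset \) for \( p \) in any other component. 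For \( p \in \mathcal{C}_{m_2} \) the map \( \tau(p \ge m_1 m_2) : \tau(p) \to \tau(m_1 m_2) \) partitions \( \tau(p) \) into the fibres \( \tau_{(m_2, a)}(p) \); since coproducts of presheaves are computed pointwise and this partition is natural in \( p \), we obtain \( \tau \cong \coprod_{(m_2, a)} \tau_{(m_2, a)} \).

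Finally put \( \sigma_{(m_2, a)} := \iota_* \tau_{(m_2, a)} \). Unfolding the definition \( \sigma' + \sigma'' := \iota_*((\iota^* \sigma') + (\iota^* \sigma'')) \) of the coproduct on sheaves over plays, using \( \iota^* \iota_* \cong \ident \) and \( \iota_* \iota^* \cong \ident \), and that an equivalence preserves coproducts, one gets \( \sigma_{(m_2,a)_1} + \dots + \sigma_{(m_2,a)_n} \cong \iota_*\bigl(\coprod_{(m_2,a)} \tau_{(m_2,a)}\bigr) \cong \iota_* \tau \cong \iota_* \iota^* \sigma \cong \sigma \). Each \( \sigma_{(m_2,a)} \) is deterministic on initial response: again by \( (\iota_* \tau_{(m_2,a)})(p) \cong \tau_{(m_2,a)}(p) \) for P-views \( p \), \( \coprod_{s \in \ImmExt{m_1}} \sigma_{(m_2,a)}(s) \cong \coprod_{m_2'} \tau_{(m_2,a)}(m_1 m_2') = \tau_{(m_2,a)}(m_1 m_2) = \{ a \} \). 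The only delicate point is Step 1 — that \( \CView{A, \{ m_1 \}} \) splits as a disjoint union indexed by the initial response, each summand with a least element — which is exactly where the behaviour of the P-view operation at \( \star \)-justified O-moves is used; everything afterwards is bookkeeping through the Comparison Lemma and the pointwise formula for coproducts of presheaves.
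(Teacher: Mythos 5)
Your proposal is correct and follows essentially the same route as the paper: restrict to views via the Comparison Lemma, index the summands by the finitely many pairs (initial block $m_1 m_2$, element $a$ of $\tau(m_1 m_2)$), define each $\tau_{(m_2,a)}$ as the fibre of the restriction map to the minimal prefix, observe that these fibres partition $\tau(p)$ naturally in $p$, and push forward with $\iota_*$. The only difference is that you spell out the splitting of $\CView{A,\{m_1\}}$ into components with least elements and explicitly verify determinacy on initial response, both of which the paper leaves implicit.
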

\begin{proof}
Let \( \tau = \iota^* \sigma \) be the restriction of \( \sigma \) to views. 
Consider the finite set \( \coprod_{p \in \ImmExt{m_1}} \tau(p) \), which we write as \( \{ (p_1, a_1), \dots, (p_n, a_n) \} \) (\( a_i \in \sigma(p_i) \) for each \( i \in [n] \)).  We define \( \tau_i \in \Sheaf(\CView{A, B}^{\op}) \).  On objects,
\[
  \tau_i(p) := \{ a \in \tau(p) \mid p_i \le p \textrm{ and } a_i = a \cdot f \textrm{ where \( f : p_i \to p \) } \}.
\]
Then \( \tau_i(p) \subseteq \sigma(p) \) for every \( i \) and \( p \).  For \( f : p \to p' \), we define \( \tau_i(f) \) as the restriction of \( \tau(f) : \tau(t') \to \tau(t) \) to \( \tau_i(t') \subseteq \tau(t') \).  It is easy to see that \( \tau_i \) is a functor.  Then we have
\[
  \tau \cong \tau_1 + \dots + \tau_n.
\]
To see this, consider \( a \in \tau(p) \) for some \( p \).  Let \( p' \) be the first two moves of \( p \) and let \( a' = a \cdot f \), where \( f : p' \to p \) (unique).  Then \( (p', a') \) is \( (p_i, a_i) \) for some \( i \le n \).  Hence \( a \in \tau_i(p) \).  Furthermore such \( i \) is unique by the construction.  So we have the claimed natural isomorphism.  Letting \( \sigma_i := \iota_* \tau_i \), we obtain the statement.
\end{proof}

\begin{theorem}[Full completeness]\label{thm:nondet-full-complete}
Let \( \Gamma \) be a type environment, \( \kappa \) be a type and \( \sigma \in \Sheaf(\CPlay{\sem{\Gamma}, \sem{\kappa}}) \). If \( \sigma \) is finite, there exists a term \( \Gamma \vdash M : \kappa \) such that \( \sigma \cong \sem{M} \).
\end{theorem}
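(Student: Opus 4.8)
\quad The plan is to transport the problem to sheaves on P-views, where the machinery of Section~\ref{sec:key-lemma} applies, and then to induct on a finite cardinal attached to $\sigma$. By the Comparison Lemma (Lemma~\ref{lem:comparison}) we have $\sigma\cong\iota_*\iota^*\sigma$ and $\sem{M}\cong\iota_*\sem{M}_V$ for every term $M$, so it suffices to produce a term $M$ with $\sem{M}_V\cong\iota^*\sigma$ in $\Sheaf(\CView{\sem\Gamma,\sem\kappa})$. Write $\|\rho\|:=\bigl|\coprod_{p}\rho(\iota(p))\bigr|$ for a sheaf $\rho$ over plays; this is finite for $\sigma$ by hypothesis, and $\|\iota_*\tau\|=|\coprod_p\tau(p)|$ for any sheaf $\tau$ over views. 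We argue by induction on $\|\sigma\|$. A (non-recursive) preliminary step strips the outermost arrows of $\kappa$: if $\kappa=\kappa_1\to\cdots\to\kappa_m\to\T$ with $m\ge 1$, then $\Lambda^{-m}\sigma\in\Sheaf(\CPlay{\sem{\Gamma,x_1{:}\kappa_1,\dots,x_m{:}\kappa_m},\sem\T})$ has the same size (the currying isomorphism $\theta$ is a bijection on P-views), and from $\Gamma,x_1{:}\kappa_1,\dots,x_m{:}\kappa_m\vdash M':\T$ with $\sem{M'}\cong\Lambda^{-m}\sigma$ we get $M:=\lambda x_1\dots x_m.M'$ with $\sem{M}\cong\sigma$. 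So we may assume $\kappa=\T$; let $\Gamma'$ be the (possibly enlarged) environment, $A:=\sem{\Gamma'}=A_1\times\cdots\times A_r$ with each $A_\ell=\sem{\rho_\ell}$ prime ($x_\ell{:}\rho_\ell$ the variables of $\Gamma'$), and $\sem\T=\{m_1\}$.

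By Lemma~\ref{lem:nondet-unique-initial-decompose}, $\sigma\cong\sigma_1+\cdots+\sigma_n$ with each $\sigma_i$ deterministic on initial response and $\|\sigma\|=\sum_i\|\sigma_i\|$. If $n=0$ then $\iota^*\sigma$ is the constant functor $\emptyset$ and $\sem{\bot}$ works. Otherwise, since $\sem{-}$ sends $+$ to coproducts and coproducts respect isomorphism, it suffices to realise each $\sigma_i$ as $\sem{Q_i}$ for a term $\Gamma'\vdash Q_i:\T$ and set $M':=Q_1+\cdots+Q_n$.

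Fix $i$. Because $\sigma_i$ is deterministic on initial response, $\coprod_{s\in\ImmExt{m_1}}\sigma_i(s)$ is a singleton whose unique element lies over the play $m_1 m_2$, where $m_2$ is the initial move of some component $A_k=\sem{\rho_k}$; writing $\rho_k=\rho_{k,1}\to\cdots\to\rho_{k,q}\to\T$ we have $A_k=B'\Rightarrow\{m_2\}$ with $B':=\sem{\rho_{k,1}}\times\cdots\times\sem{\rho_{k,q}}$. Every P-view of $(A,\{m_1\})$ begins with $m_1$ (the only initial-in-pair move), and by determinism on initial response every such P-view of $\sigma_i$ of length $\ge 2$ begins with $m_1 m_2$; via the isomorphism $\chi_{(m_1 m_2)}$ of Section~\ref{sec:key-lemma} this gives $\iota^*\sigma_i\cong(m_1 m_2)\rhd\tau_i$ for a unique $\tau_i\in\Sheaf(\CView{A,B'})$, with $\|\iota_*\tau_i\|=\|\sigma_i\|-1$. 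Since a P-view of $(A,B')$ has its unique initial occurrence in exactly one factor $\sem{\rho_{k,j}}$, we get $\Sheaf(\CView{A,B'})\simeq\prod_{j=1}^{q}\Sheaf(\CView{A,\sem{\rho_{k,j}}})$ and $\tau_i\cong\langle\tau_i^{(1)},\dots,\tau_i^{(q)}\rangle$ with $\sum_j\|\iota_*\tau_i^{(j)}\|=\|\iota_*\tau_i\|<\|\sigma\|$. By the induction hypothesis, for each $j$ there is $\Gamma'\vdash R_j:\rho_{k,j}$ with $\iota_*\tau_i^{(j)}\cong\sem{R_j}$, i.e.\ $\tau_i^{(j)}\cong\sem{R_j}_V$. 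Put $Q_i:=x_k\,R_1\,\cdots\,R_q$, well typed of type $\T$. Lemma~\ref{lem:nondet-interpret-head-variable} then yields $\sem{Q_i}_V\cong(m_1 m_2)\rhd\langle\sem{R_1}_V,\dots,\sem{R_q}_V\rangle\cong(m_1 m_2)\rhd\tau_i\cong\iota^*\sigma_i$, hence $\sem{Q_i}\cong\sigma_i$. Reassembling, $M:=\lambda x_1\dots x_m.(Q_1+\cdots+Q_n)$ satisfies $\sem{M}\cong\sigma$ (and, after distributing the abstractions over the $+$'s, $M$ is in normal form).

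I expect the main obstacle to be the core of the third paragraph: showing that ``deterministic on initial response'' genuinely forces all of $\sigma_i$ to hang off the single head block $m_1 m_2$, so that the factorisation $\iota^*\sigma_i\cong(m_1 m_2)\rhd\tau_i$ is both available and unique, and then carrying the product decomposition of the P-view categories through $\chi_{(m_1 m_2)}$ so that the cardinality bookkeeping ($\|\iota_*\tau_i\|=\|\sigma_i\|-1$ and $\sum_j\|\iota_*\tau_i^{(j)}\|=\|\iota_*\tau_i\|$) really closes the induction. The currying reduction at the top and the appeals to Lemmas~\ref{lem:comparison}, \ref{lem:nondet-unique-initial-decompose} and \ref{lem:nondet-interpret-head-variable} should be routine.
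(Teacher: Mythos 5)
Your proposal is correct and follows essentially the same route as the paper: reduce to $\kappa=\T$ by currying, split $\sigma$ into summands that are deterministic on initial response via Lemma~\ref{lem:nondet-unique-initial-decompose}, peel off the head block $m_1m_2$ with $\chi_{(m_1m_2)}$ and the operation $(m_1m_2)\rhd(-)$, and close the induction on the cardinality of $\coprod_p\sigma(p)$ using Lemma~\ref{lem:key-lemma} (in the guise of Lemma~\ref{lem:nondet-interpret-head-variable}) to realise the head variable. The only cosmetic differences are that you make the $n=0$ (divergence) case and the product decomposition of $\Sheaf(\CView{A,B'})$ explicit, and you inline the head-variable construction for each summand rather than re-invoking the induction hypothesis on each $\sigma_i$.
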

\begin{proof}
By induction on the number of elements in \( \coprod_{p \in \CView{\sem{\Gamma}, \sem{\kappa}}} \sigma(p) \) and the structure of \( \kappa \).  If \( \kappa = \kappa_1 \to \kappa_2 \), consider \( \Lambda^{-1}(\sigma) \in \Sheaf(\CPlay{\sem{\Gamma} \times \sem{\kappa_1}, \sem{\kappa_2}}) \) and apply the induction hypothesis.  Suppose that \( \kappa = \T \).  If \( \sigma \) has several initial responses, then by applying Lemma~\ref{lem:nondet-unique-initial-decompose}, we have \( \sigma = \sigma_1 + \dots + \sigma_n \) (\( n \ge 2 \)).  By the induction hypothesis, we have \( \sigma_i \cong \sem{M_i} \) for every \( i \) and thus \( M_1 + \dots + M_n \) is the required term.  Suppose that \( \sigma \) is deterministic on initial response.  Let \( (m_1 m_2, a) \) be the unique response.  Since \( \sigma \in \Sheaf(\CPlay{\sem{\Gamma}, \sem{\T}}) \), \( m_1 \) is the unique initial move of \( \sem{\T} \) and \( m_2 \) be the unique initial move of \( \sem{\kappa_k} \), where \( x_k : \kappa_k \in \Gamma \) for some \( x_k \).  Suppose that \( \kappa_k = \kappa'_1 \to \dots \to \kappa'_l \to \T \).  We define the sheaf \( \tau' \in \Sheaf(\sem{\Gamma}, \sem{\kappa'_1} \times \dots \times \sem{\kappa'_l}) \) by:
\[
  \tau'(p) := \sigma(\chi_{(m_1m_2)}^{-1}(p)),
\]
where \( \chi_{(m_1m_2)}^{-1} : \CView{\sem{\Gamma}, \sem{\kappa'_1} \times \dots \times \sem{\kappa'_n}} \to \CView{\sem{\Gamma}, \sem{\T}} : p \mapsto m_1 m_2 p \) (see Section~\ref{sec:key-lemma}).  Then \( \iota^* \sigma \cong ((m_1m_2) \rhd \tau') \) because \( \sigma \) is deterministic on initial response.  So by Lemma~\ref{lem:key-lemma}, we have
\[
  \sigma \cong \langle \pi_k, \tau' \rangle; \mathbf{ev}
\]
where \( \pi_k : \sem{\Gamma} \to \sem{\kappa_k} \) is the projection.  By the induction hypothesis, we have \( M_i \) for each \( i \in [l] \) such that \( \tau' \cong \langle \sem{M_1}, \dots, \sem{M_l} \rangle \).  Recall that \( \sem{x_k} \cong \pi_k \).  Since \( \CGame \) is a CCC, the application of the product can be rewritten by the series of applications.  Hence \( \sigma \cong \sem{x_k\,M_1\,\dots\,M_l} \) as desired.
\end{proof}

\begin{example}
Let \( {\btt} = \lam xy. x \) and \( \bff = \lam xy.y \).  Recall the example in Introduction, \( M_1 = \lam f. (f\,\btt) + (f\,\bff) \) and \( M_2 = (\lam f. f\,\btt) + (\lam f. f\,\bff) \).  Then \( \iota^* \sem{M_1} = \tau_1 \) and \( \iota^* \sem{M_2} = \tau_2 \), where sheaves \( \tau_1 \) and \( \tau_2 \) over P-views can be found in Example~\ref{eg:view-dependency}.
\end{example}

\section{Sheaves model for probabilistic \( \lambda_{\to} \)}
We have seen that a term of the non-deterministic \( \lambda_{\to} \) is modelled by a sheaf \( \sigma \) which maps a play \( s \) to a (finite) set \( \sigma(s) \).  An element of \( \sigma(s) \) represents a particular choice of branches by which the term behaves like \( s \).

In this section, we shall study a non-deterministic sheaf \( \sigma \) equipped with a \emph{weight map} \( \mu \) which assigns each choice \( (s, a) \) (where \( s \) is a play and \( a \in \sigma(s) \)) with a positive real number \( \mu(s, a) \).

\subsection{The target calculus: weighted and probabilistic \( \lambda_{\to} \)}
The target language is an extension of the nondeterministic \( \lambda_{\to} \) studied in the previous section.  The new feature is the term constructor \( c \cdot M \), where \( c \) is a positive real number.  The additional equations are: \tk{Corrected.  2/July/2014.  The previous rule for the application is \( (c \cdot M) (c' \cdot N) = (c c') \cdot (M\,N) \)}
\begin{align*}
  \lam x. (c \cdot M) &= c \cdot (\lam x. M)
  &
  c \cdot (M + N) &= (c \cdot M) + (c \cdot N)
  \\
  c_1 \cdot (c_2 \cdot M) &= (c_1 c_2) \cdot M
  &
  (c \cdot M)\,N &= c \cdot (M\,N)
\end{align*}
and \( c \cdot \bot = \bot \).  These equations are admissible in the sense that \( M = N \) implies \( M \) and \( N \) are observably equivalent in the standard call-by-name operational semantics (where the observable is the probability of convergence).
The probabilistic \( \lambda_{\to} \) is a fragment of this calculus in which nondeterministic branch and the weight construct are restricted to the form \( (c_1 \cdot M_1) + \dots + (c_n \cdot M_n) \), where \( \sum_{i = 1}^n c_i \le 1 \).

\begin{remark}
The rule \( M\,(c \cdot N) = c \cdot (M\,N) \) is unsound, because the application is not linear on the argument.  For instance, if the argument is called twice as in \( (\lam f. f (f (z))) (c \cdot \lam x. x) \), the resulting coefficient is \( c^2 \):
\begin{align*}
  (\lam f. f (f (z))) (c \cdot \lam x. x)
  &= (c \cdot \lam x.x) ((c \cdot \lam x.x) z) \\
  &= c \cdot ((\lam x.x) (c \cdot ((\lam x.x) z)) \\
  &= c \cdot c \cdot ((\lam x.x) z) = c^2 z.
\end{align*}
Similarly, if the argument never be called as in \( (\lam x. z) (c \cdot N) \), the coefficient \( c \) does not affect, e.g.~\( (\lam x. z) (c \cdot N) = z = (\lam x.z)\,N \).  
\end{remark}

A \emph{normal form} is defined by:
\[
  R := c_1 \cdot Q_1 + \dots + c_n \cdot Q_n
\qquad
  Q := \lambda x_1 \dots x_k. y\,R_1\,\dots\,R_n,
\]
where \( y\,R_1 \dots R_n \) is fully applied.  Every term has a unique normal form (modulo commutation of the non-deterministic branches), or is equivalent to \( \lambda x_1 \dots x_k. \bot \).  Note that \( 2 \cdot M + 2 \cdot M \neq 4 \cdot M \).

\subsection{Sheaves with weight}

\begin{definition}[Weight]
Let \( F \) be a functor \( \mathbb{D}^{\op} \to \Set \).  A \emph{weight map} \( \mu \) assigns, for each \( s \in \mathbb{D} \) and \( a \in F(s) \), a positive real number \( \mu(s, a) \in \PosReal \).
\end{definition}

Let \( \sigma \in \Sheaf(\CPlay{A, B}) \) and \( \mu \) be a weight map.  Given a morphism \( f : s \to t \) in \( \CPlay{A, B} \) and an element \( a \in \sigma(t) \), we define \( \mu(f, a) := \mu(t, a) / \mu(s, a \cdot f) \).  Notice that \( \mu(g \circ f, a) = \mu(g, a) \mu(f, a \cdot g) \).

\begin{definition}[Innocence on weight]
Let \( \sigma \in \Sheaf(\CPlay{A, B}) \) be a sheaf and \( \mu \) be a weight map.  The weight map \( \mu \) is \emph{innocent} if it satisfies the following conditions: \((1)\) \( \mu(\varepsilon, \ast) = 1 \), and \((2)\) given a covering family \( \{ f : s \to u,\ g : t \to u \} \) and \( a \in \sigma(u) \), consider the pullback diagram
\[\xymatrix@R-0.1cm@C.8cm{
s \times_u t \ar[r] \ar[d]^{g^*(f)}
&
s \ar[d]^{f}
\\
t \ar[r]^{g}
&
u
}
\]
then \( \mu(f, a) = \mu(g^*(f),\; a \cdot g) \).
\end{definition}

The typical case is that \( u = v_0 v_1 v_2 \), \( s = v_0 v_1 \), \( t = v_0 v_2 \) and \( s \times_u t = v_0 \).  Intuitively \( \mu(f, a) \) is the weight of playing \( v_2 \) from \( s = v_0 v_1 \) (that reaches to the state \( a \in \sigma(s) \)) and \( \mu(g^*(f), a \cdot g) \) is the weight of playing \( v_2 \) from \( v_0 \) (that reaches to the state \( a \cdot g \in \sigma(t) \), the restriction of \( a \) to \( t \)).  The innocence of the weight map requires that the weight for playing \( v_2 \) is independent of the situation. 

\begin{definition}[Weighted innocent strategy]
A \emph{weighted innocent strategy} over pairs \( (A, B) \) of arenas is a pair \( (\sigma, \mu) \) of an innocent non-deterministic strategy \( \sigma \in \Sheaf(\CPlay{A, B}) \) and an innocent weight map \( \mu \) for \( \sigma \).
\end{definition}

Similar to the deterministic / non-deterministic cases, a weighted innocent strategy is determined by its restriction on views.
\begin{lemma}
Assume \( \sigma, \sigma' \in \Sheaf(\CPlay{A, B}) \) and a natural isomorphism \( \phi : \sigma \stackrel{\cong}{\to} \sigma' \).  Let \( \mu \) and \( \mu' \) are innocent weight maps for \( \sigma \) and \( \sigma' \), respectively.  If \( \mu(p, a) = \mu'(p, \phi(a)) \) for every P-view \( p \in \CView{A, B} \), then \( \mu(s, a) = \mu'(s, \phi(a)) \) for every play \( s \in \CPlay{A, B} \). 
\end{lemma}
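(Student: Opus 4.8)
The plan is to prove the statement by induction on the length $|s|$. For the base case $s = \varepsilon$ we simply have $\mu(\varepsilon,\ast) = 1 = \mu'(\varepsilon,\ast)$ by condition $(1)$ in the definition of innocence of a weight map. For the induction step, if $s = m_1\dots m_n$ is already a P-view — which in particular covers every play of length $2$, since such a play is its own P-view — then $a$ ranges over $\sigma$ applied to a P-view, and the claim is exactly the hypothesis (using naturality of $\phi$ to match $\phi(a)$ with the element supplied by the hypothesis). So assume from now on that $\PView{s}\neq s$, and write $s' := m_1\dots m_{n-2}$ for the even-length prefix, which is strictly shorter than $s$.

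The key is to exploit the two-element covering family $\{\,f : s'\to s,\ g : \PView{s}\to s\,\}$ of Example~\ref{eg:covering}$(i)$, where $f$ is induced by the prefix and $g$ is the canonical P-view morphism. Its pullback $q := s'\times_s\PView{s}$ is, by the construction of pullbacks in $\CPlay{A, B}$, the restriction of $s$ to $\codom(f)\cap\codom(g)$. Unwinding the definition of the P-view shows that $q$ is either the empty play (when $m_{n-1}$ points to $\star$) or the P-view $\PView{m_1\dots m_j}$, where $m_j$ is the justifier of $m_{n-1}$; in either case $q$ is again a P-view or $\varepsilon$. Writing $g^*(f) : q\to\PView{s}$ for the pullback of $f$ along $g$, innocence of $\mu$ applied to this covering family gives $\mu(f,a) = \mu(g^*(f),\,a\cdot g)$ for every $a\in\sigma(s)$. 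Expanding both sides via $\mu(h,b) = \mu(\cod h,b)/\mu(\dom h, b\cdot h)$ — all weights are strictly positive, so the quotients are well defined — yields the identity
\[
  \mu(s,a)\;=\;\mu(s',\,a\cdot f)\cdot\mu(\PView{s},\,a\cdot g)\,/\,\mu\big(q,\;a\cdot g\cdot g^*(f)\big),
\]
and, since the covering family and the pullback square live in $\CPlay{A, B}$ and do not depend on the sheaf, the analogous identity holds with $\sigma,\mu,a$ replaced by $\sigma',\mu',\phi(a)$.

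It then remains to compare the three factors. For the first, $a\cdot f\in\sigma(s')$ satisfies $\phi(a\cdot f) = \phi(a)\cdot f$ by naturality of $\phi$, and $|s'| < |s|$, so the induction hypothesis gives $\mu(s',a\cdot f) = \mu'(s',\phi(a)\cdot f)$. For the other two factors, $a\cdot g$ and $a\cdot g\cdot g^*(f)$ lie over a P-view (or over $\varepsilon$), so the hypothesis, together with naturality of $\phi$ and condition $(1)$ in the degenerate case $q = \varepsilon$, gives $\mu(\PView{s},a\cdot g) = \mu'(\PView{s},\phi(a)\cdot g)$ and $\mu(q, a\cdot g\cdot g^*(f)) = \mu'(q,\phi(a)\cdot g\cdot g^*(f))$. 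Substituting into the displayed identity yields $\mu(s,a) = \mu'(s,\phi(a))$, completing the induction.

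I expect the main obstacle to be the bookkeeping around the pullback $q$: verifying precisely that the restriction of $s$ to $\codom(f)\cap\codom(g)$ coincides, as an object of $\CPlay{A, B}$, with the P-view $\PView{m_1\dots m_j}$ — this relies on P-visibility of $s$, which guarantees that the P-view of a play retains all justifiers and is itself a legitimate play — and treating the degenerate case $q = \varepsilon$ uniformly. The remaining steps are routine, using only the multiplicative identity $\mu(h_2\circ h_1,b) = \mu(h_2,b)\,\mu(h_1, b\cdot h_2)$ and naturality of $\phi$.
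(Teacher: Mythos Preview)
Your proof is correct and follows essentially the same route as the paper: induction on $|s|$, the two-element covering family $\{f:s'\to s,\ g:\PView{s}\to s\}$, innocence to replace $\mu(f,a)$ by $\mu(g^*(f),a\cdot g)$, and then the hypothesis on P-views plus the induction hypothesis on $s'$. The only cosmetic difference is that the paper keeps the calculation at the level of the quantities $\mu(f,a)$ and $\mu(g^*(f),a\cdot g)$ rather than expanding them into the three-factor quotient you display; your explicit handling of the degenerate case $q=\varepsilon$ (via condition~$(1)$) is in fact more careful than the paper's, which tacitly assumes $p_0\in\CView{A,B}$.
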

\begin{proof}
By induction on the length of \( s \).  Let \( s = s_0 m_1 m_2 \) be a play and \( e \in \sigma(s) \).  If \( s \) is a P-view, the claim is just assumed.  Suppose that \( s \) is not a P-view.  We have a covering family \( \{ f : s_0 \to s,\ g : \PView{s} \to s \} \).  Since the pullback \( g^*(f) : p_0 \to \PView{s} \) is in \( \CView{A, B} \),
\begin{align*}
  \mu(f, a)
  &= \mu(g^*(f), e \cdot g) = \mu'(g^*(f), \phi(e \cdot g)) \\
  &= \mu'(g^*(f), \phi(e) \cdot g) = \mu'(f, \phi(e)).
\end{align*}
By the induction hypothesis, we have
\[
  \mu(s_0, e \cdot f) = \mu'(s_0, \phi(e \cdot f)) = \mu'(s_0, \phi(e) \cdot f).
\]
So we conclude
\begin{align*}
  \mu(s, e)
  &= \mu(f, e) \mu(s_0, e \cdot f) \\
  &= \mu'(f, \phi(e)) \mu'(s_0, \phi(e) \cdot f) = \mu'(s, \phi(e))
\end{align*}
as desired.
\end{proof}

\begin{lemma}\label{lem:weight-view-extension}
Let \( \tau \in \Sheaf(\CView{A, B}) \).  Every weight map \( \mu_0 \) for \( \tau \) can be extended to an innocent weight map for \( \iota_* \tau \).
\end{lemma}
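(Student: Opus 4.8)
The plan is to define $\mu$ by a product of ``incremental weights'' along the block structure of each play, and then to reduce conditions (1) and (2) of innocence to a set-theoretic inclusion--exclusion on index sets, using the fact that morphisms in $\CPlay{A, B}$ preserve the P-views of prefixes.

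\emph{Definition of $\mu$.} For a non-empty play $s = m_1 \dots m_n \in \CPlay{A, B}$ and a $\tau$-annotation $e = e_2 e_4 \dots e_n \in (\iota_* \tau)(s)$, write $p_k := \PView{m_1 \dots m_k}$ for each even $k \le n$, and let $\rho(k) := l$ when $m_l^{\PP} \curvearrowleft m_{k-1}^{\OO}$ and $\rho(k) := 0$ when $\star \curvearrowleft m_{k-1}$. When $\rho(k) > 0$ the annotation condition forces $e_{\rho(k)} = \tau(p_{\rho(k)} \le p_k)(e_k)$, so $e_{\rho(k)}$ is well-defined; set $e_0 := \ast$, $p_0 := \varepsilon$ and $\mu_0(\varepsilon, \ast) := 1$. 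Then define
\[
  \mu(s, e) \;:=\; \prod_{\substack{k \text{ even} \\ 2 \le k \le n}} \frac{\mu_0(p_k, e_k)}{\mu_0(p_{\rho(k)}, e_{\rho(k)})},
  \qquad \mu(\varepsilon, \ast) := 1 .
\]
Each $p_{\rho(k)}$ is the prefix of $p_k$ obtained by deleting its last O-P block, so this is a finite product of positive reals; hence $\mu(s, e) \in \PosReal$, and condition (1), $\mu(\varepsilon, \ast) = 1$, holds by fiat. That $\mu$ \emph{extends} $\mu_0$ is a telescoping computation: if $s = p$ is itself a P-view, then $\rho(2) = 0$ and $\rho(k) = k - 2$ for even $k \ge 4$ (every move of a P-view is justified by its predecessor), so the intermediate factors cancel and $\mu(p, e) = \mu_0(p, e_n)$, which under the bijection $(\iota_* \tau)(p) \cong \tau(p)$ sending $e$ to $e_n$ is precisely $\mu_0$.

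\emph{Morphisms preserve prefix-P-views.} The crux is: for any morphism $f : s \to u$ in $\CPlay{A, B}$ and any even $j \le |s|$, one has $\PView{m'_1 \dots m'_j} = \PView{m_1 \dots m_{f(j)}}$ (the $f$-image of the former equals the latter as objects of $\CView{A,B}$, which is a poset) and $\rho^u(f(j)) = f(\rho^s(j))$, where $f(0) := 0$. I would prove this by induction on $j$ along the recursive clauses for the P-view, using only that a morphism preserves moves, justification pointers and $\star\curvearrowleft$, and maps the block $m'_{j-1} m'_j$ onto the consecutive block $m_{f(j)-1}\, m_{f(j)}$ (clause (iii) of Definition~\ref{def:play-morphism}); alternatively one may invoke Lemma~\ref{lem:map-play-char} and treat prefix embeddings and commutations separately. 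It then follows that the local weight $w^u_k(a) := \mu_0(p_k, a_k)/\mu_0(p_{\rho(k)}, a_{\rho(k)})$ of a play $u$ with annotation $a$ is \emph{morphism-invariant}: $w^u_{f(j)}(a) = w^s_j(a \cdot f)$ for every even $j \le |s|$. Since $f$ is injective and carries the even indices of $s$ bijectively onto the set of even indices lying in $\codom(f)$ (the codomain of a morphism is a union of whole O-P blocks), we obtain
\[
  \mu(s, a \cdot f) \;=\; \prod_{\substack{k \in \codom(f) \\ k \text{ even}}} w^u_k(a) .
\]

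\emph{Verifying condition (2).} Let $\{ f : s \to u,\ g : t \to u \}$ be a covering family and $a \in (\iota_* \tau)(u)$. Joint surjectivity gives $\codom(f) \cup \codom(g) = [\,|u|\,]$, and by the description of pullbacks in $\CPlay{A, B}$ the object $s \times_u t$ is the restriction of $u$ to $I := \codom(f) \cap \codom(g)$, carrying the restricted annotation, which equals $(a \cdot g) \cdot g^*(f)$. Let $E$ be the set of even indices of $u$. Applying the displayed identity to $f$, to $g$, to the inclusion $s \times_u t \hookrightarrow u$ and (trivially) to $\ident_u$ gives $\mu(s, a\cdot f) = \prod_{E \cap \codom(f)} w^u_k(a)$, $\mu(t, a\cdot g) = \prod_{E \cap \codom(g)} w^u_k(a)$, $\mu(s \times_u t, (a\cdot g)\cdot g^*(f)) = \prod_{E \cap I} w^u_k(a)$ and $\mu(u, a) = \prod_{E} w^u_k(a)$. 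All factors are positive, so the identity $\prod_{X} \cdot \prod_{Y} = \prod_{X \cup Y} \cdot \prod_{X \cap Y}$ with $X = E \cap \codom(f)$, $Y = E \cap \codom(g)$ (note $X \cup Y = E$) yields $\mu(s, a\cdot f)\,\mu(t, a\cdot g) = \mu(u, a)\,\mu(s \times_u t, (a\cdot g)\cdot g^*(f))$. Rearranging (all quantities are positive reals) gives $\mu(u,a)/\mu(s, a\cdot f) = \mu(t, a\cdot g)/\mu(s\times_u t, (a\cdot g)\cdot g^*(f))$, i.e.\ $\mu(f, a) = \mu(g^*(f),\, a \cdot g)$, which is condition (2). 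The main obstacle is the inductive lemma on preservation of prefix-P-views; granting it, the rest is routine bookkeeping plus one application of inclusion--exclusion.
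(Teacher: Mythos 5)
Your construction is the same as the paper's: the incremental factor $\mu_0(p_k,e_k)/\mu_0(p_{\rho(k)},e_{\rho(k)})$ is exactly the paper's $\delta(e_k)=\mu_0(p_{\rho(k)}\le p_k,\,e_k)$, so your $\mu$ coincides with the $\mu(s,e_2e_4\dots e_n):=\delta(e_2)\cdots\delta(e_n)$ defined there. The remainder --- the block-wise invariance $w^u_{f(j)}(a)=w^s_j(a\cdot f)$ and the inclusion--exclusion over $\codom(f)$ and $\codom(g)$ --- is a correct and complete write-up of the verification the paper dismisses as ``easy to see''.
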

\begin{proof}
Given a non-empty P-view \( p = p_0 m_1 m_2 \in \CView{A, B} \) and \( e \in \tau(p) \), we define \( \delta(e) := \mu_0(p_0 \to p, e) \) (if \( p_0 \neq \varepsilon \)) and \( \delta(e) := \mu_0(p, e) \) (if \( p_0 = \varepsilon \)).  We give a weight map \( \mu \) for \( \iota_* \sigma \).  Let \( s = m_1 m_2 \dots m_n \in \CPlay{A, B} \) and \( x \in \iota_* \tau \).  Then \( x \) is of the form \( e_2 e_4 \dots e_n \), where \( e_k \in \tau(\PView{m_1 \dots m_k}) \) for every even number \( k \le n \).  The weight for \( x = e_2 e_4 \dots e_n \) is defined by:
\[
  \mu(s, e_2 e_4 \dots e_n) := \delta(e_2) \delta(e_4) \dots \delta(e_n).
\]
It is easy to see that \( \mu \) is innocent.
\end{proof}

So one can define a weighted innocent strategy as a pair of a sheaf over P-views and a weight function for it.

\begin{definition}
The \emph{category of weighted innocent strategies} \( \CWGame \) has arenas as objects and weighted innocent strategies as morphisms.  Here \( (\sigma_1, \mu_1) \) and \( (\sigma_2, \mu_2) \) are identifies if there exists a natural isomorphism preserving weights.
A composition of weighted innocent strategies \( (\sigma, \mu) \) and \( (\sigma', \mu') \) is \( ((\sigma; \sigma'), \mu'') \), where for each \( s \) and
\(
  (u, e, e') \in (\sigma; \sigma')(s) = \coprod_{u: \pi(u)=s} \sigma(\proj{u}{A, B}) \times \sigma'(\proj{u}{B, C}),
\)
where \( e \in \sigma(\proj{u}{A, B}) \) and \( e' \in \sigma'(\proj{u'}{B, C}) \), we define
\[
  \mu''(s, (u, e, e')) = \mu(\proj{u}{A, B}, e) \, \mu'(\proj{u}{B, C}, e').
\]
Associativity of the composition can be easily shown.
\end{definition}

\begin{lemma}
\( \CWGame \) is a cartesian closed category.
\end{lemma}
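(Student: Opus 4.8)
The plan is to lift the cartesian closed structure of $\CGame$ (available by the preceding lemma) along the evident forgetful functor $U : \CWGame \to \CGame$ that discards weight maps, by equipping every canonical strategy --- copycat $\ident_A$, the projections $\pi_1,\pi_2$, the pairing $\langle-,-\rangle$, the evaluation map $\mathbf{ev}=\Lambda(\ident_{A_i})$, and the currying isomorphism $\Lambda$ --- with the \emph{constant weight map} $\mu\equiv 1$, and then checking that all the structural natural isomorphisms of $\CGame$ are realised by weight-preserving natural isomorphisms. First I would record two routine facts. (i) The constant map $\mu\equiv 1$ is an innocent weight map for any sheaf: $\mu(\varepsilon,\ast)=1$, and for any $f:s\to t$ and $a\in\sigma(t)$ one has $\mu(f,a)=1/1=1=\mu(g^{\ast}(f),\,a\cdot g)$, so the innocence equation holds trivially; hence $\ident_A$, $\pi_1$, $\pi_2$ and $\mathbf{ev}$ are bona fide weighted innocent strategies. (ii) The unit laws, associativity, the product universal property and the currying bijection already hold at the level of the underlying sheaves, so it only remains to verify their weight compatibility.

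The main work is to show that composition preserves innocence of weight maps, so that $\CWGame$ is really a category. Here is the intended argument. Let $\{f:s\to u,\ g:t\to u\}$ be a covering family in $\CPlay{A,C}$ and let $(w,e_1,e_2)\in(\sigma_1;\sigma_2)(u)$, so $\proj{w}{A,C}=u$. Since $\proj{}{A,C}:\CIntr{A,B,C}\to\CPlay{A,C}$ is a discrete fibration (Lemma~\ref{lem:fibration}) and $\CPlay{A,C}$ has pullbacks, the pullback square of $f,g$ lifts uniquely along $w$ to a pullback square in $\CIntr{A,B,C}$ with legs $\bar f_w:f^{\ast}(w)\to w$ and $\bar g_w:g^{\ast}(w)\to w$; because O--P blocks of $u$ correspond bijectively to basic blocks of $w$, this lifted family is again jointly surjective. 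Projecting to the $(A,B)$ and $(B,C)$ components --- projections are computed by restricting the underlying injections to a fixed index subset, and intersection of codomains commutes with such restriction, so the projections send this pullback square to pullback squares in $\CPlay{A,B}$ and $\CPlay{B,C}$ --- and using that $\mu''$ is the pointwise product $\mu_1\cdot\mu_2$ transported along these projections, innocence of $\mu_1$ and $\mu_2$ on the projected squares unwinds to innocence of $\mu''$; a direct computation of $\mu''(\bar f_w,(w,e_1,e_2))=\mu_1(\proj{\bar f_w}{A,B},e_1)\,\mu_2(\proj{\bar f_w}{B,C},e_2)$ and the analogous expression for $\overline{g^{\ast}(f)}$ makes the two sides coincide factor by factor. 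Associativity and the unit laws then follow from the corresponding sheaf-level isomorphisms, noting that they merely re-index coproducts over copycat-like interactions and hence leave the product-of-weights formula invariant.

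For products and exponentials I would proceed as in $\CGame$. The terminal object is the empty arena and there is nothing to check. For binary products, the pairing $\langle(\sigma_1,\mu_1),(\sigma_2,\mu_2)\rangle$ carries the weight map sending a play $s\in\CPlay{A,B\times C}$ and a component datum to the product $\mu_1(\proj{s}{A,B},-)\,\mu_2(\proj{s}{A,C},-)$; innocence follows by the same pullback argument as above (the restriction functors $\proj{s}{A,B}$, $\proj{s}{A,C}$ preserve the relevant pullbacks), and the triangle identities $\langle\sigma_1,\sigma_2\rangle;\pi_i\cong\sigma_i$ hold by weight-preserving isomorphisms because $\pi_i$ carries weight $1$ and composition with it contributes only factors $1$. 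For exponentials, $\theta:\CPlay{A,B\Rightarrow C}\xrightarrow{\cong}\CPlay{A\times B,C}$ is an isomorphism of categories that restricts to views, so $\Lambda$ and $\Lambda^{-1}$ transport weight maps verbatim and preserve innocence (innocence is a condition on pullback squares, which $\theta$ preserves), whence $\Lambda$ lifts to a natural bijection $\CWGame(A\times B,C)\cong\CWGame(A,B\Rightarrow C)$. Assembling these gives the cartesian closed structure.

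The hard part will be step (ii): making precise that the product-of-weights formula defining $\mu''$ (and, likewise, the weight of the pairing) is innocent. It is the only place one must genuinely use the structure theory of interaction sequences rather than merely quoting the sheaf-level facts. A convenient simplification is to invoke Lemma~\ref{lem:weight-view-extension} together with the preceding lemma, so that it suffices to check the innocence equation for covering families of the canonical shape $\{s_0\to s,\ \PView{s}\to s\}$; for these the pullback is a P-view, and the required identity then follows by unwinding the uncovering construction used in the definition of composition (as in the proof that composition preserves determinacy). All remaining verifications --- naturality of the structural isomorphisms and the identity laws --- are routine adaptations of the corresponding arguments for $\CGame$.
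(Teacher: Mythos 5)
Your proposal is correct and follows essentially the same route as the paper: equip the structural morphisms ($\ident_A$, $\pi_1$, $\pi_2$, $\mathbf{ev}$) with the trivial weight map $\mu \equiv 1$ and transport the cartesian closed structure of $\CGame$ along the currying isomorphism $\theta$, which acts on weights verbatim. The one place you go beyond the paper is in spelling out (via the discrete-fibration lifting of pullback squares and the block correspondence) that the product-of-weights formula for $\mu''$ is again innocent --- a verification the paper leaves implicit in its definition of composition in $\CWGame$ --- and that argument is sound.
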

\begin{proof}
Given a deterministic innocent strategy \( \sigma \in \Sheaf(\CPlay{A, B}) \), the trivial weight map \( \mu \) is defined by \( \mu(s, e) = 1 \) for every \( s \) and \( e \).  Then \( \ident_A \) with the trivial weight map is the identity and \( \pi_1 \in \Sheaf(\CPlay{A \times B, A}) \) and \( \pi_2 \in \Sheaf(\CPlay{A \times B, B}) \) with the trivial weight maps are projections.  The natural isomorphism \( \CGame(A \times B, C) = \Sheaf(\CPlay{A \times B, C}) \cong \Sheaf(\CPlay{A, B \Rightarrow C}) = \CGame(A, B \Rightarrow C) \) has obvious extension to weighted innocent strategies.  Hence \( \CWGame \) is a CCC.
\end{proof}

\subsection{Semantics of weighted \( \lambda_{\to} \)}
Let \( \tau \) and \( \tau' \) be sheaves over P-views of \( (A, B) \) and \( \mu_0 \) and \( \mu'_0 \) be weight maps for \( \tau \) and \( \tau' \), respectively.  The weight map \( [\mu_0, \mu_0'] \) for \( \tau + \tau' \) is defined by \( [\mu_0, \mu_0'](p, e) := \mu_0(p, e) \) (if \( e \in \tau(p) \)) and \( [\mu_0, \mu_0'](p, e) := \mu_0'(p, e) \) (if \( e \in \tau'(p) \)).  We define \( c \otimes \mu_0 \) by \( (c \otimes \mu_0)(p, e) := c \mu_0(p, e) \).

The same operations can be defined for weighted innocent strategies through Lemma~\ref{lem:weight-view-extension}.
Given a weighted innocent strategy \( (\sigma, \mu) \), we define \( c \otimes \mu \) the unique extension of \( c \otimes \mu_0 \) to \( \sigma \), where \( \mu_0 \) is the restriction of \( \mu \) to P-views.  Then \( (c \otimes \mu)(s, e) = c^k \mu(s, e) \), where \( k \) is the number of the moves in \( s \) that point to \( \star \).  It is easy to check that the equations about weights are sound for this interpretation, by using the next lemma.
\begin{lemma}
Let \( s \) be a well-opened play and \( e \in \sigma(s) \).  Then \( (c \otimes \mu)(s, e) = c (\mu(s, e)) \).
\end{lemma}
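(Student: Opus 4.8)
The plan is to reduce the lemma to the counting identity $(c \otimes \mu)(s,e) = c^{k}\,\mu(s,e)$, where $k$ is the number of moves of $s$ that are justified by $\star$ (the initial moves of $s$), and then to invoke well-openedness, which says exactly that $k = 1$. So the real work is establishing this identity, for which I would unwind the explicit construction of the extension from the proof of Lemma~\ref{lem:weight-view-extension}.

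First I would replace $\sigma$ by its restriction $\tau := \iota^{*}\sigma$ to P-views, so that an element $e \in \sigma(s)$ is an annotation $e_{2}e_{4}\dots e_{n}$ with $e_{k} \in \tau(p_{k})$ for $p_{k} := \PView{m_{1}\dots m_{k}}$. Writing $\mu_{0}$ for the restriction of $\mu$ to P-views and $\delta$ for the factor map constructed from $\mu_{0}$ in the proof of Lemma~\ref{lem:weight-view-extension} (so that, since a weighted innocent strategy is determined by its restriction to views, $\mu(s,e) = \delta(e_{2})\delta(e_{4})\cdots\delta(e_{n})$), I would compute the factor map $\delta'$ built the same way from $c \otimes \mu_{0}$. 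Unfolding $\mu(f,a) = \mu(t,a)/\mu(s,a\cdot f)$: for a P-view $p = p_{0}m_{1}m_{2}$ with $p_{0} \neq \varepsilon$ the factors of $c$ in numerator and denominator cancel, so $\delta'(e) = \delta(e)$; for a P-view $p = m_{1}m_{2}$ (the case $p_{0} = \varepsilon$) there is no cancellation and $\delta'(e) = c\,\delta(e)$. Since $c \otimes \mu$ is defined to be the unique innocent extension of $c \otimes \mu_{0}$, Lemma~\ref{lem:weight-view-extension} then gives $(c \otimes \mu)(s,e) = \delta'(e_{2})\cdots\delta'(e_{n}) = c^{k}\,\mu(s,e)$, where $k = \#\{\, \text{even } j \le n \;:\; |p_{j}| = 2 \,\}$.

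It remains to identify $k$ with the number of $\star$-justified moves of $s$. For even $j$ one has $p_{j} = \PView{m_{1}\dots m_{j-1}}\,m_{j}$, and by the inductive definition of the P-view this has length $2$ iff $\PView{m_{1}\dots m_{j-1}} = m_{j-1}$, iff $\star \curvearrowleft m_{j-1}$. Since $s$ is alternating and ends with a P-move, every O-move occurs at an odd position and is immediately followed by a P-move, and every $\star$-justified move is an (initial) O-move; hence $j \mapsto j-1$ is a bijection from $\{\,\text{even } j \le n : |p_{j}| = 2\,\}$ onto the set of $\star$-justified moves of $s$. This proves the identity, and the lemma follows by taking $s$ well-opened, so that $k = 1$. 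I expect the only delicate point to be the bookkeeping in the middle step — checking that the surplus factor of $c$ arises at exactly the initial blocks and that the resulting exponent matches the number of $\star$-justified occurrences; the rest is routine unwinding of the definition of $c \otimes \mu$ and of the extension in Lemma~\ref{lem:weight-view-extension}.
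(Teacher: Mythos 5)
Your proof is correct and follows exactly the route the paper intends: the paper states the identity \( (c \otimes \mu)(s,e) = c^{k}\,\mu(s,e) \) (with \( k \) the number of \( \star \)-justified moves) in the prose immediately preceding the lemma and leaves its verification implicit, and your unwinding of the factor map \( \delta \) from Lemma~\ref{lem:weight-view-extension}, the cancellation analysis, and the bijection between length-two P-views of even prefixes and \( \star \)-justified occurrences supply precisely the missing details before specialising to \( k = 1 \) by well-openedness.
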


\begin{lemma}
\( M = N \) iff \( \sem{M} = \sem{N} \).
\end{lemma}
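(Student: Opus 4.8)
The plan is to follow the proof of Theorem~\ref{thm:nondet-sound} essentially verbatim for everything that does not mention the weight, and to add the weight bookkeeping on top of it. For the ``only if'' direction I would check that each axiom is validated. The purely nondeterministic axioms are exactly those already verified in Theorem~\ref{thm:nondet-sound}, and the isomorphisms built there lift to weight-preserving isomorphisms because all the weights involved are trivial; $\beta\eta$ holds because $\CWGame$ is cartesian closed. For the new axioms: $c_1\cdot(c_2\cdot M)=(c_1c_2)\cdot M$ reduces to $c_1\otimes(c_2\otimes\mu_0)=(c_1c_2)\otimes\mu_0$ on views, which then lifts along $\iota_*$ by Lemma~\ref{lem:weight-view-extension}; $c\cdot(M+N)=(c\cdot M)+(c\cdot N)$ reduces to $c\otimes[\mu_0,\mu_0']=[c\otimes\mu_0,\,c\otimes\mu_0']$; $\lam x.(c\cdot M)=c\cdot(\lam x.M)$ follows because the currying isomorphism $\Lambda$ commutes with $c\otimes(-)$; and $c\cdot\bot=\bot$ is trivial since $\iota^*\sem{\bot}$ is the constant-$\emptyset$ functor, hence carries no weight. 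The one genuinely new equation is $(c\cdot M)\,N=c\cdot(M\,N)$: here I would unfold $\sem{M\,N}=\langle\sem{M},\sem{N}\rangle;\mathbf{ev}$, use that the composite weight map is the pointwise product of the component weight maps, and observe---by the argument already used in the proof of Lemma~\ref{lem:nondet-func-coprod}---that the play $\proj{u}{A,B}$ underlying the function component of every contributing interaction sequence is well-opened; then, by the (unlabelled) lemma just above stating $(c\otimes\mu)(s,e)=c\,\mu(s,e)$ for well-opened $s$, the coefficient scales by exactly $c$ and not by a power of $c$.

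For the ``if'' direction I would mimic the completeness argument inside Theorem~\ref{thm:nondet-sound}. Assume a weight-preserving natural isomorphism $\sem{M}\cong\sem{N}$, put $M,N$ into normal form $M\equiv c_1\cdot Q_1+\dots+c_n\cdot Q_n$ and $N\equiv c'_1\cdot Q'_1+\dots+c'_{n'}\cdot Q'_{n'}$ (the $\lam x_1\dots x_k.\bot$ case being trivial), and induct on the size of $\coprod_{p\in\CView{\sem{\Gamma},\sem{\kappa}}}\sem{M}(p)$ together with the structure of the type, reducing to $\kappa=\T$ via the CCC structure. By the weighted analogue of Lemma~\ref{lem:nondet-head-initial-response} each $\sem{Q_i}$ is deterministic on its initial response, so $\sem{M}$ decomposes, compatibly with its weight map, into $n$ singletons indexed by the distinct initial responses, the $i$-th one carrying total weight governed by $c_i$; likewise for $N$. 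A weight-preserving isomorphism therefore yields a permutation $\phi$ with $c_i=c'_{\phi(i)}$ and a weight-preserving $\sem{Q_i}\cong\sem{Q'_{\phi(i)}}$, and the induction hypothesis gives $Q_i=Q'_{\phi(i)}$, hence $M=N$. For a single head-variable term $Q\equiv y\,R_1\,\dots\,R_n$ I would invoke Lemma~\ref{lem:nondet-interpret-head-variable}, namely $\sem{Q}_V\cong(m_1m_2)\rhd\langle\sem{R_1}_V,\dots,\sem{R_n}_V\rangle$, together with Lemma~\ref{lem:key-lemma}; since the $(m_1m_2)\rhd(-)$ constructor inserts its block with weight $1$, the weight on $\sem{Q}$ is exactly the one transported through $\langle\pi_k,-\rangle;\mathbf{ev}$, so a weight-preserving $\sem{Q}\cong\sem{Q'}$ forces $y=y'$, $n=n'$, and a weight-preserving $\sem{R_i}\cong\sem{R'_i}$ for each $i$; by the induction hypothesis $R_i=R'_i$, and hence $M=N$.

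The hard part will be the weighted refinements of the composition machinery that the above uses as black boxes: weighted versions of Lemma~\ref{lem:nondet-func-coprod} and Lemma~\ref{lem:key-lemma} (so that, e.g., the induced weight on $\iota_*((m_1m_2)\rhd\tau)$ coincides with the one obtained by composing $\langle\pi_i,\iota_*\tau\rangle$ with $\mathbf{ev}$), and the compatibility of $c\otimes(-)$ with application used for $(c\cdot M)\,N=c\cdot(M\,N)$. All of these come down to the same two facts: that application is \emph{linear}, in the sense that the composed weight map is the pointwise product of the component weight maps (its defining equation), and that the function component of the relevant interaction sequences is well-opened, so that $c\otimes(-)$ multiplies the weight by $c$ rather than by a higher power of $c$. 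Once these are in place, everything else is routine bookkeeping running in parallel with the nondeterministic case.
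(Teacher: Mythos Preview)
Your proposal is correct and matches the approach the paper has in mind. The paper gives no explicit proof of this lemma: it only remarks, just before stating it, that soundness of the weight equations ``is easy to check \dots\ by using the next lemma'' (the one asserting $(c\otimes\mu)(s,e)=c\,\mu(s,e)$ for well-opened $s$), and it evidently intends completeness to be read off from the nondeterministic case, Theorem~\ref{thm:nondet-sound}, together with Lemma~\ref{lem:weight-unique-initial-decompose}. Your sketch is precisely that: redo Theorem~\ref{thm:nondet-sound} with weight bookkeeping, using well-openness of the function component to handle $(c\cdot M)\,N=c\cdot(M\,N)$, and peel off the coefficients $c_i$ at the initial-response step in the converse direction. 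The only phrasing I would tighten is ``all the weights involved are trivial'' for the purely nondeterministic axioms: the weights on the terms are not trivial, but the structural bijections built in Lemma~\ref{lem:nondet-func-coprod} and in the associativity/commutativity of $+$ simply relabel elements without touching their P-view annotations, hence preserve $\mu$; you already flag the need for a weighted Lemma~\ref{lem:nondet-func-coprod} in your last paragraph, which is the right place to locate that check.
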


Let \( B \) be a prime arena.  A weighted innocent strategy \( (\sigma, \mu) \) of \( (A, B) \) is \emph{deterministic on initial response} if \( \coprod_{s \in \ImmExt{\T}} \sigma(s) \) is singleton and \( \mu(s, e) = 1 \) for its unique element \( (s, e) \).  The next lemma can be proved by the same way as Lemma~\ref{lem:nondet-unique-initial-decompose}.
\begin{lemma}\label{lem:weight-unique-initial-decompose}
Every finite weighted innocent \( (\sigma, \mu) \) strategy can be decomposed as \( c_1 \otimes (\sigma_1, \mu_1) + \dots + c_n \otimes (\sigma_n, \mu_n) \), where \( (\sigma_i, \mu_i) \) is deterministic on initial response.
\end{lemma}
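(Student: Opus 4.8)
The plan is to replay the proof of Lemma~\ref{lem:nondet-unique-initial-decompose}, carrying the weight map along. Work at the level of P-views: put $\tau := \iota^{*}\sigma$ and let $\mu_{0}$ be the restriction of $\mu$ to P-views. By Lemma~\ref{lem:weight-view-extension} together with the uniqueness lemma preceding it, it is enough to build the decomposition for the pair $(\tau,\mu_{0})$ and then take $\iota_{*}$-extensions. Enumerate the initial responses exactly as in Lemma~\ref{lem:nondet-unique-initial-decompose}: the finite set $\coprod_{p\in\ImmExt{m_{1}}}\tau(p)$ is $\{(p_{1},a_{1}),\dots,(p_{n},a_{n})\}$, where $m_{1}$ is the unique initial move of the codomain arena. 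Take verbatim the subfunctors $\tau_{i}\subseteq\tau$ of that proof, $\tau_{i}(p):=\{a\in\tau(p)\mid p_{i}\le p \text{ and } a_{i}=a\cdot f \text{ for the unique } f:p_{i}\to p\}$, so that $\tau\cong\tau_{1}+\dots+\tau_{n}$ by the argument given there, and the immediate-extension part of each $\tau_{i}$ is the singleton $\{(p_{i},a_{i})\}$.

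Next I would attach coefficients and weights. Set $c_{i}:=\mu_{0}(p_{i},a_{i})$, which lies in $\PosReal$ because $\mu$ is $\PosReal$-valued, and define a weight map $\mu_{0,i}$ for $\tau_{i}$ by the constant rescaling $\mu_{0,i}(p,a):=\mu_{0}(p,a)/c_{i}$ for $a\in\tau_{i}(p)$. This is again a (positive) weight map, and $\mu_{0,i}(p_{i},a_{i})=1$. Letting $(\sigma_{i},\mu_{i})$ be the $\iota_{*}$-extension of $(\tau_{i},\mu_{0,i})$ supplied by Lemma~\ref{lem:weight-view-extension}, the pair $(\sigma_{i},\mu_{i})$ is deterministic on initial response: the set part is inherited from Lemma~\ref{lem:nondet-unique-initial-decompose}, and the unit-weight condition holds by the normalisation just imposed. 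Since a non-empty P-view has exactly one $\star$-pointing move, the view-level operation $c\otimes(-)$ (multiplication by $c$) and the play-level one (multiplication by $c^{k}$, $k$ the number of $\star$-pointing moves) agree on P-views; hence on views the weight map of $c_{1}\otimes(\sigma_{1},\mu_{1})+\dots+c_{n}\otimes(\sigma_{n},\mu_{n})$ sends $a\in\tau_{i}(p)$ to $c_{i}\,\mu_{0,i}(p,a)=\mu_{0}(p,a)$, i.e.\ coincides with $\mu_{0}$ under the isomorphism $\tau\cong\tau_{1}+\dots+\tau_{n}$.

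To finish, combine $\tau\cong\tau_{1}+\dots+\tau_{n}$ with the equality of view-level weight maps just established and apply the uniqueness lemma preceding Lemma~\ref{lem:weight-view-extension} (a weighted innocent strategy is determined up to weight-preserving natural isomorphism by its restriction to P-views); this yields $(\sigma,\mu)\cong c_{1}\otimes(\sigma_{1},\mu_{1})+\dots+c_{n}\otimes(\sigma_{n},\mu_{n})$, as required. The combinatorics of the decomposition is not the difficulty — it is already done in Lemma~\ref{lem:nondet-unique-initial-decompose} — so the main obstacle is purely the weight bookkeeping: choosing the normalisation $c_{i}$ so that each summand is genuinely deterministic on initial response (weight exactly $1$), checking that dividing by $c_{i}$ still yields an innocent weight map on the subfunctor $\tau_{i}$ (immediate, as innocence is stable under a constant rescaling restricted to a subfunctor), and verifying the $c^{k}$-versus-$c$ compatibility so that reassembling the pieces returns $\mu$ rather than some power of it.
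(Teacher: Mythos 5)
Your proposal is correct and follows exactly the route the paper intends: the paper gives no explicit proof, saying only that the lemma ``can be proved by the same way as Lemma~\ref{lem:nondet-unique-initial-decompose}'', and your argument is precisely that proof carried out with the weight bookkeeping made explicit (normalising each summand by $c_i = \mu_0(p_i,a_i)$ so that it becomes deterministic on initial response, and checking via the $c^k$-versus-$c$ observation on P-views that the rescaled sum restricts to $\mu_0$ on views, whence the weight-preserving isomorphism follows from view-determinacy). No gaps.
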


The full completeness for the weighted calculus is proved by the same technique as in the proof of Theorem~\ref{thm:nondet-full-complete}, using Lemma~\ref{lem:weight-unique-initial-decompose}.
\begin{theorem}[Full completeness]
Let \( (\sigma, \mu) \) be a weighted innocent strategy for \( (\sem{\Gamma}, \sem{\kappa}) \) and suppose that \( \sigma \) is finite.  Then there exists a term \( \Gamma \vdash M : \kappa \) such that \( (\sigma, \mu) \cong \sem{M} \).
\end{theorem}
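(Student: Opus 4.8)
The plan is to run the proof of Theorem~\ref{thm:nondet-full-complete} unchanged at the level of the underlying sheaf $\sigma$, and to carry the weight map $\mu$ along as extra bookkeeping at each inductive step. As there, the induction is on the pair $(\#\coprod_{p \in \CView{\sem{\Gamma},\sem{\kappa}}}\sigma(\iota(p)),\ \kappa)$, ordered lexicographically with the structural order on types. If $\kappa = \kappa_1 \to \kappa_2$, apply $\Lambda^{-1}$: the isomorphism $\theta$ underlying $\Lambda$ is an isomorphism of categories, so it transports $\mu$ across without change and preserves the cardinality of the view-support; the induction hypothesis gives $\Gamma, x{:}\kappa_1 \vdash M : \kappa_2$ with $\Lambda^{-1}(\sigma,\mu) \cong \sem{M}$, and $\lambda x.M$ works since $\sem{\lambda x.M} = \Lambda(\sem{M})$ and $\Lambda$ preserves weights.

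For $\kappa = \T$ there are two subcases. If $(\sigma,\mu)$ has more than one initial response, Lemma~\ref{lem:weight-unique-initial-decompose} gives $(\sigma,\mu)\cong c_1\otimes(\sigma_1,\mu_1)+\dots+c_n\otimes(\sigma_n,\mu_n)$ with $n\ge 2$ and each $(\sigma_i,\mu_i)$ deterministic on initial response; each summand has a strictly smaller view-support, so the induction hypothesis yields terms $M_i$ with $(\sigma_i,\mu_i)\cong\sem{M_i}$, and $c_1\cdot M_1 + \dots + c_n\cdot M_n$ is the required term, using that $+$ and $\otimes$ on weighted strategies model $+$ and $c\cdot(-)$ (and that $c\otimes\mu$ multiplies the weight by $c$ on well-opened plays). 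If $(\sigma,\mu)$ is deterministic on initial response, its unique initial response $(m_1 m_2,a)$ has $\mu(m_1 m_2,a)=1$ by definition, and the move $m_2$ identifies the head variable $x_k:\kappa_k\in\Gamma$ with $\kappa_k = \kappa'_1 \to \dots \to \kappa'_l \to \T$. Following the deterministic argument, set $\tau'(p):=\sigma(\chi_{(m_1m_2)}^{-1}(p))$ with the inherited weight $\mu'(p,e):=\mu(\chi_{(m_1m_2)}^{-1}(p),e)$; then $\iota^*\sigma\cong(m_1m_2)\rhd\tau'$, and a weighted refinement of Lemma~\ref{lem:key-lemma} gives $\iota_*((m_1m_2)\rhd\tau')\cong\langle\pi_k,\iota_*\tau'\rangle;\mathbf{ev}$ as weighted strategies. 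Applying the induction hypothesis to $(\tau',\mu')$ produces $M_1,\dots,M_l$ with $\tau'\cong\langle\sem{M_1},\dots,\sem{M_l}\rangle$; since $\sem{x_k}\cong\pi_k$ and $\CWGame$ is cartesian closed, $\sigma\cong\sem{x_k\,M_1\,\dots\,M_l}$.

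The only content that is genuinely new beyond transcribing Theorem~\ref{thm:nondet-full-complete} is verifying that Lemma~\ref{lem:key-lemma} and the decomposition of Lemma~\ref{lem:weight-unique-initial-decompose} respect the weight maps — this is where I expect the main work. For Lemma~\ref{lem:key-lemma}, equip $(m_1m_2)\rhd\tau'$ with the weight that assigns $1$ to the inserted block $m_1 m_2$ and inherits $\tau'$'s weights elsewhere, then extend it to $\iota_*$ via Lemma~\ref{lem:weight-view-extension}; since $\pi_k$ and $\mathbf{ev}$ both carry the trivial weight, the composition-of-weights formula collapses the right-hand side, play-by-play, to exactly the product of $\tau'$-weights, which is the left-hand side. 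For Lemma~\ref{lem:weight-unique-initial-decompose}, the analogue of the non-deterministic decomposition goes through verbatim once one checks that the factoring out of the initial-block weights $c_i$ is precisely the data recorded by $\otimes$; the clause $\mu(m_1 m_2, a) = 1$ in ``deterministic on initial response'' is exactly what makes this decomposition lossless. No other step needs more than the corresponding step of Theorem~\ref{thm:nondet-full-complete}.
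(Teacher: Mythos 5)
Your proposal matches the paper's intended argument exactly: the paper proves this theorem by rerunning the proof of Theorem~\ref{thm:nondet-full-complete} with Lemma~\ref{lem:weight-unique-initial-decompose} in place of the unweighted decomposition, which is precisely your plan, and your identification of the weighted refinement of Lemma~\ref{lem:key-lemma} as the only genuinely new verification is accurate. No issues.
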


\subsection{Semantics of probabilistic \( \lambda_{\to} \)}
\tkchanged{A weighted innocent strategy \( (\sigma, \mu) \) is probabilistic if, for every odd-length play \( s = s_0 m \) and \( e_0 \in \sigma(s_0) \), the sum of weights of possible responses that extends \( (s, e_0) \) is less than \( 1 \).}

\begin{definition}
A weighted innocent strategy \( (\sigma, \mu) \) over \( (A, B) \) is \emph{probabilistic} if, for every odd-length play \( s = s_0 m \) and \( e_0 \in \sigma(s_0) \), we have
\[
  \sum_{t \in \ImmExt{s}} \sum_{e \in \sigma(t):\, e \cdot f_t = e_0} \mu(f_t, e) \le 1
\]
where \( f_t : s_0 \to t \) is the prefix embedding.  It can be strictly less than \( 1 \); the difference is the probability of divergence.  A sheaf \( \tau \) over views with a weight map \( \mu_0 \) is \emph{probabilistic} when the same condition holds (but \( s \) is restricted to P-views).
\end{definition}

\begin{lemma}
\( (\sigma, \mu) \) is probabilistic iff its restriction to views is.
\end{lemma}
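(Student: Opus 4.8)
The plan is to show that, term for term, the displayed inequality for an arbitrary odd-length play coincides with the inequality for the single odd-length $\PP$-view obtained from it by taking $\PP$-views; both implications then follow at once.

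\emph{From plays to views} is trivial: an odd-length $\PP$-view $p = p_0\,m$ is in particular an odd-length play, and every element of $\ImmExt{p}$ is again a $\PP$-view (appending a $\PP$-move to a $\PP$-view yields a $\PP$-view). So, writing $\tau = \iota^*\sigma$ and $\mu_0 = \iota^*\mu$, one has $\tau(t) = \sigma(t)$ and $\mu_0(f_t,e) = \mu(f_t,e)$ throughout $\ImmExt{p}$, and the defining inequality for $(\tau,\mu_0)$ at $(p_0,m,e_0)$ is literally an instance of the one for $(\sigma,\mu)$.

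\emph{From views to plays.} Fix an odd-length play $s = s_0\,m$ with $s_0 = m_1\dots m_n$ and $e_0 \in \sigma(s_0)$. By the Comparison Lemma (Lemma~\ref{lem:comparison}) and the explicit form of $\iota_*$, identify $\sigma$ with $\iota_*\tau$ and $e_0$ with its $\tau$-annotation $e_2 e_4 \dots e_n$, where $e_k \in \tau(\PView{m_1\dots m_k})$. Let $m_l \curvearrowleft m$ be the justifier of the final $\OO$-move $m$ (if instead $\star \curvearrowleft m$, take $p_l := \varepsilon$ and $e_l := \ast$), put $p_l := \PView{m_1\dots m_l}$ and $p := \PView{s} = p_l\,m$, an odd-length $\PP$-view. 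I then check three points. (i) $t \mapsto \PView{t}$ is a bijection $\ImmExt{s} \to \ImmExt{p}$: for a $\PP$-move $m'$, $s_0\,m\,m' \in \CPlay{A,B}$ iff the justifier of $m'$ lies in $\PView{s_0\,m} = p_l\,m$ iff $p_l\,m\,m' \in \CPlay{A,B}$, and then $\PView{s_0\,m\,m'} = p_l\,m\,m'$. (ii) For each $t = s_0\,m\,m'$, the $\tau$-annotations $e$ of $t$ with $e\cdot f_t = e_0$ ($f_t$ the prefix embedding) are exactly the one-step extensions $e_2\dots e_n\,e_{n+2}$ of the annotation of $e_0$; the only annotation constraint touching the new component $e_{n+2}$ is the one at the last even position, which — because $m_l$ is the justifier of $m$ — reads $\tau(p_l \le \PView{t})(e_{n+2}) = e_l$, so the set $\{e : e\cdot f_t = e_0\}$ is identified with $\{e' \in \tau(\PView{t}) : e'\cdot f'_t = e_l\}$, where $f'_t : p_l \to \PView{t}$ is the prefix embedding. (iii) Since $\mu$ is innocent it is the multiplicative extension of $\mu_0$ as in the proof of Lemma~\ref{lem:weight-view-extension}: $\mu(t, e_2\cdots e_{n+2})$ is a product of per-block factors determined by $\mu_0$, the one for the final block depending only on $\PView{t}$ and $e_{n+2}$; hence $\mu(f_t,e) = \mu(t,e)/\mu(s_0,e_0)$ collapses to that final factor, namely $\mu_0(f'_t, e_{n+2})$. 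Summing over $\ImmExt{s}$, the left-hand side of the inequality for $(s_0,m,e_0)$ equals that of the inequality for the $\PP$-view instance $(p_l,m,e_l)$, which is $\le 1$ by hypothesis.

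\emph{Main obstacle.} The delicate step is (ii): checking that the fresh component $e_{n+2}$ is constrained by exactly one annotation equation — the one at position $n+2$ coming from the justifier of $m$ — and that the degenerate case $\star \curvearrowleft m$ (where $p_l$ is empty, $e_l = \ast$, and that equation disappears) is handled uniformly. Everything else is bookkeeping, once one uses that an innocent weight map is determined by, and is the multiplicative extension of, its restriction $\mu_0$ to $\PP$-views.
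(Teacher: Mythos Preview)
Your argument is correct and follows essentially the same route as the paper: both reduce the probabilistic inequality at an arbitrary odd-length play $s=s_0\,m$ to the corresponding inequality at the odd-length P-view $\PView{s}=p_l\,m$, by establishing the three matching pieces you label (i)--(iii). The only real difference is in how step~(iii) is justified: the paper applies the innocence condition on $\mu$ directly to the pullback of the covering pair $\{f_t:s_0\to t,\;g_t:\PView{t}\to t\}$ to obtain $\mu(f_t,e)=\mu(g_t^*(f_t),\,e\cdot g_t)$, whereas you pass through the explicit $\tau$-annotation description of $\iota_*\tau$ and invoke the multiplicative form of Lemma~\ref{lem:weight-view-extension}. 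Your route is slightly more concrete but relies on one extra ingredient you mention only in passing: that an innocent weight map is \emph{determined} by its restriction to P-views (the lemma immediately preceding Lemma~\ref{lem:weight-view-extension}), which is what lets you identify $\mu$ with the multiplicative extension of $\mu_0$ in the first place. With that made explicit, the two proofs are interchangeable.
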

\iflong 
\begin{proof}
Let \( \sigma \in \Sheaf(\CPlay{A, B}) \) and \( \tau = \iota^* \sigma \in \CView{A, B} \).  Let \( s = s_0 m \) be an odd-length play and \( e_0 \in \sigma(s_0) \).  We prove
\[
  \sum_{t \in \ImmExt{s}} \sum_{e \in \sigma(t):\, e \cdot f_t = e_0} \mu(f_t, e) \le 1
\]
by induction on the length \( s \), where \( f_t : s_0 \to t \) is the prefix embedding.  If \( \PView{s} = s \), then every \( t \in \ImmExt{s} \) is a P-view.  Hence the claim follows from the assumption.

Assume that \( \PView{s} \neq s \).  Let \( s_0 = m_1 \dots m_n \), \( m_j \) be the justifier of \( m \) and \( p_0 = \PView{m_1 \dots m_j} \).  Consider the covering family \( \{ f_t : s_0 \to t,\; g_t : \PView{t} \to t \} \) for every \( t \).  Then we have \( \mu(f_t, e) = \mu(g_t^*(f_t), e \cdot g_t) \) for every \( t \in \ImmExt{s} \).  So it suffices to prove that
\[
  \sum_{t \in \ImmExt{s}} \sum_{e \in \sigma(t):\, e \cdot f_t = e_0} \mu(g_t^*(f_t), e \cdot g_t) \le 1
\]
Since the P-view of \( t \in \ImmExt{s} \) is given by \( \PView{t} = \PView{s_0 m m'} = p_0 m m' \) (for some \( m' \)), we have a bijection from \( \ImmExt{s} \) to \( \ImmExt{p_0 m} \).  Since \( \{ f_t, g_t \} \) is a covering family, a pair \( (b_t, d_t) \in \sigma(s_0) \times \sigma(\PView{t}) \) such that \( b_t \cdot f_t^*(g_t) = d_t \cdot g_t^*(f_t) \) bijectively corresponds to \( e \in \sigma(t) \).  So there exists a bijection between \( \{ e \in \sigma(t) \mid e \cdot f_t = e_0 \} \) and \( \{ e \in \sigma(\PView{t}) \mid e \cdot g_t^*(f_t) = e_0 \cdot f_t^*(g_t) \} \).  Since \( g_t^*(f_t) : p_0 \to \PView{t} \) is the prefix embedding and \( f_t^*(g_t) : \PView{s_0} \to s_0 \) is the P-view embedding that is independent of \( t \), we conclude
\begin{align*}
  & \sum_{t \in \ImmExt{s}} \sum_{{e \in \sigma(t)} \atop {e \cdot f_t = e_0}} \mu(g_t^*(f_t), e \cdot g_t) \\
  =&
  \sum_{p \in \ImmExt{p_0 m}} \sum_{{e \in \sigma(p)} \atop {e \cdot (p \ge p_0) = e_0 \cdot h}} \mu((p \ge p_0), e) \le 1
\end{align*}
where \( h : \PView{s_0} \to s_0 \) is the P-view embedding.
\end{proof}
\fi 

Because the probabilistic \( \lambda_{\to} \) is a fragment of the weighted calculus, all the properties including soundness and adequacy are applicable fro the probabilistic calculus.  Full completeness can be proved by the same way as the weighted case.
\begin{theorem}[Full completeness]
Let \( \kappa \) be a simple type, \( \sigma \in \Sheaf(\CPlay{\sem{\Gamma}, \sem{\kappa}}) \) be finite and \( \mu \) be a probabilistic weight.  Then \( (\sigma, \mu) = \sem{M} \) for some probabilistic term \( \Gamma \vdash M : \kappa \). 
\end{theorem}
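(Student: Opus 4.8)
The plan is to re-run the construction behind the (already established) weighted full completeness theorem --- which itself mirrors the proof of Theorem~\ref{thm:nondet-full-complete}, with Lemma~\ref{lem:weight-unique-initial-decompose} in place of Lemma~\ref{lem:nondet-unique-initial-decompose} --- and to verify at each recursive step that the \emph{probabilistic} constraint is inherited by the smaller strategies produced, and that the coefficients introduced at a branching node sum to at most $1$. Concretely I would induct on the number of elements of $\coprod_{p \in \CView{\sem{\Gamma},\sem{\kappa}}} \sigma(p)$ together with the structure of $\kappa$, producing a probabilistic term $M$ with $(\sigma,\mu)\cong\sem{M}$. The preliminary observation that makes the bookkeeping uniform is that the probabilistic inequality mentions only the \emph{edge weights} $\mu(f_t,e)$ of immediate extensions $t=s_0m$; since such an $m$ is a P-move it does not point to $\star$, so $t$ and $s_0$ have the same number of $\star$-pointing moves, and hence $\mu(f_t,e)$ is unchanged by the rescaling $c\otimes(-)$ and by the currying isomorphism $\Lambda$ (whose underlying $\theta$ matches $\star$-pointing moves, odd-length plays and immediate extensions).

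For the arrow case $\kappa=\kappa_1\to\kappa_2$, I would pass to $\Lambda^{-1}(\sigma)\in\Sheaf(\CPlay{\sem{\Gamma}\times\sem{\kappa_1},\sem{\kappa_2}})$ with the transported weight; by the observation above this is again probabilistic, so the induction hypothesis supplies a probabilistic $M'$ with $\Lambda^{-1}(\sigma)\cong\sem{M'}$, whence $\lambda x.M'$ (still probabilistic) denotes $\sigma$.

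For the ground case $\kappa=\T$, I split on whether $\sigma$ is deterministic on initial response. If it is not, Lemma~\ref{lem:weight-unique-initial-decompose} gives $(\sigma,\mu)\cong c_1\otimes(\sigma_1,\mu_1)+\dots+c_n\otimes(\sigma_n,\mu_n)$ with each $(\sigma_i,\mu_i)$ deterministic on initial response; here each $c_i$ is exactly the weight of the $i$-th immediate response of the unique initial move $m_1$ (since $\mu_i$ gives that response weight $1$), so $\sum_i c_i\le 1$ by the probabilistic condition at the odd-length play $s=m_1$. Moreover each $(\sigma_i,\mu_i)$ is probabilistic, because at any $(s_0,e_0)$ with $e_0$ in the $i$-th component the immediate extensions and their edge weights coincide with those of $(\sigma,\mu)$. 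The induction hypothesis then yields probabilistic $M_i$ with $(\sigma_i,\mu_i)\cong\sem{M_i}$, and $(c_1\cdot M_1)+\dots+(c_n\cdot M_n)$ is a legal probabilistic term (as $\sum_i c_i\le 1$) denoting $\sigma$. If $\sigma$ is deterministic on initial response, with unique response $(m_1m_2,a)$ (so $\mu(m_1m_2,a)=1$), I follow the head-variable construction: $m_2$ is the initial move of $\sem{\kappa_k}$ for some $x_k:\kappa_k=\kappa'_1\to\dots\to\kappa'_l\to\T$ in $\Gamma$, and $\tau'(p):=\sigma(\chi_{(m_1m_2)}^{-1}(p))$ with the weight inherited through $\chi_{(m_1m_2)}$ is a probabilistic weighted strategy (peeling off the leading block of weight $1$ preserves all edge weights), with $\iota^*\sigma\cong(m_1m_2)\rhd\tau'$. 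By Lemma~\ref{lem:key-lemma}, $\sigma\cong\langle\pi_k,\iota_*\tau'\rangle;\mathbf{ev}$; applying the induction hypothesis to each prime component of $\tau'$ gives probabilistic $M_1,\dots,M_l$ with $\tau'\cong\langle\sem{M_1},\dots,\sem{M_l}\rangle$, and since $\CGame$ is a CCC and $\sem{x_k}\cong\pi_k$ we obtain $\sigma\cong\sem{x_k\,M_1\,\dots\,M_l}$, which is a probabilistic term.

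The main obstacle is not any single step but the cumulative bookkeeping: one must check that the probabilistic inequality is stable under currying, scalar reweighting $c\otimes(-)$, extraction of a component that is deterministic on initial response, and removal of the leading $(m_1m_2)$-block via $\chi_{(m_1m_2)}$, and that the coefficients appearing at a branch are precisely the branch probabilities of the initial move (so that their sum is $\le 1$). This is delicate only because the weight of a play is not linear in its number of $\star$-pointing moves; restricting attention to well-opened plays, as in the earlier weight lemmas, keeps the arithmetic under control.
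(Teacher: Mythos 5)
Your proposal is correct and follows the paper's intended route exactly: the paper proves this theorem by rerunning the weighted full-completeness argument (itself the proof of Theorem~\ref{thm:nondet-full-complete} with Lemma~\ref{lem:weight-unique-initial-decompose} in place of Lemma~\ref{lem:nondet-unique-initial-decompose}) and observing that the probabilistic constraint propagates through each recursive step. Your added bookkeeping --- that edge weights of immediate extensions are invariant under $\Lambda$ and under $c\otimes(-)$ away from the initial move, and that the coefficients extracted at a branch are exactly the initial-response probabilities and hence sum to at most $1$ --- is precisely the verification the paper leaves implicit, and is cleanest when reduced to P-views via the lemma stating that $(\sigma,\mu)$ is probabilistic iff its restriction to views is.
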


\paragraph{Concluding remarks}
As presented, our model treats neither  recursion nor primitive data types such as boolean. Further the target languages are restricted to simply-typed calculi. However we believe that these restrictions can be relaxed.

We will apply the sheaf-theoretic approach in the paper to study the model checking of non-determinsitic calculi, such as non-deterministic PCF and its call-by-value version, and to develop a semantics of \tkchanged{refinement} dependent types.

\bibliographystyle{abbrvnat}
\bibliography{library,tsukada}

\ifwithappendix
\clearpage
\appendix


\section{Miscellany}

A \emph{basis} for a Grothendieck topology on a category $\cat{C}$ with pullbacks is a map $K$ assigning to each object $U \in \cat{C}$ a collection $K(U)$ of families of maps with codomain $U$ such that
\begin{enumerate}[(i)]
\item (\emph{Isomorphism}) if $f : U' \to U$ is an isomorphism then $\makeset{f : U' \to U} \in K(U)$

\item (\emph{Stability}) if $\makeset{f_i : U_i \to U \mid i \in I} \in K(U)$ then for every map $g : V \to U$, the family of pullbacks $\makeset{g^\ast(f_i) : U_i \times_U V \to V \mid i \in I} \in K(V)$

\item (\emph{Transitivity}) if $\makeset{f_i : U_i \to U \mid i \in I} \in K(U)$ and for each $i$, $\makeset{g_{ij} : V_{ij} \to U_i \mid j \in J_i} \in K(U_i)$, then the family of composites $\makeset{g_{ij} ; f_i : V_{ij} \to U \mid i \in I, j \in J_i} \in K(U)$
\end{enumerate}

\begin{lemma}
The category $\cat{P}_\kappa$ has pullbacks.
\end{lemma}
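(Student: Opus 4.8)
The plan is to reuse, with more detail, the construction already used for the earlier lemma that \( \CPlay{A, B} \) has pullbacks. Given a cospan \( f : s_1 \to t \) and \( g : s_2 \to t \) in \( \cat{P}_\kappa \), regard \( f \) and \( g \) as injections \( f : [|s_1|] \to [|t|] \) and \( g : [|s_2|] \to [|t|] \) on occurrence-index sets (morphisms of \( \cat{P}_\kappa \) being injections that preserve moves, justification pointers and O-P blocks, just as in Definition~\ref{def:play-morphism}), and set \( I := \codom(f) \cap \codom(g) \subseteq [|t|] \). I claim the pullback \( s_1 \times_t s_2 \) is the restriction \( t\restriction_I \), re-indexed to \( [|I|] \) through the unique monotone bijection, equipped with the two projections induced by \( f^{-1} \) and \( g^{-1} \).

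The first thing to check is that \( I \) is an admissible index set, i.e.\ that \( t\restriction_I \) is again a play. Since a morphism preserves and (being injective, with unique justifiers) reflects justification pointers, each of \( \codom(f) \) and \( \codom(g) \) is closed under taking justifiers in \( t \); hence so is \( I \). Since a morphism maps O-P blocks to O-P blocks, \( \codom(f) \) is the disjoint union of the blocks \( \{ f(2l-1), f(2l) \} \) of \( t \), and likewise for \( g \), so \( I \) is again a union of O-P blocks of \( t \). From the fact that \( I \) is a union of blocks, \( t\restriction_I \) is alternating and ends in a P-move; from the fact that \( I \) is justifier-closed, \( t\restriction_I \) is a genuine justified sequence. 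The one substantive point is P-visibility: for a P-occurrence \( m_k \) with \( k \in I \), the backward chain computing \( \PView{t_{\le k}} \) only ever passes to the immediate predecessor of a P-move and to the justifier of an O-move, and both operations stay inside \( I \) by block-closure and justifier-closure; hence \( \PView{(t\restriction_I)_{\le k}} \) is exactly \( \PView{t_{\le k}} \) read off in \( I \), and in particular it contains the justifier of \( m_k \), which lies in \( \PView{t_{\le k}} \) by P-visibility of \( t \). (Alternatively one can invoke Lemma~\ref{lem:map-play-char} together with the observation that deleting a non-interfering block from a play yields a play.)

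Next I would define \( \pi_1 : t\restriction_I \to s_1 \) by sending \( k \in I \) to the unique \( l \) with \( f(l) = k \), and \( \pi_2 : t\restriction_I \to s_2 \) dually; these are morphisms of \( \cat{P}_\kappa \) because \( f \) and \( g \) reflect pointers and reflect O-P blocks, and \( f \circ \pi_1 = g \circ \pi_2 \) is the inclusion \( I \hookrightarrow [|t|] \), so the square commutes. For the universal property, given \( q : x \to s_1 \) and \( q' : x \to s_2 \) with \( f \circ q = g \circ q' =: e \), one has \( \codom(e) \subseteq \codom(f) \cap \codom(g) = I \), so \( e \) factors uniquely as \( x \xrightarrow{h} t\restriction_I \hookrightarrow t \) at the level of index maps; the same preservation arguments show \( h \) is a morphism of \( \cat{P}_\kappa \), while \( \pi_i \circ h = q, q' \) by construction and uniqueness of \( h \) is immediate since \( \pi_1 \) is injective on occurrences. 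The only place where real care is needed is the P-visibility claim in the middle step: deleting the O-P blocks outside \( I \) can only shorten P-views, and one must confirm this never discards a pointer target that a surviving P-move still needs — justifier- and block-closure of \( I \) is precisely what guarantees this.
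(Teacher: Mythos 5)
Your construction is exactly the one the paper uses for the corresponding lemma about \( \CPlay{A,B} \) (take \( I = \codom(f) \cap \codom(g) \) and restrict \( t \) to \( I \)); the paper states this in one line, and your additional verifications — that \( I \) is justifier-closed and block-closed, that the restriction is therefore an alternating P-visible justified sequence, and the universal property — are all correct and are precisely the details the paper leaves implicit. No gap; same approach, just fully spelled out.
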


Let $p \in \cat{P}_\kappa$. We say that $p$ is a \emph{P-view} if $\View{p} = p$. Observe that if $p \to p'$ is a map of $\cat{P}_\kappa$ and $p'$ is a P-view, then $p$ is also a P-view, and $p$ is a prefix of $p'$. \lo{Define the prefix relation in terms of the moves and pointers functions that define a play.}

We define a candidate basis $K$ (for a Grothendieck topology) on $\cat{P}_\kappa$ as follows, which we call \emph{prime}. \lo{Is there a better name?} For each $s \in \cat{P}_\kappa$, $K(s)$ consists of 
\begin{itemize}
\item families of the form 
\[
\makeset{\varphi_\xi : p_\xi \to s \mid \hbox{$p_\xi$ is a P-view, $\xi \in \Xi$}}
\]
that cover $s$ i.e.~\( \bigcup_{\xi \in \Xi} \codom(\varphi_{\xi}) = [1;n] \), where \( n \) is the length of \( s \); and

\item the singleton families of isomorphisms with codomain $s$. 
\end{itemize}

\begin{lemma}
The map $K$ as defined above is a basis of a Grothendieck topology on $\cat{P}_\kappa$.
\end{lemma}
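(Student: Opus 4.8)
The plan is to verify the three defining axioms of a basis---\emph{Isomorphism}, \emph{Stability} and \emph{Transitivity}---by a case analysis on the two kinds of families admitted into $K(s)$: the \emph{P-view families} (jointly surjective families $\makeset{\varphi_\xi : p_\xi \to s}_{\xi \in \Xi}$ with every $p_\xi$ a P-view) and the singleton families of isomorphisms.

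The Isomorphism axiom holds by construction, since singleton families of isomorphisms with codomain $s$ are by definition members of $K(s)$. For Stability, recall (using that $\cat{P}_\kappa$ has pullbacks, established above) that the pullback of $f : s_1 \to t$ and $g : s_2 \to t$ in $\cat{P}_\kappa$ is the restriction of $t$ to $\codom(f) \cap \codom(g)$, so that $\codom(g^\ast(f)) = g^{-1}(\codom(f))$. Now fix $\makeset{f_i : U_i \to U}_{i \in I} \in K(U)$ and a map $g : V \to U$. If this family is a singleton isomorphism, then $g^\ast(f)$ is again an isomorphism (the pullback of an iso is an iso), so we are done. If it is a P-view family, then each pullback $U_i \times_U V$ carries a map to $U_i$, hence by the Observation that every object mapping into a P-view is itself a P-view, $U_i \times_U V$ is a P-view; and since preimage commutes with unions and $g$ is total, $\bigcup_i \codom(g^\ast(f_i)) = g^{-1}\bigl(\bigcup_i \codom(f_i)\bigr) = g^{-1}([1;|U|]) = [1;|V|]$, so $\makeset{g^\ast(f_i) : U_i \times_U V \to V}_{i}$ is a P-view family. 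Either way it lies in $K(V)$.

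The heart of the argument is Transitivity, and the key preliminary fact I would isolate is: \emph{every member of $K(p)$, for $p$ a P-view, contains an isomorphism onto $p$}. Indeed, a P-view admits no non-interfering adjacent O-P blocks (in a P-view a non-initial O-move points to its immediate predecessor, and an initial O-move occurs only at the start), so by Lemma~\ref{lem:map-play-char} every morphism into $p$ is a prefix embedding; hence any object mapping into $p$ is a prefix of $p$, and a proper prefix misses the last move of $p$, so joint surjectivity forces one member of the family to have image all of $[1;|p|]$, i.e.\ to be an isomorphism. Granting this, let $\makeset{f_i : U_i \to U}_{i \in I} \in K(U)$ and $\makeset{g_{ij} : V_{ij} \to U_i}_{j \in J_i} \in K(U_i)$. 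If the outer family is a P-view family, each $U_i$ is a P-view, so each inner family---whether a P-view family or a singleton iso---contains an isomorphism $g_{i j_0} : V_{i j_0} \xrightarrow{\ \cong\ } U_i$; then $\codom(g_{i j_0} ; f_i) = f_i(\codom(g_{i j_0})) = \codom(f_i)$, so the subfamily $\makeset{g_{i j_0} ; f_i}_{i}$ already covers $U$, and since every composite $g_{ij} ; f_i$ has a P-view domain, $\makeset{g_{ij} ; f_i}_{i,j}$ is a P-view family in $K(U)$. If instead the outer family is a singleton isomorphism $f : U' \to U$, then composing with $f$ preserves joint surjectivity (so a P-view inner family yields a P-view family of $U$) and also preserves the singleton-isomorphism shape, so the composite family again lies in $K(U)$.

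The step I expect to be the main obstacle is precisely Transitivity, and within it the structural rigidity of P-views just described---that the morphisms of $\cat{P}_\kappa$ into a P-view are exactly its prefix embeddings, so that any covering family of a P-view is forced to contain an isomorphism. Once that observation is in place, all three axioms reduce to bookkeeping with the explicit pullback formula and with joint surjectivity of images.
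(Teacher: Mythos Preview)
Your proof is correct. The paper's own proof is considerably terser: it works out only the Stability axiom (by the same explicit-pullback computation you give, namely that $\codom(g^\ast(f_i)) = g^{-1}(\codom(f_i))$ and that the pullback object maps to the P-view $U_i$, hence is itself a P-view), and then simply declares Isomorphism and Transitivity ``obvious''.

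The main difference is one of emphasis. You treat Transitivity as the heart of the matter and isolate a clean intermediate fact---that every $K$-covering family of a P-view $p$ already contains an isomorphism onto $p$, because morphisms into a P-view are prefix embeddings (no non-interfering adjacent O-P blocks exist in a P-view, so Lemma~\ref{lem:map-play-char} forces the commutation part of any decomposition to be trivial) and joint surjectivity then forces one prefix to be all of $p$. This observation is correct and is exactly what makes the paper's ``obvious'' honest; it also makes your Transitivity argument uniform across the two kinds of inner families. So your route is not genuinely different from the paper's, just more explicit: you have filled in the step the authors chose to suppress, and identified the structural reason (rigidity of P-views under morphisms) why it goes through.
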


\begin{proof}
For (Stability), take a map $\phi : t \to s$. Let $ \psi : s' \to s$ be an isomorphism. Then $\phi^\ast(\psi)$ is also an isomorphism. For the other case, suppose $\makeset{\psi_\xi : p_\xi' \to s \mid \xi \in \Xi}$, where each $p_\xi'$ is a P-view, is a covering family of $s$. To see that 
\[
\makeset{\phi^\ast(\psi_\xi) : p_\xi \to t \mid \xi \in \Xi}
\]
covers $t$, take $i \in [1; n_t]$ where $n_t$ is the length of $t$. Since $\phi(i) \in [1; n_s]$, by the covering assumption, there exist $\xi \in \Xi$ and $j \in [1 ; n_{p_\xi'}]$ such that $\psi_\xi(j) = \phi(i)$. Notice that $p_\xi$ is a P-view and $\psi_\xi^\ast(\phi) : p_\xi \to p_\xi'$ extends the identity. Since $\phi$ is injective, we have $\psi^\ast(\psi_\xi)(j) = i$ as desired. The other two conditions, Isomorphism and Transitivity, are obvious.
\end{proof}

By the \emph{standard Grothendieck topology} \lo{Reasonable name?} on $\cat{P}_\kappa$, we mean the Grothendieck topology generated by the prime basis $K$.

\paragraph{The category of interactions $\cat{I}_{(A, B, C)}$}

Let $A, B$ and $C$ be arenas. A P-move of the triple $(A, B, C)$ is either a P-move of $C$ or an O-move of $A$. An O-move of the triple $(A, B, C)$ is either an O-move of $C$ or a P-move of $A$. A \emph{generalised O-move} of the triple is either an O-move of the triple or a $B$-move. We refer to $A, B$ and $C$ as the \emph{component arenas} of the triple $(A, B, C)$.

An \emph{interaction sequence} (or simply \emph{interaction}) of the triple $(A, B, C)$ is a finite sequence $I$ of moves from $M_A + M_B + M_C$ equipped with justification pointers. Formally it is a pair of functions \( I_{\mathit{mov}} : [1;n] \to M_A + M_B + M_C \) and \( I_{\mathit{ptr}} : [1;n] \to [0;n-1] \) for some $n \geq 0$ such that
\begin{enumerate}[(i)]
\item each initial $A$-move in $I$ is justified by an initial $B$-move, and each initial $B$-move in $I$ is justified by an initial $C$-move, so that each move in $I$ is hereditarily justified by an initial $C$-move
\item $\proj{I}{(B, C)} \in \cat{P}_{B, C}$
\item $\proj{I}{(A, B)} \in \cat{P}_{A, B}$
\end{enumerate}

We define $\proj{I}{(B, C)} := (t_{\mathit{mov}}, t_{\mathit{ptr}})$ as follows. Set $\beta := I^{-1}_{\mathit{mov}}(M_B + M_C)$. Let $\pi$ be the (unique) partial function $[1 ; n] \rightharpoonup [1; n']$ where $n' = |\beta|$ such that $\proj{\pi}{\beta}$ is bijective and monotone. Then define $t_{\mathit{mov}} : [1; n'] \to M_B + M_C$ by: $j \mapsto I_{\mathit{mov}}(\pi^{-1}(j))$; and $t_{\mathit{ptr}} : [1; n'] \rightharpoonup [0; n'-1]$ by $t_{\mathit{ptr}}(j)$ is defined and equals $i$ just if $I_{\mathit{ptr}}(\pi^{-1}(j)) = k$ and $\pi(k)$ is defined and equals $i$.


An interaction can be considered as a finite sequence of \emph{pairs} of consecutive moves, the first of which is a generalised O-move.

\begin{definition}[Morphism between interactions]
Let \( m_1 \dots m_n \) and \( m'_1 \dots m'_{n'} \) be interactions over $(A, B, C)$ of length \( n \) and \( n' \) respectively.  A \emph{morphism between interactions} is an injection \( \varphi : [1; n] \to [1; n'] \) such that for every \( k \in [1;n] \)
\begin{enumerate}[(i)]
\item \( m_k = m'_{\varphi(k)} \) (as moves),
\item \( m_i \curvearrowleft m_k \) implies \( m'_{\varphi(i)} \curvearrowleft m'_{\varphi(k)} \) (and similarly for \( \star \curvearrowleft m_k \)), 
\item if \( m'_{\phi(k)} \) is a generalised O-move, then it is followed by \( m'_{\phi(k+1)} \) i.e.~$\phi(k+1) = \phi(k) + 1$, and
\item if $m_{\phi(k+2)}'$ is an O-move (of the triple), then it is in the same component arena as \( m'_{\phi(k+1)} \). \lo{This condition would be needed if we required plays to satisfy O-visibility.}
\end{enumerate}
\end{definition}

A \emph{basic block} of an interaction interaction is a (contiguous) segment of the shape $m_1^O \, n_1 \cdots n_l \, m_2^P$ such that $m_1^O, m_2^P \in M_A + M_C$ (where $m_1^O$ and $m_2^P$ are respectively O-move and P-move of the triple), $l \geq 0$ and each $n_i \in M_B$. It follows from the definition that a morphism between interactions preserves moves, justification pointers and basic blocks.

\begin{definition}
Let $A, B$ and $C$ be arenas. The category \emph{$\cat{I}_{(A, B, C)}$ of interactions} has interactions as objects and as morphisms the morphisms between interactions defined in the preceding.
\end{definition}

\section{Proofs of game semantics part}

\subsection{Interaction sequences}

\begin{definition}[Hereditary justified]
Let \( s = m_1 \dots m_n \) be a justified sequence of \( A \).  We say \( m_i \) \emph{hereditary justifies} \( m_j \) if
\begin{enumerate}
\item \( m_i = m_j \), or
\item \( m_k \curvearrowleft m_j \) and \( m_k \) is hereditary justified by \( m_i \).
\end{enumerate}
In other words, \( m_i \) hereditary justifies \( m_j \) just if
\[
  m_i \curvearrowleft n_1 \curvearrowleft n_2 \curvearrowleft \dots \curvearrowleft n_l \curvearrowleft m_j
\]
for some \( l \ge 0 \) and \( n_1, \dots, n_l \).  We write \( m_i \curvearrowleft^* m_j \) if \( m_i \) hereditary justifies \( m_j \).
\end{definition}

\tk{TODO: define terminology: move occurrences}
\begin{lemma}[Switching condition]
Let \( A \), \( B \) and \( C \) be arenas and \( s = m_1 \dots m_n \in \Intr{A, B, C} \) be an interaction sequence.  Let \( m_i \) and \( m_{i+1} \) are consecutive move occurrences in \( s \).  If the component of \( m_i \) is \( (B, \phi) \), then \( m_{i+1} \) is either
\begin{itemize}
\item a P-move of \( C \),
\item an O-move of \( A \), or
\item a move of the component \( (B, \phi) \).
\end{itemize}
\end{lemma}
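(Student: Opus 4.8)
The plan is to split the statement into an \emph{owner part} --- that $m_{i+1}$ is a P-move of $C$, an O-move of $A$, or a $B$-move --- and a \emph{component part} --- that whenever $m_{i+1}$ is a $B$-move it lies in the component $(B,\phi)$ of $m_i$.

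For the owner part I would invoke the switching condition for interaction sequences (Lemma~\ref{lem:interaction-satisfy-switching}). Since $m_i$ is a $B$-move it is consumed either by the edge labelled $\OMoves[B]$, which lands the automaton in the state $\mathit{POP}$, or by the edge labelled $\PMoves[B]$, which lands it in $\mathit{OPP}$. From $\mathit{POP}$ the only out-transitions are $\PMoves[B]$ and $\OMoves[A]$, and from $\mathit{OPP}$ the only out-transitions are $\PMoves[C]$ and $\OMoves[B]$; hence $m_{i+1}$ is a $B$-move, an O-move of $A$, or a P-move of $C$. The same reading records that $m_{i+1}$ is a P-move of $B$ when $m_i\in\OMoves[B]$, and an O-move of $B$ when $m_i\in\PMoves[B]$, and this dichotomy is what I will feed into the component part.

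For the component part the key ingredient is the standard fact that in a P-view every move other than the first is justified by an earlier move of the same P-view --- immediate from P-visibility for the P-move clause of the P-view, and from the O-move clause of the definition for the rest --- so that every move of a P-view is hereditarily justified by its first move, which is the unique move of the P-view pointing to $\star$. Now assume $m_{i+1}$ is a $B$-move. If $m_{i+1}$ is a P-move of $B$, then in $\proj{s}{A,B}$, which is a play hence P-visible, $m_{i+1}$ is a P-move of $(A,B)$ whose justifier lies in the P-view of the prefix of $\proj{s}{A,B}$ up to $m_{i+1}$; and $m_i$, being the $(A,B)$-move immediately preceding $m_{i+1}$ and an O-move of $(A,B)$, also lies in that P-view. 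Its first move points to $\star$ in $\proj{s}{A,B}$, hence is an initial $B$-move $b_0$, so by the fact above $m_i$ and $m_{i+1}$ are both hereditarily justified by $b_0$ along $\vdash_B$-pointers; since these pointers agree in $s$ and in $\proj{s}{A,B}$, both moves lie in the thread of $b_0$, i.e.\ in the component $(B,\phi)$ with $\phi$ the initial $C$-move justifying $b_0$ in $s$. If instead $m_{i+1}$ is an O-move of $B$, the symmetric argument runs in $\proj{s}{B,C}$, where $m_{i+1}$ is now a P-move of $(B,C)$: $m_i$ and $m_{i+1}$ both lie in the P-view of the prefix of $\proj{s}{B,C}$ up to $m_{i+1}$, whose first move points to $\star$ in $\proj{s}{B,C}$ and so is an initial $C$-move $\phi$, and every move of that P-view is hereditarily justified by $\phi$, so $m_i$ and $m_{i+1}$ are both in the component $(B,\phi)$.

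The hard part is the component part, and inside it the sub-case $m_{i+1}\in\OMoves[B]$: here $m_{i+1}$ is an \emph{O}-move of $(A,B)$, so its justifier is not controlled by P-visibility of $\proj{s}{A,B}$ and one is forced to argue in $\proj{s}{B,C}$ instead; one must also be careful that ``component'' means the copy of $B$ under a hereditarily-justifying initial $C$-move rather than under an initial $B$-move, since a P-view over $B\Rightarrow C$ can contain several initial $B$-moves but only one initial $C$-move. The remaining points --- which projection a given $B$-move is a P-move of, and the coincidence of the $\vdash_B$-chains in $s$ and in its projections --- are routine bookkeeping.
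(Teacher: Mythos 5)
Your proof is correct, and it does strictly more than the paper does. For the ownership half, your reading of the switching automaton is exactly the paper's argument: the proof of Lemma~\ref{lem:interaction-satisfy-switching} merely observes that each state of the automaton is determined by the O/P-alternation of the \( (A,B) \)- and \( (B,C) \)-projections, which is precisely your case split on whether \( m_i\in\OMoves[B] \) (landing in \( \mathit{POP} \), so \( m_{i+1}\in\PMoves[B]+\OMoves[A] \)) or \( m_i\in\PMoves[B] \) (landing in \( \mathit{OPP} \), so \( m_{i+1}\in\OMoves[B]+\PMoves[C] \)). The component half of the statement is given no proof anywhere in the paper, so there is nothing to compare against there; your argument for it is sound. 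The two ingredients you use --- that an O-move immediately followed by a P-move in a P-visible play lies, together with that P-move and its justifier, in the P-view at that point, and that every move of a nonempty P-view is hereditarily justified by the first move of that P-view, which is its unique move pointing to \( \star \) --- are both correct (the second follows from P-visibility for the P-move clause of the P-view definition and from the justifier clause for O-moves), and they do pin \( m_i \) and \( m_{i+1} \) to the thread of a single initial occurrence: a common initial \( B \)-move occurrence in the sub-case \( m_{i+1}\in\PMoves[B] \) (working in \( \proj{s}{A,B} \)), and a common initial \( C \)-move occurrence in the sub-case \( m_{i+1}\in\OMoves[B] \) (working in \( \proj{s}{B,C} \)). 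Splitting the two sub-cases between the two projections is exactly right, since P-visibility constrains the justifier of \( m_{i+1} \) only in the projection where \( m_{i+1} \) is a P-move.

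One caveat, which you already flag yourself: the paper never defines ``the component \( (B,\phi) \)'', and your proof only goes through if \( \phi \) is the initial \( C \)-move occurrence hereditarily justifying the move. That is in fact the only reading under which the lemma is true: in \( \proj{s}{B,C} \) a P-move of \( B \) may be immediately followed by a \emph{fresh} initial \( B \)-move justified by the same initial \( C \)-move (as in the interaction of \( \lambda g.\,g\,(g\,z) \) with its argument), so ``same copy of \( B \)'' in the sense of a common initial \( B \)-move occurrence would be false. With that reading fixed, your proof is complete.
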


\section{Categorical structure of interaction sequences}
\[
  \phi : \mathbb{I}_{A, B, C \times C'} \cong \mathbb{I}_{A,B,C} \times \mathbb{I}_{A, B, C'} : \psi
\]
where
\[
  \phi : s \mapsto (s{\upharpoonright_{C}}, s{\upharpoonright_{C'}})
\]
and
\[
  \psi : (s, s') \mapsto s\,s'
\]
on objects.  They satisfy
\[
  \phi \circ \psi = \mathrm{id} : \mathbb{I}_{A, B, C} \times \mathbb{I}_{A, B, C'} \longrightarrow \mathbb{I}_{A, B, C} \times \mathbb{I}_{A, B, C'}.
\]

Let \( F \in \Sheaf_{\mathbb{C}}(\mathbb{P}_{A, B}) \).  There exists a natural isomorphism between
\[
  \mathbb{P}_{A, B} \times \mathbb{P}_{A, B} \stackrel{F \times F}{\longrightarrow} \mathbb{C} \times \mathbb{C} \stackrel{\times}{\longrightarrow} \mathbb{C}
\]
and
\[
  \mathbb{P}_{A, B} \times \mathbb{P}_{A, B} \stackrel{\textit{concat}}{\longrightarrow} \mathbb{P}_{A, B} \stackrel{F}{\longrightarrow} \mathbb{C}
\]
where \( \textit{concat}: \mathbb{P}_{A, B} \times \mathbb{P}_{A, B} \to \mathbb{P}_{A, B} : (s, s') \mapsto s\,s' \).  This natural isomorphism is a consequence of \( F \) being a sheaf.

\section{\( \lambda^{\wedge} \) with permutation}
\label{apx:kfoury-calculus}
In \( \lambda^{\wedge} \)~\cite{Kfoury2000}, each occurrence of a variable is implicitly indexed by a natural number by the left-to-right manner, e.g.~\( (\lam x. x\,\langle y \rangle)\,\langle y \rangle \) is regarded as \( (\lam x. x_1\,\langle y_1 \rangle)\,\langle y_2 \rangle \).  This implicit indexing is problematic since the index for an occurrence may be changed during normalisation.  For example,
\[
  (\lam x. x_1\,\langle y_1 \rangle) \langle y_2 \rangle \red y_2\,\langle y_1 \rangle
\]
when we explicitly write the indexes, but the right-hand-side violates our implicit indexing strategy and thus it must be re-indexed as
\[
  y_1\,\langle y_2 \rangle.
\]
This implicit re-indexing causes the failure of the type preservation and of the confluence, as we have seen.

In this section, we study an extension of \( \lambda^{\wedge} \), in which occurrences of variables are explicitly indexed.  The calculus enjoys the type preservation and the confluence, as expected.  However, unlike \( \lambda^{\wedge} \), a computation may have more than one representation (see Remark~\ref{rem:lam-perm-redundant}).  This is why we did not use it in the body.

We shall use this calculus as a tool to prove the associativity of composition of the sheaves-over-terms model.  However the calculus itself can be of an independent interest.  In fact, the re-indexing phenomenon and related problems can be found in ``rigid'' intersection type systems \tk{cite--I will fill} as well as AJM's game model \tk{cite--I will fill.}.  Our approach is influenced by the work by Melli{\'e}s of orbital games \tk{cite--I will fill.}.

\subsection{Syntax and types}
The syntax of the calculus is given by:
\[
\begin{array}{llcl}
  \textit{Term}\quad&
  M &{}::={}& x \mid \lambda^{\sigma} x. M \mid M\,U
  \\
  \textit{List}\quad&
  U &{}::={}& \langle M_1, \dots, M_n \rangle \textrm{\qquad(where \( n \ge 0 \))}.
\end{array}
\]
where \( \sigma \) is a permutation of \( [1;n] \) for some \( n \).  This permutation is the only difference from \( \lambda^{\wedge} \).  We expect that \( M \) of \( \lam^{\sigma} x. M \) has \( n \) occurrences of \( x \) if \( \sigma \) is a permutation of \( [1;n] \).  The calculus \( \lambda^{\wedge} \) is a fragment of this calculus, in which all permutations in a term are identities.  \tk{TODO: check.  In particular, the reduction rules.}

We introduce an IMLL type assignment system.  A \emph{type environment} is a finite partial function from variables to tensor types, represented by a finite sequence.  The typing rules are listed as follows.
\infrule{
  \mathstrut
}{
  x : A \vdash x : A
}
\infrule{
  \Gamma_1 \vdash M : A^! \multimap B
  \qquad
  \Gamma_2 \vdash T : A^!
}{
  \Gamma_1 \wedge \Gamma_2 \vdash M\,T : B
}
\infrule{
  \Gamma, x : A_{\sigma(1)} \otimes \dots \otimes A_{\sigma(n)} \vdash M : B
}{
  \Gamma \vdash \lam^{\sigma} x. M : A_1 \otimes \dots \otimes A_n \multimap B
}
\infrule{
  \forall i \in [1;n].\ \Gamma_i \vdash M_i : A_i
}{
  \Gamma_1 \otimes \dots \otimes \Gamma_n \vdash \langle M_1, \dots, M_n \rangle : A_1 \otimes \dots \otimes A_n
}
In the last rule, the tensor of type environments is defined by point-wise tensor.

A term \( M \) is \emph{\( \eta \)-long} (or \emph{in \( \eta \)-long form}) if every application is fully applied, i.e.~for every subterm of the form \( \lam x. M_0\,U_1\,\dots\,U_n \) (\( n \ge 0 \)), the simple-type for \( M_0\,U_1\,\dots\,U_n \) is \( \T \).  Note that this notion depends on the type for \( M \): \( \lambda x.\,y \) as a term of type \( \T \multimap \T \) is \( \eta \)-long but it as a term of type \( \T \multimap \T \multimap \T \) is not.  Hereafter, for the sake of simplicity, we implicitly assume that terms are in \( \eta \)-long form.

\subsection{Proponent and Opponent actions}

\infrule{
  \mathstrut
}{
  () \rhd \T \subty \T
}
\infrule{
  \varphi \rhd A^!_1 \subty B^!_1
  \qquad
  \vartheta \rhd A_2 \subty B_2
}{
  (\varphi, \vartheta) \rhd (A^!_1 \multimap A_2) \subty (B^!_1 \multimap B_2)
}
\infrule{
  \varphi : [1;n] \stackrel{\mathrm{injection}}{\longrightarrow} [1;m]
  \qquad
  \forall i \in [1;n].\ \vartheta_i \rhd A_i \subty B_{\varphi(i)}
}{
  (\varphi, (\vartheta_i)_{i \in [1;n]}) \rhd (A_1 \otimes \dots \otimes A_n) \subty (B_1 \otimes \dots \otimes B_m)
}
\infrule{
  \mathstrut
}{
  () \rhd \T \osubty \T
}
\infrule{
  \varphi \rhd A^!_1 \psubty B^!_1
  \qquad
  \vartheta \rhd A_2 \osubty B_2
}{
  (\varphi, \vartheta) \rhd (A^!_1 \multimap A_2) \osubty (B^!_1 \multimap B_2)
}
\infrule{
  \varphi : [1;n] \stackrel{\mathrm{injection}}{\longrightarrow} [1;m]
  \qquad
  \forall i \in [1;n].\ \vartheta_i \rhd A_i \osubty B_{\varphi(i)}
}{
  (\varphi, (\vartheta_i)_{i \in [1;n]}) \rhd (A_1 \otimes \dots \otimes A_n) \osubty (B_1 \otimes \dots \otimes B_m)
}
\infrule{
  \mathstrut
}{
  () \rhd \T \psubty \T
}
\infrule{
  \varphi \rhd A^!_1 \osubty B^!_1
  \qquad
  \vartheta \rhd A_2 \psubty B_2
}{
  (\varphi, \vartheta) \rhd (A^!_1 \multimap A_2) \psubty (B^!_1 \multimap B_2)
}
\infrule{
  \forall i \in [1;n].\ \vartheta_i \rhd A_i \psubty B_{i}
}{
  (\mathrm{id}, (\vartheta_i)_{i \in [1;n]}) \rhd (A_1 \otimes \dots \otimes A_n) \psubty (B_1 \otimes \dots \otimes B_n)
}
Assume \( \varphi \rhd A^{(!)} \subty B^{(!)} \).  We write \( \varphi \in \OO \) when \( \varphi \rhd A^{(!)} \osubty B^{(!)} \) and write \( \varphi \in \PP \) when \( \varphi \rhd A^{(!)} \psubty B^{(!)} \).

Prime types and embedding relations (with witnesses) form a category.  The composition of \( \varphi \rhd A \subty A' \) and \( \varphi' \rhd A' \subty A'' \) is \( \varphi' \circ \varphi \rhd A \subty A'' \) defined by: \( () \circ () := () \), \( (\varphi', \vartheta') \circ (\varphi, \vartheta) := (\varphi' \circ \varphi,\ \vartheta' \circ \vartheta) \).  The composition of \( \varphi \rhd A^!_1 \subty A^!_2 \) and \( \varphi' \rhd A^!_2 \subty A^!_3 \) is defined by
\[
  (\varphi', (\vartheta_i)_{i \in [1;m]}) \circ (\varphi, (\vartheta_i)_{i \in [1;n]}) = (\varphi' \circ \varphi, (\vartheta'_{\varphi(i)} \circ \vartheta_i)_{i \in [1;n]}).
\]
Here \( \varphi' \circ \varphi \) is the standard composition of functions.

\begin{lemma}
The IMLL types and embeddings form a category.  The same goes for \( \OO \)-embeddings and for \( \PP \)-embeddings.
\end{lemma}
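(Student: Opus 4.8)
\emph{Proof strategy.} The plan is to verify the three ingredients of a category --- identities, well-definedness of the composition defined above, and associativity together with the unit laws --- by a single simultaneous induction on the structure of IMLL types, carried out at once for $\subty$, $\osubty$ and $\psubty$ (and for their restrictions to $A^!$-types). Treating all three relations at once is forced, because the arrow clauses interchange the $\OO$- and $\PP$-polarities on the $A^!$-component, so the inductive hypothesis for one relation is precisely what handles another.

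First I would produce the identities. By induction on a type $A$, define $\ident_A \rhd A \subty A$ by $\ident_{\T} := ()$, $\ident_{A^! \multimap B} := (\ident_{A^!}, \ident_B)$ and $\ident_{A_1 \otimes \dots \otimes A_n} := (\ident_{[1;n]},\, (\ident_{A_i})_{i \in [1;n]})$, with the analogous clause one level up for $A^!$-types. A routine induction then shows each $\ident_A$ is simultaneously a $\subty$-, an $\osubty$- and a $\psubty$-embedding, with underlying identity injection and identity witness components; the polarity swap in the arrow clause does no harm precisely because the hypothesis supplies $\ident$ in all three forms.

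Next I would show that composition is well-defined: if $\varphi \rhd A \subty A'$ and $\varphi' \rhd A' \subty A''$, then $\varphi' \circ \varphi$, as defined in the statement, is a derivable embedding $A \subty A''$, and likewise with every label replaced consistently by $\OO$ or by $\PP$ (again swapping the polarity on $A^!$ in the arrow clause). One inducts on the shared type: the case $\T$ is immediate; the arrow case is componentwise, invoking the hypothesis on the codomain in the ambient polarity and on $A^!$ in the opposite one; the tensor case uses that a composite of injections $[1;n] \to [1;m] \to [1;k]$ is an injection and that $\vartheta'_{\varphi(i)} \circ \vartheta_i \rhd A_i \subty B''_{\varphi'(\varphi(i))}$ by the hypothesis, plus --- in the $\psubty$ subcase --- that the forced identity injections compose to the identity injection.

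Finally, associativity and the unit laws fall out of the same induction. The unit laws use that $\ident$ has identity underlying injection, so the reindexing-by-$\varphi$ built into the composition formula collapses. For associativity the only content beyond the hypothesis is, at the $A^!$-tensor level, the reindexing identity
\[
  \vartheta^3_{\varphi_2(\varphi_1(i))} \circ (\vartheta^2_{\varphi_1(i)} \circ \vartheta^1_i)
  \;=\;
  (\vartheta^3_{\varphi_2(\varphi_1(i))} \circ \vartheta^2_{\varphi_1(i)}) \circ \vartheta^1_i
\]
for a composable triple of $A^!$-tensor embeddings whose underlying injections are $\varphi_1, \varphi_2, \varphi_3$ and whose witness families are $(\vartheta^1_i)_i, (\vartheta^2_j)_j, (\vartheta^3_r)_r$, together with associativity of composition of injections $\varphi_3 \circ (\varphi_2 \circ \varphi_1) = (\varphi_3 \circ \varphi_2) \circ \varphi_1$; both are the inductive hypothesis or standard facts about functions. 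I expect the only real obstacle to be exactly this bookkeeping in the $A^!$-tensor clause: checking that the substitution $(\vartheta_j)_j \mapsto (\vartheta_{\varphi(i)})_i$ performed by composition nests correctly under a second composition, and that the $\OO$/$\PP$ labels remain synchronised through the arrow clause over the whole simultaneous induction. This is conceptually straightforward but is the place where a slip would go unnoticed.
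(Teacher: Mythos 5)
Your proposal is correct and follows essentially the same route as the paper: structural induction on types, with the only nontrivial content in the tensor clause (the reindexing $\vartheta'_{\varphi(i)} \circ \vartheta_i$ for well-definedness and the displayed reindexing identity for associativity), identities defined by a parallel induction, and the observation that composition preserves the $\OO$/$\PP$ labels. If anything you are slightly more explicit than the paper about why the induction must be carried out simultaneously for $\subty$, $\osubty$ and $\psubty$ (the polarity swap in the arrow clause) and about checking the unit laws, which the paper leaves as "easy to see".
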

\begin{proof}
We first show by induction on types that \( \varphi \rhd A \subty A' \) and \( \varphi' \rhd A' \subty A'' \) implies \( (\varphi' \circ \varphi) \rhd A \subty A'' \) (and the same statement for tensors).  The only nontrivial case is for tensors.  Assume that
\begin{itemize}
\item \( (\varphi, (\vartheta_i)_{i \in [1;n]}) \rhd A_1 \otimes \dots \otimes A_n \subty A'_1 \otimes \dots \otimes A'_n \), and
\item \( (\varphi', (\vartheta'_j)_{j \in [1;n]}) \rhd A'_1 \otimes \dots \otimes A'_n \subty A''_1 \otimes \dots \otimes A''_n \).
\end{itemize}
By the first assumption, we have \( \vartheta_i \rhd A_i \subty A'_{\varphi(i)} \).  By the second assumption, we have \( \vartheta'_j \rhd A'_j \subty A''_{\varphi'(j)} \) and thus \( \vartheta'_{\varphi(i)} \rhd A'_{\varphi(i)} \subty A''_{\varphi'(\varphi(i))} \).  By the induction hypothesis, we have \( (\vartheta'_{\varphi(i)} \circ \vartheta_i) \rhd A_i \subty A''_{\varphi'(\varphi(i))} \).  Since this holds for every \( i \in [1;n] \), we have \( (\varphi' \circ \varphi, (\vartheta'_{\varphi(i)} \circ \vartheta_i)_{i \in [1;n]}) \rhd A_1 \otimes \dots \otimes A_n \subty A''_1 \otimes \dots \otimes A''_n \).

It is easy to see that the composition of \( \OO \)-embeddings is also an \( \OO \)-embedding.  The same goes for \( \PP \)-embeddings.  The identity for each type is defined by induction on types.  It is easy to verify that the identities belong both to \( \OO \)-embeddings and to \( \PP \)-embeddings.  (In fact, nothing but identities satisfies this condition.)

We prove associativity by induction on the structure of types.  The non-trivial case is only of \( A^! \).  Assume that
\begin{itemize}
\item \( \psi^1 = (\varphi^1, (\vartheta^1_i)_{i \in [1;n]}) \rhd A_1 \otimes \dots \otimes A_n \subty B_1 \otimes \dots \otimes B_n \).
\item \( \psi^2 = (\varphi^2, (\vartheta^2_i)_{i \in [1;n]}) \rhd B_1 \otimes \dots \otimes B_n \subty C_1 \otimes \dots \otimes C_n \).
\item \( \psi^3 = (\varphi^3, (\vartheta^3_i)_{i \in [1;n]}) \rhd C_1 \otimes \dots \otimes C_n \subty D_1 \otimes \dots \otimes D_n \).
\end{itemize}
Then
\[
  \psi^2 \circ \psi^1 = (\varphi^2 \circ \varphi^1, (\vartheta^2_{\varphi^1(i)} \circ \vartheta^1_i)_{i \in [1;n]}
\]
and thus
\[
  \psi^3 \circ (\psi^2 \circ \psi^1) = (\varphi^3 \circ \varphi^2 \circ \varphi^1, (\vartheta^3_{\varphi^2(\varphi^1(i))} \circ (\vartheta^2_{\varphi^1(i)} \circ \vartheta^1_i))_{i \in [1;n]}).
\]
Similarly we have
\[
  \psi^3 \circ \psi^2 = (\varphi^3 \circ \varphi^2, (\vartheta^3_{\varphi^2(j)} \circ \vartheta^2_j)_{j \in [1;n]})
\]
and thus
\[
  (\psi^3 \circ \psi^2) \circ \psi^1 = (\varphi^3 \circ \varphi^2 \circ \varphi^1, ((\vartheta^3_{\varphi^2(\varphi^1(i))} \circ \vartheta^2_{\varphi^1(i)}) \circ \vartheta^1_i)_{i \in [1;n]}).
\]
By the induction hypothesis, we have
\[
  \vartheta^3_{\varphi^2(\varphi^1(i))} \circ (\vartheta^2_{\varphi^1(i)} \circ \vartheta^1_i) = (\vartheta^3_{\varphi^2(\varphi^1(i))} \circ \vartheta^2_{\varphi^1(i)}) \circ \vartheta^1_i
\]
for every \( i \in [1;n] \).  So \( \psi^3 \circ (\psi^2 \circ \psi^1) = (\psi^3 \circ \psi^2) \circ \psi^1 \) as expected.
\end{proof}

\begin{lemma}
Every morphism \( \varphi \rhd A \subty B \) is uniquely factorised as \( \varphi = \vartheta^\OO \circ \vartheta^\PP \) (where \( \vartheta^{\OO} \in \OO \) and \( \vartheta^{\PP} \in \PP \)) and uniquely factorised as \( \varphi = \psi^\PP \circ \psi^\OO \) (where \( \psi^{\PP} \in \PP \) and \( \psi^{\OO} \in \OO \)).  The same statement holds for \( \varphi \rhd A^! \subty B^! \).
\end{lemma}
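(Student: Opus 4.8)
The plan is to establish all four factorisation claims simultaneously---existence and uniqueness of the decomposition $\varphi = \vartheta^{\OO} \circ \vartheta^{\PP}$ and of the decomposition $\varphi = \psi^{\PP} \circ \psi^{\OO}$, for prime-type embeddings $A \subty B$ and for tensor-type embeddings $A^! \subty B^!$ alike---by one induction on the structure of the types. Treating the two orders and the two kinds of type together is forced by the contravariance of $\multimap$ in its argument: in the rule yielding $(\varphi,\vartheta) \rhd (A^!_1 \multimap A_2) \subty (B^!_1 \multimap B_2)$ the witness $\varphi$ lives over $A^!_1 \subty B^!_1$ but with the $\OO$ and $\PP$ roles swapped, so splitting $(\varphi,\vartheta)$ in one order will call for splitting $\vartheta$ over the smaller prime type $A_2 \subty B_2$ in that same order while splitting $\varphi$ over the smaller tensor type $A^!_1 \subty B^!_1$ in the opposite order.

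First I would dispose of the base type: the only witness of $\T \subty \T$ is $()$, which is at once an $\OO$- and a $\PP$-embedding, so taking the missing factor to be $()$ gives existence, and uniqueness is immediate from $() \circ () = ()$. For an arrow, given $(\varphi,\vartheta) \rhd (A^!_1 \multimap A_2) \subty (B^!_1 \multimap B_2)$, the induction hypothesis supplies $\vartheta = \vartheta^{\OO}_2 \circ \vartheta^{\PP}_2$ over $A_2 \subty B_2$ and $\varphi = \varphi^{\PP}_1 \circ \varphi^{\OO}_1$ over $A^!_1 \subty B^!_1$; using the componentwise composition rule $(\varphi',\vartheta')\circ(\varphi,\vartheta) = (\varphi'\circ\varphi,\ \vartheta'\circ\vartheta)$ on arrow witnesses, together with the fact that the $\OO$-arrow rule pairs a $\PP$-witness on the argument with an $\OO$-witness on the result while the $\PP$-arrow rule does the opposite, one checks that $(\varphi^{\PP}_1,\vartheta^{\OO}_2)$ is an $\OO$-embedding, $(\varphi^{\OO}_1,\vartheta^{\PP}_2)$ is a $\PP$-embedding, and their composite is $(\varphi,\vartheta)$; the other order is symmetric. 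Uniqueness at an arrow follows because any admissible decomposition must have this componentwise shape, whereupon its parts are pinned down by the uniqueness clauses of the induction hypothesis.

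The tensor case is where the substance lies. A $\PP$-embedding of tensors uses the identity permutation and preserves length, whereas an $\OO$-embedding carries an arbitrary injection $\varphi:[1;n]\to[1;m]$, so the two orders behave quite differently. For the decomposition $\varphi = \vartheta^{\OO}\circ\vartheta^{\PP}$ of $(\varphi,(\vartheta_i)_{i\in[1;n]}) \rhd A_1\otimes\dots\otimes A_n \subty B_1\otimes\dots\otimes B_m$ everything is clean: the intermediate tensor is forced to have length $n$, the whole injection $\varphi$ is absorbed into the $\OO$-part, and each component is produced by the induction hypothesis applied to $\vartheta_i \rhd A_i \subty B_{\varphi(i)}$; since every index $i$ is treated, existence and uniqueness both go through. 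The decomposition $\varphi = \psi^{\PP}\circ\psi^{\OO}$ is the delicate one, and I expect it to be the main obstacle: now the intermediate tensor has length $m$, the indices in the image of $\varphi$ are again handled by the induction hypothesis, but at an index $j$ outside that image the data $(\varphi,(\vartheta_i))$ places no constraint whatsoever on the corresponding intermediate component, so existence is obtained by choosing that component to be $B_j$ with the identity $\PP$-embedding. Establishing \emph{uniqueness} there is the crux; I would settle it first, either by an auxiliary lemma characterising the $\PP$-embeddings into a fixed prime type so that the identity is the only choice compatible with a maximal-$\OO$-part normalisation of the decomposition, or---more plausibly---by restricting attention to the embeddings that actually arise in the calculus, where the linearity of the IMLL type discipline forces every tensor injection to be a bijection and the out-of-image indices simply never occur. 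The remainder is routine bookkeeping with the composition formulas.
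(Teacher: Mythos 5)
Your plan follows the paper's proof essentially step for step: structural induction handling both orders and both prime and tensor types simultaneously, the componentwise role-swapping decomposition at $\multimap$, and, in the tensor case, absorbing the injection entirely into the $\OO$-factor for the $\vartheta^{\OO}\circ\vartheta^{\PP}$ order. Your worry about uniqueness of the $\psi^{\PP}\circ\psi^{\OO}$ order at indices outside the image of the injection is well-founded, and the paper resolves it exactly as you conjecture in your second alternative: its proof silently takes $B^{!}=B_1\otimes\dots\otimes B_n$ of the same length as $A^{!}$ and uses $\varphi_0^{-1}$, i.e.\ it restricts to the case where the tensor injection is a bijection, without which the intermediate components $C_j$ for $j\notin\codom(\varphi_0)$ are indeed unconstrained and uniqueness fails. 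You are, if anything, more careful than the paper on this point.
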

\begin{proof}
By induction on the structure of \( A \).  The case \( A = \T \) is trivial.

Assume that \( A = A^!_1 \multimap A_2 \).  Then \( B = B^!_1 \multimap B_2 \) and \( \varphi = (\varphi_1, \varphi_2) \) with \( \varphi_1 \rhd A^!_1 \subty B^!_1 \) and \( \varphi_2 \rhd A_2 \subty B_2 \).  By the induction hypothesis, we have \( \varphi_1 = \varphi^\PP_1 \circ \varphi^\OO_1 \) and \( \varphi_2 = \varphi^\OO_2 \circ \varphi^\PP_2 \).  Let \( \varphi^{\OO} = (\varphi^\PP_1, \varphi^\OO_2) \) and \( \varphi^{\PP} = (\varphi^\OO_1, \varphi^\PP_2) \).  Then \( \varphi = \varphi^\OO \circ \varphi^\PP \), \( \varphi^{\OO} \in \OO \) and \( \varphi^{\PP} \in \PP \), as desired.  The other factorisation can be proved by the same way.

Consider the case that \( A^! = A_1 \otimes \dots \otimes A_n \).  Then \( B^! = B_1 \otimes \dots \otimes B_n \) and \( \varphi = (\varphi_0, (\vartheta_i)_{i \in [1;n]}) \) with \( \vartheta_i \rhd A_i \subty B_{\varphi_0(i)} \) for every \( i \in [1;n] \).  By the induction hypothesis, \( \vartheta_i \) can be decomposed as \( \vartheta^\OO_i \circ \vartheta^\PP_i \) for each \( i \).  Let \( C_i \) be the intermediate type, i.e.~\( \vartheta^\PP_i \rhd A_i \psubty C_{i} \) and \( \vartheta^\OO_i \rhd C_{i} \osubty B_{\varphi_0(i)} \).  Hence we have
\begin{itemize}
\item \( (\ident, (\vartheta^{\PP}_i)_{i \in [1;n]}) \rhd A_1 \otimes \dots \otimes A_n \psubty C_1 \otimes \dots \otimes C_n \), and
\item \( (\varphi_0, (\vartheta^{\OO}_i)_{i \in [1;n]}) \rhd C_1 \otimes \dots \otimes C_n \osubty A_1 \otimes \dots \otimes A_n \).
\end{itemize}
It is easy to check that
\[
  \varphi = (\varphi_0, (\vartheta^\OO_i)_{i \in [1;n]}) \circ (\ident, (\vartheta^\PP_i)_{i \in [1;n]}).
\]
For the uniqueness, assume the another factorisation
\[
  \varphi = (\psi^\OO_0, (\psi^\OO_i)_{i \in [1;n]}) \circ (\psi^\PP_0, (\psi^\PP_i)_{i \in [1;n]}).
\]
Since \( (\psi^\PP_0, (\psi^\PP_i)_{i \in [1;n]}) \in \PP \), we have \( \psi^\PP_0 = \ident \) by definition, which implies \( \psi^\OO_0 = \varphi_0 \).  So it suffices to show that \( \vartheta^\OO_i = \psi^\OO_i \) and \( \vartheta^\PP_i = \psi^\PP_i \) for every \( i \), which follow form the induction hypothesis.

We construct a \( \PP \)-\( \OO \) decomposition.  By the induction hypothesis, \( \vartheta_i \) can be decomposed as \( \vartheta^{\PP}_i \circ \vartheta^{\OO}_i \) for each \( i \).  Let \( C_i \) be the intermediate type, i.e.~\( \vartheta^\OO_i \rhd A_i \osubty C_{\varphi_0(i)} \) and \( \vartheta^\PP_i \rhd C_{\varphi_0(i)} \psubty B_{\varphi_0(i)} \), which implies that \( \vartheta^{\PP}_{\varphi_0^{-1}(i)} \rhd C_i \psubty B_i \).  Hence we have
\begin{itemize}
\item \( (\varphi_0, (\vartheta^\OO_i)_{i \in [1;n]}) \rhd A_1 \otimes \dots \otimes A_n \osubty C_1 \otimes \dots \otimes C_n \), and
\item \( (\ident, (\vartheta^{\PP}_{\varphi_0^{-1}(i)})_{i \in [1;n]}) \rhd C_1 \otimes \dots \otimes C_n \psubty B_1 \otimes \dots \otimes B_n \).
\end{itemize}
Their composition is
\begin{align*}
   & (\ident, (\vartheta^{\PP}_{\varphi_0^{-1}(i)})_{i \in [1;n]}) \circ (\varphi_0, (\vartheta^\OO_i)_{i \in [1;n]}) \\
  =& (\varphi_0, (\vartheta^{\PP}_{\varphi_0(\varphi_0^{-1}(i))} \circ \vartheta^{\OO}_i)_{i \in [1;n]} \\
  =& (\varphi_0, (\vartheta^{\PP}_i \circ \vartheta^\OO_i)_{i \in [1;n]} \\
  =& (\varphi_0, (\vartheta_i)_{i \in [1;n]}) \\
  =& \varphi
\end{align*}
as desired.  Uniqueness is proved by the same way as above.
\end{proof}

\paragraph{Action}
Assume a typed term \( \Gamma \vdash M : A \).  Here we study the connection between embedding of types and embedding of terms.

\begin{definition}
Suppose that \( \Gamma \vdash M : A \) and \( (\Phi, \varphi) \rhd (\Gamma, A) \peqty (\Delta, B) \) \tk{TODO: define}.  Further we require that \( \Phi(x) = (\ident, \dots) \) for every \( x \).  We define the term \( (\Phi, \varphi)(M) \) by:
\begin{itemize}
\item Case \( M = \lam^{\sigma} x. N \): Then \( A = A^!_1 \multimap A_2 \) and \( \varphi = (\varphi_1, \varphi_2) \).  Further \( \varphi_1 = (\vartheta_0, (\vartheta_i)_{i \in [1;n]}) \).  Then we define
\[
  (\Phi, \varphi)(M) := \lambda^{\vartheta_0 \circ \sigma} x. (\Phi[x \mapsto (\ident, (\vartheta_i)_{i \in [1;n]})], \varphi_2)(N)
\]
\item Case \( M = x\,U_1\,\dots\,U_n \): \tk{Propagating the iso of \( x \) to \( U_i \)'s.}
\item Case \( M = (\lam x. N)\,U_1\,\dots\,U_n \): \tk{Do not tough \( U_i \)'s.}
\end{itemize}
For lists and tensors, we define \( (\Phi, \varphi)(\Gamma \vdash U : A^!) \) by:
\begin{itemize}
\item Let \( A^! = A_1 \otimes \dots \otimes A_n \).  Then \( \Gamma = \Gamma_1 \otimes \dots \otimes \Gamma_n \) and \( \varphi = (\ident, (\vartheta_i)_{i \in [1;n]}) \) (since \( \varphi \in \PP \)) and \( U = \langle N_1, \dots, N_n \rangle \) with \( \Gamma_i \vdash N_i : A_i \) for every \( i \).  Since \( \Phi(x) = (\ident, ...) \) for every \( x \), we have \( \Delta = \Delta_1 \otimes \dots \otimes \Delta_n \) with \( \Phi_i \rhd \Gamma_i \osubty \Delta_i \) for every \( i \) (and \( \Phi = \Phi_1 \otimes \dots \otimes \Phi_n \)).  \tk{TODO; define}  Then we define:
\[
  (\Phi, \varphi)(U) := \langle (\Phi_1, \vartheta_1)(N_1), \dots, (\Phi_n, \vartheta_n)(N_n) \rangle.
\]
\end{itemize}
\end{definition}

\begin{lemma}
Suppose that \( \Gamma \vdash M : A \) and \( (\Phi, \varphi) \rhd (\Gamma \vdash A) \peqty (\Delta \vdash B) \) with \( \Phi(x) = (\ident, \dots) \) for every \( x \).  Then \( \Delta \vdash (\Phi, \varphi)(M) : B \).  Furthermore \( (\Phi, \varphi)(M) \) is equivalent to \( M \) expect for permutations.
\end{lemma}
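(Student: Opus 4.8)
The plan is to prove both assertions---well-typedness of \( (\Phi, \varphi)(M) \) and its equivalence with \( M \) up to permutation annotations---simultaneously, by induction on the typing derivation of \( \Gamma \vdash M : A \), following the case analysis used to define \( (\Phi, \varphi)(M) \). Since two clauses of that definition are only sketched above (the head-variable case \( M = x\,U_1\dots U_n \) and the head-redex case \( M = (\lam x. N)\,U_1\dots U_n \)), part of the work is to pin those clauses down so that the induction closes.

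First I would record the structural facts that make the recursion well founded. Because \( \varphi \rhd A \peqty B \) is a \emph{P-equivalence}, its defining rules force: at every \( \multimap \), \( \varphi \) splits as an O-equivalence on the argument type together with a P-equivalence on the result type; at every tensor in P-position the permutation component is the identity and the arity is preserved; and, since we deal with equivalences rather than mere embeddings, every index map occurring in \( \varphi \) or in \( \Phi \) is a bijection. Moreover the hypothesis ``\( \Phi(y) = (\ident, \dots) \) for every \( y \)'' is stable under the recursive calls: in the \( \lambda \)-case the fresh binding \( x \mapsto (\ident, (\vartheta_i)_{i}) \) again has identity top permutation, and in the application cases the restriction of \( \Phi \) to a sub-context again has this shape. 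Hence every recursive invocation of the action is on data still satisfying the hypotheses of the lemma, so the induction hypothesis applies.

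For \( M = \lam^{\sigma} x. N \), write \( A = A_1^! \multimap A_2 \), \( B = B_1^! \multimap B_2 \), \( \varphi = (\varphi_1, \varphi_2) \) with \( \varphi_1 = (\vartheta_0, (\vartheta_i)_i) \rhd A_1^! \osubty B_1^! \) and \( \varphi_2 \rhd A_2 \psubty B_2 \); the subderivation types \( N \) in \( \Gamma, x : A_{\sigma(1)} \otimes \dots \otimes A_{\sigma(n)} \). Applying the induction hypothesis to \( N \) with \( (\Phi[x \mapsto (\ident, (\vartheta_i)_i)], \varphi_2) \) should give \( \Delta, x : B_{(\vartheta_0\circ\sigma)(1)} \otimes \dots \otimes B_{(\vartheta_0\circ\sigma)(n)} \vdash (\Phi[\cdots], \varphi_2)(N) : B_2 \), and then the \( \lambda \)-rule with permutation annotation \( \vartheta_0 \circ \sigma \) yields \( \Delta \vdash (\Phi, \varphi)(M) : B \); the checks are that \( \vartheta_0 \circ \sigma \) is a permutation (true because \( \vartheta_0 \) is a bijection) and that the tensor type the induction hypothesis assigns to \( x \) really is \( B_{(\vartheta_0\circ\sigma)(1)} \otimes \dots \otimes B_{(\vartheta_0\circ\sigma)(n)} \), a routine but slightly delicate index chase reconciling the draft's indexing of the \( \vartheta_i \) with the \( \sigma \)-reordered slots of \( x \)'s tensor type. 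For the application cases the term has type \( \T \), so \( \varphi = () \) contributes nothing, and the real content is in the head: for \( M = x\,U_1\dots U_n \) one decomposes \( \Phi(x) \rhd C \osubty D \), by iterating the \( \multimap \)-rule for O-embeddings, into a family of P-equivalences \( \psi_j \rhd C_j^! \psubty D_j^! \) on the argument types and pushes each \( \psi_j \), together with the matching block of the linear splitting of \( \Gamma \), through the list \( U_j \); for \( M = (\lam x. N)\,U_1\dots U_n \) one recurses into \( \lam x. N \) and leaves the \( U_j \) untouched, which is legitimate because the relevant restriction of \( \Phi \) is then forced to be the identity. Typing then follows from the induction hypothesis on each transformed argument together with the application rule, and I would isolate the bookkeeping as an auxiliary lemma: an O-equivalence of a function type decomposes uniquely into P-equivalences of its argument types, and the action commutes with that decomposition and with the linear splitting of the context.

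The equivalence-up-to-permutation claim then falls out of the same induction with no extra effort: in every case the action replaces \( M \) by a term with the same outermost constructor whose immediate subterms are the actions applied to the immediate subterms of \( M \), altering only permutation annotations (the step \( \sigma \rightsquigarrow \vartheta_0 \circ \sigma \)) and the implicit occurrence indices, and never duplicating, deleting, or reordering a subterm; so erasing all permutation annotations and types from \( (\Phi, \varphi)(M) \) and from \( M \) gives the same raw term. I expect the main obstacle to be the two application cases rather than the \( \lambda \)-case: there the definition is only sketched in the draft, and getting it right requires the auxiliary decomposition lemma for O-equivalences of function types together with a careful account of how \( \Phi \) distributes over the linear splitting of \( \Gamma \); the \( \lambda \)-case, by contrast, is routine once the index chase is carried out.
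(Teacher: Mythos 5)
The paper does not actually prove this lemma: it sits in an unfinished draft appendix (guarded by a false conditional, so it is not even compiled), is stated without a proof environment, and depends on definitions the paper itself leaves as TODOs --- the action \( (\Phi,\varphi)(M) \) is only fully written out for the \( \lambda \)-case, the relation \( \peqty \) is never given inference rules (only \( \psubty \) is), and the \( \wedge \) on type environments used in the application rule is undefined. So there is no authorial argument to compare yours against. Your proposal takes the only reasonable route --- a simultaneous induction on the typing derivation following the case analysis of the action --- and as far as the available definitions allow one to check, it is sound: you correctly observe that the side condition \( \Phi(x) = (\ident,\dots) \) is preserved under restriction to contiguous tensor blocks and under the extension \( x \mapsto (\ident,(\vartheta_i)_i) \), you correctly identify the bijectivity of the index maps (needed so that \( \vartheta_0 \circ \sigma \) is again a permutation) as coming from \( \peqty \) being an equivalence rather than a mere embedding, and you rightly flag the indexing mismatch in the \( \lambda \)-clause (the recursive binding should carry \( (\vartheta_{\sigma(i)})_i \) to match the \( \sigma \)-reordered slots of \( x \)'s tensor type). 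The genuine content of your write-up is the part the paper omits entirely: the decomposition of an O-equivalence of a function type into P-equivalences of its argument types, and the distribution of \( \Phi \) over the linear splitting of \( \Gamma \) in the two head cases. Those auxiliary facts do need to be stated and proved for the induction to close, so if you carry this out, spell them out as separate lemmas rather than leaving them at the level of a sketch, since they are precisely where the paper's own development stops.
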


\subsection{Substitution and reduction}
The definition of substitution is the same as that for \( \lambda^{\wedge} \).  The reduction is defined by the rule
\infrule{
  \Subst{M}{x \mapsto \langle N_{\sigma(1)}, \dots, N_{\sigma(n)} \rangle}{L}
}{
  (\lambda^{\sigma} x. M)\,\langle N_1, \dots, N_n \rangle \red L
}
and the congruence rules (but they are complecated!).

\begin{lemma}

\end{lemma}

\begin{remark}\label{rem:lam-perm-redundant}
Since the indexes for abstractions removed by the reduction cannot affect the final result, we can freely permute them.
\[
  (\lam^{[1 \mapsto 1,\, 2 \mapsto 2]} x. x\,\langle x \rangle)\,\langle a, b \rangle
\]
and
\[
  (\lam^{[1 \mapsto 2,\, 2 \mapsto 1]} x. x\,\langle x \rangle)\,\langle b, a \rangle
\]
describe the same computation.  The number of redundancy can be calculated by the formula in \cite{aaaaa}.
\end{remark}

\subsection{Simple-type restriction}
We first consider a simple-type assignment system, essentially the same as that for \( \lambda^{\wedge} \).
\infrule{
  x :: \kappa \in \Gamma
}{
  \Gamma \vdash x :: \kappa
}
\infrule{
  \Gamma, x :: \kappa \vdash M :: \delta
}{
  \Gamma \vdash \lam^{\sigma} x. M :: \kappa \to \delta
}
\infrule{
  \Gamma \vdash M :: \kappa \to \delta
  \qquad
  \Gamma \vdash U :: \kappa
}{
  \Gamma \vdash M\,U :: \delta
}
\infrule{
  \forall k \in [1;n].\ \Gamma \vdash M_i :: \kappa
}{
  \Gamma \vdash \langle M_1, \dots, M_n \rangle :: \kappa
}
\infrule{
  \mathstrut
}{
  \T :: \T
}
\infrule{
  A^! :: \delta
  \qquad
  B :: \delta'
}{
  A^! \multimap B \;{}::{}\; \delta \to \delta'
}
\infrule{
  \forall i \in [1;n].\ A_i :: \delta
}{
  A_1 \otimes \dots \otimes A_n :: \delta
}

\subsection{Innocent strategies and functors over views}
In innocent game models for deterministic calculi (such as \cite{HylandO00}), one often considers the restriction of strategies to P-views.  A remarkable property is that an innocent strategy (\emph{qua} set of plays) is completely determined by the subset of P-views it contains.  After all, innocence means \emph{view dependence}. 

In this subsection, we shall see that a similar property holds for sheaves over plays \( \CPlay{A, B} \).  This property comes from the topological structure of plays: every play is covered by P-views (i.e.~given a play \( s \), there is a covering family \( \{ f_\xi : s_{\xi} \to s \}_{\xi \in \Xi} \) in which \( s_{\xi} \) is a P-view for every \( \xi \)).  This observation gives a justification of defining innocent strategies as sheaves.

\tk{This property is independent of the choice of the value category.}


\begin{definition}[Subcategory of P-views]
A play \( s \in \CPlay{A, B} \) is a \emph{P-view} if \( \PView{s} = s \).  We use \( p \) as a metavariable ranging over P-views.  The \emph{category of P-views} \( \CView{A, B} \) is the full subcategory of \( \CPlay{A, B} \) consisting of P-views.  We write \( \ViewEmbed : \CView{A, B} \hookrightarrow \CPlay{A, B} \) for the embedding.  Henceforth we fix the topology for $\CView{A, B}$ to be that induced from \( \CPlay{A, B} \): it is trivial except that the empty family covers the empty P-view \( \varepsilon \).
\tk{Clarify the meaning of ``trivial''.} \lo{I take trivial (topology) to mean that for every object, the only sieve covering it is the maximal sieve. For the empty P-view $\epsilon$, why do we not take the maximal sieve $\makeset{\epsilon \to \epsilon}$?}  \tk{You are right.  So we need to define it.  The reason why \( \varepsilon \) is covered by the empty family is that it is a covering sheaf in \( \CPlay{A, B} \) (so the induced topology must contain it).  (Of cause \( \{ \varepsilon \to \varepsilon \} \) is also a covering sieve in \( \CView{A, B} \).)  By removing this, any functor \( \CView{A, B}^\op \to \Set \) would be a sheaf (since the topology is trivial).  This is not what we want.  In fact the category \( [\CView{A, B}^{\op}, \Set] \) is not equivalent to \( \Sheaf(\CPlay{A, B}) \).}
\end{definition}

A sheaf over \( \CView{A, B} \) is a functor \( \CView{A, B}^{\op} \to \Set \) that maps the empty play to the terminal object.  A sheaf \( \sigma \in \Sheaf(\CPlay{A, B}) \) over \( \CPlay{A, B} \) (i.e.~an innocent strategy in our terminology) induces a sheaf \( \sigma \circ \ViewEmbed \) over \( \CView{A, B} \).  The strategy \( \sigma \) can be reconstructed from the restriction to P-views \( \sigma \circ \ViewEmbed \) (up to natural isomorphism).

\begin{proposition}\label{prop:comparison}
The functor \( \iota^* : \Sheaf(\CPlay{A, B}) \ni \sigma \mapsto \sigma \circ \iota \in \Sheaf(\CView{A, B}) \) induces an equivalence of categories.
\end{proposition}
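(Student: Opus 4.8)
The plan is to deduce this from the Comparison Lemma of Verdier (SGA4; see \cite{Verdier72,Beilinson12}), in exactly the same way as Lemma~\ref{lem:comparison} in the main development, and then to record an explicit quasi-inverse. Recall the shape of the Comparison Lemma: if $\iota:\mathcal{D}\hookrightarrow\mathcal{C}$ is a full subcategory, $\mathcal{D}$ carries the topology induced from the site $\mathcal{C}$, and $\mathcal{D}$ is \emph{cover-dense} in $\mathcal{C}$ — that is, every object $s\in\mathcal{C}$ admits a covering sieve generated by morphisms whose domains lie in $\mathcal{D}$ — then restriction along $\iota$ induces an equivalence $\Sheaf(\mathcal{C})\simeq\Sheaf(\mathcal{D})$. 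So the proof reduces to verifying these hypotheses for $\iota:\CView{A,B}\hookrightarrow\CPlay{A,B}$. Two of them are immediate: $\CView{A,B}$ is a full subcategory by its definition, and by definition it carries the topology induced from $\CPlay{A,B}$.

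The substantive point is cover-density, and this is where I would spend the effort. Let $s=m_1\cdots m_n\in\CPlay{A,B}$. If $n=0$, then $s=\varepsilon$ is covered by the empty family (the empty union of codomains equals $[0]=\emptyset$), which is a covering sieve of $\varepsilon$ in both topologies by the way they are set up. If $n>0$, I would take the family $\{\,f_k:\PView{m_1\cdots m_k}\to s\,\}_{k\in\{2,4,\dots,n\}}$ of Example~\ref{eg:covering}$(ii)$, where $f_k$ is the composite of the P-view embedding $\PView{m_1\cdots m_k}\hookrightarrow(m_1\cdots m_k)$ with the prefix embedding $(m_1\cdots m_k)\hookrightarrow s$. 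Each $\PView{m_1\cdots m_k}$ is a P-view, hence an object of $\CView{A,B}$; and the image of $f_k$ contains the last block $\{k-1,k\}$ of that P-view, so $\bigcup_k\codom(f_k)\supseteq[n]$. Thus the family is jointly surjective, i.e.\ a covering family, and the sieve it generates is a covering sieve of $s$ all of whose elements factor through an arrow out of an object of $\CView{A,B}$. That is precisely cover-density, and the Comparison Lemma then yields the equivalence.

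For concreteness I would also exhibit the quasi-inverse explicitly, as the functor $\iota_*$ built from $\tau$-annotations described just before Example~\ref{eg:view-dependency}: for $\tau\in\Sheaf(\CView{A,B})$ one sets $(\iota_*\tau)(s)$ to be the set of $\tau$-annotations of $s$. The unit $\tau\to\iota^*\iota_*\tau$ is an isomorphism because a $\tau$-annotation of a P-view $p=m_1\cdots m_n$ is determined by its last component in $\tau(p)$; and the counit $\iota_*\iota^*\sigma\to\sigma$ is an isomorphism because, $\sigma$ being a sheaf and $\{\PView{m_1\cdots m_k}\to s\}_k$ a covering family, the $\tau$-annotations of $s$ (with $\tau=\iota^*\sigma$) are exactly the matching families over that cover, which amalgamate uniquely to elements of $\sigma(s)$. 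This reproves the equivalence by hand, independently of citing the Comparison Lemma.

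The hard part will be the cover-density verification together with the bookkeeping around the empty play. One must be careful that the topology on $\CView{A,B}$ is arranged so that a ``sheaf over views'' is a functor sending $\varepsilon$ to the terminal set rather than an arbitrary presheaf — otherwise $\Sheaf(\CView{A,B})$ would not even contain the right objects to match $\Sheaf(\CPlay{A,B})$ — and one should check that the version of the Comparison Lemma invoked (Beilinson's, \cite{Beilinson12}) applies without an additional ``closed under subobjects'' assumption. Modulo that, the only genuinely combinatorial input, namely that the P-views of the even prefixes of a play are jointly surjective on move-occurrences, is Example~\ref{eg:covering}$(ii)$, and everything else is either formal or an off-the-shelf application.
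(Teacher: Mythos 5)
Your proposal is correct and follows essentially the same route as the paper: it invokes the Verdier/Beilinson Comparison Lemma after checking that every play is covered by the P-views of its even prefixes (the paper's Example~\ref{eg:covering}$(ii)$), and then backs this up by exhibiting the explicit quasi-inverse $\iota_*$ via $\tau$-annotations and checking that the unit and counit are isomorphisms, exactly as in the paper's two follow-up propositions. The cautionary remarks about the empty play and about sheaves over views sending $\varepsilon$ to a singleton also match the paper's own bookkeeping.
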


This proposition follows from a general result, known as the comparison lemma.  \tk{What should I cite here?}  However an explicit description of the adjoint \( \iota_* : \Sheaf(\CView{A, B}) \to \Sheaf(\CPlay{A, B}) \) is insightful and worth discussing.

Let \( \tau \in \Sheaf(\CView{A, B}) \) be a sheaf over views.  We write \( Q = \bigcup_{p \in \CView{A, B}} \tau(p) \).  Let \( a \in \tau(p) \) and \( a' \in \tau(p') \) and \( f : p \to p' \).  We write \( a \Vdash a' \) if \( a = \tau(f)(a') \).  For a play \( s = m_1 m_2 \dots m_{n-1} m_n \), we define a set of its \emph{state annotations}: a state annotation is a sequence \( a_2 a_4 \dots a_n \) of elements in \( Q \) (indexed by even numbers) subject to the following conditions: for every even number \( k \le n \),
\begin{itemize}
\item \( a_k \in \tau(\PView{m_1 m_2 \dots m_k}) \) for every \( k \) (that is even), and
\item if \( m_l^{\PP} \curvearrowleft m_{k-1}^{\OO} \), then \( a_l = \tau(f)(a_k) \), where \( f \) is the unique morphism \( f : \PView{m_1 \dots m_l} \to \PView{m_1 \dots m_k} \).
\end{itemize}
For a play \( s \in \CPlay{A, B} \), we write \( (\iota_* \tau)(s) \) for the set of all annotations that satisfy the above conditions.

\begin{example}
Let \( A_1 := m_3 \times m_2 \) and \( A_2 := (m_{11} \to m_1) \to m_0 \) be arenas.  \tk{Introduce the shorthand.}
Let \( \tau_1, \tau_2 \in \Sheaf(\CView{(m_{111} \to m_{112} \to m_{11}) \to m_1,\; m_0}) \).  \( \tau_1 \) is defined by:
\begin{align*}
  \tau_1(\varepsilon) &= \{ \ast \} \\
  \tau_1(m_0 m_1) &= \{ a \} \\
  \tau_1(m_0 m_1 m_{11} m_{111}) &= \{ b_1 \} \\
  \tau_1(m_0 m_1 m_{11} m_{112}) &= \{ b_2 \}
\end{align*}
\begin{align*}
  \tau_2(\varepsilon) &= \{ \ast \} \\
  \tau_2(m_0 m_1) &= \{ a_1, a_2 \} \\
  \tau_2(m_0 m_1 m_{11} m_{111}) &= \{ b_1 \} \\
  \tau_2(m_0 m_1 m_{11} m_{112}) &= \{ b_2 \}
\end{align*}
The map on morphisms is uniquely determined for \( \tau_1 \).  For \( \tau_2 \), we define \( a_1 \Vdash b_1 \) and \( a_2 \Vdash b_2 \).  Then
\[
  (\iota_* \tau_1)(m_0 m_1 m_{11} m_{111} m_{11} m_{112}) = \{ a b_1 b_2 \}
\]
and
\[
  (\iota_* \tau_2)(m_0 m_1 m_{11} m_{111} m_{11} m_{112}) = \{ \; \}.
\]
\[
  (\iota_* \tau_1)(m_0 m_1 m_{11} m_{111} m_{11} m_{111}) = \{ a b_1 b_1 \}
\]
\[
  (\iota_* \tau_2)(m_0 m_1 m_{11} m_{111} m_{11} m_{111}) = \{ a_1 b_1 b_1 \}
\]
\[
  (\iota_* \tau_1)(m_0 m_1 m_0 m_1 m_0 m_1) = \{ a a a \}
\]
\[
  (\iota_* \tau_2)(m_0 m_1 m_0 m_1 m_0 m_1) = \{ a_i a_j a_k \mid i, j, k \in \{ 1, 2 \}\;\}
\]
\end{example}

Given \( f : s \to s' \), which is an injective map \( f : [n] \to [n'] \) (where \( n \) is the length of \( s \) and \( n' \) is of \( s' \)) the morphism \( (\iota_* \tau)(f) : (\iota_* \tau)(s') \to (\iota_* \tau)(s) \) is defined by:
\[
  (\iota_* \tau)(f) : a_2 a_4 \dots a_{n'} \mapsto a_{f(2)} a_{f(4)} \dots a_{f(n)}.
\]
Then \( \iota_* \tau : \CPlay{A, B}^{\op} \to \Set \) is a functor.

For a P-view \( p = m_1 \dots m_n \), an annotation \( a_2 a_4 \dots a_n \in (\iota_* \tau)(p) \) is uniquely determined by \( a_n \), since \( a_k = \tau(f_k)(a_n) \) for the unique \( f_k : (m_1 \dots m_k) \to (m_1 \dots m_n) \).  This gives a bijection \( \psi_p : \tau(p) \stackrel{\cong}{\to} (\iota_* \tau)(p) \) for each \( p \).  Since \( (\iota^* \iota_* \tau)(p) = (\iota_* \tau) \) by definition, we have \( \psi_p : \tau(p) \stackrel{\cong}{\to} (\iota^* \iota_* \tau)(p) \).
\begin{proposition}
\( \iota_* \tau \in \Sheaf(\CPlay{A, B}) \) for every \( \tau \in \Sheaf(\CView{A, B}) \) and \( \psi \) is a natural isomorphism \( \tau \cong \iota^* \iota_* \tau \).
\end{proposition}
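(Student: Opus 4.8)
The statement has two halves: that $\iota_*\tau$, as defined above, is a sheaf on $\CPlay{A,B}$, and that the maps $\psi_p$ form a natural isomorphism $\tau\cong\iota^*\iota_*\tau$ (note $(\iota^*\iota_*\tau)(p)=(\iota_*\tau)(p)$ on P-views, since $\iota$ is a full inclusion). I would prove them in this order, the sheaf condition being the substantial part.

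For the sheaf condition, fix a play $s=m_1\dots m_n$, a covering sieve $S=\{f_\xi:s_\xi\to s\}_{\xi\in\Xi}$ (so $\bigcup_\xi\codom(f_\xi)=[n]$), and a matching family $\{x_\xi\in(\iota_*\tau)(s_\xi)\}_\xi$, each $x_\xi$ a $\tau$-annotation $e_{\xi,2}e_{\xi,4}\cdots e_{\xi,|s_\xi|}$. I would build the amalgamation $x=e_2e_4\cdots e_n$ pointwise: for each even $k\le n$ pick, by joint surjectivity, some $\xi$ and some even $l$ with $f_\xi(l)=k$ (evenness of $l$ is forced because $f_\xi$ sends O-P blocks to O-P blocks), and set $e_k:=e_{\xi,l}$. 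The routine checks are: \emph{(a) well-definedness}, which follows from the matching condition applied along the pullback $s_\xi\times_s s_{\xi'}$ — it exists, being the restriction of $s$ to $\codom(f_\xi)\cap\codom(f_{\xi'})$, and the occurrence of it lying over $k$ is sent by the two projections to $l$ and $l'$ respectively; \emph{(b) that $x$ is a $\tau$-annotation of $s$}; and then $(\iota_*\tau)(f_\xi)(x)=x_\xi$, immediate from the construction together with (a), and uniqueness, immediate since any amalgamation must agree with $e_{\xi,l}$ at $k=f_\xi(l)$. The empty play is handled separately: $(\iota_*\tau)(\varepsilon)=\{\ast\}$ is terminal.

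Check (b) rests on what I expect to be the crux of the whole argument: a morphism $f:t\to t'$ of $\CPlay{A,B}$ restricts, for every even $l$, to an isomorphism $\PView{t\restriction l}\xrightarrow{\ \cong\ }\PView{t'\restriction f(l)}$ in $\CView{A,B}$ — informally, play morphisms preserve P-views of prefixes. I would prove this using the factorisation of Lemma~\ref{lem:map-play-char}: a prefix embedding obviously preserves P-views of prefixes, and for a commutation of a non-interfering adjacent pair of O-P blocks (Definition~\ref{def:play-commute}) one checks, by cases on whether the truncation point falls before, inside, or after the swapped region, that the non-interference hypothesis makes the P-view of the truncation jump over the two swapped blocks in precisely the same manner before and after the swap. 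Granting this: writing $e_k=e_{\xi,l}$ with $f_\xi(l)=k$, we get $e_k\in\tau(\PView{s_\xi\restriction l})=\tau(\PView{s\restriction k})$; and if $m_j^{\PP}\curvearrowleft m_{k-1}^{\OO}$ in $s$, then $f_\xi(l-1)=k-1$ (O-P block) and, $f_\xi$ preserving pointers, the justifier of position $l-1$ in $s_\xi$ maps to $j$, so $j\in\codom(f_\xi)$ and the pointer coherence of $x_\xi$ at $l$ transports, along the view-preservation isomorphisms, to $e_j=\tau(\PView{s\restriction j}\le\PView{s\restriction k})(e_k)$, as required.

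For the isomorphism $\tau\cong\iota^*\iota_*\tau$, I would first record that in a P-view $p=m_1\dots m_{2r}$ every O-move $m_{2i-1}$ with $i\ge2$ is justified by its immediate predecessor $m_{2i-2}$ (else $\PView{p}\neq p$). Hence any $\tau$-annotation $a_2\cdots a_{2r}$ of $p$ satisfies $a_{2i-2}=\tau\bigl(p\restriction(2i-2)\le p\restriction 2i\bigr)(a_{2i})$ for all $i$, so it is determined by its last entry $a_{2r}$; conversely $a\mapsto(\tau(f_k)(a))_k$ with $f_k:p\restriction k\to p$ the prefix map is a valid annotation, yielding the bijection $\psi_p:\tau(p)\xrightarrow{\cong}(\iota_*\tau)(p)$. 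Naturality along a morphism $f:p\hookrightarrow p'$ of $\CView{A,B}$ — necessarily a prefix inclusion, $\CView{A,B}$ being a poset — then comes down to the identity $(p\restriction k\le p)\,;\,f=(p\restriction k\le p')$ of prefix inclusions plus functoriality of $\tau$, a one-line computation. I anticipate no real difficulty in this half; essentially all the weight of the proposition sits in the view-preservation lemma above. A shorter but comparably laborious alternative for sheafhood would be to invoke the Comparison Lemma (Lemma~\ref{lem:comparison}) abstractly and then identify its right adjoint with the explicit $\iota_*$ described here.
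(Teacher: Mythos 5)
Your proposal is correct and follows essentially the same route as the paper's proof: the amalgamation is built pointwise via joint surjectivity, well-definedness comes from the matching condition, and $\psi_p$ is the ``determined by the last entry'' bijection on P-views. The only difference is one of detail — you make explicit the key fact that play morphisms preserve P-views of prefixes (proved via the decomposition of Lemma~\ref{lem:map-play-char}), which the paper compresses into ``the resulting sequence satisfies the required conditions.''
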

\begin{proof}
Let \( S = \{ f_\xi : s_{\xi} \to s \}_{\xi \in \Xi} \) be a covering sieve and \( \{ x_\xi \in (\iota_* \tau)(s_{\xi}) \}_{\xi \in \Xi} \) be a matching family.  Each \( x_\xi \) is an annotation \( a_{\xi, 2} a_{\xi, 4} \dots a_{\xi, |s_{\xi}|} \).  It suffices to give an annotation \( a_2 a_4 \dots a_n \) for \( s \) (here \( n \) is the length of \( s \)).  Let \( k \le n \) be an even number.  Since \( S \) is a covering sieve, it must be jointly surjective, i.e.~\( k \in \codom(f_{\xi}) \) for some \( \xi \).  When \( f_{\xi}(l_k) = k \), we define \( a_k = a_{\xi, l_k} \).  This does not depends no the choice of \( \xi \) since \( x_{\xi} \) is a matching family.  The resulting sequence \( a_2 \dots a_n \) satisfies the required conditions.  The uniqueness is trivial.
\end{proof}

The sheaf \( \sigma \) is defined as follows.  Let \( s \in \CPlay{A, B} \).  A P-move \( m_k \) in \( s \) is said to be \emph{maximal} if it justifies no moves in \( s \).  Consider a covering family \( \{ f_{\xi} : p_{\xi} \to s \}_{\xi \in \Xi} \), where \( \Xi \) is the set of indexes for maximal P-moves and \( p_{\xi} \) is the P-view \( \PView{m_1 \dots m_{\xi}} \).  Then \( \sigma(s) \) is defined as the equaliser of the diagram
\begin{equation}
  \prod_{\xi \in \Xi} F_0(p_{\xi}) \rightrightarrows \prod_{\xi, \zeta \in \Xi} F_0(p_{\xi} \times_{s} p_{\zeta}).
  \label{eq:view-functor-equation-diagram}
\end{equation}
The equaliser exists since \( \CVal \) is finitely complete and \( \Xi \) is finite.

For a morphism \( f : s \to s' \), consider covering families \( \{ g_{\xi} : p_{\xi} \to s \}_{\xi \in \Xi} \) and \( \{ g'_{\xi} : p'_{\xi} \to s' \}_{\xi \in \Xi'} \) defined above.  For each \( \xi \in \Xi \), P-view \( p_{\xi} \) of \( s \) can be considered as a ``substring'' of a maximal P-view \( p'_{\zeta} \) of \( s' \) through the embedding \( f \).  Formally \( f \circ g_{\xi} = g'_{\zeta} \circ h \) for some \( \zeta \in \Xi' \) and some \( h \).  Let us choose \( \zeta \) and \( h \) for each \( \xi \in \Xi \), writing \( \zeta(\xi) \) and \( h(\xi) \) making the dependency explicit.  Now, for each \( \xi \in \Xi \), we have
\[
  \sigma(s') \longrightarrow \prod_{\zeta \in \Xi'} F_0(p'_\zeta) \stackrel{\pi_{\zeta(\xi)}}{\longrightarrow} F_0(p'_{\zeta(\xi)}) \stackrel{F_0(h(\xi))}{\longrightarrow} F_0(p_{\xi}),
\]
that induces a morphism \( \hat{f} : \sigma(s') \to \prod_{\xi \in \Xi} F_0(p_{\xi}) \).  Since \( \hat{f} \) equates two morphisms in the diagram (\ref{eq:view-functor-equation-diagram})\tk{TODO: check}, there exists a unique morphism \( e : \sigma(s') \to \sigma(s) \).

\begin{proposition}
Let \( \sigma_1, \sigma_2 \in \Sheaf(\CPlay{A, B}, \CVal) \) be sheaves.  If they coincides on views (i.e.~\( \sigma_1 \circ \ViewEmbed \cong \sigma_2 \circ \ViewEmbed \)), then \( \sigma_1 \cong \sigma_2 \).
\end{proposition}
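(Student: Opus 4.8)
The plan is to obtain this as a formal consequence of the Comparison result (Lemma~\ref{lem:comparison}, equivalently Proposition~\ref{prop:comparison}), which asserts that the restriction functor $\iota^* : \Sheaf(\CPlay{A,B}) \to \Sheaf(\CView{A,B})$, $\sigma \mapsto \sigma \circ \iota$, is an equivalence of categories. First I would unwind the hypothesis: ``$\sigma_1$ and $\sigma_2$ coincide on views'' means precisely $\sigma_1 \circ \iota \cong \sigma_2 \circ \iota$ in $\Sheaf(\CView{A,B})$, i.e.\ $\iota^*\sigma_1 \cong \iota^*\sigma_2$. Then, since every equivalence of categories is full and faithful and hence conservative (it reflects isomorphisms), $\iota^*\sigma_1 \cong \iota^*\sigma_2$ forces $\sigma_1 \cong \sigma_2$. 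This finishes the proof; the entire argument is one line once the Comparison result is available, and it is insensitive to the value category $\CVal$.

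If a self-contained proof is preferred --- one avoiding the abstract Comparison Lemma and using instead the explicit adjoint $\iota_*$ constructed above --- I would proceed as follows. The key auxiliary fact is that the counit $\iota_*\iota^*\sigma \to \sigma$ is a natural isomorphism for every $\sigma \in \Sheaf(\CPlay{A,B})$, dually to the natural isomorphism $\psi : \tau \cong \iota^*\iota_*\tau$ already recorded. Granting this, the chain
\[
  \sigma_1 \;\cong\; \iota_*\iota^*\sigma_1 \;\cong\; \iota_*\iota^*\sigma_2 \;\cong\; \sigma_2
\]
concludes: the two outer isomorphisms are the counit at $\sigma_1$ and at $\sigma_2$, and the middle one is $\iota_*$ applied to the hypothesised isomorphism $\iota^*\sigma_1 \cong \iota^*\sigma_2$. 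So the whole weight of the argument is shifted onto showing $\iota_*\iota^*\sigma \cong \sigma$, i.e.\ that a sheaf over plays is recovered up to isomorphism from its restriction to views by the annotation construction.

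For that last point I would fix a play $s = m_1 \cdots m_n$ and use the P-view covering family $\{ \PView{m_1 \cdots m_k} \to s \}_{k \text{ even}}$ of Example~\ref{eg:covering}$(ii)$. By definition an element of $(\iota_*\iota^*\sigma)(s)$ is an annotation $e_2 e_4 \cdots e_n$ with $e_k \in \sigma(\PView{m_1 \cdots m_k})$ subject to the clause ``$e_l = e_k \cdot (p_l \le p_k)$ whenever $m_l^{\PP} \curvearrowleft m_{k-1}^{\OO}$''; I would check that this is exactly the matching-family condition for the above covering family, so that the sheaf property of $\sigma$ supplies a unique amalgamation and hence an isomorphism $(\iota_*\iota^*\sigma)(s) \cong \sigma(s)$ in $\CVal$, and then verify naturality in $s$ by checking it on prefix embeddings and on commutations of adjacent O-P blocks, which suffices by Lemma~\ref{lem:map-play-char}. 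The main obstacle is precisely this identification of the two matching conditions: one must show that the single clause tied to the justification pointer $m_l \curvearrowleft m_{k-1}$ encodes the full ``agreement over all pullbacks'' requirement, which rests on the combinatorial fact that the pullback of two P-view embeddings into $s$ is again (an embedding of) a P-view, and that the P-views arising this way are generated by following justification pointers. Everything else --- functoriality of $\iota_*$, naturality of the counit, and the two-case reduction for naturality --- is routine.
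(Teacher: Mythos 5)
Your proposal is correct and takes essentially the same route as the paper: the paper obtains this from the adjoint equivalence \( (\iota_*, \iota^*) \) (the Comparison Lemma), and its proof of that equivalence is exactly your second argument --- an element of \( (\iota_*\iota^*\sigma)(s) \) is precisely a matching family for the P-view covering \( \{ \PView{m_1 \dots m_k} \to s \}_{k \text{ even}} \), so the sheaf condition on \( \sigma \) gives \( (\iota_*\iota^*\sigma)(s) \cong \sigma(s) \), naturally in \( s \). The only point to adjust is that for a general value category \( \CVal \) the element-wise ``annotation'' description should be replaced by the equalizer-of-products formulation of the sheaf condition (as in the paper's extension construction of a sheaf over plays from a sheaf over views), but the argument is otherwise unchanged.
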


\begin{proposition}\label{prop:view-embed}
Every sheaf \( F_0 \in \Sheaf(\CView{A, B}, \CVal) \) over \( \CView{A, B} \) can be extended to a sheaf \( \sigma \in \Sheaf(\CPlay{A, B}, \CVal) \) (i.e.~\( \sigma \circ \ViewEmbed = F_0 \)).
\end{proposition}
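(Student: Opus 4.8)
The plan is to take for $\sigma$ the presheaf constructed immediately above: for a play $s$, $\sigma(s)$ is the equaliser of $\prod_{\xi\in\Xi}F_0(p_\xi)\rightrightarrows\prod_{\xi,\zeta\in\Xi}F_0(p_\xi\times_s p_\zeta)$, where $\Xi$ indexes the maximal P-move occurrences of $s$ (those justifying no move of $s$) and $p_\xi=\PView{m_1\dots m_\xi}$, with the action on morphisms as described there. It then remains to verify the two requirements: that $\sigma\circ\iota\cong F_0$, and that $\sigma$ is a sheaf on the site $\CPlay{A,B}$.

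First I would check $\sigma\circ\iota\cong F_0$. If $s=p$ is a P-view then its unique maximal P-move is its last move --- in a P-view every P-move but the last justifies the following O-move --- so $\Xi$ is a singleton; the pullback $p\times_p p$ along the two identities is $p$ with both legs the identity, so the equaliser is $F_0(p)$, naturally in $p$ (with the standard choice of equalisers this is an equality). The empty play, having no P-moves, is sent to the terminal object, as it should be. This computation also shows that $\sigma$ is well defined independently of the auxiliary factorisations used to define it on morphisms.

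The substance is the sheaf condition. Fix a play $s$ and a covering sieve $S$ on $s$. The key point is that the canonical family $\{p_\xi\to s\}_{\xi\in\Xi}$ lies in $S$: since $S$ is jointly surjective, each maximal occurrence $\xi$ is $f(l)$ for some $f\colon s_f\to s$ in $S$; because morphisms preserve justification pointers and O-P blocks, following pointers from the $l$-th move inside $s_f$ is carried by $f$ bijectively onto following pointers from $m_\xi$ inside $s$, so $f$ restricts to an isomorphism $\PView{m'_1\dots m'_l}\xrightarrow{\cong}p_\xi$, exhibiting $p_\xi\to s$ as factoring through $f$; hence $p_\xi\to s\in S$ since $S$ is a sieve. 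Given a matching family $\{x_g\in\sigma(s_g)\}_{g\in S}$, its restriction to this canonical subfamily is a family $(x_{p_\xi}\in F_0(p_\xi))_{\xi\in\Xi}$, and the matching condition on the pairs $(p_\xi\to s,\ p_\zeta\to s)$ says precisely that this family is compatible over every $p_\xi\times_s p_\zeta$, i.e.\ that it is a point of the equaliser $\sigma(s)$; this point is the candidate amalgamation. That it amalgamates all of $\{x_g\}_{g\in S}$, and is the unique element of $\sigma(s)$ doing so, I would establish by comparing restrictions along an arbitrary $g\in S$: using Lemma~\ref{lem:map-play-char} every morphism of $\CPlay{A,B}$ decomposes into a prefix embedding followed by commutations of O-P blocks, and on such generators the required agreement is forced by the P-view components, which coincide by construction and by the matching hypothesis.

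The main obstacle I anticipate is exactly this last bookkeeping --- matching the pairwise-compatibility data demanded by the definition of $\sigma(s)$ with what the matching condition supplies, and propagating agreement on the canonical P-view subfamily back to every member of $S$ --- carried out over a general value category $\CVal$, where ``points of the equaliser'' must be read limit-theoretically rather than element-wise. A shorter but less explicit route is to invoke the Comparison Lemma in its $\CVal$-indexed form (the content of Proposition~\ref{prop:comparison}): since every play is covered by P-views and $\CView{A,B}$ carries the induced trivial topology, $\iota^*\colon\Sheaf(\CPlay{A,B},\CVal)\to\Sheaf(\CView{A,B},\CVal)$ is an equivalence, hence essentially surjective, which is precisely the assertion; the explicit $\sigma$ above then serves as a concrete witness.
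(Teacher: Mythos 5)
Your proposal is correct and follows essentially the same route as the paper: define $\sigma(s)$ as the equaliser of $\prod_{\xi}F_0(p_\xi)\rightrightarrows\prod_{\xi,\zeta}F_0(p_\xi\times_s p_\zeta)$ over the maximal P-views of $s$, observe that for a P-view the indexing family is a singleton so $\sigma\circ\iota= F_0$, and then check functoriality and the sheaf condition. In fact your sketch supplies more detail than the paper's own proof (which leaves well-definedness, functoriality and the sheaf condition as explicit TODOs), in particular the observation that every covering sieve contains the canonical family of maximal P-view embeddings, which is the key to the amalgamation argument.
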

\begin{proof}
The sheaf \( \sigma \) is defined as follows.  Let \( s \in \CPlay{A, B} \).  A P-move \( m_k \) in \( s \) is said to be \emph{maximal} if it justifies no moves in \( s \).  Consider a covering family \( \{ f_{\xi} : p_{\xi} \to s \}_{\xi \in \Xi} \), where \( \Xi \) is the set of indexes for maximal P-moves and \( p_{\xi} \) is the P-view \( \PView{m_1 \dots m_{\xi}} \).  Then \( \sigma(s) \) is defined as the equaliser of the diagram
\begin{equation}
  \prod_{\xi \in \Xi} F_0(p_{\xi}) \rightrightarrows \prod_{\xi, \zeta \in \Xi} F_0(p_{\xi} \times_{s} p_{\zeta}).
  \label{eq:view-functor-equation-diagram}
\end{equation}
The equaliser exists since \( \CVal \) is finitely complete and \( \Xi \) is finite.

For a morphism \( f : s \to s' \), consider covering families \( \{ g_{\xi} : p_{\xi} \to s \}_{\xi \in \Xi} \) and \( \{ g'_{\xi} : p'_{\xi} \to s' \}_{\xi \in \Xi'} \) defined above.  For each \( \xi \in \Xi \), P-view \( p_{\xi} \) of \( s \) can be considered as a ``substring'' of a maximal P-view \( p'_{\zeta} \) of \( s' \) though the embedding \( f \).  Formally \( f \circ g_{\xi} = g'_{\zeta} \circ h \) for some \( \zeta \in \Xi' \) and some \( h \).  Let us choose \( \zeta \) and \( h \) for each \( \xi \in \Xi \), writing \( \zeta(\xi) \) and \( h(\xi) \) making the dependency explicit.  Now, for each \( \xi \in \Xi \), we have
\[
  \sigma(s') \longrightarrow \prod_{\zeta \in \Xi'} F_0(p'_\zeta) \stackrel{\pi_{\zeta(\xi)}}{\longrightarrow} F_0(p'_{\zeta(\xi)}) \stackrel{F_0(h(\xi))}{\longrightarrow} F_0(p_{\xi}),
\]
that induces a morphism \( \hat{f} : \sigma(s') \to \prod_{\xi \in \Xi} F_0(p_{\xi}) \).  Since \( \hat{f} \) equates two morphisms in the diagram (\ref{eq:view-functor-equation-diagram})\tk{TODO: check}, there exists a unique morphism \( e : \sigma(s') \to \sigma(s) \).

\tk{TODO: check (1) the morphism is independent of the choice of \( \zeta(\xi) \) and \( h(\xi) \), (2) \( \sigma \) is a functor and (3) \( \sigma \) is a sheaf.}

For a P-view \( p \), since the covering family defined above is singleton \( \{ \ident_p : p \to p \} \), we have \( \sigma(p) = F_0(p) \).  Similarly \( \sigma(f) = F_0(f) \) for every morphism \( f : p \to p' \) between P-views.
\end{proof}

It is easy to show that for every \( \sigma_1, \sigma_2 \in \Sheaf(\CPlay{A, B}) \), if \( \iota^* \sigma_1 \cong \iota^* \sigma_2 \), then \( \sigma_1 \cong \sigma_2 \).  This observation together with the previous proposition gives an equivalence.

\subsection{Composition}
\tkchanged{This subsection introduces the notion of \emph{pre-strategies} (that are presheaves) and studies their composition.  The development of this section is parametrised by a category \( \CVal \) that has finite limits and countable coproducts such that the products distribute over coproducts.}
\begin{definition}[Pre-strategy]
A \emph{pre-strategy} over a pair \( (A, B) \) is a presheaf \( \sigma : \CPlay{A, B}^{\op} \to \CVal \) over \( \CPlay{A, B} \).
\end{definition}
Given pre-strategies \( \sigma_1 \) over \( \CPlay{A, B} \) and \( \sigma_2 \) over \( \CPlay{B, C} \), their composite \( \sigma_1; \sigma_2 \) is a pre-strategy over \( \CPlay{A, C} \) defined by the following diagram:
\[\xymatrix@R-0.1cm@C.8cm{
   \CPlay{A, C}^{\op} \ar[rrrd]^{(\sigma_1; \sigma_2) := \mathrm{Lan}_{\pi_3} F} \\
\CIntr{A, B, C}^{\op} \ar[r]^-{\anglebra{\pi_1, \pi_2}} \ar[u]^-{\pi_3}
&
\CPlay{A, B}^{\op} \times \CPlay{B, C}^{\op}  \ar[r]^-{\sigma_1 \times \sigma_2}
&
\CVal \times \CVal \ar[r]_-{\tkchanged{(\cdot) \times (\cdot)}}
&
\CVal
}
\]
where \( \pi_i \), where \( i = 1, 2, 3 \), is shorthand for the projection of \( \CIntr{A, B, C} \) to \( \CPlay{A, B}, \CPlay{B, C} \) and \( \CPlay{A, C} \) respectively; and \( \sigma_1 \) and \( \sigma_2 \) are sheaves over \( \CPlay{A, B} \) and \( \CPlay{B, C} \) respectively.  We write \( F : \CIntr{A, B, C}^{\op} \to \CVal \) for the composite of the three functors on the bottom line.

The composite \( (\sigma_1; \sigma_2) : \CPlay{A, C}^{\op} \to \CVal \) is defined as the left Kan extension of \( F \) along \( \pi_3 \).  To establish the well-definedness, we should prove that the left Kan extension is a sheaf over \( \CPlay{A, C} \), but we do not have a way to generally prove this claim.  This claim shall be proved for each \( \CVal \) in the following sections.

An explicit description of the left Kan extension is often useful.  We assume that \( \CVal \) has all finite limits and countable coproducts.  Given \( F : \CIntr{A, B, C}^{\op} \to \CVal \), we define \( L_F : \CPlay{A, C}^{\op} \to \CVal \) as follows:
\[
  L_F(s) := \coprod_{{u \in \CIntr{A, B, C}} \atop {\pi^{A, C}(u) = s}} F(u)
\]
and for morphisms \( f : s \to t \), \( L_F(f) : \coprod_{v: \pi(v) = t} F(v) \to \coprod_{u: \pi(u) = s} F(u) \) is defined as the unique morphism that satisfies the following diagram for every \( v \):
\[\xymatrix@R-0.1cm@C1.2cm{
  F(v) \ar[r]^-{F(\hat{f}_{v})} \ar[d]^{j_{F(v)}} 
&
  F(f^*(v)) \ar[d]^-{j_{F(f^*(v))}}
\\
  \coprod_{v: \pi(v) = t} F(v) \ar[r]^{L_F(f)}
&
  \coprod_{u: \pi(u) = s} F(u)
}
\]

\begin{lemma}
\( L_F \) is the left Kan extension of \( F \) along \( \pi_3 \).
\end{lemma}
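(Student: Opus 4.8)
The plan is to recognise $L_F$ as the fibrewise coproduct that computes the pointwise left Kan extension along the discrete opfibration $\pi_3$. Recall from Lemma~\ref{lem:fibration} (and its evident generalisation to quadruples) that the projection $\proj{}{A,C} : \CIntr{A,B,C} \to \CPlay{A,C}$ is a discrete fibration; hence $\pi_3 = {\proj{}{A,C}}^{\op} : \CIntr{A,B,C}^{\op} \to \CPlay{A,C}^{\op}$ is a discrete opfibration, and for each $s \in \CPlay{A,C}$ the fibre $\pi_3^{-1}(s) = \{u \mid \proj{u}{A,C} = s\}$ is a discrete category. The quick route is then to invoke the standard fact that for a discrete opfibration $p$ the fibre inclusion $p^{-1}(b) \hookrightarrow (p \downarrow b)$ is final — for each object $c$ of $(p\downarrow b)$ the relevant comma category is a singleton, by uniqueness of cartesian lifts — so the pointwise colimit formula $\mathrm{Lan}_{\pi_3}F(s) = \mathrm{colim}_{(\pi_3 \downarrow s)}(F \circ \mathrm{proj})$ collapses to $\coprod_{u : \proj{u}{A,C} = s} F(u) = L_F(s)$, and the induced action on morphisms is exactly the one given by the displayed square.

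I would, however, prefer to spell out the universal property directly, in parallel with the proof of the corresponding lemma in Section~\ref{sec:composition} but over the general value category $\CVal$. First, functoriality of $L_F$: using that $f \mapsto f^*$ is functorial and $\overline{g \circ f}_v = \bar{g}_v \circ \bar{f}_{g^*(v)}$ (both by uniqueness of lifts), the displayed square determines $L_F(f)$ uniquely via the universal property of $\coprod_{v : \proj{v}{A,C} = t} F(v)$, and these morphisms satisfy $L_F(g\circ f) = L_F(f)\circ L_F(g)$ and $L_F(\mathrm{id}) = \mathrm{id}$. Next, the unit $\eta : F \Rightarrow L_F \circ \pi_3$ is given componentwise by the coprojections $\eta_u := j_{F(u)} : F(u) \to L_F(\proj{u}{A,C})$; its naturality is precisely the commutativity of the defining square of $L_F$ (taking $v := u$). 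Finally, given $G : \CPlay{A,C}^{\op} \to \CVal$ and $\beta : F \Rightarrow G \circ \pi_3$, observe that every $\beta_u$ with $\proj{u}{A,C} = s$ is a morphism $F(u) \to G(s)$; the family $\{\beta_u\}$ therefore copairs to a unique $\gamma_s : L_F(s) \to G(s)$ with $\gamma_s \circ j_{F(u)} = \beta_u$, and one then checks that $\gamma$ is natural, that $\gamma\pi_3 \circ \eta = \beta$, and that $\gamma$ is the unique such transformation (the last two being immediate from $\gamma_s \circ j_{F(u)} = \beta_u$ and the universal property of the coproduct).

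The main obstacle is the naturality of $\gamma$: for $f : s \to t$ one must show $G(f) \circ \gamma_t = \gamma_s \circ L_F(f)$, and this is the single point where the discrete-opfibration structure, the naturality square of $\beta$ at $\bar{f}_v : f^*(v) \to v$, and the universal property of the coproduct all have to be combined — precompose both sides with each coprojection $j_{F(v)}$, use the defining square of $L_F(f)$ to rewrite the right-hand side as $\gamma_s \circ j_{F(f^*(v))} \circ F(\bar{f}_v) = \beta_{f^*(v)} \circ F(\bar{f}_v)$, and use naturality of $\beta$ to rewrite the left-hand side as $G(f) \circ \beta_v$; these agree. Everything else is routine bookkeeping; note that the distributivity of products over coproducts in $\CVal$ is not needed for this lemma (only the existence of the coproducts is), and will enter only later, when verifying that $L_F$ is a sheaf.
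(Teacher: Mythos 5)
Your proof is correct and follows essentially the same route as the paper's: the unit is given by the coprojections $j_{F(u)}$, the mediating transformation $\gamma_s$ is the copairing $[\beta_u]_{u:\pi(u)=s}$, and naturality plus uniqueness come from the universal property of the coproduct together with the discrete-fibration structure of $\proj{}{A,C}$ (Lemma~\ref{lem:fibration}). The only differences are that you spell out the naturality square for $\gamma$ that the paper dismisses as easy, and you note the alternative fibrewise-colimit argument, which the paper also mentions in passing.
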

\begin{proof}
The universal natural transformation \( \alpha : F \to L_F \circ \pi \) is defined by
\[
  \alpha_{u} = j_{u} : F(u) \to \coprod_{v: \pi(u) = \pi(v)} F(v) = L_F(\pi(u)).
\]
Assume a functor \( H : \CPlay{A, C}^{\op} \to \CVal \) and a natural transformation \( \beta : F \to H \circ \pi \).  Thus for every \( u \in \CIntr{A, C, B} \) such that \( \pi(u) = s \), we have
\[
  \beta_{u} : F(u) \to H(\pi(u)).
\]
Now \( \gamma_s : L_F(s) \to H(s) \) is defined by
\[
  \gamma_s := [\beta_u]_{u : \pi(u) = s} : \coprod_{u : \pi(u) = s} F(u) \to H(s).
\]
It is easy to see \( \alpha_{u}; \gamma_{\pi(u)} = \beta_u \) for all \( u \) and naturality of \( \gamma \).  Uniqueness of \( \gamma \) comes from the universal property of coproducts.
\end{proof}

\tk{This follows from a general result about the left Kan extension along an opfibration, in which the left Kan extension is obtained by the fibre-wise colimits.  See nLab.}

\subsection{Associativity}
We assume that the products of \( \CVal \) distribute over coproducts.  Let \( F \) be the functor \( \CIntr{A, B, C, D}^{\op} \to \CVal \) defined by:
\[
  \langle \pi^{A,B}, \pi^{B, C}, \pi^{C, D} \rangle; (\sigma_{A, B} \times \sigma_{B, C} \times \sigma_{C, D}); (({\cdot}) \times ({\cdot}) \times ({\cdot})).
\]

\begin{lemma}
Let \( u \in \CIntr{A, B, D} \) and \( v \in \CIntr{B, C, D} \).  If \( \pi^{B, D}(u) = \pi^{B, D}(v) \), there exists a unique \( w \in \CIntr{A, B, C, D} \) such that \( u = \pi^{A, B, D}(w) \) and \( v = \pi^{B, C, D}(w) \).  \tk{This means that they form a pullback diagram.}  A similar statement holds for every \( u \in \CIntr{A, C, D} \) and \( v \in \CIntr{A, B, C} \).
\end{lemma}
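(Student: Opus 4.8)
The plan is to construct $w$ by a ``zipping'' procedure that interleaves $u$ and $v$ move by move, with the switching conditions of the two triples and of the quadruple dictating, at each stage, which of the two sequences the next move must be drawn from. (Abstractly, this says that $\CIntr{A, B, C, D}$ is the pullback, on objects, of $\CIntr{A, B, D} \to \CPlay{B, D} \leftarrow \CIntr{B, C, D}$.) First I would fix notation: $u = m_1 \dots m_M \in \CIntr{A, B, D}$, $v = n_1 \dots n_N \in \CIntr{B, C, D}$, and $t := \pi^{B, D}(u) = \pi^{B, D}(v)$, a play over $(B, D)$. By Lemma~\ref{lem:interaction-satisfy-switching} (and the analogous fact for the quadruple), $u$ is accepted by the three-state automaton whose state records the owners of the next moves for the components $(A, B)$, $(B, D)$, $(A, D)$, $v$ by the analogous automaton for $(B, C)$, $(C, D)$, $(B, D)$, and the target $w$ should be accepted by the four-state automaton for the quadruple, whose state records the owners of the next moves for $(A, B)$, $(B, C)$, $(C, D)$. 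The point that makes the construction go through is that in each of these automata the state is determined by, and conversely determines, the owner of the move that may appear next, and that the $(B, D)$-component of $u$'s state agrees with that of $v$'s state because $u$ and $v$ share the projection $t$.

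I would then run the zipper on \emph{intermediate configurations}, each recording how far $u$, $v$ and the portion of $w$ built so far have been consumed, together with the automaton states of the corresponding prefixes, maintaining the invariant that the unconsumed suffixes of $u$ and $v$ still have equal $(B, D)$-projections and mutually compatible states. There are four cases, one per quadruple state; in each, the quadruple automaton prescribes the owner of the next move of $w$: if it lies in $\PMoves[A] + \OMoves[A]$, read the next unconsumed move of $u$ (which, by state compatibility, is of exactly that owner), append it, and advance in $u$; if in $\PMoves[C] + \OMoves[C]$, do the symmetric thing with $v$; if it is a $B$- or $D$-move, the next unconsumed moves of $u$ and of $v$ coincide by the invariant, so read that common move and advance in both. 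Justification pointers of $A$-moves are inherited from $u$, of $C$-moves from $v$, and of $B$- and $D$-moves from either source (consistently, by the invariant). The process strictly decreases $(M - i) + (N - j)$, hence terminates having consumed all of $u$ and all of $v$; the resulting $w$ is a justified sequence over $(A, B, C, D)$ with $\pi^{A, B}(w) = \pi^{A, B}(u)$, $\pi^{B, C}(w) = \pi^{B, C}(v)$, $\pi^{C, D}(w) = \pi^{C, D}(v)$ --- all plays by hypothesis --- and its last move lies in $\PMoves[A, D]$; so $w \in \CIntr{A, B, C, D}$, and $\pi^{A, B, D}(w) = u$, $\pi^{B, C, D}(w) = v$ by construction. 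Uniqueness is an induction on length: any $w'$ with these two projections must, at each step, have the owner of its next move forced by the quadruple switching condition, and that owner forces whether the move is the next unconsumed move of $u$, of $v$, or of both, so $w'$ coincides with the zipper output move by move. Finally, for the ``similar statement'' one reruns exactly the same zipper, gluing along the $(A, C)$-projection instead of the $(B, D)$-one.

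The main obstacle will be the bookkeeping concentrated in the second paragraph: pinning down the compatibility invariant on the three automaton states, and checking in every case that the move dictated by the quadruple automaton is genuinely available from the prescribed source with the expected owner --- in particular that the zipper never gets stuck with one source exhausted while the quadruple automaton still demands a move from it --- and that the inherited justification pointers are globally consistent. This is conceptually routine (a pullback of automata together with a bubble-sort-style termination argument), but it is notation-heavy, and the termination and well-formedness checks need a cleanly stated induction so that they are not left dangling.
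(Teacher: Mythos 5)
Your proposal is correct and follows essentially the same route as the paper's own proof: a move-by-move zipping of $u$ and $v$ along their common $(B,D)$-projection, driven by the switching-condition automata for the two triples and the quadruple, with intermediate configurations tracking consumption indices and automaton states, and with $A$-move pointers inherited from $u$ and the rest from $v$. Your explicit termination measure and uniqueness induction are welcome additions that the paper's version leaves mostly implicit.
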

\begin{proof}
Let \( u = m_1 \dots m_M \in \CIntr{A, B, D} \) and \( v = n_1 \dots n_N \in \CIntr{B, C, D} \) and suppose that \( \pi^{B, D}(u) = \pi^{B, D}(v) \).  We construct \( w \in l_1 \dots l_L \in \CIntr{A, B, C, D} \).  By the switching condition, \( u \) and \( v \) must be accepted by the left and right automata, respectively,
\[
\begin{array}{ccc}
\xymatrix@R-0.1cm@C.8cm{
& *+[o][F]{q_1} \ar@/^1pc/[dr]^{\PMoves[A]} \ar@/^/[dl]_{\OMoves[D]} & \\
q_2 \ar@/^1pc/[ur]^{\PMoves[D]} \ar@<1pt>[rr]^{\OMoves[B]} & &
q_3 \ar@<2pt>[ll]^{\PMoves[B]} \ar@/^/[ul]_{\OMoves[A]}
}
&
&
\xymatrix@R-0.1cm@C.8cm{
& *+[o][F]{p_1} \ar@/^1pc/[dr]^{\PMoves[B]} \ar@/^/[dl]_{\OMoves[D]} & \\
p_2 \ar@/^1pc/[ur]^{\PMoves[D]} \ar@<1pt>[rr]^{\OMoves[C]} & &
p_3 \ar@<2pt>[ll]^{\PMoves[C]} \ar@/^/[ul]_{\OMoves[B]}
}
\end{array}
\]
and \( w \) must be accepted by the automaton
\[\xymatrix@R-0.1cm@C2.4cm{
*+[o][F]{r_1} \ar@<2pt>[r]^{\PMoves[A]} \ar@<1pt>[d]^{\OMoves[D]}
&
r_4 \ar@<2pt>[d]^{\PMoves[B]} \ar@<1pt>[l]^{\OMoves[A]}
\\
r_2 \ar@<2pt>[u]^{\PMoves[D]} \ar@<1pt>[r]^{\OMoves[C]} &
r_3 \ar@<2pt>[l]^{\PMoves[C]} \ar@<1pt>[u]^{\OMoves[B]} &
}
\]
We construct a sequence of moves \( w \) such that \( \proj{w}{A, B, D} = u \) and \( \proj{w}{B, C, D} = v \).  An \emph{intermediate state} is a tuple \( (i, j, k, p, q, r) \) such that \( i \le M \), \( j \le N \) such that \( \proj{m_i \dots m_M}{B, D} = \proj{n_j \dots n_N}{B, D} \), \( k \) is the current index of \( l \) and \( p \), \( q \) and \( r \) are states of the above automata from which \( m_i \dots m_M \), \( n_j \dots n_N \) and \( l_k \dots l_L \) are accepted, respectively.  \tk{TODO: Clarify what kind of induction is used.}
\begin{itemize}
\item \( (i, j, k, q_1, p_1, r_1) \):  Then \( m_i \in \OMoves[D] + \PMoves[A] \).  If \( m_i \in \OMoves[D] \), then let \( l_k = m_i = n_j \) and proceed to \( (i+1, j+1, k+1, q_2, p_2, r_2) \).  If \( m_i \in \PMoves[A] \), then let \( l_k = m_i \) and proceed to \( (i+1, j, k+1, q_3, p_1, r_4) \).
\item \( (i, j, l, q_2, p_2, r_2) \):  Then \( n_j \in \OMoves[C] + \PMoves[D] \).  If \( n_j \in \OMoves[C] \), then let \( l_k = n_j \) and proceed to \( (i, j+1, k+1, q_2, p_3, r_3) \).  If \( n_j \in \PMoves[D] \), then let \( l_k = n_j = m_i \) and proceed to \( (i+1, j+1, k+1, p_1, q_1, r_1) \).
\item \( (i, j, l, q_2, p_3, r_3) \):  Then \( n_j \in \OMoves[B] + \PMoves[C] \).  If \( n_j \in \OMoves[B] \), then let \( l_k = m_i = n_j \) and proceed to \( (i+1, j+1, k+1, q_3, p_1, r_4) \).  If \( n_j \in \PMoves[C] \), then let \( l_k = n_j \) and proceed to \( (i, j+1, k+1, q_2, p_2, r_2) \).
\item \( (i, j, l, q_3, p_1, r_4) \):  Then \( m_i \in \OMoves[A] + \PMoves[B] \).  If \( m_i \in \OMoves[A] \), then let \( l_k = m_i \) and proceed to \( (i+1, j, k+1, q_1, p_1, r_1) \).  If \( m_i \in \PMoves[B] \), then let \( l_k = m_i = n_j \) and proceed to \( (i+1, j+1, k+1, q_2, p_3, r_3) \).
\item Other cases are never reached.
\end{itemize}
The justification pointer for \( A \)-moves are determined by \( u \) and others by \( v \).
\end{proof}

\begin{lemma}
Let \( \sigma_{A, B} \in \Sheaf(\CPlay{A, B}, \CVal) \), \( \sigma_{B, C} \in \Sheaf(\CPlay{B, C}, \CVal) \) and \( \sigma_{C, D} \in \Sheaf(\CPlay{C, D}, \CVal) \).  Then \( \sigma_{A, B}; (\sigma_{B, C}; \sigma_{C, D}) : \CPlay{A, D}^{\op} \longrightarrow \CVal \) is isomorphic to the left Kan extension \( \mathrm{Lan}_{\pi} F \) of the a functor \( F : \CIntr{A, B, C, D}^{\op} \longrightarrow \CVal \) along \( \pi = \proj{}{A, D} : \CIntr{A, B, C, D}^{\op} \longrightarrow \CPlay{A, D}^{\op} \).  A similar statement holds for \( (\sigma_{A, B}; \sigma_{B, C}); \sigma_{C, D} \).
\end{lemma}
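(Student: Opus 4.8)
The plan is to use the explicit coproduct description of the left Kan extension along the projection functors (which exist because $\CVal$ has finite limits, countable coproducts, and products distributing over coproducts), and then reduce the statement to the gluing lemma for $\CIntr{A,B,C,D}$. First I would unfold the nested composition on objects: for a play $s\in\CPlay{A,D}$,
\[
  (\sigma_{A,B};(\sigma_{B,C};\sigma_{C,D}))(s)
  \;\cong\;
  \coprod_{\substack{u\in\CIntr{A,B,D}\\ \proj{u}{A,D}=s}}
     \sigma_{A,B}(\proj{u}{A,B})\times(\sigma_{B,C};\sigma_{C,D})(\proj{u}{B,D}),
\]
and then replace the inner factor by $\coprod_{v\in\CIntr{B,C,D},\,\proj{v}{B,D}=\proj{u}{B,D}}\sigma_{B,C}(\proj{v}{B,C})\times\sigma_{C,D}(\proj{v}{C,D})$. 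Since products in $\CVal$ distribute over coproducts, substituting and flattening gives a single coproduct indexed by pairs $(u,v)$ with $\proj{u}{A,D}=s$ and $\proj{u}{B,D}=\proj{v}{B,D}$, whose summand is $\sigma_{A,B}(\proj{u}{A,B})\times\sigma_{B,C}(\proj{v}{B,C})\times\sigma_{C,D}(\proj{v}{C,D})$.

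Next I would invoke the gluing lemma (the body version is Lemma~\ref{lem:interaction-pullback}): such pairs $(u,v)$ correspond bijectively to $w\in\CIntr{A,B,C,D}$ with $\proj{w}{A,B,D}=u$ and $\proj{w}{B,C,D}=v$. Because composition of projections is a projection, this $w$ satisfies $\proj{w}{A,D}=s$, $\proj{w}{A,B}=\proj{u}{A,B}$, $\proj{w}{B,C}=\proj{v}{B,C}$, $\proj{w}{C,D}=\proj{v}{C,D}$, so the index set is in bijection with $\{w\in\CIntr{A,B,C,D}\mid\proj{w}{A,D}=s\}$ and the summand is exactly $F(w)=\sigma_{A,B}(\proj{w}{A,B})\times\sigma_{B,C}(\proj{w}{B,C})\times\sigma_{C,D}(\proj{w}{C,D})$. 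This yields the object-level isomorphism $(\sigma_{A,B};(\sigma_{B,C};\sigma_{C,D}))(s)\cong(\mathrm{Lan}_\pi F)(s)$.

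The remaining task is naturality, and this is where the real work lies. One must check that the bijection above is compatible with the action on a morphism $f:s\to t$ of $\CPlay{A,D}$: both sides transport an index by $f^\ast$ and a stored element by the appropriate restriction $\bar f$. The key coherence is that the projection functors from $\CIntr{A,B,C,D}$ commute with the discrete-fibration operations $f^\ast$ and $\bar f$ — the analogue of Lemma~\ref{lem:quad-interaction-projection}, e.g.\ $\proj{f^\ast(w)}{A,B,D}=f^\ast(\proj{w}{A,B,D})$ and $\proj{\bar f_w}{A,B,D}=\bar f_{\proj{w}{A,B,D}}$ — together with uniqueness of sections of the discrete fibration $\proj{}{A,D}$. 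Granting these, naturality reduces to a component-by-component diagram chase matching the image of $(w,e_1,e_2,e_3)$ under the nested restriction maps with the image of the corresponding $(u,e_1,(v,e_2,e_3))$; this is exactly the bookkeeping carried out in the long proof of the simultaneous-composition lemma, and it is the main obstacle — not conceptually deep, but fiddly, since at each stage one must identify precisely which projection of which pullback-section $\bar f$ is being applied.

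Finally, the statement for $(\sigma_{A,B};\sigma_{B,C});\sigma_{C,D}$ is symmetric: it uses the second half of Lemma~\ref{lem:interaction-pullback} (amalgamating $u\in\CIntr{A,C,D}$ and $v\in\CIntr{A,B,C}$ over $\proj{}{A,C}$), and the same distributivity and coherence facts produce an isomorphism with the same $\mathrm{Lan}_\pi F$. Composing the two natural isomorphisms through $\mathrm{Lan}_\pi F$ then gives associativity of composition up to natural isomorphism.
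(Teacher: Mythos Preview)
Your proposal is correct and follows essentially the same route as the paper: unfold the nested composite as an iterated coproduct, use distributivity to flatten, invoke Lemma~\ref{lem:interaction-pullback} to reindex by $w\in\CIntr{A,B,C,D}$, and handle naturality via the coherence of projections with the discrete-fibration liftings (the content of Lemma~\ref{lem:quad-interaction-projection}). The paper's own proof is terser on the naturality step, merely pointing to that lemma, whereas you spell out more of the bookkeeping; but the argument is the same.
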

\begin{proof}
The natural isomorphism is given by, for every \( s \in \CPlay{A, D} \),
\begin{align*}
 & (\sigma_{A, B}; (\sigma_{B, C}; \sigma_{C, D}))(s) \\
=& \coprod_{{u \in \CIntr{A, B, D}} \atop {\proj{u}{A, D} = s}} \sigma_{A, B}(\proj{u}{A, B}) \times (\sigma_{B, C}; \sigma_{C, D})(\proj{u}{B,D}) \\
=& \coprod_{{u \in \CIntr{A, B, D}} \atop {\proj{u}{A, D} = s}} \sigma_{A, B}(\proj{u}{A, B}) \times \\
 & \qquad \coprod_{{v \in \CIntr{B, C, D}} \atop {\proj{v}{B, D} = \proj{u}{B, D}}} (\sigma_{B, C}(\proj{v}{B, C}) \times \sigma_{C, D}(\proj{v}{C, D})) \\
\cong& \coprod_{{{u \in \CIntr{A, B, D}} \atop {v \in \CIntr{B, C, D}}} \atop {{\proj{v}{B, D} = \proj{u}{B, D}} \atop {\proj{u}{A, D} = s}}} \sigma_{A, B}(\proj{u}{A, B}) \times \sigma_{B, C}(\proj{v}{B, C}) \times \sigma_{C, D}(\proj{v}{C, D}) \\
=& \coprod_{{w \in \CIntr{A, B, C, D}} \atop {\proj{w}{A, D} = s}} \sigma_{A, B}(\proj{w}{A, B}) \times \sigma_{B, C}(\proj{w}{B, C}) \times \sigma_{C, D}(\proj{w}{C, D}).
\end{align*}
The third equation comes from the distribution law:
\[
  a \times \coprod_{i \in I} b_i \cong \coprod_{i \in I} (a \times b_i)
\]
that is assume to be an isomorphism.  The last equation comes from Lemma~\ref{aaaa}.  It is easy to see the naturality using the next lemma.
\end{proof}


\begin{lemma}
Let \( w \in \CIntr{a, b, c, d} \) and \( u = \pi^{a, b, d}(w) \in \CIntr{a, b, d} \) and \( v = \pi^{b, c, d}(w) \in \CIntr{b, c, d} \).  Let \( f : s \to \pi^{a,d}(w) \) be a morphism in \( \CPlay{a, d} \).  Then we have \( \hat{f}_w : w' \to w \) such that \( \pi^{a, d}(\hat{f}_w) = f \).  Then
\begin{itemize}
\item \( \hat{f}_u = \pi^{a, b, d}(\hat{f}_w) : u' \to u \).
\item \( \widehat{(\pi^{b,d}(\hat{f}_u))}_v = \pi^{b, c, d}(\hat{f}_w) \).
\end{itemize}
\end{lemma}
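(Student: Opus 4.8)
The plan is to reduce both identities to the uniqueness of cartesian lifts in a discrete fibration. The key input, all of it already available, is that the three projections onto categories of plays that occur here --- $\pi^{a,d}:\CIntr{a,b,c,d}\to\CPlay{a,d}$, $\pi^{a,d}:\CIntr{a,b,d}\to\CPlay{a,d}$ and $\pi^{b,d}:\CIntr{b,c,d}\to\CPlay{b,d}$ --- are discrete fibrations (the quadruple case together with Lemma~\ref{lem:fibration}), so that a morphism in any one of these total categories is completely determined by its codomain together with its image under the fibration functor; in particular, for a base morphism $h$ and an object $x$ over $\mathrm{cod}(h)$, the lift $\widehat{h}_x$ is \emph{the} morphism into $x$ that lies over $h$. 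Alongside this I would use that $\pi^{a,b,d}$ and $\pi^{b,c,d}$ are functors and that composing projections yields a projection strictly, so that, as functors out of $\CIntr{a,b,c,d}$, one has $\pi^{a,d}\circ\pi^{a,b,d}=\pi^{a,d}$, $\pi^{b,d}\circ\pi^{a,b,d}=\pi^{b,d}$ and $\pi^{b,d}\circ\pi^{b,c,d}=\pi^{b,d}$.

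For the first bullet I would apply the functor $\pi^{a,b,d}$ to the cartesian lift $\hat{f}_w:w'\to w$, obtaining a morphism $\pi^{a,b,d}(\hat{f}_w):\pi^{a,b,d}(w')\to u$ with codomain $u=\pi^{a,b,d}(w)$; applying $\pi^{a,d}$ on $\CIntr{a,b,d}$ and using $\pi^{a,d}\circ\pi^{a,b,d}=\pi^{a,d}$ gives $\pi^{a,d}\bigl(\pi^{a,b,d}(\hat{f}_w)\bigr)=\pi^{a,d}(\hat{f}_w)=f$. Thus $\pi^{a,b,d}(\hat{f}_w)$ is a morphism into $u$ lying over $f$ in the discrete fibration $\CIntr{a,b,d}\to\CPlay{a,d}$, so by uniqueness it equals $\hat{f}_u$; in particular $u'=\pi^{a,b,d}(w')$ is the domain $f^*(u)$ of that lift. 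For the second bullet, $\pi^{b,c,d}(\hat{f}_w):\pi^{b,c,d}(w')\to v$ has codomain $v=\pi^{b,c,d}(w)$, and applying $\pi^{b,d}$ on $\CIntr{b,c,d}$ and using the compositionality identities together with the first bullet yields $\pi^{b,d}\bigl(\pi^{b,c,d}(\hat{f}_w)\bigr)=\pi^{b,d}(\hat{f}_w)=\pi^{b,d}\bigl(\pi^{a,b,d}(\hat{f}_w)\bigr)=\pi^{b,d}(\hat{f}_u)$. Hence $\pi^{b,c,d}(\hat{f}_w)$ is the unique morphism into $v$ lying over $\pi^{b,d}(\hat{f}_u)$ in the discrete fibration $\CIntr{b,c,d}\to\CPlay{b,d}$, i.e.\ it equals $\widehat{\bigl(\pi^{b,d}(\hat{f}_u)\bigr)}_v$, as claimed.

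I do not anticipate a genuine obstacle: the argument is a short diagram chase whose whole force is carried by results stated earlier in the excerpt. The one point needing care is to check that $\pi^{a,b,d}(\hat{f}_w)$ really is a morphism of $\CIntr{a,b,d}$ --- an occurrence map preserving moves, justification pointers, and the \emph{basic blocks of the triple $(a,b,d)$}, several of which may be amalgamated from blocks of the quadruple --- and, dually, that $\pi^{b,c,d}(\hat{f}_w)$ is a morphism of $\CIntr{b,c,d}$; but this is precisely what the functoriality of the projections gives, and the strict (not merely up-to-isomorphism) agreement of the composed projections is likewise already available, so no additional work is needed.
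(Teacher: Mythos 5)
Your proposal is correct and follows essentially the same route as the paper: both identities are obtained by applying the relevant projection functor to \( \hat{f}_w \), using the strict compositionality of projections to compute its image in the base, and then invoking the uniqueness of lifts over the discrete fibrations \( \proj{}{a,d} : \CIntr{a,b,d} \to \CPlay{a,d} \) and \( \proj{}{b,d} : \CIntr{b,c,d} \to \CPlay{b,d} \). The paper's own proof is exactly this computation, so no further comment is needed.
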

\begin{proof}
Recall that \( \hat{f}_u \) is defined as a unique morphism \( h : u' \to u \) such that \( \pi^{a,d}(h) = f \).  Since \( \pi^{a, d}(\pi^{a, b, d}(\hat{f}_w)) = \pi^{a, d}(\hat{f}_w) = f \), we have \( \hat{f}_u = \pi^{a, b, d}(\hat{f}_w) \).  For the second claim, we see that \( \pi^{b,d}(\hat{f}_u) = \pi^{b, d}(\pi^{a, b, d}(\hat{f}_w)) = \pi^{b, d}(\hat{f}_w) = \pi^{b, d}(\pi^{b, c, d}(\hat{f}_w)) \).  Since \( \widehat{(\pi^{b,d}(\hat{f}_u))}_v \) is the unique morphism such that \( \pi^{b,d}(\widehat{(\pi^{b,d}(\hat{f}_u))}_v) = \pi^{b, d}(\hat{f}_u) \), we have the claim.
\end{proof}


\begin{corollary}
Composition of strategies is associative up to natural isomorphism.
\end{corollary}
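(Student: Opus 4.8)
The plan is to read off associativity from the lemma above that identifies each iterated composite with one and the same \emph{simultaneous composition}, realised as a left Kan extension. Fix sheaves $\sigma_1 \in \Sheaf(\CPlay{A,B})$, $\sigma_2 \in \Sheaf(\CPlay{B,C})$ and $\sigma_3 \in \Sheaf(\CPlay{C,D})$, write $\pi = \proj{}{A,D} : \CIntr{A,B,C,D}^{\op} \to \CPlay{A,D}^{\op}$, and let $F : \CIntr{A,B,C,D}^{\op} \to \Set$ be $w \mapsto \sigma_1(\proj{w}{A,B}) \times \sigma_2(\proj{w}{B,C}) \times \sigma_3(\proj{w}{C,D})$. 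That lemma yields natural isomorphisms $\sigma_1;(\sigma_2;\sigma_3) \cong \mathrm{Lan}_\pi F$ and $(\sigma_1;\sigma_2);\sigma_3 \cong \mathrm{Lan}_\pi F$. Because a left Kan extension is determined up to a unique natural isomorphism by its universal property, composing the first isomorphism with the inverse of the second produces the desired canonical natural isomorphism $\sigma_1;(\sigma_2;\sigma_3) \cong (\sigma_1;\sigma_2);\sigma_3$. Thus the corollary is a one-line consequence, and all the substance lives in the two isomorphisms with $\mathrm{Lan}_\pi F$.

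I would build those isomorphisms in the following order. First, observe that $\mathrm{Lan}_\pi F$ is computed fibrewise: since $\pi$ is a discrete fibration --- the quadruple version of Lemma~\ref{lem:fibration}, with the same proof --- each comma category over $s$ is discrete, so $(\mathrm{Lan}_\pi F)(s) \cong \coprod_{\proj{w}{A,D}=s} F(w)$ with reindexing along $f : s\to t$ given by the cartesian lift $\bar f_w$. Second, expand $\sigma_1;(\sigma_2;\sigma_3)$ by unfolding the inner composite, pull products past coproducts using the distributive law of $\Set$, and use Lemma~\ref{lem:interaction-pullback} to replace a compatible pair $u \in \CIntr{A,B,D}$, $v \in \CIntr{B,C,D}$ with $\proj{u}{B,D} = \proj{v}{B,D}$ by the unique $w \in \CIntr{A,B,C,D}$ projecting onto both; this exhibits $\sigma_1;(\sigma_2;\sigma_3) \cong \mathrm{Lan}_\pi F$ on objects. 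Third, repeat the computation for $(\sigma_1;\sigma_2);\sigma_3$, now gluing along the $(A,C)$-projection as in the ``similar statement'' clause of Lemma~\ref{lem:interaction-pullback}.

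The main obstacle is naturality of these object-level bijections: one must verify that the gluing bijection $\psi_s$ intertwines the reindexing maps of the two presheaves along every $f : s \to t$ in $\CPlay{A,D}$. Since all reindexings are defined through cartesian lifts, the key technical input is the coherence of lifts with projections --- $\proj{f^\ast(w)}{A,B,D} = f^\ast(\proj{w}{A,B,D})$, $\proj{\bar f_w}{A,B,D} = \bar f_{\proj{w}{A,B,D}}$, hence also $\proj{\bar f_w}{A,B} = \proj{\proj{\bar f_w}{A,B,D}}{A,B}$ --- together with the identity $\overline{\proj{h}{B,D}}_{v} = h$ for $h : u' \to u$ in $\CIntr{B,C,D}$. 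These are precisely Lemma~\ref{lem:quad-interaction-projection} and its immediate consequences; with them in hand, naturality reduces to checking, for each $f : s \to t$, that $\psi_s$ carries the reindexing of $\sigma_1;(\sigma_2;\sigma_3)$ along $f$ to that of $\mathrm{Lan}_\pi F$ (and likewise on the other side), which is routine componentwise bookkeeping with the coherence identities above. Granting the lemma, the corollary follows as in the first paragraph.
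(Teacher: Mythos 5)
Your proposal is correct and follows essentially the same route as the paper: the corollary is reduced to the lemma identifying both iterated composites with the simultaneous composition, whose proof constructs exactly your gluing bijection $\psi_s$ via Lemma~\ref{lem:interaction-pullback} and checks naturality using the coherence of cartesian lifts with projections (Lemma~\ref{lem:quad-interaction-projection} and the identity $\overline{\proj{h}{B,D}}_v = h$). The only cosmetic difference is that you make the appeal to uniqueness of left Kan extensions explicit, whereas the paper composes the two concrete isomorphisms directly.
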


\tk{@Luke: I noticed that all the categories used as value categories in this paper has $(1)$ all finite limits and $(2)$ all small coproducts.  I think these data suffice for definition the left Kan extension in the explicit form.  If so, what we cannot do in the parametric form are $(1)$ the proof of the left Kan extension is a sheaf (in fact, this does not hold for the deterministic case) $(2)$ definability (but it is almost straightforward and $(3)$ soundness and adequacy.  I would like to prove the equivalence between game semantic and term based approaches by an abstract argument (i.e.~without fixing a value category).}


\subsection{CCC structure of the category of pre-strategies}
Given \( \CVal \), the \emph{category of pre-strategies} has arenas as objects and, as morphisms from \( A \) to \( B \), presheaves \( [\CPlay{A, B}^{\op}, \CVal] \) mapping \( \varepsilon \) to the terminal object.  The presheaves are considered up to natural isomorphisms.

\begin{lemma}
The empty arena is the terminal object.
\end{lemma}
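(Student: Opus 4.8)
The plan is to exhibit, for each arena $A$, a unique morphism $A \to \mathbf{0}$ in the category of pre-strategies, where $\mathbf{0}$ denotes the empty arena; uniqueness on the nose will in particular give uniqueness up to the natural isomorphisms by which morphisms are identified, so $\mathbf{0}$ is terminal.

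First I would analyse the arena pair $(A,\mathbf{0})$. Since $\Moves[\mathbf{0}] = \emptyset$ we have $\Moves[A,\mathbf{0}] = \Moves[A]$, but crucially there is no initial move: $\star \vdash_{A,\mathbf{0}} m$ holds just if $\star \vdash_{\mathbf{0}} m$ with $m \in \Moves[\mathbf{0}]$, which is impossible. Hence the only justified sequence over $(A,\mathbf{0})$ is the empty one $\varepsilon$ — any non-empty justified sequence $s$ would have $\varphi(1) = 0$ (because $\varphi(1) < 1$), forcing $\star \vdash_{A,\mathbf{0}} s(1)$, a contradiction. So $\CPlay{A,\mathbf{0}}$ has exactly one object $\varepsilon$, whose only endomorphism is the identity (an injection $[0] \to [0]$ is the empty map); that is, $\CPlay{A,\mathbf{0}}$ is the terminal category $\mathbf{1}$.

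Consequently a presheaf $\sigma : \CPlay{A,\mathbf{0}}^{\op} \to \CVal$ is just a choice of object $\sigma(\varepsilon)$ of $\CVal$, and the defining constraint on morphisms of the category of pre-strategies — that $\varepsilon$ be sent to the terminal object of $\CVal$ — pins down this choice uniquely. Thus there is precisely one pre-strategy $A \to \mathbf{0}$, namely the constant functor at the terminal object of $\CVal$; since $A$ was arbitrary, $\mathbf{0}$ receives a unique morphism from every object and is therefore terminal.

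The argument is essentially forced, so there is no real obstacle; the only care needed is over conventions. If $\varepsilon$ is not counted as an object of $\CPlay{A,B}$ (contrary to the reading under which $\iota_*\tau$ and $\ident_A$ were assigned values at $\varepsilon$), then $\CPlay{A,\mathbf{0}}$ is the empty category, $[\emptyset^{\op},\CVal]$ is still a one-object category, and the conclusion is unchanged. And ``unique morphism'' is to be understood up to the natural-isomorphism quotient, which is immediate here since we have produced a strictly unique representative.
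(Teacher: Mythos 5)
Your proof is correct, and since the paper leaves this lemma's proof as an unfilled placeholder, you have in fact supplied the argument it presumably intended: the pair $(A,\mathbf{0})$ has no initial moves, so $\CPlay{A,\mathbf{0}}$ contains only the empty play (or nothing, depending on convention), and the requirement that $\varepsilon$ be sent to the terminal object of $\CVal$ forces a unique pre-strategy $A\to\mathbf{0}$ up to isomorphism. Your attention to the two conventions about whether $\varepsilon$ is an object, and to the fact that uniqueness is taken modulo natural isomorphism, covers the only points where care is needed.
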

\begin{proof}
\tk{TO DO}
\end{proof}

\begin{lemma}
\( A \times B \) is the product.  The projections \( \pi_1 : \CPlay{A \times B, A}^{\op} \to \CVal \) and \( \pi_2 : \CPlay{A \times B, B}^{\op} \to \CVal \) are sheaves.
\end{lemma}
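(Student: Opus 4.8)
The plan is to reduce both assertions to the behaviour on \emph{views}, exploiting that $\pi_1,\pi_2$ are copycat-like and that every P-view is well-opened. Throughout write $\iota_A:\CPlay{X,A}\hookrightarrow\CPlay{X,A\times B}$ and $\iota_B:\CPlay{X,B}\hookrightarrow\CPlay{X,A\times B}$ for the full and faithful functors that re-tag a move of $A$ (resp.\ of $B$) as the corresponding move of the summand of $A\times B$; these are well defined because a play of $X\Rightarrow A$ is exactly a play of $X\Rightarrow(A\times B)$ all of whose moves are hereditarily justified by an initial $A$-move.

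\emph{The projections are sheaves.} Here $\pi_1$ sends a play $s=m_1\dots m_n$ of $(A\times B,A)$ to the terminal object $\tensorunit$ of $\CVal$ if $s$ is copycat and to the initial object otherwise, and each morphism to the unique available map. First I would prove that copycat is \emph{view dependent}: $s$ is copycat iff every P-view $\PView{m_1\dots m_k}$ ($k$ even) is copycat. Both directions go by induction along the recursive definition of $\PView{-}$, using that a P-view is well-opened with every O-move justified by its predecessor, so that the last O-P block of $\PView{m_1\dots m_k}$ is an O-P block of $s$ and the justification pointers constrained by the copycat conditions transport faithfully; also no $B$-move can vanish on passing to a P-view ending in it. Granted this, $\pi_1$ agrees, on objects and morphisms, with $\iota_*(\iota^*\pi_1)$: the canonical covering $\{\PView{m_1\dots m_k}\to s\}_k$ is finite, $\iota^*\pi_1$ takes only the values $\tensorunit$ and the initial object, and $\CVal$, being distributive, has a strict initial object (so a finite limit one of whose objects is initial is again initial); hence the limit (equalizer) computing $(\iota_*\iota^*\pi_1)(s)$ equals $\tensorunit$ exactly when all the $\PView{m_1\dots m_k}$ are copycat. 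Since $\iota_*\tau$ is always a sheaf (Proposition~\ref{prop:view-embed}), $\pi_1$, and symmetrically $\pi_2$, is a sheaf.

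\emph{$A\times B$ is the product,} which I would establish inside the category of sheaves, to which $\pi_1,\pi_2$ belong by the previous paragraph. The structural fact driving this is that every P-view of $X\Rightarrow(A\times B)$ is well-opened, hence hereditarily justified by its sole $\star$-pointing move, which is either an initial $A$-move or an initial $B$-move; so $\iota_A,\iota_B$ restrict to an isomorphism of posets $\CView{X,A}\sqcup\CView{X,B}\xrightarrow{\ \cong\ }\CView{X,A\times B}$ (no morphism connects an $A$-rooted to a $B$-rooted P-view, morphisms of $\CView{}$ being prefix inclusions). Therefore $[\CView{X,A\times B}^{\op},\CVal]\cong[\CView{X,A}^{\op},\CVal]\times[\CView{X,B}^{\op},\CVal]$, and by the $\CVal$-valued Comparison Lemma (Proposition~\ref{prop:comparison}; cf.\ Lemma~\ref{lem:comparison}) we obtain a bijection $\Sheaf(\CPlay{X,A\times B})\cong\Sheaf(\CPlay{X,A})\times\Sheaf(\CPlay{X,B})$ carrying $\theta$ to $(\iota_A^*\theta,\iota_B^*\theta)$. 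It remains to identify the first coordinate with $\theta\mapsto\theta;\pi_1$ (and dually). For a P-view $t$ of $X\Rightarrow A$, the interaction-sequence description of composition gives $(\theta;\pi_1)(t)=\coprod_{u}\theta(\proj{u}{X,A\times B})\times\pi_1(\proj{u}{A\times B,A})$ over $u\in\CIntr{X,A\times B,A}$ with $\proj{u}{X,A}=t$; only those $u$ whose middle projection is copycat contribute, there is exactly one such $u$ (the copycat uncovering of $t$, exactly as in the verification that $\ident_A;\sigma\cong\sigma$), and for it $\pi_1(\proj{u}{A\times B,A})=\tensorunit$ and $\proj{u}{X,A\times B}=\iota_A(t)$; hence $(\theta;\pi_1)(t)\cong\theta(\iota_A(t))$, naturally in $t$ and $\theta$. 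Consequently $\theta\mapsto(\theta;\pi_1,\theta;\pi_2)$ is a bijection $\mathrm{Hom}(X,A\times B)\cong\mathrm{Hom}(X,A)\times\mathrm{Hom}(X,B)$ for every $X$, i.e.\ $(A\times B,\pi_1,\pi_2)$ is a categorical product.

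The step I expect to be the main obstacle is this last identification: showing that composing with the copycat-valued $\pi_1$ genuinely collapses an interaction to its $(X,A)$-projection and annihilates every interaction in which Opponent touches the $B$-summand of the middle arena. This is the uncovering argument familiar from the deterministic theory (the ``$\ident;\sigma\cong\sigma$'' computation), carried out here with $\CVal$-valued data; once it is available the product property is formal, resting only on the coproduct decomposition of the site of views and the Comparison Lemma.
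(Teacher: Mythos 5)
The paper leaves the proof of this lemma as an explicit placeholder (the proof body in the source is a ``TO DO''), and the corresponding claims in the main text---that \( \pi_1, \pi_2 \) are sheaves and that \( \CGame \) is cartesian---are likewise asserted without argument, so there is no authorial proof to compare yours against. Judged on its own terms your plan is sound, and you have isolated the right technical core. The two structural pillars are correct: (i) copycat-ness of a play is equivalent to copycat-ness of the P-views of its even prefixes (every move occurrence survives into the P-view of the prefix ending at its block, and the copycat clauses only constrain a block and the block of its O-move's justifier, both of which the P-view retains), so \( \pi_i \cong \iota_*\iota^*\pi_i \) is a sheaf; and (ii) every nonempty P-view of \( (X, A\times B) \) is well-opened and rooted in exactly one summand, so \( \CView{X, A\times B} \) is the disjoint union of \( \CView{X,A} \) and \( \CView{X,B} \) and Lemma~\ref{lem:comparison} turns this into \( \Sheaf(\CPlay{X,A\times B}) \simeq \Sheaf(\CPlay{X,A}) \times \Sheaf(\CPlay{X,B}) \). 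What remains---identifying \( \iota_A^*\theta \) with \( \theta;\pi_1 \) via uniqueness of the copycat interaction over a given play---is exactly the uncovering computation the paper also needs, and also omits, for its unproved identity laws \( \ident_A;\sigma \cong \sigma \cong \sigma;\ident_B \); you are right to flag it as the one place where real work happens.

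One caveat you should record explicitly: you establish the universal property in the category whose morphisms are \emph{sheaves}, whereas the appendix in which this lemma sits nominally works with \emph{pre-strategies}, i.e.\ arbitrary presheaves \( \CPlay{A,B}^{\op}\to\CVal \). In that larger category the lemma is false: composition with \( \pi_1 \) only interrogates \( \theta \) on plays all of whose threads are rooted in the \( A \)-summand (any interaction whose middle projection touches \( B \) is annihilated by \( \pi_1 \)), so two presheaves that agree on pure-\( A \) and pure-\( B \) plays but differ on a play interleaving an \( A \)-rooted and a \( B \)-rooted thread have isomorphic composites with both projections, and uniqueness of the pairing fails. Concretely, with \( A = \{a'\}\to\{a\} \) and \( B = \{b'\}\to\{b\} \), a presheaf may assign \( \emptyset \) to \( a\,a'\,b\,b' \) while assigning singletons to \( a\,a' \) and \( b\,b' \); the sheaf condition for the cover of \( a\,a'\,b\,b' \) by those two P-views is precisely what forbids this. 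So the sheaf hypothesis is not decoration but the substance of the uniqueness half of your argument, and the statement should be read in \( \CGame \) rather than in the literal category of pre-strategies.
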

\begin{proof}
\tk{TO DO}
\end{proof}

\begin{lemma}
The category \( \CPlay{A \times B, C} \) is isomorphic to \( \CPlay{A, B \Rightarrow C} \).  Thus \( [\CPlay{A \times B, C}^{\op}, \CVal] \) is isomorphic to \( [\CPlay{A, B \Rightarrow C}^{\op}, \CVal] \), which induces a natural bijection of hom-sets of the category of pre-strategies.
\end{lemma}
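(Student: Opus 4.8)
The plan is to exhibit a renaming bijection on plays and verify it is functorial in both directions; the work is all in the pointer/ownership bookkeeping. Recall that a move of \( B \Rightarrow C \) is either an ordinary \( C \)-move or a pair \( (m_C, m_B) \) with \( m_C \) an initial \( C \)-move and \( m_B \in \Moves[B] \), that \( (m_C, m_B) \) is uniquely enabled by \( m_C \) (if \( \star \vdash_B m_B \)) or by \( (m_C, m_B') \) with \( m_B' \vdash_B m_B \) (otherwise), and that the owner of \( (m_C, m_B) \) in \( B \Rightarrow C \) is the negation of the owner of \( m_B \) in \( B \). Hence in any justified sequence over \( (A, B \Rightarrow C) \) each occurrence of \( (m_C, m_B) \) is hereditarily justified by the displayed occurrence of \( m_C \). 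Dually, in any play over \( (A \times B, C) \) the justifier of an initial \( B \)-move is an initial \( C \)-move and the justifier of a non-initial \( B \)-move is a \( B \)-move (read off the arena-pair enabling relation together with \( \vdash_{A \times B} = {\vdash_A} \cup {\vdash_B} \)), so every \( B \)-occurrence has a unique hereditarily-justifying initial \( C \)-occurrence. This equivalence of ``first component of a \( B \Rightarrow C \)-move'' with ``hereditary \( C \)-justifier'' is the crux.

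First I would define \( \theta : \CPlay{A, B \Rightarrow C} \to \CPlay{A \times B, C} \) on objects by replacing each occurrence of a move \( (m_C, m_B) \) by the \( B \)-move \( m_B \), leaving all \( A \)- and \( C \)-occurrences, the linear order of occurrences, and all justification pointers unchanged; and \( \theta^{-1} \) on objects by replacing each \( B \)-occurrence \( m_B \) by \( (m_C, m_B) \), where \( m_C \) is its hereditarily-justifying initial \( C \)-occurrence. Then I would check that \( \theta(s) \) is again a justified sequence, i.e.\ that the retained pointers respect \( \vdash_{A \times B, C} \) --- a short case analysis on the pointer shapes \( m_C \curvearrowleft (m_C, m_B) \), \( (m_C, m_B) \curvearrowleft (m_C, m_B') \), \( m_C \curvearrowleft m_C' \) and the \( A \)-internal pointers --- and that \( \theta \) preserves O/P-alternation (by the ownership identity above, unwinding the definitions of \( \PMoves \) and \( \OMoves \) of an arena pair and of \( B \Rightarrow C \)), P-visibility, and the requirement that the last move be a P-move; so \( \theta(s) \in \CPlay{A \times B, C} \), and symmetrically for \( \theta^{-1} \). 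That \( \theta \) and \( \theta^{-1} \) are mutually inverse on objects is then immediate from the crux: the \( m_C \) discarded by \( \theta \) is forced to be the hereditary \( C \)-justifier, which \( \theta^{-1} \) restores.

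Next I would extend both maps to functors. A morphism \( f : s \to s' \) in \( \CPlay{A, B \Rightarrow C} \) is an injection on occurrence-sets preserving moves, pointers and O-P blocks; since \( \theta \) neither reorders occurrences nor alters pointers, and since ``\( f \) preserves moves'' forces \( s'(f(k)) = s(k) \) and hence \( \theta(s')(f(k)) = \theta(s)(k) \) on the nose, the very same injection is a morphism \( \theta(s) \to \theta(s') \). Thus \( \theta \) is a functor, and likewise \( \theta^{-1} \); they are inverse on morphisms because the underlying injections are literally unchanged. Hence \( \theta \) is an isomorphism of categories \( \CPlay{A, B \Rightarrow C} \cong \CPlay{A \times B, C} \). (As \( \theta \) visibly maps P-views to P-views and \( \varepsilon \) to \( \varepsilon \), it also restricts to an isomorphism of the subcategories of P-views, though that is not needed here.)

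Finally, precomposition with \( \theta^{\op} \) gives an isomorphism of presheaf categories \( [\CPlay{A \times B, C}^{\op}, \CVal] \xrightarrow{\cong} [\CPlay{A, B \Rightarrow C}^{\op}, \CVal] \), \( F \mapsto F \circ \theta^{\op} \), with inverse induced by \( \theta^{-1} \). Since \( \theta(\varepsilon) = \varepsilon \), it restricts to the full subcategories of presheaves sending \( \varepsilon \) to the terminal object, i.e.\ to a bijection between the hom-sets \( A \times B \to C \) and \( A \to B \Rightarrow C \) of the category of pre-strategies; naturality in the parameters \( A \) and \( C \) is routine since \( \theta \) is defined uniformly in them. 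The only genuine obstacle is the bookkeeping in the second paragraph (checking the renaming lands in the right category both ways); everything else is formal.
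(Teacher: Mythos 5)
Your proposal is correct and follows essentially the same route as the paper: the move-renaming functor \( \theta \) (replacing \( (m_C, m_B) \) by \( m_B \), with inverse restoring \( m_C \) as the hereditarily justifying initial \( C \)-occurrence), which preserves the order of occurrences, pointers and ownership, and hence induces the isomorphism of presheaf categories by precomposition. You supply more of the bookkeeping (alternation, P-visibility, functoriality on morphisms) than the paper's sketch, but the underlying argument is identical.
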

\begin{proof}
\tk{TO DO}
\end{proof}

\begin{theorem}
The category of \( \CVal \)-valued pre-strategies is a CCC.
\end{theorem}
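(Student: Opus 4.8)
The plan is to assemble the three lemmas immediately preceding the theorem — the empty arena is terminal, \( A \times B \) is the binary product with copycat projections \( \pi_1,\pi_2 \), and the relabelling \( \theta \) induces an isomorphism of categories \( \CPlay{A \times B, C} \cong \CPlay{A, B \Rightarrow C} \) — into a cartesian closed structure, keeping in mind that pre-strategies are identified up to natural isomorphism so that every equation below is meant up to natural iso. First I would check that we genuinely have a category: composition is the left Kan extension composite \( \sigma_1;\sigma_2 := \mathrm{Lan}_{\pi_3} F \) of the composition subsection, associative up to natural isomorphism by the associativity corollary, and the unit on \( A \) is the copycat pre-strategy \( \ident_A : \CPlay{A,A}^{\op} \to \CVal \) sending a copycat play to the terminal object of \( \CVal \) and every other play to the initial object. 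This is a presheaf because copycat plays are closed under the morphisms of \( \CPlay{A,A} \), and it sends \( \varepsilon \) to the terminal object since \( \varepsilon \) is vacuously copycat; that \( \ident_A;\sigma \cong \sigma \cong \sigma;\ident_B \) follows from the same copycat/uncovering argument as in the set-valued case, which uses no special property of \( \CVal \).

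Next I would verify the limit part. For the terminal object, writing \( \mathbf{1} \) for the empty arena, the category \( \CPlay{A,\mathbf{1}} \) has \( \varepsilon \) as its unique object, and a pre-strategy must send \( \varepsilon \) to the terminal object of \( \CVal \), so there is exactly one pre-strategy \( A \to \mathbf{1} \). For binary products, given pre-strategies \( \sigma : C \to A \) and \( \tau : C \to B \), I would build the pairing by observing that a play \( s \) of \( (C, A \times B) \) decomposes into the threads hereditarily justified by an \( A \)-initial move and those hereditarily justified by a \( B \)-initial move, whose restrictions are a play \( \proj{s}{C,A} \) of \( (C,A) \) and a play \( \proj{s}{C,B} \) of \( (C,B) \), and then set \( \langle\sigma,\tau\rangle(s) := \sigma(\proj{s}{C,A}) \times \tau(\proj{s}{C,B}) \). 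I expect the verifications \( \langle\sigma,\tau\rangle;\pi_1 \cong \sigma \), \( \langle\sigma,\tau\rangle;\pi_2 \cong \tau \) and the uniqueness up to isomorphism of any \( \rho \) with these projections to be routine, reducing via the discrete-fibration description of the interaction-sequence projections to the fact that the interaction sequences feeding a copycat projection carry no \( B \)- (resp.\ \( A \)-) moves.

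The exponential step is where the real work lies. Fixing \( B \) and \( C \), the relabelling \( \theta \) preserves the order of move-occurrences and the justification pointers, so \( \Lambda := (-)\circ\theta \) is an isomorphism \( [\CPlay{A \times B, C}^{\op}, \CVal] \cong [\CPlay{A, B \Rightarrow C}^{\op}, \CVal] \), hence a bijection between the pre-strategies \( A \times B \to C \) and the pre-strategies \( A \to B \Rightarrow C \). What still has to be shown — and this is the main obstacle — is that this bijection is natural in \( A \), i.e.\ that it commutes with precomposition by an arbitrary pre-strategy \( \rho : A' \to A \). Since composition is computed as a left Kan extension along the projection of interaction sequences onto the outer component, the plan is to lift \( \theta \) to a relabelling of interaction sequences over \( (A', A, B \Rightarrow C) \) and over \( (A' \times B, A \times B, C) \) that commutes with every projection of interaction sequences onto a two-arena component, and then to transport the universal property of one left Kan extension to the other, which yields the required naturality. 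The evaluation map \( \mathbf{ev} := \Lambda^{-1}(\ident_{B \Rightarrow C}) \), the transpose of the identity, then exhibits \( B \Rightarrow C \) as an exponential, and the category of \( \CVal \)-valued pre-strategies is cartesian closed.
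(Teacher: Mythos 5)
Your skeleton is the one the paper itself intends: assemble the three preceding lemmas (the empty arena is terminal, \(A\times B\) is the product with copycat projections, \(\theta\) induces the exponential bijection), with identities and associativity supplied by the copycat and Kan-extension machinery. The terminal-object step and the exponential step are fine as you describe them, and your plan for the naturality of \(\Lambda\) via a lifting of \(\theta\) to interaction sequences is the right one.

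The genuine gap is in the product step, precisely at the point you dismiss as routine: \emph{uniqueness} of the pairing. Composing a pre-strategy \(\rho : C \to A\times B\) with \(\pi_1\) only interrogates \(\rho\) on plays containing no \(B\)-moves: an interaction sequence \(u\) contributes to \((\rho;\pi_1)(t)\) only when \(\pi_1(\proj{u}{A\times B, A})\) is non-initial, i.e.\ when that projection is a \(B\)-free copycat, which forces \(\proj{u}{C, A\times B}\) to be \(B\)-free as well; symmetrically for \(\pi_2\). Consequently a presheaf that agrees with \(\langle\sigma,\tau\rangle\) on the pure (\(A\)-only and \(B\)-only) plays but is, say, constantly the initial object of \(\CVal\) on every genuinely interleaved play has the same composites with both projections as \(\langle\sigma,\tau\rangle\) without being naturally isomorphic to it. So for arbitrary presheaves, \(A\times B\) with the copycat projections fails the universal property of a product. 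What rescues the claim is descent: an interleaved play \(s\) is covered by the two morphisms \(\proj{s}{C,A}\to s\) and \(\proj{s}{C,B}\to s\), whose only common refinement is \(\varepsilon\), so a \emph{sheaf} \(\rho\) necessarily satisfies \(\rho(s)\cong\rho(\proj{s}{C,A})\times\rho(\proj{s}{C,B})\), which is exactly what makes the pairing unique. This is why the body of the paper states cartesian closure for \(\CGame\), whose morphisms are sheaves; to prove the appendix statement as given for pre-strategies you must either impose the sheaf condition on morphisms or settle for a weaker (e.g.\ merely monoidal) structure.
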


\section{For future work}
\label{apx:future-work}
A possible story: We investigate the concrete relationship between Taylor expansion and (innocent) game semantics.  Through the connection, we generalise the ideal condition \cite{Boudes2013} for (the images of) uniform Taylor expansion into \emph{sheaves} conditions. \lo{Sorry: it is not clear to me what ``inverse'' means.} \tk{I would like to say that the condition for sets of terms being an image of Taylor expansion.} The sheaves condition works well for non-uniform Taylor expansion, providing a fully-complete semantics for non-deterministic calculi (e.g. simply-typed \( \lambda \)-calculus, PCF and call-by-value PCF).

A possible story: We investigate the concrete relationship between Taylor expansion and (innocent) game semantics.  Through the connection, we generalise the ideal condition \cite{Boudes2013} for (the images of) uniform Taylor expansion into \emph{sheaves} conditions. \lo{Sorry: it is not clear to me what ``inverse'' means.} \tk{I would like to say that the condition for sets of terms being an image of Taylor expansion.} The sheaves condition works well for non-uniform Taylor expansion, providing a fully-complete semantics for non-deterministic calculi (e.g. simply-typed \( \lambda \)-calculus, PCF and call-by-value PCF).

\paragraph{Connecting three notions: Innocent game semantics, Taylor expansion and Non-idempotent (or linear) intersection types}
A term is interpreted as (1) a set of \emph{plays} in game semantics, (2) a set of \emph{resource terms in normal form} via Taylor expansion followed by normalisation and (3) a set of \emph{intersection types} through an intersection type system.

The followings are different viewpoints of the same thing.
\begin{itemize}
\item Traversals in innocent game model (i.e. strategies with all internal moves, or the result of composition without hiding)
\item Taylor expansion
\item Linear intersection type derivation (with \emph{isomorphism} \( A \wedge B \cong B \wedge A \) and \emph{equation} \( A_1 \wedge (A_2 \wedge A_2) = (A_1 \wedge A_2) \wedge A_3 \)).
\end{itemize}

The following notions of normalisations coincide.
\begin{itemize}
\item Hiding in game semantics
\item Normalisation in Taylor expansion
\item Normalised proof for normalised terms in intersection type theory
\end{itemize}

Some other similarities:
\begin{itemize}
\item Group actions in Melli\`{e}s-style games
\item Multiplicity coefficient in Taylor expansion
\item ??? in intersection type theory
\end{itemize}

\tk{How can AJM game model be placed in this scenario?}

\begin{align*}
  & \textrm{(Innocent Strategies)} \\
  & \qquad= \textrm{(Taylor expansion)} + \textrm{(normalisation)}.
\end{align*}

In game semantics, basic operation is ``composition with hiding''.  The ``hiding'' corresponds to normalisation of the above process, and hence
\begin{align*}
  & \textrm{(Innocent Strategies)} - \textrm{(Hiding)} \\
  & \qquad\approx \textrm{(Traversal)} \\
  & \qquad= \textrm{(Taylor expansion)}.
\end{align*}

\begin{align*}
  & \textrm{(Derivations of Intersection type systems)} \\
  & \qquad= \textrm{Taylor expansion} \\
\end{align*}

\begin{align*}
  & \textrm{(Innocent Game Semantics)} - \textrm{(Hiding)} \\
  & \qquad= \textrm{(Taylor expansion)} \\
  & \qquad= \textrm{(Intersection Derivations)}
\end{align*}

\paragraph{New fully-complete model for non-deterministic PCF}
This seems the first fully-complete model for non-deterministic PCF that does not rely on the factorisation theorem (proved by Harmer in his Ph.D.\ thesis \cite{Harmer99}).

\paragraph{Innocent strategies as sheaves}
Our approach follows the idea of Eberhart, Hirschowitz and Seiller \cite{EberhartHS13} (and their related work) which says that the innocence strategies are sheaves.

\subsection{From ideals to sheaves}

\paragraph{B\"ohm trees as ideals}
The B\"ohm tree \( \mathit{BT}(M) \) of a \( \lambda \)-term \( M \) can be considered as a semantics.  It is often convenient to consider its \emph{finite approximations}: writing \( e \) for finite terms (possibly having \( \bot \)), a B\"ohm tree \( \mathit{BT}(M) \) for a term \( M \) is now understood as a generator of the set \( \sem{M} := \{ e \mid e \le \mathit{BT}(M) \} \), which is a subset of all finite terms.  This semantics maps a term to a collection of finite approximations.  The image of this map is characterised by using ideals (and some other conditions).

\paragraph{Filter model induced by an intersection type system}

\paragraph{Innocent strategies}

\paragraph{Support of the Taylor expansion}

A set of multilinear behaviours is an image of the Taylor expansion if (1) it has finite free variables, (2) it is recursively enumerable, and (3) it is an ideal~\cite{Boudes2013}.  An ideal \(I\) of a \tkchanged{poset} \lo{As we discussed, the relation $\preceq$ in \cite{Boudes2013} is not a preorder.} is a subset that is (i) downward-closed: \( a \le b \in I \) implies \( a \in I \), and (ii) directed: if \( a, b \in I \), there exists \( c \in I \) such that \( a \le c \) and \( b \le c \).  The first condition means that \( I \) is a (\(\{ 0,1 \}\)-valued) presheaf, and the second condition means the existence of the amalgamation. \lo{The correspondence between the two conditions of ``ideal'' and presheaf-ness and amalgamation is not so clear-cut. E.g.~that $\langle{x}\rangle [y, y]$ is in the ideal containing $\langle{x}\rangle [y]$ is a consequence of downward-closure, but follows from amalgamation.}

\section{Notes: should be removed from the submission!}

\subsection{Notes on Introduction}
\tk{
\begin{center}
\small
\begin{tabular}{l|l l}
  \emph{Normal Form} & Collection & Elements \\
  \hline\hline
  B\"ohm tree & B\"ohm tree & (Linear) approximants \\
  \hline
  Game & Strategy & Play \\
  \hline
  Taylor expansion & Taylor expansion & Monominal \\
   & (in NF) & (non-zero coeff.) \\
  \hline
  Intersection types & --- & Derivations (in NF)
\end{tabular}
\end{center}
\begin{center}
\small
\begin{tabular}{l|l l}
  \emph{General Term} & Collection & Elements \\
  \hline\hline
  Game & --- & Traversals (or interaction seq.) \\
  \hline
  Taylor expansion & Taylor exp. & Monominal (non-zero coeff.) \\
  \hline
  Intersection types & --- & Derivations
\end{tabular}
\end{center}}

\tk{The \( \mathrm{Rel} \) model can be considered as an abstraction (or proof irrelevant version) of the semantics.}

\tk{The view functionality is a consequence of strategies being sheaves, and the structure of plays.  In the category \( \mathbb{P} \), every \( s \in \mathbb{P} \) can be covered by a family \( \{ \varphi_\xi : p_\xi \to s \}_{\xi \in \Xi} \), where \( p_\xi \) is a P-view.  Hence every sheaf \( F \in \mathbf{Sh}(\mathbb{P}) \) is determined (up to isomorphism) by its image of views.  Hence all the models in the paper have a view-functional representation of strategies.}

\tk{In the presence of references (or store), the category of plays must be the set of plays equipped with the prefix ordering.  For this structure, many extension of the original HO/N games for PCF, namely for nondeterminsitic PCF and for probabilistic PCF.  In this work, we extend such constructions to more general cases including stateless computations.  In other words: game semantics was well-suit for stateful computation and this work tries to extend this to arbitrary computation.}

\tk{In order to establish an equivalence between game semantics and Taylor expansion, what we should do first is to identify the mathematical structure that game semantics and Taylor expansion commonly have.  We found that the plays of game semantics and terms of the resource calculus define equivalent sites.  Since equivalent sites define equivalent (Grothendieck) topoi, \tk{is it correct?} sheaves semantics based on games and based on Taylor expansion coincides. The site structure has been (partly) found by other researchers many times.}

\subsection{A remark on Section 2}

\begin{remark}
We have four categories \( \mathbb{R} \) of resource terms (in which arguments are multisets or bags), \( \mathbb{B} \) of vector behaviours (in which arguments are sequences), \( \mathbb{P} \) of plays and \( \mathbb{D} \) of derivations.  As we have seen, the equivalence class of derivations modulo iso-commutation (that is the rules of the form \( A \wedge B \approx B \wedge A \) and the rules like \( \varphi \circ \varphi' = \mathrm{id} \) for \( \varphi \rhd A \wedge B \approx B \wedge A \) and \( \varphi' \rhd B \wedge A \approx A \wedge B \)) is equivalent to \( \mathbb{B} \), and hence the category of derivations can be seen as an alternative definition of \( \mathbb{B} \).  Hence we have three categories \( \mathbb{R} \), \( \mathbb{B} \) and \( \mathbb{P} \).  A \emph{full subcategory} of category \( \mathbb{C} \) is a category \( \mathbb{E} \) associated with a functor \( F \) such that \( F \) on objects is injection and \( \mathbb{E}(e, e') \cong \mathbb{C}(Fe, Fe') \) (natural in \( e \) and \( e' \)).  In this sense, \( \mathbb{B} \) is a full subcategory of \( \mathbb{P} \) and \( \mathbb{R} \) is a full subcategory of \( \mathbb{B} \) (hence \( \mathbb{R} \) is also a full subcategory of \( \mathbb{P} \)).  Furthermore, their associated covering families are preserved by these maps.  Recall that a map between sites induces a geometric morphism, which is an adjunction between sheaves (with additional conditions).  Hence we have two functors for each pair of sheaves.

A problem arising here is that ``which functor is appropriate''.  To answer the question, we should clarify the meaning of ``appropriate''.  A possible criteria comes from the coefficient of the Taylor expansion.  Another criteria may be obtained from comparing interactions.
\end{remark}

\subsection{A conjecture}
\begin{remark}
\tk{Conjecture.  This is a note.}
The connection to Taylor expansion:  Let \( F_M \in \mathbb{Sh}(\mathbb{R}_{\kappa}) \) be the sheaf corresponding to a term \( M \).  Then
\[
  \mathrm{nf}(T(M)) = \sum_{e \in \mathbb{R}_{\kappa}} \frac{1}{|F_M(e)|} e.
\]
\end{remark}

\begin{remark}
\tk{Conjecture.  This is a note.}
For a term \( M \) of a deterministic calculus, let \( F_M \in \mathbb{Sh}(\mathbb{P}_{\kappa}) \) be the sheaf corresponding to the (vector) behaviour category.  Then \( F_M(e) \) is empty or singleton, i.e.\ \( |F_M(e)| \le 1 \).  Consider the sheaf \( G_M \in \mathbb{Sh}(\mathbb{R}_{\kappa}) \) corresponding to \( M \) in the resource term category.  Then \( G_M(r) = \coprod_{e: [e] = r} F_M(e) \).  Since \( F_M \) is a sheaf and we have an isomorphism \( e \stackrel{\cong}{\longrightarrow} e' \) when \( [e] = [e'] \) (note: this is the definition of the equivalence class), \( |F_M(e)| = |F_M(e')| \).  Hence \( |G_M(r)| \) is zero or the number of the elements in the equivalence class \( r \).  The above remark is a consequence of this observation.
\end{remark}

\subsection{Note on composition of resource terms}
\tk{This subsection is a note, which should be removed in the submission (or polished very much before the submission).}
Let \( \lambda x. M \) be a simple term and \( \bag{N_i}_{i \in [1;n]} \) be a simple bag of the resource calculus \( \partial_0 \lambda \), the promotion-free fragment of the differential lambda calculus.  (Here we use the notation different from the original one.)  Consider their application \( (\lambda x. M)\,\bag{N_i}_{i \in [1;n]} \).  The top-level redex reduces non-zero (poly) term just if \( M \) has \( n \) free occurrences of \( x \).  Let us give a name for each occurrences, say \( x_1, \dots, x_n \) (i.e.~we have \( M' \) such that \( M' [x/x_i]_{i \in [1;n]} = M \) and \( M' \) has no free occurrence of \( x \)).  Then the reduction is defined by 
\[
  (\lambda x. M)\,\bag{N_i}_{i \in [1;n]} \red \sum_{\sigma} M[N_{\sigma(i)}/x_i]_{i \in [1;n]}
\]
where \( \sigma \) ranges over all permutation of \( [1;n] \).  Notice the nondeterminism expressed by the formal sum (here the sum is formal sum, though it can be a sum of a ring defined using terms as in the original presentation).  Our goal is to remove this kind of nondeterminism from the composition of behaviours (or linear approximants).  Note that the target calculus can have nondeterminism.  Even so a term in the calculus can be interpreted as a sheaf over linear approximants, that means, a nondeterministic computation can be expressed as a collection of deterministic ones.

One way to resolve the nondeterminism is to annotate the reduction by information choosing one possible reduction.  For example, the above reduction can be mimicked by a collection of reductions labelled by a permutation \( \sigma \)
\[
  (\lambda x. M)\,\bag{N_i}_{i \in [1;n]} \stackrel{\sigma}{\red} M[N_{\sigma(i)}/x_i]_{i \in [1;n]}.
\]
Since the normalisation process may continue, a single permutation suffices for resolving the nondeterminism in the first reduction step but not in the second step and further.  Now a question is how to represent a resolution of nondeterminism of the whole normalisation process.  I conjecture that a witness of type isomorphism does.

Let \( M' \) be a term of \( \lambda^{\wedge} \) and \( N' \) be a vector.  Suppose that they are concrete representation of \( \lambda x. M \) and \( \bag{N_i}_{i \in [1;n]} \) in the sense that forgetting the order of vectors in \( M' \) and \( N' \) yields \( M \) and \( N \), respectively.  Now \( M' \) and \( N' \) have intersection type derivations, say \( \Gamma_1 \vdash M' : A^{!}_1 \to B \) and \( \Gamma_2 \vdash N' : A^{!}_2 \), in which no type isomorphism rule is used.  Notice that type judgements and derivations are uniquely determined by their respective term structures.  The normal form of \( (\lambda x. M) \bag{N_i}_{i \in [1;n]} \) is nonzero just if \( A^!_1 \eqty A^!_2 \).  Furthermore, I conjecture, the normal form is given by \( \sum \{ L_\varphi \mid \varphi \rhd A^!_1 \eqty A^!_2 \} \), where \( L_{\varphi} \) is the result of normalisation following \( \varphi \).

By this approach, an interaction between resource terms is defined by
\[
  {\mathbb{I}}{\mathbb{B}} := \Delta_{\mathrm{nf}} \times \Phi \times \Delta^{!}_{\mathrm{nf}}
\]
where \( \Phi \) is the set of all resolutions of nondeterminism (and hence it depends on the first and the third elements in fact).  The first and the second projections are those for the first and the third components, respectively, and the third projection is the normal term for the triple.

\paragraph{Another approach based on \( \lambda^{\wedge} \)}
Let us consider the intersection type system without the isomorphism rule.  Then every \( \lambda^{\wedge} \) term has a unique type.  A term \( M \) and a vector \( \langle N_i \rangle_{i \in [1;n]} \) is \emph{composable} just if \( \Gamma_1 \vdash M : A^! \multimap B \) and \( \Gamma_2 \vdash \langle N_i \rangle_{i \in [1;n]} : A^! \).  The reduction is deterministic (as defined above).

Now the definition of interactions is given by:
\[
  {\mathbb{I}\mathbb{B}} := \{ (M, \langle N_{i} \rangle_{i}) \subseteq \mathbb{B} \times \mathbb{B} \mid M \textrm{ and } \langle N_i \rangle_{i} \textrm{ are composable } \}.
\]
What this equation defines is the objects of the category.  The definition of morphisms is rather complex: \( (\varphi_1, \varphi_2) \rhd (M, \vec{N}) \sqsubseteq (M', \vec{N}') \) is a pair of \( \varphi_1 \rhd M \sqsubseteq M' \) and \( \varphi_2 \rhd \vec{N} \sqsubseteq \vec{N}' \) that coincides on the shared part (i.e.~the part corresponding to \( A^! \)).

I conjecture that this can be extended to arbitrary term that may contain applications in arbitrary part.  Of cause, we need to fix a reduction strategy, since the calculus is not confluent.

This approach looks tricky.  So we need a justification.  One possible justification is given by the connection to \( \partial_0 \lambda \):
\[
  M\,\bag{\vec{N}} \red^* \sum \{ \mathrm{nf}(M_0\,\vec{N}_0) \mid |M_0| = M, |\vec{N}_0| = N, \textrm{composable} \}
\]
where \( |M_0| \) forget the order of vectors in \( M_0 \), resulting in a term of the resource calculus.  Hence \( \lambda^{\wedge} \) composition is a deterministic variant of \( \partial_0 \lambda \) composition.  Another possible justification comes from an intersection type system extended by the type isomorphism rule (but I have not found a suitable notion of interactions for them, i.e.~the coefficient issue is left open).

\fi

\end{document}